\DeclarePairedDelimiter{\ceil}{\lceil}{\rceil}
\pgfplotsset{compat=1.17}
\definecolor{red}{rgb}{0.7,0.15,0.15}
\definecolor{green}{rgb}{0,0.5,0}
\definecolor{blue}{rgb}{0,0,0.7}
\makeatletter \@addtoreset{equation}{section}
\newtheorem{theorem}{Theorem}[section]
\newtheorem{assumption}[theorem]{Assumption}
\newtheorem{lemma}[theorem]{Lemma}
\newtheorem{proposition}[theorem]{Proposition}
\newtheorem{definition}[theorem]{Definition}
\newtheorem{remark}[theorem]{Remark}
\newcommand{\smallfont}[1]{\text{\fontsize{4}{4}\selectfont$#1$}}
\newcommand{\tinyfont}[1]{\text{\fontsize{3}{3}\selectfont$#1$}}
\def \E{\mathbb{E}}
\def \F{\mathbb{F}}
\def \H{\mathbb{H}}
\def \I{\mathbb{I}}
\def \P{\mathbb{P}}
\def \R{\mathbb{R}}
\def \S{\mathbb{S}}
\def \Ho{\overline{\H}^{_{\raisebox{-1pt}{$ \scriptstyle 2,2$}}}}
\def\Ac{{\cal A}}
\def\Bc{{\cal B}}
\def\Cc{{\cal C}}
\def\Ec{{\cal E}}
\def\Fc{{\cal F}}
\def\Hc{{\cal H}}
\def\Ic{{\cal I}}
\def\Kc{{\cal K}}
\def\Lc{{\cal L}}
\def\Pc{{\cal P}}
\def\Tc{{\cal T}}
\def\Xc{{\cal X}}
\def\Yc{{\cal Y}}
\def\Zc{{\cal Z}}
\def\Cf{\mathfrak{C}}
\def\Zf{\mathfrak{Z}}
\def\zf{\mathfrak{z}}
\def\Ar{{\rm A}}
\def\Ur{{\rm U}}
\def\Vr{{\rm V}}
\def\Pr{{\rm P}}
\def\eps{\varepsilon}
\def\d{{\mathrm{d}}}
\def\1{\mathbf{1}}
\def\e{{\mathrm{e}}}
\def\as{\text{\rm--a.s.}}
\def\ae{\text{\rm--a.e.}}
\renewcommand{\t}{\top}
\DeclareMathOperator*{\argmax}{arg\,max} 
\DeclareMathOperator*{\es}{ess\,sup^{\P}}
\def\eps{\varepsilon}
\begin{document}

\title{Time-inconsistent contract theory\footnote{Camilo Hern\'andez acknowledges the support of a Presidential Postdoctoral Fellowship at Princeton University, a Chapman Fellowship at Imperial College London and a CKGSB fellowship at Columbia University.}}

\author{Camilo {\sc Hern\'andez} \footnote{Princeton University, ORFE department, USA, camilohernandez@princeton.edu.} \and Dylan {\sc Possama\"{i}} \footnote{ETH Z\"urich, Mathematics department, Switzerland, dylan.possamai@math.ethz.ch.}}

\date{\today}

\maketitle

\begin{abstract}
This paper investigates the moral hazard problem in finite horizon with both continuous and lump-sum payments, involving a time-inconsistent sophisticated agent and a standard utility maximiser principal. Building upon the so-called dynamic programming approach in \citet*{cvitanic2015dynamic} and the recently available results in \citet*{hernandez2020me}, we present a methodology that covers the previous contracting problem. Our main contribution consists in a characterisation of the moral hazard problem faced by the principal. In particular, it shows that under relatively mild technical conditions on the data of the problem, the supremum of the principal's expected utility over a smaller restricted family of contracts is equal to the supremum over all feasible contracts. Nevertheless, this characterisation yields, as far as we know, a novel class of control problems that involve the control of a forward Volterra equation via Volterra-type controls, and infinite-dimensional stochastic target constraints. Despite the inherent challenges associated to such a problem, we study the solution under three different specifications of utility functions for both the agent and the principal, and draw qualitative implications from the form of the optimal contract. The general case remains the subject of future research.

\vspace{5mm}
\noindent{\bf Key words:} Moral hazard, time-inconsistency, consistent planning, sophisticated agent, dynamic utilities, backward stochastic Volterra integral equations, stochastic target. \vspace{5mm}

\end{abstract}

In this paper, we are interested in the moral hazard contracting problem between a principal and an agent with time-inconsistent preferences. A principal--agent problem pertains to the optimal contracting between two parties: the principal, who is interested in hiring the agent, offers a contract; provided the agent accepts, he can influence a random process, the outcome, via his actions. A key feature in these models is the amount of information available to the principal when designing the contract. There are three classical cases studied in the literature: risk-sharing with symmetric information, hidden action, and hidden type.  We are only concerned with the first two in this work.\medskip

In the risk-sharing scenario, also referred to as the first-best, both parties have the same information and have to agree on how to share the underlying risk. The principal thus has all the bargaining power, \emph{i.e.} she offers the contract and dictates the agent's actions---the agent is compelled to follow or else he would be severely penalised. In the case of hidden actions, the principal is imperfectly informed about the agent's actions. Either they are too costly to be monitored or simply unobservable. Consequently, the principal expects to receive a second-best utility compared to the risk-sharing case. As the agent is allowed to take actions that are not in the principal's best interest, this situation is also referred to as moral hazard, and incentives play a crucial role. Indeed, the principal hopes to influence the agent's actions by offering an appropriate contract. 
\medskip

In the case of a traditional (time-consistent) agent, a common feature of these models is that their resolution boils down to standard stochastic control theory. Indeed, in light of the principal's bargaining power, the first-best case is always cast as a stochastic control problem for a single individual---the principal---who chooses both the contract and the actions under the participation constraint. On the other hand, in the second-best problem, it being a two-stage Stackelberg game, one has to solve the agent's problem for any given fixed contract before moving to study the principal's problem. In principle, this creates a much more complicated structure on the problem. Since the introduction of the continuous-time model, it took time for the literature to present a general approach that arrived at the same conclusion for the second-best problem.\medskip

The study of moral hazard problems in continuous time has its roots in the seminal paper of \citet*{holmstrom1987aggregation}. In this model, the principal and the agent have CARA utility functions, and the agent's effort influences the drift of the output process, the solution to a controlled diffusion, but not the volatility. The resulting optimal contract is a linear function of the aggregate output. The model in \cite{holmstrom1987aggregation} drew great attention as the resolution of the, seemingly more complicated, continuous-time formulation was actually much more tractable, could be rigorously justified, and provided useful explicit solutions for the economic analysis. These were typically harder to reach in most of the discrete-time models that dominated the existing literature, see \citet*{laffont2002theory} for an overview. Following upon \cite{holmstrom1987aggregation}, \citet*{schattler1993first,schattler1997optimal} studied the validity of the so-called first-order approach, while \citet*{sung1995linearity,sung1997corporate} provided extensions to the case of diffusion control and hierarchical structures. The linearity of the optimal contract, a feature also present in \cite{sung1995linearity}, is further studied in \citet*{muller1998first, muller2000asymptotic}, \citet*{hellwig2002discrete}, \citet*{hellwig2007role} and \citet*{sung2005optimal,sung2021optimal} for the first-best problem, the interplay between the discrete-time and continuous-time models, and for a robust setting, respectively. Notably, \citet*{williams2015solvable} and \citet*{cvitanic2009optimal} characterise the optimal contract for general utilities by means of the so-called stochastic maximum principle and forward--backward stochastic differential equations (FBSDEs for short)\footnote{We refer to the monograph \citet*{cvitanic2012contract} for a general framework that systematically surveys a great portion of the literature exploiting the maximum principle, in models driven by Brownian motion.}. 
\medskip

Nevertheless, it was not until the approach in \citet*{sannikov2008continuous,sannikov2012contracts} was available that the study of the moral hazard problem was, once again, reinvigorated and arrived finally at the methodical program presented in \citet*{cvitanic2014moral,cvitanic2015dynamic}. In a nutshell, this method leverages the dynamic programming principle and the theory of backward stochastic differential equations (BSDEs) to reformulate the principal's problem as a standard optimal stochastic control problem with an additional state variable, namely, the agent's continuation utility.
This methodology has been extended to several scenarii including random horizon contracting \citet*{lin2020random}, ambiguity features from the point of view of the principal, as in \citet*{mastrolia2015moral} and \citet*{hernandez2019moral}, a principal contracting a finite number of agents \citet*{elie2019contracting}, several principals contracting a common agent \citet*{mastrolia2018principal}, a principal contracting a mean-field of agents \cite{elie2019tale}, and applications in optimal electricity demand response contracting \citet*{aid2018optimal}, or \citet*{elie2021mean}. The road map suggested by this approach is quite clear: $(i)$ identify the generic dynamic programming representation of the agent's value process, $(ii)$ express the contract payment in terms of the value process, $(iii)$ optimise the principal's objective over such payments. \medskip

All in all, the previous literature is particular to the contracting problem between two (or more) standard utility maximisers, while there is a growing need for the development of models able to explain the behaviour of agents that fail to comply with classical rationality assumptions. Indeed, there is clear evidence of such attitudes in a number of applications, from consumption problems to finance, from crime to voting, and from charitable giving to labour supply, see \citet*{rabin1998psychology} and \citet*{dellavigna2009psychology} for detailed reviews. The distinctive feature in these situations is that human beings do not necessarily behave as perfectly rational decision-makers. In reality, their criteria for evaluating their well-being are, in many cases, a lot more involved than the ones considered in the classic literature. In light of the methodology introduced in \cite{cvitanic2015dynamic}, the recently available results in \citet*{hernandez2020me} unveil the possibility of extending this blueprint to cover the moral hazard problem between a principal and a \emph{sophisticated} time-inconsistent agent. This is the task we seek to accomplish in this paper. \medskip

Time-inconsistency is, in general terms, the fact that marginal rates of substitution between goods consumed at different dates change over time, see \citet*{strotz1955myopia}, \citet*{laibson1997golden}, \citet*{odonoghue1999doing,odonoghue1999incentives}. For example, the marginal rate of substitution between immediate consumption and some later consumption is different from when these two dates were seen from a remote prior date. In many applications, this introduces a conflict between `an impatient \emph{present self} and a patient \emph{future self}', see \citet*{brutscher2011payment}. In mathematical terms, this translates into stochastic control problems in which the classic dynamic programming principle, or in other words, the Bellman optimality principle is not satisfied.\medskip

Time-inconsistency was first mentioned in \cite{strotz1955myopia} where three different types of agents are described: the pre-committed agent does not revise his initially decided strategy; the naive agent revises his strategy without taking future revisions into account; the sophisticated agent revises his strategy taking possible future revisions into account, and by avoiding such makes his strategy time-consistent. The comprehensive study of sophisticated agents started with \citet*{ekeland2008investment}, see also \citet*{ekeland2006being, ekeland2010golden}, which later became the starting point of the general Markovian theory developed by \citet*{bjork2017time}. Nonetheless, none of these approaches could handle the typical non-Markovian problems that would necessarily arise in contracting problems involving a principal and a time-inconsistent agent. \citet*{hernandez2020me} provided a probabilistic formulation able to accommodate a general non-Markovian structure and provided an extended dynamic programming principle (DPP) for a refinement of the notion of equilibria first introduced in \cite{ekeland2010golden}. In turn, the extended DPP leads to the introduction of a system of BSDEs analogous to the classical HJB equation. This system is fundamental in the sense that its well-posedness is both necessary and sufficient to characterise the value function and equilibria, which are identified as maximisers of the Hamiltonian.
\medskip

When it comes to incorporating time-inconsistent features into contract theory models, the economic literature is abundant in discrete-time models with two and up to three periods. A common feature in this literature is adopting quasi-hyperbolic discounting structures to draw conclusions in different mechanism design problems. Yet, the method of resolution in each problem remained limited to a case-by-case analysis. For instance, \citet*{amador2006commitment} and \citet*{bond2017commitment} study the feasibility of commitment in models of consumption and savings, whereas \citet*{galperti2015commitment} considers the optimal provision of commitment devices to people who value both commitment and flexibility. \citet*{bisin2015government} examines policymakers' responses to the political demands of agents with self-control problems, \citet*{halac2014fiscal} looks into a fiscal policy model in which the government has time-inconsistent preferences, while \citet*{lim2018dynamic} assesses the effects of time-inconsistency on monopoly regulation of electricity distribution. \citet*{heidhues2010exploiting} and \citet*{karaivanov2018markov} integrate time-inconsistent preferences into credit, mortgage, and insurance contract design problems, respectively. \citet*{englmaier2020long}, \citet*{gottlieb2008competition} and \citet*{gottlieb2021long} study contracting problems between firms and sophisticated, partially naive, and naive present-biased consumers. \citet*{yilmaz2013repeated, yilmaz2015contracting} considers a repeated moral hazard problem involving a sophisticated and naive agent, respectively. \citet*{ma1991adverse} studies a multi-period model in which contracts are subject to renegotiations, and the agent's action has a long-term effect. \citet*{balbus2020time} shows the existence of time-consistent equilibria for dynamic models with generalised discounting. A survey of some of the state of behavioural economics research in contract theory was provided in \citet*{koszegi2014behavioral}.

\medskip

In continuous-time, where the dynamic models are sometimes more tractable and the solutions enjoy better interpretability, the literature becomes rather scarce. Models dealing with a pre-committed agent have been considered in \citet*{li2018solvable}, in which a non-constant exponential discount factor is the source of time-inconsistency, and \citet*{djehiche2015principal}, where the agent is allowed to have mean-variance utility functions. The case of a sophisticated agent was considered in \citet*{li2015optimal}, \citet*{liu2018dynamic}, \citet*{liu2019optimal} and \citet*{wang2019optimal} in the case of hyperbolic discounting. However, the time-inconsistency is restricted in the sense that it manifests only at discrete random times that are exponentially distributed. Lastly, \citet*{cetemen2021renegotiation} considers a Markovian continuous-time contracting problem and dynamic inconsistency arising from non-exponential discounting. The authors' examples are limited to the case of a principal having time-inconsistent preferences, and the agent having standard time-consistent preferences. Altogether, a thorough analysis of the general non-Markovian continuous-time contracting problem between a standard utility maximiser principal and a sophisticated time-inconsistent agent is still missing in the literature. This is because, in our opinion, the crux of the problem lies in identifying a proper description of the problem of the principal. In the case of a classic time-consistent agent and a time-inconsistent principal, following {\rm \cite{cvitanic2015dynamic}}, one expects the problem of the principal to boil down to a non-Markovian time-inconsistent control problem with an additional state variable. As studied in {\rm \cite{hernandez2020me}}, these problems are characterised by an infinite family of {\rm BSDEs}, equivalent to a so-called type-I extended BSVIE \cite{hernandez2020unified}. As such, we expect that the problem considered in this document will open the door to a complete analysis of the problem in which both the principal and the agent are time-inconsistent.


\subsection*{Our results}

Our problem is cast in the context of a standard utility maximiser principal and an agent with time-inconsistent preferences. Indeed, the agent's reward is given by the value of a so-called backward stochastic Volterra integral equation. This choice of preferences for the agent allows us to cover classic separable and non-separable utilities simultaneously. 
As is standard in the literature, we consider the weak formulation of the problem. The state process $X$ is fixed, and the agent's actions influence the drift of $X$ through its distribution over the interval $[0,T]$. The principal chooses a contract, \emph{i.e.} \textcolor{black}{a process and} a random variable adapted to the filtration generated by the path $X_{\cdot \wedge T}$ of the state process, which specifies \textcolor{black}{the continuous payments,} the terminal payment, and satisfies the agent's participation constraint at time $t=0$. As mentioned above, our approach is inspired by that of \cite{cvitanic2015dynamic} and the recent results for non-Markovian time-inconsistent control problems from a game-theoretic point of view of \cite{hernandez2020me}. Indeed, \cite{hernandez2020me} established an extended dynamic programming principle for the agent's value process associated with any equilibrium action. In turn, this result was used to establish a direct link between the agent's problem and an infinite family of BSDEs. Following \cite{hernandez2020unified}, such a system is actually equivalent to a so-called type-I extended BSVIE. 

\medskip
At this point, we notice the first stark difference between the classic time-consistent case and ours: the problem of the agent is, in general, linked to the solution of an infinite family of equations, namely the BSVIE, as opposed to one, a BSDE. Nevertheless, the agent's preferences elucidate a connection at the terminal time $t=T$ between the terminal values of the BSVIE and the terminal payment offered by any admissible contract. This is the crucial insight in order to restrict our attention from the family of admissible contracts to a carefully tailored family of contracts for which the agent's value process allows a dynamic programming representation capturing the Volterra nature of the agent's reward. Extrapolating from the time-consistent case, the restricted family of contracts is defined in terms of a family of first-order sensitivities of the agent's value process to the output. For this family of contracts, we show that the principal identifies the equilibrium action for the agent as the maximisers of the associated Hamiltonian. Nevertheless, echoing the agent's time-inconsistent preferences, the resulting principal's problem is, in general, far from being a standard stochastic control problem.\medskip

Our main contribution, namely \Cref{thm:repcontractgeneral}, consists in a characterisation of the moral hazard problem faced by the principal and a sophisticated time-inconsistent agent. 
	In particular, it shows that under relatively mild technical conditions on the data of the problem, the supremum of the principal's expected utility over the restricted family of contracts is equal to the supremum over all feasible contracts. 
	Nevertheless, this characterisation yields, as far as we know, a novel class of control problems. 
	These problems involve the control of a forward Volterra equation via Volterra-type controls, and stochastic target constraints. 
	One of the novel features of our result is that the dynamics of this process involves the diagonal value of both the forward Volterra process and the Volterra control, see \Cref{def:indexedcontracts}. 
	In addition, the stochastic target constraint arises due to the time-inconsistent preferences of the agent, see \eqref{eq:constraintgen} and \Cref{rmk:targetconstraint}.\medskip

\textcolor{black}{Despite the inherent challenges of this class of problems, we study the solution to moral hazard problem under three different specifications of utility functions for both the agent and the principal. 
	For instance, for non-separable reward functionals, we find that if both the agent and the principal have exponential utilities functions and the agent's reward is given by the discounted value of his utility, the problem reduces to a standard control problem, see \Cref{sec:ra1} and \Cref{proposition:ra1principalreduction}. 
	This is a feature that we also see in the risk-sharing (or first-best) contracting examples between the principal and a time-inconsistent agent that we present in \Cref{sec:firstbestexamples}.
	The second example considers a risk-neutral principal and a risk-neutral agent with separable reward functional. In this case, our analysis shows that it is possible to reduce the complexity of the problem. 
	Indeed, we can exploit the structure of the problem to formulate an \emph{ansatz} to the principal's problem, for which we present a result in the spirit of a verification theorem, see \Cref{sec:separable} and \Cref{prop:sol2ndbest0}.
	In the last example, we go back to the first setting, but in this case the agent's (exponential) utility is taken on the discounted income. This simple modification highlights the intrinsic difficulties of the general case, and we are able to solve the problem of the principal for a class of contracts smaller than the one prescribed by the restricted family of contracts in \Cref{thm:repcontractgeneral}, see \Cref{sec:ra2} and \Cref{prop:sol2Bra2}.
	The general case remains the subject of future research.}

\medskip
Regarding the qualitative implications of our results we can mention the following:
\begin{enumerate}[label=$(\roman*)$, ref=.$(\roman*)$,wide,  labelindent=0pt]
\item from a methodological point of view, unlike in the time-consistent case, the solution to the moral hazard problem does not reduce, in general, to a standard stochastic control problem. Nevertheless, the solution to the risk-sharing problem between a utility maximiser principal and a time-inconsistent sophisticated agent does, see \Cref{sec:firstbestexamples}. This suggest a dire difference between the first-best and second-best problems as soon as the agent is allowed to have time-inconsistent preferences;

\item a second takeaway from our analysis is associated with the so-called optimality of linear contracts. These are contracts consisting of a constant part and a term proportional to the terminal value of the state process as in the seminal work of \cite{holmstrom1987aggregation}. This was also the conclusion of \citet*{carroll2015robustness} in a two-stage time-consistent model in which the principal demands robustness, in the sense of evaluating admissible contracts by their worst-case performance, over unknown actions the agent might take. Similar results we obtained by \cite{sung2005optimal,sung2021optimal} and \cite{mastrolia2015moral} in the continuous-time setting. Moreover, the results in \citet*{abijaber2021linear} show that the optimal contract remains linear when the output is driven by a Gaussian Volterra process (instead of Brownian motion). We study two examples that can be regarded as (time-inconsistent) variations of \cite{holmstrom1987aggregation}, which we refer to as discounted utility, see \Cref{sec:ra1}, and utility of discounted income, see \Cref{sec:ra2}. In the former case, by virtue of the simplicity of the source of time-inconsistency, we find that the optimal contract is linear. In the latter case, we find that the optimal contract is no longer linear unless there is no discounting (as in \cite{holmstrom1987aggregation}). Our point here is that slight deviations of the model in \cite{holmstrom1987aggregation} seem to challenge the virtues attributed to linear contracts, and this suggests that they would typically cease to be optimal in general for time-inconsistent agents;

\item lastly, we comment on the non-Markovian nature of the optimal contract. It is known that, beyond the realm of the model in \cite{holmstrom1987aggregation}, the optimal contract in the time-consistent scenario is, in general, non-Markovian in the state process $X$, see \cite{cvitanic2015dynamic}. Indeed, we find the same result, see \Cref{prop:sol2ndbest0}, in the case of an agent with separable time-inconsistent preferences, see \Cref{sec:separable}. As such, we believe this is a manifestation of the agent's time-inconsistent preferences.
\end{enumerate}

\textcolor{black}{Let us illustrate some of our results, see \Cref{sec:probstate} for precise definitions. Fix a time horizon $T>0$ and consider a sophisticated time-inconsistent agent with risk-neutral preferences. This is, if the agent enters into a terminal payment contract $\xi$ with the principal, the agent seeks an equilibrium strategy $\alpha^\star\in \Ec(\Cc)$ according to}
\begin{align*}
{\rm V}^{\rm A}_t(\xi,\alpha)=\E^{\P^\alpha}\bigg[f(T-t)\xi-\frac{1}2\int_t^Tf(s-t)\alpha_s^2 \mathrm{d}s\bigg|\mathcal F_t\bigg], \text{ and, } X_t=x_0+\int_0^t \alpha_r \d r+B^\alpha_t,\;  \P^\alpha\text{\rm --a.s.}
\end{align*}

\textcolor{black}{The principal is a risk-neutral utility maximiser. This is, among all the admissible contracts $\xi\in \Cc$ of $\Fc^X_T$-measurable random variables satisfying the agent's participation constraint ${\rm V}^{\rm A}_0(\xi,\alpha) \geq R_0$, she maximises}
\[
{\rm V^P_{FB}}=\sup_{\xi \in\Cc}\E^{\P^{\alpha^\star}}\big[ X_T-\xi \big].\]

The agent is time-inconsistent in light of the general discounting function $f$ appearing in his reward. We summarise and illustrate some typical discounting models next. In all the illustrations in this section we take $T=50$.
\begin{figure}[h]
    \centering
    \begin{minipage}[t]{.5\textwidth}
        \centering
\def\arraystretch{2}
\begin{tabular}{|c|c|}
\hline
 \textbf{$f(t)$}                 & IDR \\ \hline
 $f_{\e}(t):=\e^{-\gamma t}  $ &      $\gamma$                                                                                                                       \\ \hline
                $f_{\rm h} (t):=(1+\alpha t)^{-\frac{\gamma}{\alpha}}  $                   &              $\frac{\gamma}{1+\alpha t}  $                                                                                                             \\ \hline
          $f_{\rm q} (t):= (1-\beta) \e^{-t(\lambda +\gamma)}+ \beta   \e^{-t\gamma}  $         &         $\gamma + \frac{\lambda (1-\beta) } {(1-\beta) + \beta   \e^{t\lambda}  }     $                                                                                                                   \\ \hline        
\end{tabular}
        \label{fig:sub1}
    \end{minipage}%
    \begin{minipage}[t]{.45\textwidth}
        \centering
        \includegraphics[width=0.85\textwidth, trim = 0mm 47mm 0mm 10mm]{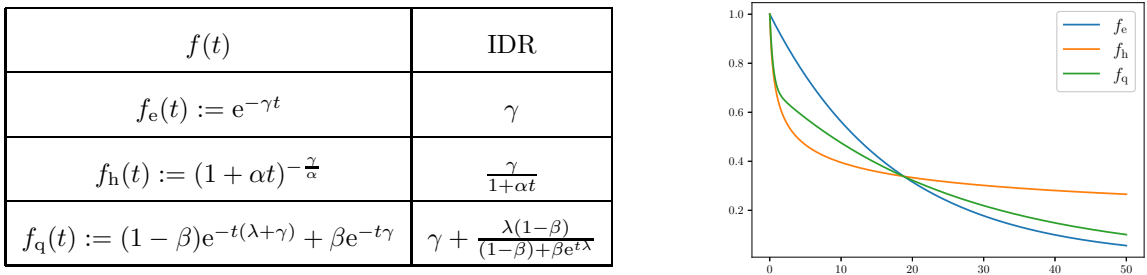}
    \end{minipage}
\caption{\small $(\gamma,\alpha,\lambda,\beta)\in  (0,\infty)^3\times [0,1]$. On right graph, $\gamma_{\rm e}= 0.0576,\alpha=4, \gamma_{\rm h}=1, \beta=0.7,\lambda = 2.197, \gamma_{\rm q} = 0.0387$.}
\end{figure}

\textcolor{black}{The most widely used discounting model in classic economics is the exponential discount function $f_\e$. This model captures the empirical evidence that future utils are worth less than current utils, yet the instantaneous discount rate (IDR), given by $-f^\prime (t)/f(t) $, is constant over time. To be able to accommodate the fact that consumers have both a short-run preference for instantaneous gratification and a long-run preference to act patiently, \citet*{ainslei1992pico} introduced the hyperbolic discounting model. Hyperbolic discounting generates the so-called self-control problem, \emph{e.g.} it declines at a faster rate in the short run than in the long run, depending on the value of $\alpha$, whereas $\gamma$ plays the role of baseline discounting rate. This qualitative property is even more evident in the so-called quasi-hyperbolic discounting model $f_{\rm q}$ introduced by \cite{laibson1997golden}. This model exhibits the short-run impatience of the hyperbolic model, but for long time horizons, its instantaneous discount rate resembles that of the exponential model. For this discounting, $\beta$ measures the value the agent gives to future periods, whereas $\lambda$ measures the agent's additional valuation for present/current periods. Once again, $\gamma$ is the baseline discounting rate. We can see these observations in both the IDR column of the table to the left and the plot of the three models on the right.}\medskip

\textcolor{black}{Let us recall that the problem faced by an agent seeking to maximise his reward is time-consistent if and only if he discounts future utils/rewards with the exponential model. Thus, taking $f_{\rm h}$ or $f_{\rm q}$ in the above formulation leads to genuine time-inconsistent problems for which sophistication would have an inherent impact in the equilibrium actions followed by the agent under the optimal contract. In addition, it is easy to see that:}
\begin{itemize}
\item $f_{\rm h}(t)\longrightarrow f_{\rm e}(t)$, as $\alpha\longrightarrow 0$,
\item $f_{\rm q}(t)\longrightarrow f_{\rm e}(t)$, as either $\beta \longrightarrow 1$ or $\lambda\longrightarrow 0$.
\end{itemize}

\textcolor{black}{For any sufficiently regular discounting model, including the ones just discussed, we find in \Cref{prop:sol2ndbest0} that the associated optimal contract is given by}
\[
\xi^\star=C(R_0)+\int_0^T\frac{z^\star(t)}{f(T-t)} \mathrm{d} X_t,
\]
\textcolor{black}{where $C(R_0)$ is a constant depending on the agent's reservation utility. The second term, however, reflects the non-Markovian nature of the terminal payment mentioned in $(iii)$ above, and it is inherently related to the non-exponential discounting structure. Conversely, whenever $f=f_{\rm e}$ the term $\frac{z^\star(t)}{f(T-t)}$ becomes constant, leading to the well-known optimal contract that is linear in the terminal value of the output process.}\medskip

We now turn our attention to the agent's equilibrium action under the optimal contract. We illustrate this in Figure 2 below. The three columns study the above model with $f_{\rm h}$, $f_{\rm q}$ and $f_{\rm q}$ for different values of $\alpha$, $\beta$ and $\lambda$, respectively. In light of the connection with the exponential discounting mentioned above, all columns include $f_{\rm e}(t)$ in blue, which serves as a true time-consistent baseline comparison model. The first row presents the value of the discounting functions and confirms the limits pictorially. The second row presents the associated instant discounting rates, IDR, and the last row does so for the equilibrium actions. Let us first note that the shape of the equilibrium action under exponential discounting is intuitively expected. It is convex and increasing with a shape that is inversely proportional to the exponential discounting term, this reflects that the discounted optimal effort should remain constant under the optimal strategy.
\medskip

Let us first look at the case of hyperbolic discounting agents. We see that as $\alpha$ increases, along the equilibrium effort, the sophisticated agent increases its level of effort during the initial stages. In addition, the rate at which the equilibrium effort changes over time (convexity/concavity) is positive for small values of $\alpha$ and negative for large values, \emph{e.g.} $\alpha_4$ and $\alpha_2$, respectively. This means that as the time-inconsistency intensifies, sophistication causes the agent to exert larger levels of effort at the initial stages of the game. This reflects how sophistication can help overcome procrastination.
\medskip

We now look at the quasi-hyperbolic agent in the centre and right column. As $\beta$ decreases, the agent gives less weight to the future periods and values more present over future gratification. In other words, the time-inconsistency intensifies, and the sophisticated agent decides to postpone some effort to the future period. In this scenario, despite sophistication, the agent cannot overcome procrastination. Lastly, as $\lambda$ decreases, the agent weighs less the present period, where his time-inconsistency is more acute so that the inconsistency lessens. We nevertheless find that even though the initial effort decreases, and procrastination dominates for these decreasing values of $\lambda$, namely $\lambda_1$ and $\lambda_2$, as the agent valuation of the present gets significantly small, $\lambda_3$ and $\lambda_4$ respectively, his effort increases overcoming procrastination and reaching the time-consistent level of effort. We believe that the different behaviours on the equilibrium effort for the quasi-hyperbolic discounting can be reconciled when looking at the associated IDR plots. For $t$ fixed, the IDR is monotonically decreasing in $\beta$, whereas there are values of $t$ for which the IDR oscillates when $\lambda$ decreases. 
\medskip

We leave the comprehensive study of these behaviours as the subject of future research. In particular, it would be interesting to study the extension of our results to the so-called instant gratification model in \citet*{harris2013instantaneous}, which implements $f_{\rm ig}$ given by $f_{\rm q}\longrightarrow f_{\rm ig}$, as $\lambda\longrightarrow \infty$, and which are beyond the scope of this document.

\begin{figure}[H]
    \centering
    \subfloat{
        \includegraphics[width=0.30\textwidth, trim = 10mm 5mm 9mm 10mm]{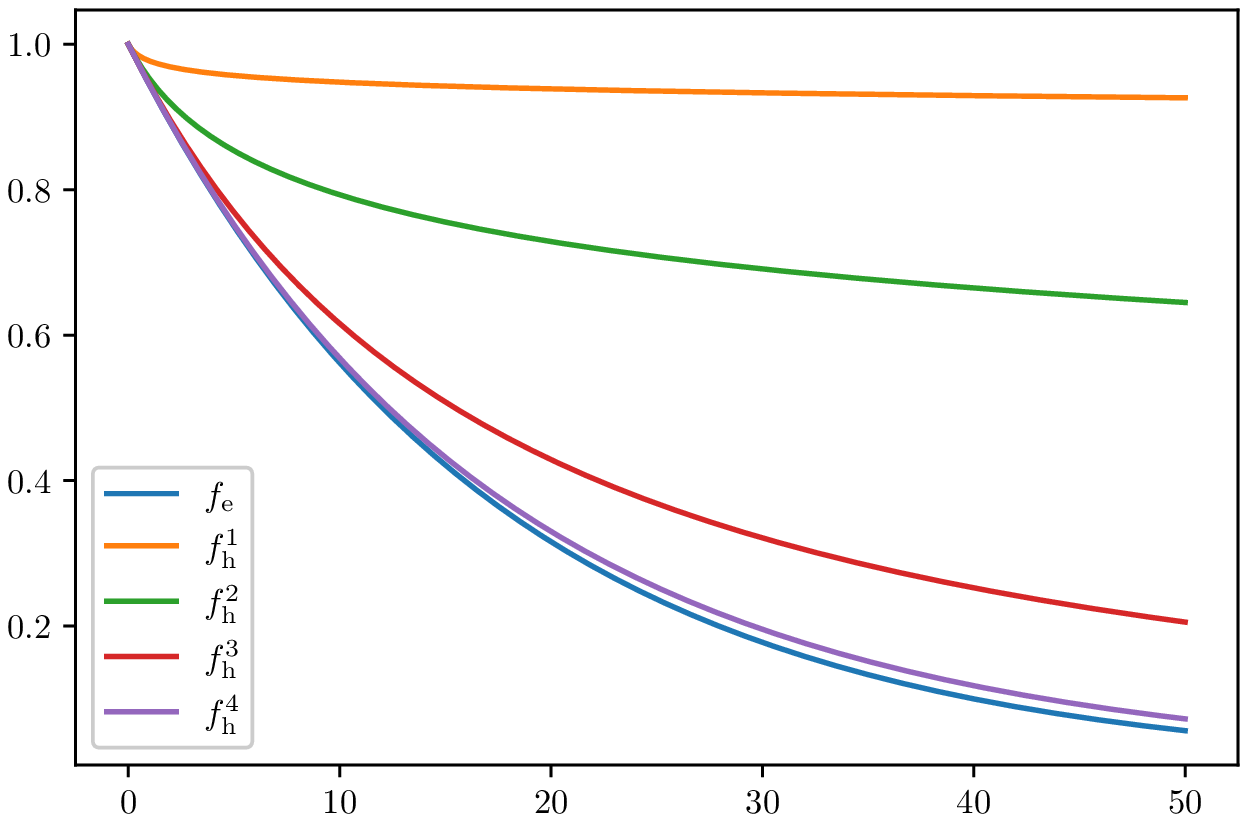}
    }
    \subfloat{
        \includegraphics[width=0.3\textwidth, trim = 10mm 5mm 9mm 10mm]{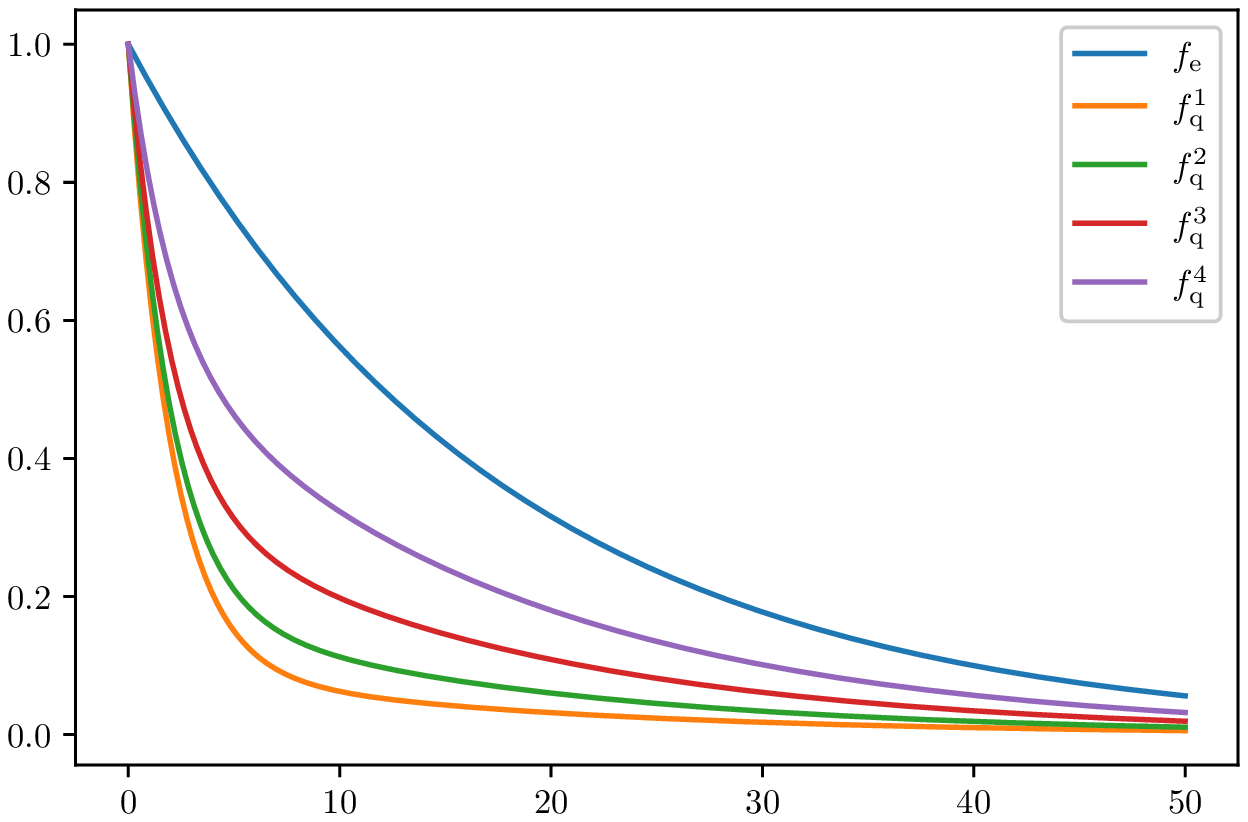}
    }
     \subfloat{
        \includegraphics[width=0.30\textwidth, trim = 10mm 5mm 10mm 10mm]{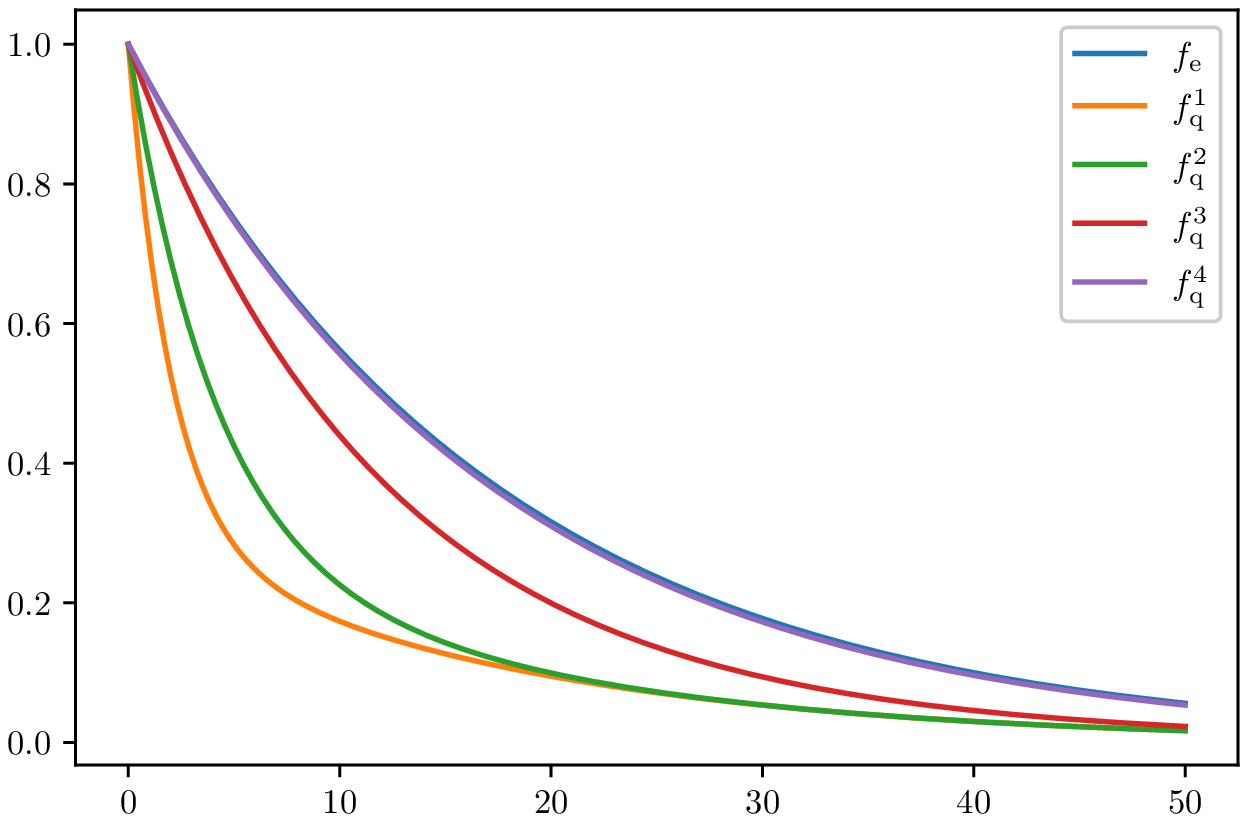}
    }
    \hfill
    \subfloat{
        \includegraphics[width=0.30\textwidth, trim = 10mm 5mm 9mm 10mm]{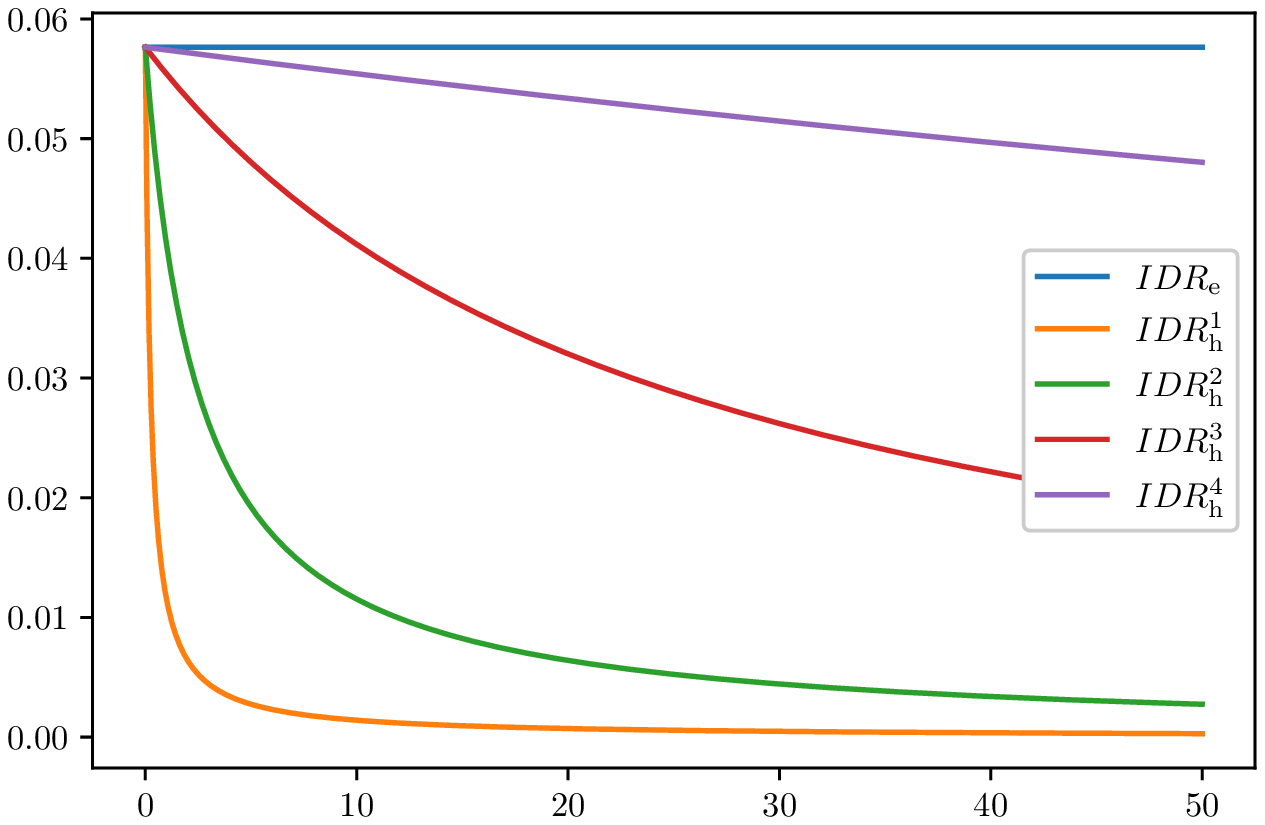}
    }
    \subfloat{
        \includegraphics[width=0.30\textwidth, trim = 10mm 5mm 9mm 10mm]{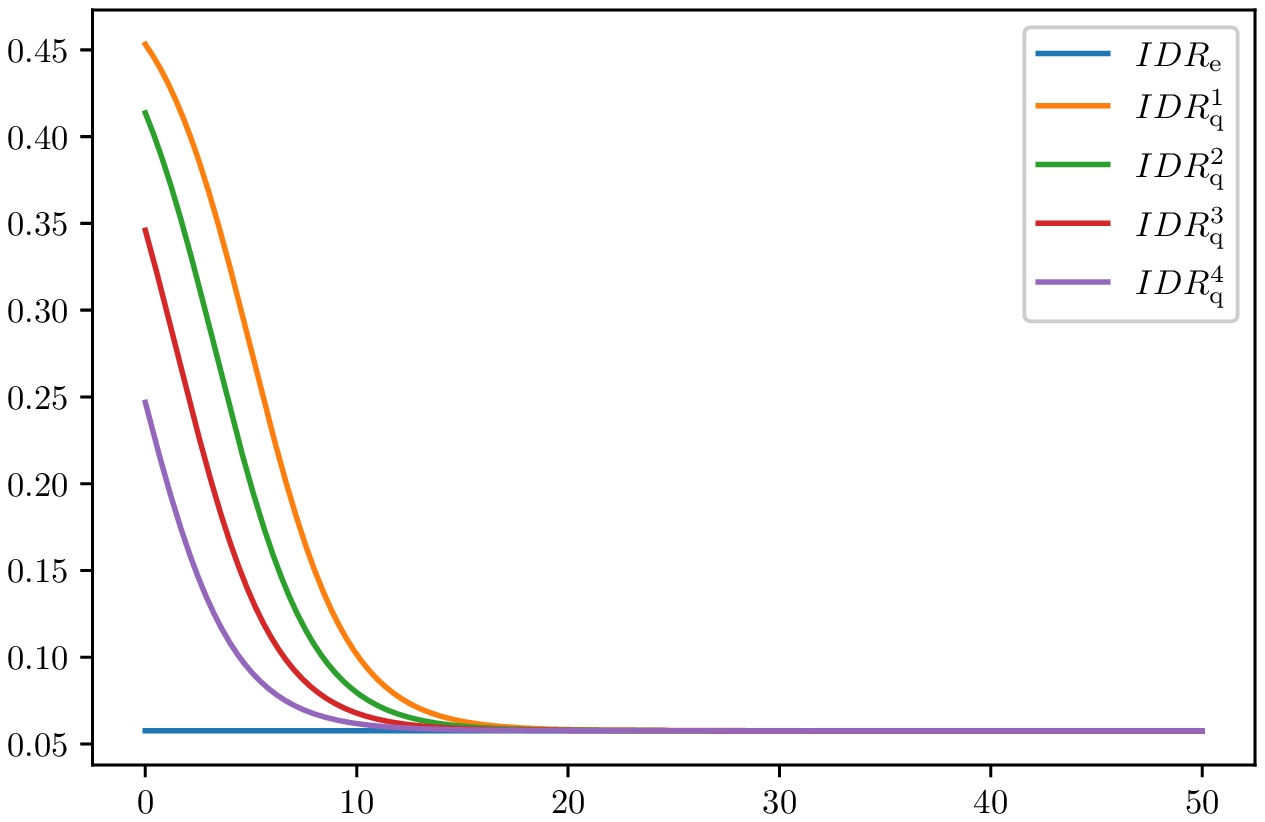}
    }
     \subfloat{
        \includegraphics[width=0.30\textwidth, trim = 10mm 5mm 10mm 10mm]{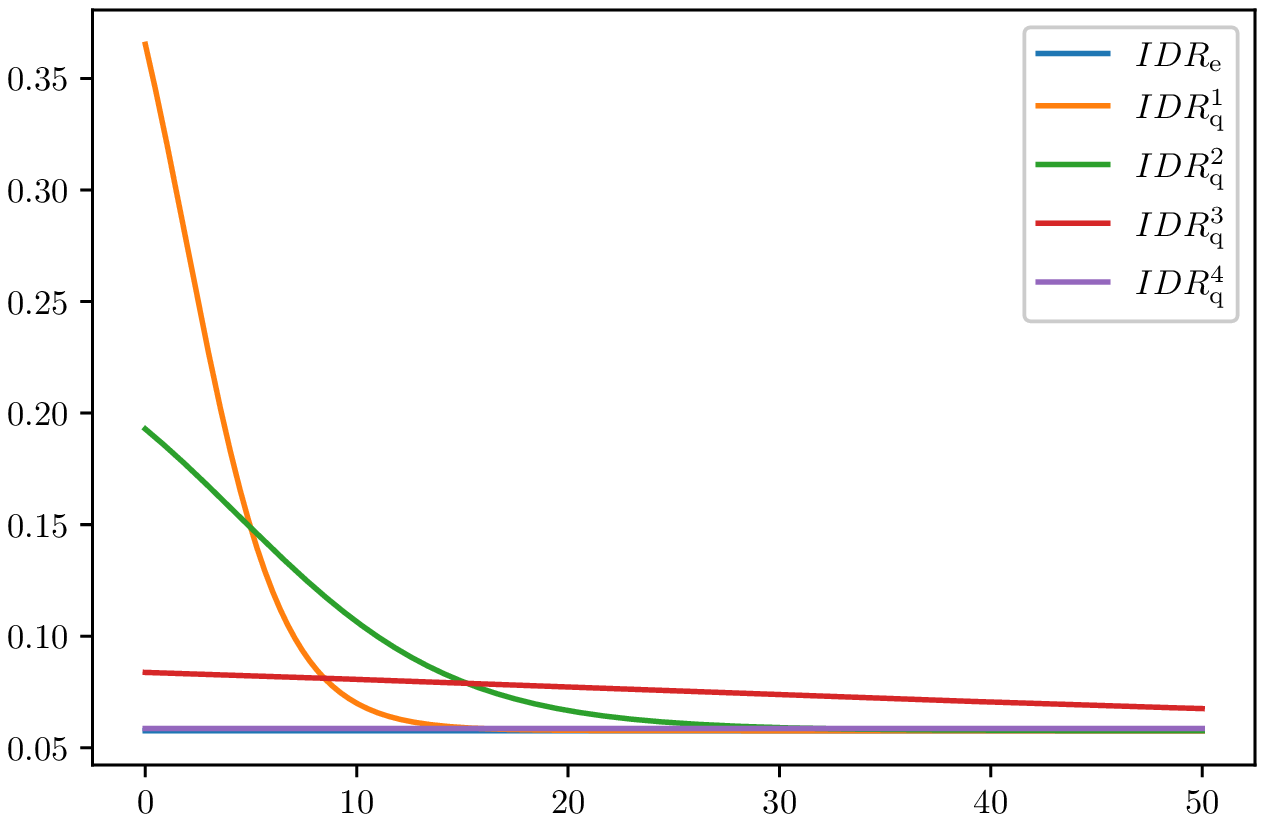}
    }
     \hfill
      \end{figure}%
\begin{figure}[H]\ContinuedFloat
    \subfloat{
        \label{ref_label1}
        \includegraphics[width=0.30\textwidth, trim = 14mm 11mm 6mm 10mm]{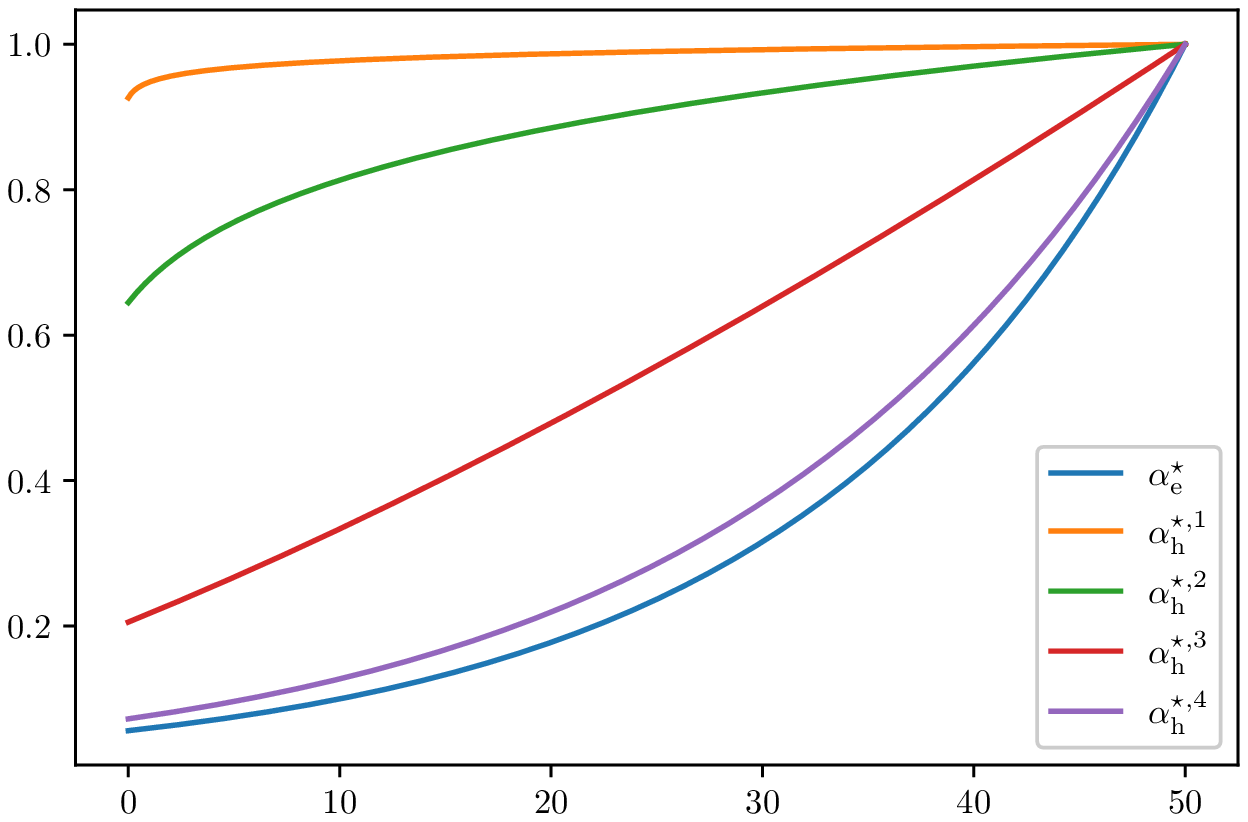}
    }
    \subfloat{
        \includegraphics[width=0.30\textwidth, trim = 45mm 100mm 37mm 98mm]{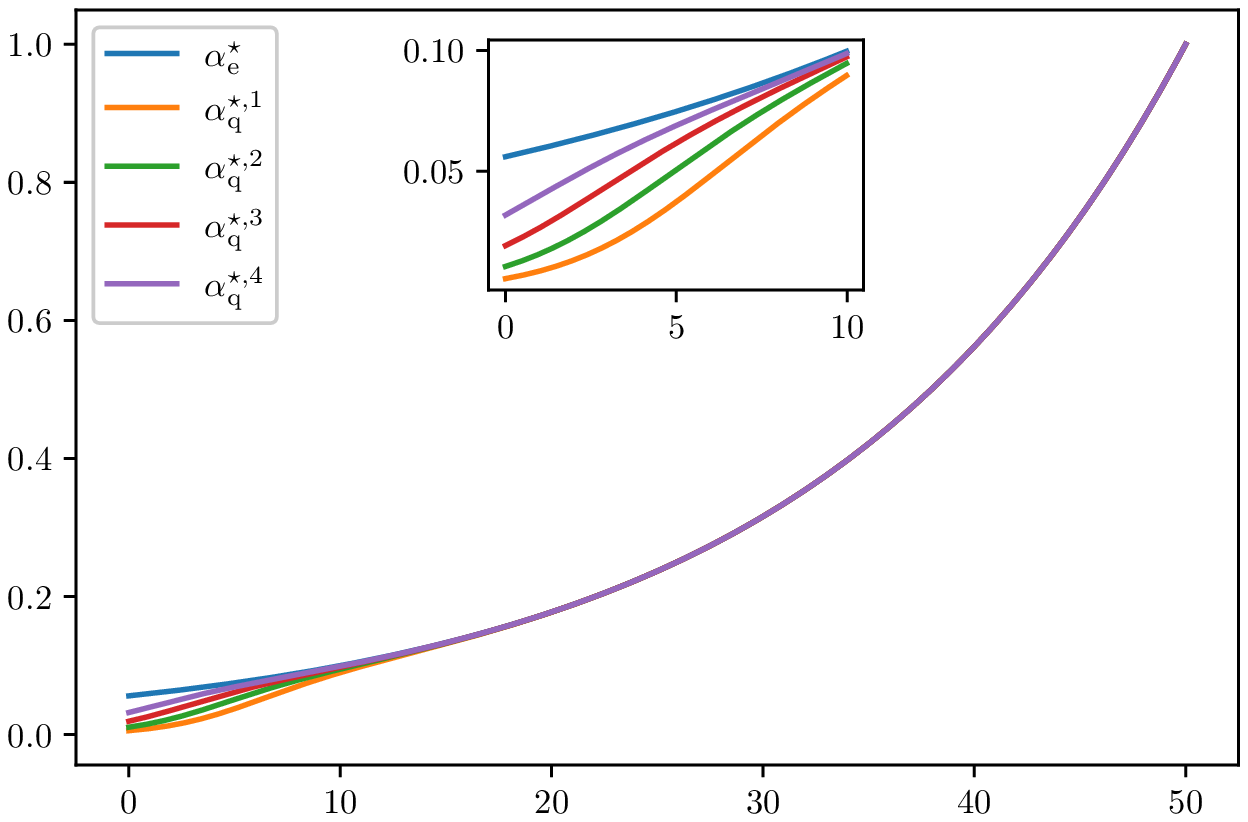}
    }
    \subfloat{
        \includegraphics[width=0.30\textwidth, trim = 45mm 100mm 37mm 98mm]{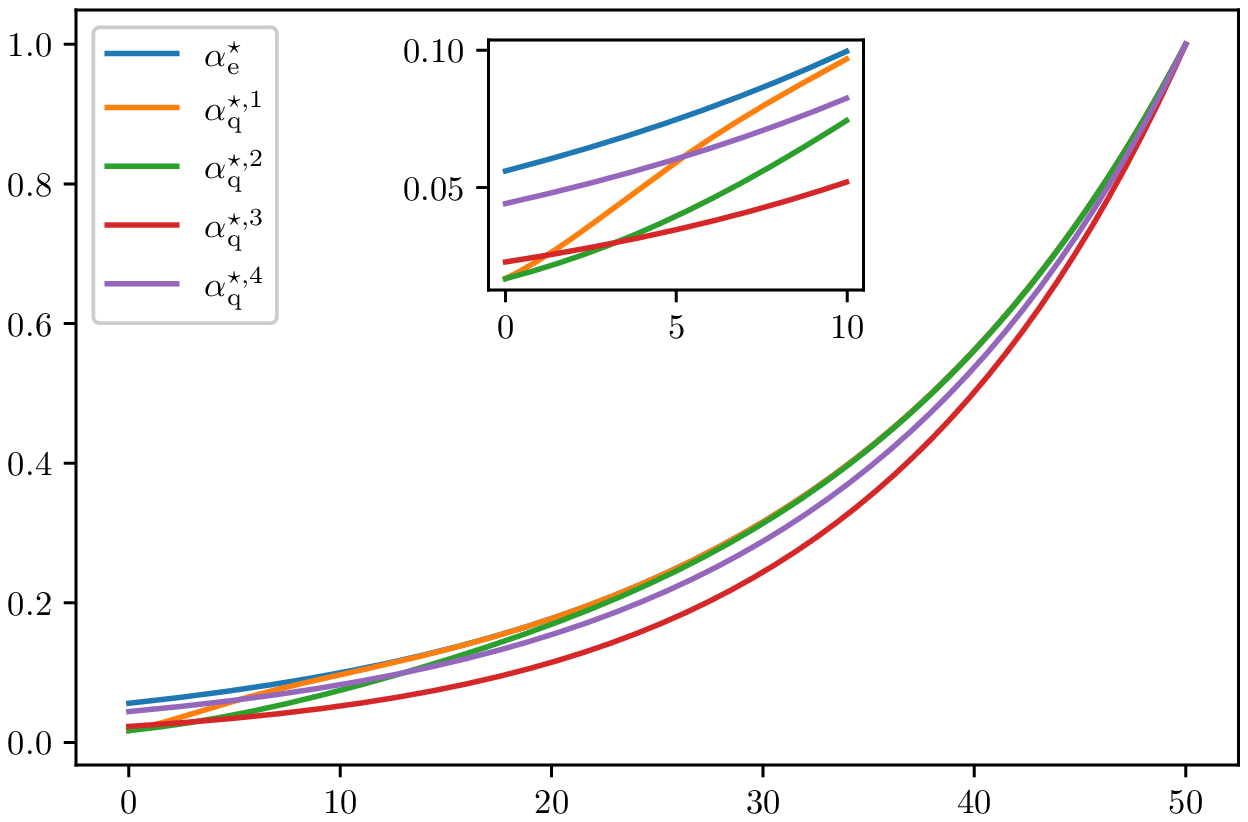}
    }
    \caption{\small First row: discounting models. Second row: IDR. Third row: agent's equilibrium action under optimal contract. $ \gamma_{\rm e}=\gamma_{\rm h}=\gamma_{\rm q}=0.0575$. Left: $(\alpha_1, \alpha_2,\alpha_3,\alpha_4) =(4,0.4,0.04,0.004)$. Center: $\lambda=0.439$, $(\beta_1, \beta_2, \beta_3, \beta_4)=(0.1, 0.19,0.343, 0.569)$. Right: $\beta=0.3$, $(\lambda_1,\lambda_2,\lambda_3, \lambda_4)=(0.439,0.1927,0.0371, 0.0013)$.}
    \label{ref_label_overall}
\end{figure}

The rest of this article is organised as follows. \Cref{sec:probstate} takes care of the formulation of the problem, and \Cref{sec:firstbestexamples} presents the solution to the first-best problem under three different specifications of preferences. The common feature in these examples is that the problem boils down to solving a standard stochastic control problem. \Cref{sec:secondbestproblem} introduces our general approach to the second-best problem and presents the proof of \Cref{thm:repcontractgeneral}. \Cref{sec:sbexamples} is devoted to the analyses of three examples under different specifications of time-inconsistent preferences for the agent. Lastly, we include an Appendix section collecting some new results for time-inconsistent control problems with BSVIE rewards and other technical results.

\medskip
{\small\textbf{Notations:} $\R_+$ denotes the set of non-negative real numbers. Let $(E,\|\cdot \|_E)$ be an arbitrary finite dimensional normed space. Given a positive integer $p$ and a non-negative integer $q$, $\Cc_{q}(E,\R^p)$ will denote the space of functions from $E$ to $\R^{p}$ which are at least $q$ times continuously differentiable. In the case $q=0$, of continuous functions, we drop the dependence on $q$ and write $\Cc([0,T],\R^{p})$. 
By 
$\text{I}_n$ we denote 
the identity matrix of $\R^{n\times n}$. $\S_n^+(\R)$ denotes the set of $n\times n$ symmetric positive semi-definite matrices. ${\rm Tr} [M]$ denotes the trace of a matrix $M\in \R^{n\times n}$.\medskip

Let $(E,\|\cdot\|_E)$ be an arbitrary finite dimensional normed space.  For a $\sigma$-algebra $\Fc$, $\Pc_{\text{meas}}(E,\Fc)$ denotes the space of $\Fc$-measurable $E$-value functions. $\Pc^2_{\text{meas}}(E,\Fc)\big)$ denotes the space of $U=(U_t^s)_{(s,t) \in \R_+^2 }$ such that $(\R_+^2\times \Omega, \Bc(\R_+^2)\otimes \Fc)  \longrightarrow (\Bc(E),E): s\longmapsto U^s$ measurable. For a filtration $\F:=(\Fc_t)_{t\geq 0}$ on $(\Omega, \Fc)$, $\Pc_{\rm pred}(E,\F)$ (resp. $\Pc_{\rm prog}(E,\F)$, $\Pc_{\rm opt}(E,\F)$, $\Pc_{\rm meas}(E,\F)$) denotes the set of $E$-valued, $\F$-predictable processes (resp. $\F$--progressively measurable processes, $\F$-optional processes, $\F$-adapted and measurable).

 \section{Problem statement}\label{sec:probstate}
We fix two positive integers $n$ and $d$, which represent respectively the dimension of the process controlled by the agent, and the dimension of the Brownian motion driving this controlled process. We fix a time horizon $T>0$, and consider the canonical space $\Omega:=\Cc([0,T],\R^{n})$, with canonical process $X$, and whose generic elements we denote $x$. We reserve the notation ${\rm x}$ and ${\rm \bf x}$ to denote $\R$-valued variables.

\medskip

We let $\Fc$ be the Borel $\sigma$-algebra on $\Omega$ (for the topology of uniform convergence), and we denote by $\F^X:=(\Fc^X_t)_{t\in[0,T]}$ the natural filtration of $X$. We let $A$ be a compact subspace of a finite-dimensional Euclidean space (typically $A$ is a subset of $\R^k$ for some positive integer $k$), where the controls will take values. }

\subsection{Controlled state equation}\label{sec:controlledstateeq}

We fix a bounded Borel measurable map $\sigma:[0,T]\times\Omega \longrightarrow \R^{n\times d}$, and an initial condition $x_0\in\R^n$, and assume that there is a unique solution, denoted by $\P$, to the martingale problem for which $X$ is an $(\F^X,\P)$--local martingale, such that $X_0=x_0$ with probability $1$, and $\mathrm{d}\langle X\rangle_t=\sigma_t(X_{\cdot\wedge t})\sigma^\top_t(X_{\cdot\wedge t})\mathrm{d}t$. Enlarging the original probability space if necessary (see \citet*[Theorem 4.5.2]{stroock1997multidimensional}), we can find an $\R^{d}$-valued Brownian motion $B$ such that
\[
X_t=x_0+\int_0^t\sigma_s(X_{\cdot\wedge r})\mathrm{d}B_r,\; t\in[0,T].
\]

We now let $\F:=(\Fc_t)_{t\in[0,T]}$ be the $\P$--augmentation of $\F^X$ which we assume is right-continuous. We recall that uniqueness of the solution to the martingale problem implies that the predictable martingale representation property holds for $(\F,\P)$-martingales, which can be represented as stochastic integrals with respect to $X$ (see \citet*[Theorem III.4.29]{jacod2003limit}). \textcolor{black}{We also mention that the right-continuity of $\F$ guarantees that $(\F,\P )$ satisfies the Blumenthal zero--one law and consequently all $\Fc_0$--measurable random variables are deterministic. Let us note that these assumptions are standard in the existing literature on the continuous-time principal--agent problem.}
\medskip


We can then introduce our drift functional $b:[0,T]\times\Omega \times A\longrightarrow \R^d$, which is assumed to be Borel-measurable with respect to all its arguments. Let us recall that for any $A$-valued, $\F$-predictable process $\alpha$ such that
\begin{equation}\label{eq:integalpha}
\E^{\P}\bigg[\exp\bigg(\int_0^Tb_r\big(X_{\cdot\wedge r},\alpha_r\big)\cdot\mathrm{d}B_r-\frac12\int_0^T\big\|b_s\big(X_{\cdot\wedge s},\alpha_r\big)\big\|^2\mathrm{d}r\bigg)\bigg]<\infty,
\end{equation} 
we can define the probability measure $\P^\alpha$ on $(\Omega,\Fc_T)$, whose density with respect to $\P$ is given by
\[
\frac{\mathrm{d}\P^\alpha}{\mathrm{d}\P}:=\exp\bigg(\int_0^Tb_r\big(X_{\cdot\wedge r},\alpha_r\big)\cdot\mathrm{d}B_r-\frac12\int_0^T\big\|b_r\big(X_{\cdot\wedge r},\alpha_r\big)\big\|^2\d r\bigg).
\]
Moreover, by Girsanov's theorem, the process $B^\alpha:=B-\int_0^\cdot b_r(X_{\cdot\wedge r},\alpha_r)\mathrm{d}r$ is an $\R^d$-valued, $(\F,\P^\alpha)$--Brownian motion and we have
\[
X_t=x_0+\int_0^t\sigma_r(X_{\cdot\wedge r})b_r(X_{\cdot\wedge r},\alpha_r)\d r+\int_0^t\sigma_r(X_{\cdot\wedge r})\mathrm{d}B^\alpha_r,\; t\in[0,T],\; \P\text{\rm --a.s.}
\]

Let us emphasise that we are working under the so-called weak formulation of the problem. 
	This means that the state process $X$ is fixed and, in contrast to the typical strong formulation, the Brownian motion, and the probability measure are not fixed. 
	Indeed, the choice of $\alpha$ corresponds to the choice of probability measure $\P^\alpha$ and thus impacts the distribution of process $X$.

\subsection{The agent's problem}\label{sec:agentprobstatement}

We aim to cover various specifications of time-inconsistent utility functions for the agent. 
	To motivate our formulation, let us start with an informal discussion on the typical nature of the reward functionals assigned to the agent in contract theory. 
	A contract $\Cc$ consists of a tuple $(\pi,\xi)$, where $\pi$ belongs the set of $\F$-predictable processes, and $\xi$ is a $\Fc_T$-measurable random variable. 
	At the intuitive level, a contract consists of a flow of continuous payments $\pi:=(\pi_t)_{t\in [0,T)}$, and a terminal compensation $\xi$. 
	The class of admissible contracts $\Xi$ is introduced later in \Cref{sec:principalproblemformulation} after imposing some integrability requirements.
	\medskip

Given a contract $\Cc$ the value received by a time-inconsistent agent at the beginning of the problem from choosing an action $\alpha$ typically takes the form
\[
\Vr^\Ar_0(\Cc,\alpha):= \E^{\P^\alpha}\big[ \Ur_\Ar(0,\xi,C_{0,T}^\alpha)\big],\; \text{with}\; C_{t,T}^\alpha:=\int_t^T c_r(t,X_{\cdot \wedge r},\pi_r,\alpha_r)\d r,
\]
where $\Ur_\Ar:[0,T]\times \R\times \R\longrightarrow \R, (t,{\rm x},{\rm c})\longmapsto \Ur_\Ar(t,{\rm x}, {\rm c})$ denotes the agent's utility function and $C_{t,T}^\alpha$ denotes the cumulative net cost functional. 
	We highlight that the generic dependence of both $\Ur_\Ar$ and $c$ on $t$ accounts for the sources of time-inconsistency. 
	In the classic literature, utilities are usually classified under two categories, namely
\begin{enumerate}[label=$(\roman*)$, ref=.$(\roman*)$,wide, labelindent=0pt]
\item \emph{separable utility} functions, \emph{i.e.} ${\Ur_\Ar}(t,{\rm x}  ,{\rm c})={\Ur_\Ar}(t,{\rm x})-{\rm c}$,
\item \emph{non-separable utility} functions, \emph{i.e.} ${\Ur_\Ar}(t,{\rm x},{\rm  c})={\Ur_\Ar}(t,{\rm x}-{\rm c})$.
\end{enumerate}

For instance, in the separable case the agent's value takes the familiar form

\[ \Vr^\Ar_0(\xi,\alpha)= \E^{\P^\alpha}\bigg[ \Ur_\Ar(0,\xi) -\int_0^T c_r(0,X_{\cdot \wedge r},\pi_r,\alpha_r)\d r\bigg],\]

which, by the Blumenthal zero--one law, satisfies $ \Vr^\Ar_0(\xi,\alpha)= Y_0^{0,\alpha}$, for $Y_0^{0,\alpha}$ the initial value of the first component of $(Y^{0,\alpha},Z^{0,\alpha})$ solution to the BSDE
\[ 
Y_t^{0,\alpha} = \Ur_\Ar(0, \xi) +\int_t^T \big( \sigma_r(X_{\cdot\wedge r})b_r(X_{\cdot\wedge r},\alpha_r) Z_r^{0,\alpha} -c_r(0,X_{\cdot \wedge r},\pi_r,\alpha_r)\big) \d r - \int_t^T  Z_r^{0,\alpha} \cdot \d X_r,\; \P\as
\]
Moreover, in the (time-consistent) case in which the agent discounts exponentially with constant factor $\rho$, \emph{i.e.} $\Ur_\Ar(t,{\rm x})= \e^{-\rho (T-t)}\Ur_\Ar({\rm x})$ and $c_t(s,x,p,a)= \e^{-\rho (t-s)}c_t(x,p,a)$, it holds that
\[ 
Y_t^{0,\alpha} = \Ur_\Ar(\xi) +\int_t^T \big( \sigma_r(X_{\cdot\wedge r})b_r(X_{\cdot\wedge r},\alpha_r) Z_r^{0,\alpha} -c_r(X_{\cdot \wedge r},\pi_r,\alpha_r)-\rho Y_r^{0,\alpha} \big) \d r - \int_t^T  Z_r^{0,\alpha} \cdot \d X_r,\; \P\as
\]

The previous representation corresponds to a so-called recursive utility particularly known as \emph{standard additive utility}, see \citet*{epstein1989substitution}. 
	Let us remark that an analogous argument holds in the case of the non-separable exponential utility and refer to \citet*{el1997backward} for more examples of recursive utilities. 
	Intuitively, a recursive utility can be viewed as an extension of the classic separable or non-separable utilities in which the instantaneous utility depends on the instantaneous action $\alpha_t$ and the future utility via $Y_t^{0,\alpha}$. 
	Extrapolating these ideas, we may arrive at considering rewards functionals of the form $ \Vr^\Ar_0(\xi,\alpha)=Y_0^{0,\alpha}$ where the pair $(Y^{\alpha},Z^\alpha)$ satisfies the BSVIE
\begin{align}\label{eq:bsvieval}
Y_t^{t,\alpha} = \Ur_\Ar(t, \xi) +\int_t^T  h_r\big(t,X_{\cdot\wedge r},Y_r^{t,\alpha}  , Z_r^{t,\alpha},\pi_r,\alpha_r\big) \d r-\int_t^T  Z_r^{t,\alpha} \cdot \d X_r,\; \P\as, \; t\in [0,T].
\end{align}

By letting both $\Ur_\Ar$ and $h$ depend on $t$ we allow for general discounting structures and incorporate time-inconsistency into the agent's preferences. 
	Moreover, the previous discussion shows that this formulation encompasses time-inconsistent recursive utilities too. 

\begin{remark}
In a Markovian framework, time-inconsistent agents whose reward functional is given by {\rm \eqref{eq:bsvieval}} have been considered in {\rm \citet*{wei2017time}}, {\rm \citet*{wang2019time}} and {\rm \citet*{hamaguchi2020extended}}. In these works, the dynamics of the controlled state process are given in strong formulation and, following the game-theoretic approach, they considered a refinement of the notion of equilibrium in {\rm \cite{ekeland2008investment}} that was suitable to each of their settings. In this work, we use {\rm BSVIEs} to model the agent's reward and extend the non-Markovian framework proposed in {\rm\cite{hernandez2020me}}.
\end{remark}

Let us now present this formulation properly. We define the set of admissible actions, recall $A$ is compact, as
\[ \Ac:=\{ \alpha \in \Pc_{\rm pred} ( A,\F): \eqref{eq:integalpha}\text{ holds}\},\]

and assume we are given jointly measurable mappings $h:  [0,T] \times \Omega\times \R  \times \R^{ n } \times \R \times A  \longrightarrow\R ,\;   h_\cdot(\cdot ,y,z,p,a )\in \Pc_{{\rm prog}}(\R,\F)$ for any $(y,z,p,a)\in \R \times\R^{n}\times \R \times A $, and $\Ur_\Ar:  [0,T]\times \R\longrightarrow \R$ satisfying the following set of assumptions.

\begin{assumption}\label{assump:datareward}

\begin{enumerate}[label=$(\roman*)$, ref=.$(\roman*)$,wide,  labelindent=0pt]
\item  \label{assump:datareward:0} For every $s\in [0,T]$, ${\rm x}\longmapsto \Ur_\Ar(s,{\rm x})$ is invertible, \emph{i.e.} there exists a mapping $\Ur_\Ar^{(-1)}:[0,T]\times \R\longrightarrow\R$ such that $\Ur_\Ar^{(-1)}(s,\Ur_\Ar(s,{\rm x}))={\rm x};$
\item  \label{assump:datareward:i} $(s,y,z) \longmapsto  h_t(s,x,y,z,p,a)$ $($resp. $s\longmapsto \Ur^\Ar(s,{\rm x}))$ is continuously differentiable. $\nabla h _\cdot(s,\cdot, u, v ,y,z,p,a)\in \Pc_{\rm prog}(\R,\F)$ for all $s\in [0,T]$, where $\nabla h:[0,T]^2\times \Omega \times (\R \times \R^{n} )^2\!\times \R\times A    \longrightarrow \R$ is defined by
\[
 \nabla h_t (s,x,u, v ,y,z,p,a):=\partial_s h_t(s,x,y,z,p,a)+\partial_y h_t(s,x,y,z,p,a){ u}+\sum_{i=1}^n \partial_{z_{i}} h_t(s,x,y,z,p, a){ v}_{i};
 \]
\item\label{assump:datareward:ii} for $\varphi \in \{h, \partial_s h \}$, $(y,z,a)\longmapsto  \varphi_t(s,x,y,z,p,a)$ is uniformly Lipschitz-continuous 
\emph{i.e.} there exists some $C>0$ such that $\forall (s,t,x,p,y,\tilde y,z,\tilde z,a,\tilde a)$,
\[
 |\varphi_t(s,x,y,z,p,a)-\varphi_t(s,x,\tilde y,\tilde z,p,\tilde a)|\leq C\big(|y-\tilde y|+|\sigma_t(x)^\t(z-\tilde z)|+|a-\tilde a|\big).
 \]
\end{enumerate}
\end{assumption}

\begin{remark}
Let us comment on the previous assumptions. The first condition guarantees we can identify units of utility with terminal contract payments. Indeed, the utility $\Ur_\Ar(s,\xi)$ is sufficient to identify, via $\Ur_\Ar^{(-1)}$, the payment $\xi$. The second assumption guarantees sufficient regularity, with respect to the variable source of inconsistency, of the data prescribing the agent's reward.
\end{remark}

We assume the agent has a reservation utility $R_0\in\R$ below which he refuses to take the contract. 
	The agent is hired at time $t=0$, and the contracts $\Cc$ offered by the principal, for which she can only access the information about the state process $X$, are assumed to provide the agent with a flow of continuous payments and a compensation at the terminal time $T$. 
	\textcolor{black}{Thus, we denote by $\Cf_0$, see \Cref{sec:spacesandassupm} for the definition of the integrability spaces, as the collection of contracts $\Cc=(\pi,\xi)\in\Pi\times \Xi$ for the families}
\begin{itemize}
\item $\Xi$ of $\R$-valued, $\Fc_T$-measurable $\xi$ such that $\big(( \Ur_\Ar(s,\xi))_{s\in [0,T]} ,(\partial_s  \Ur_\Ar(s,\xi))_{s\in [0,T]}\big)\in \Lc^{2,2}\times  \Lc^{2,2}$,
\item $\Pi$ of $\R$-valued, $\F$-predictable $\pi$ such that $\big(( h_\cdot(s,\cdot,0,0,\pi_\cdot, 0))_{s\in [0,T]} ,(\partial_s h_\cdot(s,\cdot,0,0,\pi_\cdot, 0))_{s\in [0,T]} \big)\in \mathbb{L}^{2,2}\times  \mathbb{L}^{2,2}$.
\end{itemize}
If hired, the agent chooses an effort strategy $\alpha\in \Ac$, and at any time $t\in[0,T]$, his value, from time $t$ onwards, from performing $\alpha$ is given by 
\[
\Vr_t^\Ar(\Cc,\alpha):= Y_t^{t,\alpha},
\] 
where the pair $(Y^{\alpha},Z^\alpha)$ satisfies the BSVIE \eqref{eq:bsvieval}. We recall ${\rm V}^{\rm A}(\Cc,\alpha)$ is commonly referred to in the literature as the \emph{continuation utility}. We always interpret ${\rm V}^{\rm A}(\Cc,\alpha)$ as a map from $[0,T]\times\Cc([0,T],\R^n)$ to $\R$.
\medskip

Given the choice of reward, the problem of the agent is time-inconsistent. 
	We therefore assume that the agent is a so-called sophisticated time-inconsistent agent who, aware of his inconsistency, can anticipate it, thus making his strategy time-consistent. 
	Consequently, the problem of the agent can be interpreted as an intra-personal game in which he is trying to balance all of his preferences and searches for sub-game perfect Nash equilibria. 
	We recall the definition of an equilibrium strategy introduced in \cite{hernandez2020me}, see further comments in \Cref{rmk:defequi}. 
	Let $\{\alpha^\star,\alpha\} \subseteq \Ac$, $t \in [0,T]$, and $\ell\in (0, T-t]$, we define $\nu\otimes_{t+\ell}\nu^\star:=\nu\1_{[ t, t+\ell)}+\nu^\star\1_{[ t+\ell ,T]}$.

\begin{definition}\label{def:equilibrium}
Let $\alpha^\star \in \Ac$. We say $\alpha^\star$ is an equilibrium if for any $\eps>0$, $\ell_\eps>0$, where
\begin{align*}
\ell_\eps:=\inf \Big\{\ell >0: \exists \alpha\in\Ac,\;  \P\big[\big\{x\in \Omega: \exists  t \in [0,T],\;  {\rm V}^{\rm A}_t(\Cc,\alpha^\star)< {\rm V}^{\rm A}_t(\Cc,\alpha\otimes_{t+\ell}\alpha^\star \big) -\eps \ell\big\} \big]>0 \Big\}.
\end{align*}
Given a contract $\Cc$, we call $\Ec(\Cc)$ the set of all equilibria associated with $\Cc$.
\end{definition}

As such, the agent's goal is, given a contract $\Cc$ that is guaranteed by the principal, to choose an effort that aligns with his sophisticated preferences, \emph{i.e.} to find $\alpha^\star\in \Ec(\Cc)$. 
	In contrast to the case of a classic time-consistent utility maximiser, for a time-inconsistent sophisticated agent, there could be more than one equilibria with potentially different rewards, see for instance \cite{landriault2018equilibrium}. 
	In this work, we will restrict our attention to the set of contracts \textcolor{black}{inducing a unique equilibrium}. See additional comments about this point in the following remark.

\begin{definition}\label{def:contractswithoneeqvalue}
$\Cf_o$ denotes the family of contracts $\Cc\in \Cf_0$ that lead to a unique equilibrium, \emph{i.e.} $\Ec(\Cc)=\{\alpha^\star\}$.
\end{definition}

All in all, for $\Cc\in \Cf_o$ we can now define
\begin{align*}
\Vr^\Ar_t(\Cc):=\Vr^\Ar_t(\Cc,\alpha^\star),\; \alpha^\star\in \Ec(\Cc).
\end{align*}

\begin{remark}\label{rmk:defequi}
\begin{enumerate}[label=$(\roman*)$, ref=.$(\roman*)$,wide,  labelindent=0pt]

\item In the non-Markovian framework, the strategy devised in {\rm \cite{hernandez2020me}} builds upon the approach in {\rm \cite{bjork2017time}} to study rewards given by conditional expectations of non-Markovian functionals. 
	This approach is based on decoupling the sources of inconsistency in the agent's reward and requires introducing the terms $\partial_s\Ur_\Ar$ and $\nabla h$ into the analysis, see {\rm \Cref{sec:appendixdpp}} for details.
	The integrability condition in the definition of $\Pi\times \Xi$ guarantees that the {\rm BSVIE} \eqref{eq:bsvieval}  is well-defined.  
	We also mention that {\rm \Cref{thm:dpp}} generalises the extended dynamic programming principle obtained in {\rm \cite{hernandez2020me}} for the case of rewards given by \eqref{eq:bsvieval} and equilibrium actions as in {\rm \Cref{def:equilibrium}}.

\item The previous definition of equilibrium can be regarded as a reformulation of the classic definition, in {\rm \cite{ekeland2008investment}}, via the $\liminf$. Indeed, it follows from {\rm \Cref{def:equilibrium}} that given $( \eps, \ell)\in (0,\infty)\times (0,\ell_\eps)$,  $\exists  \widetilde \Omega\subseteq \Omega$, $\P[\widetilde \Omega]=1$, such that
\[
\Vr^\Ar_t(\Cc,\alpha^\star)(x)- \Vr^\Ar_t(\Cc,\alpha \otimes_{t+\ell}\alpha^\star)(x) \geq  -\eps \ell,\; \forall (t,x,\alpha)\in [0,T]\times \widetilde \Omega \times \Ac.
\]
\item Lastly, we also expand on the necessity to \textcolor{black}{focus our attention on contracts that lead to a unique equilibrium}. 
	The need for said restriction is inherent to contract theory models involving a game-theoretic formulation at the level of the agent. 
	Indeed, in either the case of a finite number of competitive interacting agents seeking a Nash equilibrium, see {\rm \citet*{elie2019contracting}}, or a continuum of players seeking a mean-field equilibrium, see {\rm \citet*{elie2019tale}}, it is generally possible for multiple equilibria to exist. 
	In such cases, the existence of a Pareto-dominating equilibrium, one for which all agents receive no worse reward if deviating from a current equilibrium, is by no means guaranteed. 
	In the context of contract theory, this means that there is no clear rule at the level of the problem of the agent to decide which equilibria should be taken for any two equilibria providing different values to different players. 
	As giving control of this decision to the principal makes little practical sense, one way to bypass this is to \textcolor{black}{focus on contracts that lead to a unique equilibrium, as we did here}. 
\medskip 

\textcolor{black}{Anticipating our analysis in {\rm \Cref{sec:spacesandassupm}}, we mention that this assumption is intimately related to the well-posedness of a fairly novel class of {\rm BSVIEs}. 
	In the Lipschitz setting of this paper, we present conditions on the data of the problem under which this is the case for any $\Cc\in \Cf_o$, see {\rm \Cref{assump:uniquemax}} and {\rm \Cref{rmk:uniqueequilibria}}. 
	As such, this is not such a stringent assumption in our context.} 
\end{enumerate}
\end{remark}

\subsection{The principal's problem}\label{sec:principalproblemformulation}

We now present the principal's problem. We therefore let $\Cf\subseteq\Cf_o$ be the set of admissible contracts, defined by
\[
\Cf:=\big\{\Cc \in\Cf_o :{\rm V}^{\rm A}_0(\Cc)\geq R_0\big\}.
\]
In such manner, any contract $\Cc\in \Cf$ is \emph{implementable}, that is, there exists \textcolor{black}{an equilibrium strategy, namely $\alpha^\star\in \Ec(\Cc)$, for the agent's problem.} 
\medskip

The principal has utility functionals, ${\rm U_P}:\Omega \times \R\longrightarrow \R$, and $u^{\rm p}:[0,T]\times \Omega \times \R \times A\longrightarrow \R$ and solves the problem
\[
{\rm V^P}:=\sup_{\Cc\in\Cf}
\E^{\P^{\alpha^\star}}\bigg[{\rm U_P}\big(X_{\cdot \wedge T},\xi\big)+\int_0^T u_r^{\rm p}(X_{\cdot\wedge r},\pi_r,\alpha_r^\star)\d r \bigg].
\]
\begin{remark}
We point out that we have assumed the principal is a standard utility maximiser. 
	This is because, in our opinion, the crux of the problem lies in identifying a proper description of the problem of the principal when contracting a time-inconsistent sophisticated agent. 
	In the case of a time-consistent agent, {\rm \cite{cvitanic2015dynamic}} identifies this description as a standard stochastic control problem with an additional state variable. 
	Therefore, in the case of a classic time-consistent agent and a time-inconsistent principal, following {\rm \cite{cvitanic2015dynamic}}, one expects the problem of the principal to boil down to a non-Markovian time-inconsistent control problem with an additional state variable. 
	As studied in {\rm \cite{hernandez2020me}}, these problems are characterised by an infinite family of {\rm BSDEs}, analogue to the {\rm PDE} system in {\rm \cite{bjork2017time}} in the Markovian case.
\end{remark}

\section{The first-best problem}\label{sec:firstbestexamples}


In the first-best, or risk-sharing, problem, the principal chooses both the effort and the contract for the agent, and she is simply required to satisfy the participation constraint. To provide appropriate characterisations of the solution to several examples, we will focus on a particular class of reward functionals for the agent.  We recall that our goal is to study the second-best problem introduced in the previous section. As such, despite its inherent interest, the results in the current section serve mainly as a reference point for the general analysis we conduct in \Cref{sec:secondbestproblem}. Moreover, the following specification is covered by the general formulation presented in \Cref{sec:probstate}, see \Cref{rmk:specificationFBproblem}, and it is yet rich enough to cover examples of both separable and non-separable utilities. We highlight that in the next two examples, we consider contracts consisting of only a terminal payment, \emph{i.e.} $\Cc=\xi$.

 \medskip

Let us assume the agent has a given increasing and concave utility function $\Ur_\Ar^o: \R\longrightarrow \R$ and Borel-measurable discount functions $g$, and $f$ defined on $[0,T]$, taking values in $(0,+\infty)$, with $g(0)=f(0)=1$, which are assumed to be continuously differentiable with derivatives $g^\prime$, and $f^\prime$. Lastly, we have Borel-measurable functionals $k$ and $c$, defined on $[0,T]\times\Omega \times A$ and taking values in $\R_+$.\medskip

We then specify the agent's continuation utility by
\begin{align}\label{eq:reward}
\Vr_t^\Ar(\xi,\alpha)=\E^{\P^\alpha}\bigg[\Kc^{t,\alpha}_{t,T}f(T-t)\Ur_\Ar^o(g(T-t) \xi)-\int_t^T \Kc^{t,\alpha}_{t,r}  f(r-t)c_r(X_{\cdot \wedge r},\alpha_r)\d r\bigg| \Fc_t \bigg],\; (t,\alpha)\in [0,T]\times \Ac,
\end{align}
where
\begin{align*}
\Kc^{s,\alpha}_{t,T} :=\exp\bigg(\displaystyle\int_t^T g(r-s)k_r(X_{\cdot \wedge r},\alpha_r)\d r\bigg), \; (s,t,\alpha)\in [0,T]^2\times \Ac.
\end{align*}

Regarding the principal, we assume she has her own utility function $\Ur_\Pr^o:\R \longrightarrow \R$, which we assume to be concave and strictly increasing so that
\[
{\rm V^P}=\sup_{(\alpha,\xi)\in\Ac\times \Cf} \E^{\P^\alpha}\big[\Ur_\Pr^o\big(\Gamma(X_T)-\xi\big)\big],
\]
where $\Gamma:\R^n\longrightarrow\R$ denotes a mechanism by which the principal collects the values of the $n$ different coordinates of the state process $X$.

\begin{remark}\label{rmk:specificationFBproblem}
\begin{enumerate}[label=$(\roman*)$, ref=.$(\roman*)$,wide,  labelindent=0pt]
\item As commented above, the previous type of rewards are covered by the formulation via {\rm BSVIEs} \eqref{eq:bsvieval} and satisfy {\rm \Cref{assump:datareward}}. It corresponds to the choice $\Ur_\Ar(s, {\rm x} )=f(T-s) \Ur_\Ar^o(g(T-s) {\rm x} ),\; \partial_s  \Ur_\Ar  (s, {\rm x} )= -f^\prime(T-s)\Ur_\Ar^o(g(T-s) {\rm x} )-f(T-s)g^\prime(T-s)\partial_{{\rm x}}\Ur_\Ar^o(g(T-s) {\rm x} ),$ and 
\begin{align*}
h_t(s,x,y,z,a)&=\sigma_t(x)b_t(x,a)\cdot z-f(t-s)c_t(x,a)+g(t-s)k_t(x,a)y,\\
\nabla h_t(s,x,u,v,y,z,a)&=\sigma_t(x)b_t(x,a)\cdot v+f^\prime(t-s)c_t(x,a)-g^\prime(t-s)k_t(x,a)y+g(t-s)k_t(x,a)u.
\end{align*}
Regarding the principal, our specification corresponds to $\Ur_\Pr(x,{\rm x})=\Ur_\Pr^o(\Gamma(x_T)-{\rm x})$. Let us mention that, to facilitate the resolution of the following examples, we assumed that ${\rm U_P}$ depends only on the terminal value of $x_{T}$. This allow us to use the dynamics of $X$ as given in {\rm \Cref{sec:controlledstateeq}}. We highlight this assumption is not necessary in general analysis for the second best problem we present in {\rm \Cref{sec:secondbestproblem}}.
\end{enumerate}
\end{remark}

We now move on to characterise the solution to the first-best problem in the case of a time-inconsistent agent with both separable and non-separable utility functions. Anticipating the result, we highlight that in the first-best problem, the problem of the principal reduces to solving a standard stochastic control problem.

\subsection{Non-separable utility}\label{sec:raFB}

We recall that the CARA utility function, commonly known as the exponential utility, constitutes the stereotypical example of non-separable utility.  We then consider \eqref{eq:reward} under the choice $c= 0$, 
\[ {\Ur_\Pr^o}({\rm x})=-\frac{\e^{-\gamma_{\smallfont\Pr} {\rm x}}}{\gamma_{\rm P}}, \; { \Ur_\Ar^o}({\rm x} ):=-\frac{\e^{-\gamma_{\smallfont\Ar} {\rm x} }}{\gamma_{\rm A} }, \; {\rm x} \in \R, \;\gamma_\Ar>0,\; \gamma_\Pr>0,
\]   
$k_t(x,a)= \gamma_\Ar k_t^o(x,a)$ and assume $a\longmapsto k_t^o(x,a)$ is convex for any $(t,x)\in [0,T]\times \Omega$. We then have that
\begin{align}\label{eq:rewardraFB}
{\Vr}^{\Ar}_t(\xi,\alpha):=\E^{\P^\alpha}\Big [f(T-t) \Ur_\Ar^o\big(g(T-t)\xi-K_{t,T}^{t,\alpha} \big) \Big |\Fc_t\Big], \text{ where } K_{t,T}^{s,\alpha}:=\int_t^T g(r-s) k_r^o(X,\alpha_r)\d r.
\end{align}

The value of principal is thus obtained through the following constrained optimisation problem
\[
{\rm V^P_{FB}}:=\sup_{(\alpha,\xi)\in\Ac\times\Cf}\E^{\P^\alpha}\big[\Ur_\Pr^o(\Gamma(X_T)-\xi)\big],\; \text{\rm s.t.}\; \E^{\P^\alpha}\bigg[f(T){\rm U_A}\bigg(g(T)\xi-\int_0^Tg(r)k_r^o \big(X_{\cdot\wedge r},\alpha_r\big)\d r\bigg)\bigg]\geq R_0.
\]

Note that, the concavity (resp. convexity) of both $\Ur_\Ar^o$ and $\Ur_\Pr^o$ (resp. $a\longmapsto k_t^o(x,a)$) and the fact $(\Ac,\Cc)$ is a convex set, imply that $\Vr_{\rm FB}^\Pr$ is a concave optimisation problem. The Lagrangian associated to this problem, where $\rho\in \R_+$ denotes the multiplier of the participation constraint, is
\[
\mathfrak{L}(\alpha,\xi,\rho):=\E^{\P^\alpha}\bigg[\Ur_\Pr^o(\Gamma(X_T)-\xi)+\rho f(T)\Ur_\Ar^o\bigg(g(T)\xi-\int_0^Tg(r)k_r^o \big(X_{\cdot\wedge r},\alpha_r\big)\d r\bigg)\bigg]-\rho R_0  , \; (\alpha,\xi,\rho)\in \Ac\times \Cf\times \R_+.
\]
For convenience of the reader, we recall that the dual problem $\Vr_{\rm FB}^{\Pr,{\rm d}}$, which is an unconstrained control problem, is in general an upper bound of $\Vr_{\rm Fb}^\Pr$ and is defined by
\begin{align}\label{eq:weakduality}
\Vr_{\rm FB}^\Pr= \sup_{(\alpha,\xi)\in\Ac\times\Cf}  \inf_{ \rho \in \R_+ }\;  \mathfrak{L}(\alpha,\xi,\rho)\leq  \inf_{ \rho \in \R_+ }\sup_{(\alpha,\xi)\in\Ac\times\Cf} \mathfrak{L}(\alpha,\xi,\rho)=: \Vr_{\rm FB}^{\Pr,{\rm d}},
\end{align}
where we used the convention $\sup_{\emptyset}=-\infty$. As it is commonplace for convex problems, the next result exploits the absence of duality gap, \emph{i.e.} $\Vr_{\rm FB}^\Pr=\Vr_{\rm FB}^{\Pr,{\rm d}}$, to compute the value of $\Vr_{\rm FB}^\Pr$. It uses the following notations
\[
\bar \gamma:= \frac{\gamma_\Ar\gamma_\Pr g(T)}{\gamma_\Ar g(T)+\gamma_\Pr}, \;C_{y}:=-\frac{1}{\gamma_{\rm P}}\exp\bigg( \frac{\gamma_{\rm P}}{g(T)} {\Ur_\Ar^o}^{(-1)} (y)\bigg).
\]
\begin{proposition}\label{prop:solFBra}
Let 
\begin{align*}
\Vr_{{\rm cont}}:=\sup_{\alpha\in \Ac} \E^{\P^\alpha}\bigg[- \frac{1}{\bar\gamma}\exp\bigg(-\bar\gamma \Gamma(X_T)+\int_0^T\frac{\bar\gamma g(r)k_r^o}{g(T)}  \big(X_{\cdot\wedge r},\alpha_r\big)\d r   \bigg) \bigg],\;
\rho^\star := \frac{1}{g(T)f(T)}\bigg( \frac{ \bar \gamma f(T)  }{\gamma_\Ar R_0} \Vr_{{\rm cont}} \bigg)^{1+\frac{\gamma_\smallfont{\Pr}}{\gamma_\smallfont{\Ar} g(T)}} .
\end{align*}
Suppose $\Vr_{{\rm cont}}<\infty$ and for any $(\alpha,\rho)\in \Ac\times \R_+$, $\xi^\star(\rho,\alpha)\in \Cf$ where
\begin{align*}
\xi^\star(\rho,\alpha):= \frac{1}{g(T)\gamma_\Ar+\gamma_\Pr}\Big(\gamma_\Pr  \Gamma(X_T) +\gamma_\Ar K_{0,T}^{0,\alpha} +\log\big(\rho^\star g(T) f(T)\big) \Big).
\end{align*}
Then
\[
\Vr^{\Pr}_{\rm FB}=C_{\frac{R_\smallfont{0}}{f(T)}}  \Vr_{\rm cont}^{\frac{\gamma_\smallfont{\Pr}}{\bar \gamma}}.\; 
\]
Moreover, if $\alpha^\star$ is an optimal control for $\Vr_{\rm cont}$, then an optimal contract is given by $\xi^\star(\rho^\star,\alpha^\star)$.
\end{proposition}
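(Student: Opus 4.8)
The plan is to establish the claimed formula for $\Vr^\Pr_{\rm FB}$ by verifying that there is no duality gap in \eqref{eq:weakduality}, and that the candidate pair $\bigl(\alpha^\star,\xi^\star(\rho^\star,\alpha^\star)\bigr)$ together with the multiplier $\rho^\star$ is a saddle point of the Lagrangian $\mathfrak{L}$. First I would fix $(\alpha,\rho)\in\Ac\times\R_+$ and carry out the inner maximisation of $\mathfrak{L}(\alpha,\xi,\rho)$ over $\xi\in\Cf$ pointwise (state by state). Since both $\Ur^o_\Pr$ and $\Ur^o_\Ar$ are smooth, strictly increasing and concave, the map $\xi\longmapsto \Ur^o_\Pr(\Gamma(X_T)-\xi)+\rho f(T)\Ur^o_\Ar\bigl(g(T)\xi-K^{0,\alpha}_{0,T}\bigr)$ is strictly concave, so the first-order condition is necessary and sufficient. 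Differentiating and using the explicit exponential forms $\Ur^o_\Pr({\rm x})=-\e^{-\gamma_\Pr{\rm x}}/\gamma_\Pr$, $\Ur^o_\Ar({\rm x})=-\e^{-\gamma_\Ar{\rm x}}/\gamma_\Ar$ gives, after taking logarithms,
\[
\gamma_\Pr(\Gamma(X_T)-\xi)=\log\bigl(\rho f(T) g(T)\bigr)+\gamma_\Ar\bigl(g(T)\xi-K^{0,\alpha}_{0,T}\bigr),
\]
which upon rearranging yields exactly the stated $\xi^\star(\rho,\alpha)$ (with $\rho$ in place of $\rho^\star$). I would then substitute $\xi^\star(\rho,\alpha)$ back into $\mathfrak{L}$; the exponential structure makes both utility terms collapse into a common exponential of $-\bar\gamma\,\Gamma(X_T)+\int_0^T \bar\gamma g(r)k^o_r(X_{\cdot\wedge r},\alpha_r)/g(T)\,\d r$, so that, taking the supremum over $\alpha$, one gets $\sup_{(\alpha,\xi)}\mathfrak{L}(\alpha,\xi,\rho)$ as an explicit power of $\rho$ times $\Vr_{\rm cont}$ (to a power) minus $\rho R_0$ — here the definition of $\bar\gamma$ and the bookkeeping of the constants $C_y$ enter.

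Next I would perform the outer minimisation over $\rho\in\R_+$ of the resulting function of $\rho$, which is of the form $A\rho^{-\beta}-\rho R_0$ (with $\beta=\gamma_\Pr/(\gamma_\Ar g(T))>0$ and $A$ proportional to a power of $\Vr_{\rm cont}$); since $\Vr_{\rm cont}<\infty$ and $\Vr_{\rm cont}<0$ (it is a supremum of negative quantities), $A<0$, so in fact this is a \emph{maximisation}-like concave function and the minimum over $\R_+$ is attained at the interior critical point, which I expect to be precisely $\rho^\star$. Plugging $\rho^\star$ in and simplifying should give $\Vr^{\Pr,{\rm d}}_{\rm FB}=C_{R_0/f(T)}\,\Vr_{\rm cont}^{\gamma_\Pr/\bar\gamma}$. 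This is where one must be careful with signs and with the exact exponents appearing in $\rho^\star$ and in $C_y$; a sanity check is that when there is no discounting ($f\equiv g\equiv 1$) the formula should reduce to the classical Borch-rule / exponential first-best value.

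To close the argument I would invoke weak duality \eqref{eq:weakduality} to get $\Vr^\Pr_{\rm FB}\le \Vr^{\Pr,{\rm d}}_{\rm FB}$, and then establish the reverse inequality by exhibiting an admissible pair attaining the dual value: take $\alpha^\star$ optimal for $\Vr_{\rm cont}$ (assumed to exist) and $\xi^\star:=\xi^\star(\rho^\star,\alpha^\star)$, which is in $\Cf$ by hypothesis; one must check that $\bigl(\alpha^\star,\xi^\star\bigr)$ satisfies the participation constraint \emph{with equality} — this should follow from the very definition of $\rho^\star$, since $\rho^\star$ was chosen as the minimiser, i.e. the value at which the derivative of the Lagrangian dual in $\rho$ vanishes, which is exactly the complementary-slackness/binding-constraint condition — and that it achieves $\Vr^{\Pr,{\rm d}}_{\rm FB}$, hence also $\Vr^\Pr_{\rm FB}$.

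\textbf{Main obstacle.} The genuinely delicate point is not the pointwise optimisation (which is elementary convex calculus) but justifying the interchange of $\sup_{(\alpha,\xi)}$ and the pointwise/$\omega$-wise optimisation over $\xi$, and then the interchange with the $\sup_\alpha$ and $\E^{\P^\alpha}$: one needs the optimiser $\xi^\star(\rho,\alpha)$ to land in the admissible class $\Cf$ (an $\Fc_T$-measurability and integrability statement — this is precisely what the hypothesis ``$\xi^\star(\rho,\alpha)\in\Cf$'' buys us, together with $\Gamma(X_T)$ being $\Fc_T$-measurable and $K^{0,\alpha}_{0,T}$ being $\F$-adapted), and one needs the resulting $\alpha$-control problem to coincide with $\Vr_{\rm cont}$, including the reduction of the constant $\gamma_\Pr/\bar\gamma$ power. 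Verifying that no duality gap occurs — equivalently, that the saddle point genuinely exists rather than merely the dual bound being matched in the limit — is the crux; this is where the convexity of the whole problem (concavity of $\Ur^o_\Ar,\Ur^o_\Pr$, convexity of $a\mapsto k^o_t(x,a)$, convexity of $\Ac\times\Cc$), already noted in the text right before the statement, is used, so in practice the argument should go through cleanly once the algebra of the exponentials is organised correctly.
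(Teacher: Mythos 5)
Your proposal follows the paper's proof essentially verbatim: pointwise maximisation of the Lagrangian over $\xi$ via the first-order condition to obtain $\xi^\star(\rho,\alpha)$, substitution back to reduce the dual to a standard control problem in $\alpha$ multiplied by an explicit function of $\rho$, minimisation over $\rho$ via the first-order condition of a strictly convex map to get $\rho^\star$, and closure of the duality gap by checking that $(\alpha^\star,\xi^\star(\rho^\star,\alpha^\star))$ is primal feasible with the participation constraint binding. The one blemish is the asserted form $A\rho^{-\beta}-\rho R_0$ of the dual objective: the correct reduction is $-\rho R_0-\frac{1}{\bar\gamma}\big(\rho g(T)f(T)\big)^{\bar\gamma/(g(T)\gamma_\Ar)}\sup_{\alpha\in\Ac}\E^{\P^\alpha}[\exp(\cdots)]$, a \emph{positive} power $\bar\gamma/(g(T)\gamma_\Ar)=\gamma_\Pr/(\gamma_\Ar g(T)+\gamma_\Pr)\in(0,1)$ of $\rho$, which is precisely what makes the map strictly convex with an interior minimiser at $\rho^\star$ — with the negative exponent as you wrote it the infimum would sit at $\rho=0$ and no critical point would exist.
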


\subsection{Separable utility}\label{sec:separableFB}
We consider the case $k=0$ and $g= 1$ in \eqref{eq:reward}, and assume $a\longmapsto c(t,x,a)$ is convex for any $(t,x)\in [0,T]\times \Omega$. The agent's reward from time $t\in [0,T]$ onwards is given by
\begin{align}\label{eq:rewardseparable}
{\rm V}^{\rm A}_t(\xi,\alpha)=\E^{\P^\alpha}\bigg[f(T-t)\Ur_\Ar^o(\xi)-\int_t^Tf(s-t)c_s\big(X_{\cdot\wedge s},\alpha_s\big)\mathrm{d}s\bigg|\mathcal F_t\bigg], \; (t,\alpha,\xi)\in [0,T]\times \Ac\times \Cf.
\end{align}

The value of principal is thus obtained through the following constrained optimisation problem
\[
{\rm V^P_{FB}}:=\sup_{(\alpha,\xi)\in\Ac\times\Cf}\E^{\P^\alpha}\big[\Ur_\Pr^o(\Gamma(X_T)-\xi)\big],\; \text{\rm s.t.}\; \E^{\P^\alpha}\bigg[f(T)\Ur_\Ar^o (\xi)-\int_0^Tf(r)c_r\big(X_{\cdot\wedge r},\alpha_r\big)\mathrm{d}r\bigg]\geq R_0.
\]
The Lagrangian associated to this problem is
\[
\mathfrak{L}(\alpha,\xi,\rho):=\E^{\P^\alpha}\bigg[\Ur_\Pr^o(\Gamma(X_T)-\xi)+\rho f(T){\Ur_\Ar^o}(\xi)-\rho\int_0^Tf(r)c_r\big(X_{\cdot\wedge r},\alpha_r\big)\mathrm{d}r \bigg]-\rho R_0  , \; (\alpha,\xi,\rho)\in \Ac\times \Cf\times \R_+.
\]

\begin{proposition}\label{prop:solFBseparable}
\begin{enumerate}[label=$(\roman*)$, ref=.$(\roman*)$,wide,  labelindent=0pt]
\item \label{prop:solFBseparable:i} Suppose $\Ur_\Ar^o$ and $\Ur_\Pr^o$ are such that mapping $\xi^\star({\rm x},\rho)$ given as the solution to 
\begin{align*}
-\partial_{\rm x} \Ur_\Pr^o (\Gamma({\rm x})-\xi^\star({\rm x},\rho))+\rho f(T) \partial_{\rm x} \Ur_\Ar^o (\xi^\star({\rm x}, \rho) )=0, \; ({\rm x},\rho)\in \R^n\times \R_+,
\end{align*}
is well-defined and $\xi^\star(X_T, \rho  )\in \Cf$ for any $\rho\in \R_+$. Let
\[
\Vr_{\rm cont}(\rho):=\sup_{\alpha\in \Ac} \E^{\P^\alpha}\bigg[\Ur_\Pr^o\big(\Gamma(X_T)-\xi^\star(X_T,  \rho )\big)+\rho f(T)\Ur_\Ar^o\big(\xi^\star(X_T, \rho  )\big)-\rho\int_0^Tf(r)c_r\big(X_{\cdot\wedge r},\alpha_r\big)\mathrm{d}r \bigg], \; \rho\in \R_+.
\]
Then
\[
{\rm V^{P,d}_{FB}}=\inf_{\rho\in \R_+}\Big\{ -\rho R_0+\Vr_{\rm cont}(\rho) \Big\}.
\]
Moreover, suppose the pair $\big(\alpha^\star(\rho^\star),\xi^\star(X_T,\rho^\star)\big)$ is feasible for the primal problem, where $\alpha^\star(\rho)$ $($resp. $\rho^\star)$ denote the maximiser in ${\rm V_{cont}}(\rho)$ $($resp. the above problem$)$, which we assume to exist. Then, there is no duality gap, \emph{i.e.}
\[{\rm V^{P}_{FB}}={\rm V^{P,d}_{FB}},\]
the optimal contract is given by $\xi^\star(X_T,\rho^\star)$.

\item \label{prop:solFBseparable:ii} If $\Ur_\Pr^o({\rm x})=\Ur_\Ar^o({\rm x})={\rm x}$, for $\alpha\in \Ac$ let
\begin{align*}\label{eq:FBcontract}
\hat \Cf(\alpha)=\bigg\{\xi\in \Cf:  \E^{\P^\alpha}[\xi^\star]= \E^{\P^\alpha}\bigg[\int_0^T\frac{f(r)}{f(T)}c_r\big(X_{\cdot\wedge r},\alpha_r\big)\mathrm{d}r\bigg]+ \frac{R_0}{f(T)}\bigg\}.
\end{align*}
Then, the problem of the principal is given by the solution to the standard control problem
\[
\Vr^\Pr_{\rm FB}=-\frac{R_0}{f(T)}+\sup_{\alpha\in \Ac}  \E^{\P^\alpha}\bigg[\Gamma(X_T)-\int_0^Tf(r)c_r\big(X_{\cdot\wedge r},\alpha_r\big)\mathrm{d}r\bigg].
\]
Moreover, for $\alpha^\star\in \Ac$ an optimal control of this problem, $\hat\Cf(\alpha^\star)$ contains all the optimal contracts for the principal, \emph{e.g.} the deterministic contract
\[
\xi^\star:=\frac{R_0}{f(T)}+f(T)^{-1}\E^{\P^{\alpha^\star}}\bigg[\displaystyle \int_0^Tf(r)c_r\big(X_{\cdot\wedge r},\alpha_r^\star\big)\mathrm{d}r\bigg].
\]
\end{enumerate}
\end{proposition}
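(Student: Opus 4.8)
The plan is to handle part $(i)$ by Lagrangian (convex) duality, in the spirit of the max--min inequality \eqref{eq:weakduality} already used for the non-separable case. For a fixed multiplier $\rho\in\R_+$ I would first carry out the inner maximisation of $\mathfrak{L}(\alpha,\xi,\rho)$ over $(\alpha,\xi)\in\Ac\times\Cf$. For a fixed action $\alpha$, only the terms $\Ur_\Pr^o(\Gamma(X_T)-\xi)$ and $\rho f(T)\Ur_\Ar^o(\xi)$ involve $\xi$; since $\Ur_\Pr^o$ is increasing and concave and $\Ur_\Ar^o$ is concave, the map $\xi\longmapsto\E^{\P^\alpha}\big[\Ur_\Pr^o(\Gamma(X_T)-\xi)+\rho f(T)\Ur_\Ar^o(\xi)\big]$ is concave and is maximised pointwise by the random variable obtained from the $\omega$-by-$\omega$ first-order condition, which is exactly the equation defining $\xi^\star(\cdot,\rho)$ evaluated at $X_T$. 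The decisive point is that this maximiser $\xi^\star(X_T,\rho)$ does \emph{not} depend on $\alpha$ --- it is the same pointwise optimiser under every measure $\P^\alpha$ --- and by hypothesis it lies in $\Cf$; using the implicit-function regularity of $\xi^\star(\cdot,\rho)$ for measurability, this gives
\[
\sup_{(\alpha,\xi)\in\Ac\times\Cf}\mathfrak{L}(\alpha,\xi,\rho)=-\rho R_0+\sup_{\alpha\in\Ac}\E^{\P^\alpha}\Big[\Ur_\Pr^o\big(\Gamma(X_T)-\xi^\star(X_T,\rho)\big)+\rho f(T)\Ur_\Ar^o\big(\xi^\star(X_T,\rho)\big)-\rho\int_0^Tf(r)c_r\big(X_{\cdot\wedge r},\alpha_r\big)\mathrm{d}r\Big]=-\rho R_0+\Vr_{\rm cont}(\rho),
\]
and taking the infimum over $\rho\in\R_+$ yields the announced formula for ${\rm V^{P,d}_{FB}}$.

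For the absence of a duality gap, weak duality ${\rm V^P_{FB}}\leq{\rm V^{P,d}_{FB}}$ follows from the max--min inequality exactly as in \eqref{eq:weakduality}. For the reverse inequality I would use that $\rho\longmapsto\Vr_{\rm cont}(\rho)$ is convex: since $\xi^\star(X_T,\rho)$ is the pointwise maximiser, $\Vr_{\rm cont}(\rho)=\sup_{(\alpha,\xi)}\E^{\P^\alpha}\big[\Ur_\Pr^o(\Gamma(X_T)-\xi)+\rho\big(f(T)\Ur_\Ar^o(\xi)-\int_0^Tf(r)c_r\mathrm{d}r\big)\big]$ is a supremum of functions affine in $\rho$. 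The first-order optimality condition for the minimiser $\rho^\star$ of $\rho\longmapsto-\rho R_0+\Vr_{\rm cont}(\rho)$ on $\R_+$, together with the envelope identity placing $\E^{\P^{\alpha^\star(\rho^\star)}}\big[f(T)\Ur_\Ar^o(\xi^\star(X_T,\rho^\star))-\int_0^Tf(r)c_r\mathrm{d}r\big]$ in $\partial\Vr_{\rm cont}(\rho^\star)$, forces the complementary-slackness relation $\rho^\star\big(\E^{\P^{\alpha^\star(\rho^\star)}}[f(T)\Ur_\Ar^o(\xi^\star(X_T,\rho^\star))-\int_0^Tf(r)c_r\mathrm{d}r]-R_0\big)=0$. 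Combined with the assumed feasibility of $(\alpha^\star(\rho^\star),\xi^\star(X_T,\rho^\star))$, this shows that $(\alpha^\star(\rho^\star),\xi^\star(X_T,\rho^\star),\rho^\star)$ is a saddle point of $\mathfrak{L}$, whence
\[
{\rm V^{P,d}_{FB}}=-\rho^\star R_0+\Vr_{\rm cont}(\rho^\star)=\E^{\P^{\alpha^\star(\rho^\star)}}\big[\Ur_\Pr^o\big(\Gamma(X_T)-\xi^\star(X_T,\rho^\star)\big)\big]\leq{\rm V^P_{FB}}\leq{\rm V^{P,d}_{FB}},
\]
so the chain of inequalities collapses and $\xi^\star(X_T,\rho^\star)$ is optimal.

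For part $(ii)$, with $\Ur_\Pr^o({\rm x})=\Ur_\Ar^o({\rm x})={\rm x}$ the participation constraint reduces to the affine bound $f(T)\E^{\P^\alpha}[\xi]\geq R_0+\E^{\P^\alpha}\big[\int_0^Tf(r)c_r(X_{\cdot\wedge r},\alpha_r)\mathrm{d}r\big]$. For a fixed $\alpha$, the principal's objective $\E^{\P^\alpha}[\Gamma(X_T)-\xi]$ depends on $\xi$ only through $\E^{\P^\alpha}[\xi]$ and is strictly decreasing in it, so any optimal contract must saturate the constraint, that is, lie in $\hat\Cf(\alpha)$, and every element of $\hat\Cf(\alpha)$ --- in particular the deterministic one, whose admissibility ($\xi^\star\in\Cf$) is straightforward --- yields the same value. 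Substituting the saturated constraint into the objective eliminates $\xi$ and turns the principal's problem into the standard stochastic control problem over $\alpha\in\Ac$ stated in the proposition; any maximiser $\alpha^\star$ of the latter, paired with any contract in $\hat\Cf(\alpha^\star)$, is then optimal.

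The step I expect to be the main obstacle is closing the duality gap in part $(i)$: because the participation constraint is written under the control-dependent measure $\P^\alpha$, off-the-shelf convex-duality theorems do not apply directly, so one must argue by hand, relying on the convexity of $\rho\longmapsto\Vr_{\rm cont}(\rho)$, the first-order/envelope characterisation of $\rho^\star$ to extract complementary slackness, and a measurable-selection argument guaranteeing that the pointwise maximiser $\xi^\star(X_T,\rho)$ is an admissible contract. Part $(ii)$ is, by comparison, elementary once the constraint has been rewritten as a bound on $\E^{\P^\alpha}[\xi]$.
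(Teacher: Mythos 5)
Your proposal follows the same route as the paper: for part $(i)$, a pointwise ($x$-by-$x$) maximisation of the Lagrangian over $\xi$ yielding $\xi^\star(\cdot,\rho)$ (which, as you note, is the same under every equivalent measure $\P^\alpha$), reduction to the control problem $\Vr_{\rm cont}(\rho)$, and minimisation over $\rho$; for part $(ii)$, saturation of the affine participation constraint and substitution into the linear objective. The only divergence is that you close the duality gap by extracting complementary slackness from the first-order condition at $\rho^\star$ (convexity of $\rho\longmapsto\Vr_{\rm cont}(\rho)$ plus an envelope/subgradient identity), whereas the paper simply invokes primal feasibility of the dual optimiser; your version is actually the more complete one, since feasibility alone only gives $\E^{\P^{\alpha^\star(\rho^\star)}}\big[\Ur_\Pr^o(\Gamma(X_T)-\xi^\star(X_T,\rho^\star))\big]={\rm V^{P,d}_{FB}}-\rho^\star s$ with surplus $s\geq 0$, and one genuinely needs $\rho^\star s=0$ to collapse the chain of inequalities.
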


\begin{remark}
We remark that the assumption on the utility functions in {\rm \Cref{prop:solFBseparable}} is relatively reasonable. Indeed, it is immediately satisfied, for instance, in either of the following scenarii
\begin{enumerate}[label=$(\roman*)$, ref=.$(\roman*)$,wide,  labelindent=0pt]
\item $\Ur_\Pr^o({\rm x})={\rm x}$ and $\Ur_\Ar^o({\rm x})$ is strictly increasing$;$

\item for $\varphi\in \{\Ur_\Ar^o,\Ur_\Pr^o\}$, $x\longmapsto \varphi({\rm x})$ is concave, strictly increasing and satisfies the following conditions
\begin{align*}
\lim_{{\rm x}\to -\infty} \partial_{\rm x} \varphi({\rm x})=\infty,\; \lim_{{\rm x}\to \infty}\partial_{\rm x} \varphi({\rm x})=0.
\end{align*}
\end{enumerate}
\end{remark}

\section{The second-best problem: general scenario}\label{sec:secondbestproblem}

In this section, we bring back our attention to the second-best problem faced by the principal
\[
{\rm V^P}=\sup_{\Cc\in\Cf} \E^{\P^{\alpha^\smallfont{\star}}}\bigg[{\rm U_P}\big(X_{\cdot \wedge T},\xi\big)+\int_0^T u_r^{\rm p}(X_{\cdot\wedge r},\pi_r,\alpha_r^\star)\d r \bigg].
\]

We will exploit the theory of type-I BSVIEs. Consequently, we first introduce suitable integrability spaces.

\subsection{Integrability spaces and Hamiltonian}\label{sec:spacesandassupm}

Following \cite[Section 2.2]{hernandez2020unified}, to carry out the analysis we introduce the spaces

\begin{enumerate}[label=$\bullet$, ref=.$(\roman*)$,wide,  labelindent=5pt]

\item $\Lc^2$ of $\xi \in \Pc_{\rm meas}(\R,\Fc)$, such that $ \|\xi\|_{\Lc^\smallfont{2}}^2:= \E\big[ |\xi |^2\big]  <\infty;$

\item $\S^2$ of c\`adl\`ag $Y\in \Pc_{\text{prog}}(\R,\F)$ such that $ \|Y\|_{\S^\smallfont{2} }^2:=\E \bigg[ \displaystyle \sup_{t\in [0,T]} | Y_t|^2\bigg] <\infty;$

\item  $\mathbb{L}^{2}$ of $Y\in \Pc_{\text{opt}}(\R,\F)$, with $ \|Y\|_{\mathbb{L}^{\smallfont 2}}^2:= \E\bigg[ \bigg( \displaystyle\int_0^T |Y_r|^2\d r \bigg)\bigg]<\infty$;

\item $\H^2$ of $Z\in \Pc_{\rm pred}(\R^n,\F)$ such that $ \|Z\|_{\H^\smallfont{2} }^2:=\E \bigg[ \displaystyle  \int_0^T |\sigma_r \sigma^\t_r Z_r|^2\d r  \bigg] <\infty;$

To make sense of the class of systems considered in this paper we introduce some extra spaces.

\item Given a Banach space $(\I,\| \cdot \|_{\I})$ of $E$-valued processes, we define $(\I^2,\|\cdot \|_{\I^{\smallfont 2}})$ the space of $U\in \Pc^2_{\text{meas}}(E,\Fc)$ such that $([0,T],\Bc([0,T]))$ $ \longrightarrow (\I^{2},\|\cdot \|_{ \I^{\smallfont 2}}): s \longmapsto U^s $ is continuous and $ \|U\|_{\I^{\smallfont 2}}:=\displaystyle \sup_{s\in [0,T]}  \|U^s\|_{\I} <\infty.$

For instance, $\S^{2,2} $  denotes the space of  $Y\in \Pc^2_{\rm meas }(\R,\Fc)$ such that $([0,T],\Bc([0,T])) \longrightarrow (\S^{2} (\R),\|\cdot \|_{ \S^{\smallfont 2} }): s \longmapsto Y^s $ is continuous and $\displaystyle \| Y\|_{\S^{\smallfont{2}\smallfont{,}\smallfont{2}} }:= \sup_{s\in[0,T]} \| Y^s\|_{\S^{\smallfont 2}} <\infty$.

\item $\overline{\H}^{_{\raisebox{-1pt}{$ \scriptstyle 2,2$}}}$ of $(Z_\uptau)_{\uptau \in [0,T]^2 }\in \Pc^2_{\text{meas}}(\R^n,\Fc)$ such that $([0,T],\Bc([0,T])) \longrightarrow (\H^{2},\|\cdot \|_{ \H^{\smallfont 2}}): s \longmapsto Z^s $ is continuously differentiable with derivative $\partial Z$, and $ \|Z\|_{\overline{\H}^{\smallfont{2}\smallfont{,}\smallfont{2}}}^2:=\|Z\|_{\H^{\smallfont{2}\smallfont{,}\smallfont{2}}}^2+\|\Zc\|_{\H^\smallfont{2}}^2<\infty$, where $\Zc:=(Z_t^t)_{t\in [0,T]}\in \H^2$ is given by
\[ Z_t^t:=Z_t^0+\int_0^t \partial Z_t^r \d r. \] 

Lastly, we introduce the space ${\mathfrak H}:=\S^2\times \H^2\times \S^{2,2}\times \Ho \times \S^{2,2}\times \H^{2,2}$. 

\begin{remark}
The second set of these spaces are suitable extensions of the classical ones, whose norms are tailor-made to the analysis of the systems we will study. 
	Some of these spaces have been previously considered in the literature on {\rm BSVIEs}, \emph{e.g.} {\rm \cite{hernandez2020unified}} and {\rm\cite{wang2019time}}. 
	Of particular interest is the space $\Ho$ which allows us to define a good candidate for $(Z_t^t)_{t\in [0,T]}$ as an element of $\H^{2}$, see {\rm \cite{hamaguchi2020extended}}. 
\end{remark}

\end{enumerate}

\subsection{Characterising equilibria and the BSDE system}\label{sec:system}

Building upon the results in \cite{hernandez2020me}, where only the case of an agent with separable utility was considered, we wish to obtain a characterisation of the equilibria that are associated to any $\Cc\in \Cf$. For this we must introduce the Hamiltonian functional $H:[0,T]\times\Omega \times \R \times\R^n\times \R\longrightarrow \R$ given by
\[
H_t(x,{\rm y},{\rm z},p):=\sup_{a\in A} h_t(t,x,{\rm y},{\rm z},p,a),\; (t,x,{\rm y},{\rm z},p)\in [0,T]\times \Omega\times \R\times \R^n\times \R.
\]

Our standing assumptions on $H$ are the following.
\begin{assumption}\label{assump:uniquemax}
\begin{enumerate}[label=$(\roman*)$, ref=.$(\roman*)$,wide,  labelindent=0pt]
\item \label{assump:uniquemax:i}The map $\R\times \R^n \ni (y,z)\longmapsto H_t(x,y,z)$ is uniformly Lipschitz-continuous, 
\emph{i.e.} there is $C>0$ such that for any $ (t,x,{\rm y},{\rm \tilde y},p,{\rm z},{\rm \tilde z})\in [ 0,T]\times\Omega \times \R^3\times(\R^n)^2$
\[
\big|H_t(x,{\rm y},{\rm z},p)-H_t(x,{\rm \tilde y},{\rm \tilde z},p)\big|\leq C\big( |{\rm y}-{\rm \tilde y}|+ |\sigma_t(x)^\t({\rm z}- {\rm \tilde z})|\big);
\]
\item \label{assump:uniquemax:ii}there exists a unique Borel-measurable map $a^\star:[0,T]\times\Omega \times\R\times \R^n\longrightarrow A$ such that
\[
H_t(x,{\rm y},{\rm z},p)=h_t\big(t,x,{\rm y},{\rm z},p,a^\star(t,x,{\rm y},{\rm z},p)\big),\; \forall(t,x,p,{\rm y},{\rm z})\in[0,T]\times\Omega \times \R^2 \times\R^n.
\]
\item \label{assump:uniquemax:iii}The map $\R\times \R^n \ni ({\rm y},{\rm z})\longmapsto a^\star(t,x,{\rm y},{\rm z})$ is uniformly Lipschitz-continuous, 
\emph{i.e.} there is $C>0$ such that for any $ (t,x,p,{\rm y},{\rm \tilde y},{\rm z},{\rm \tilde z})\in [ 0,T]\times\Omega \times \R^3\times(\R^n)^2$.
\[
\big|a^\star(t,x,{\rm y},{\rm z},p)-a^\star(t,x,{\rm \tilde y},{\rm \tilde z},p)\big|\leq C\big( |{\rm y}-{\rm \tilde y}|+|\sigma_t(x)^\t({\rm z}-{\rm \tilde z})|\big);
\]
\end{enumerate}
\end{assumption}

To ease the notation we introduce $h_r^\star (s,x,y , z,{\rm y}, {\rm z},p):=h_r\big(s,x,y , z,p,a^\star(r,x,{\rm y} , {\rm z},p)\big)$, $\nabla h_r^\star(s,x,u ,  v,y , z,{\rm y},{\rm z},p):=\nabla h_r\big(s,x,u ,v,y , z,p,a^\star(r,x,{\rm y} , {\rm z},p)\big)$, and $b_t^\star\big(x,{\rm y},{\rm z},p\big):=b_t\big(x,a^\star(t,x,{\rm y},{\rm z},p)\big)$.

\begin{remark}\label{rmk:uniqueequilibria}
Let us comment on the previous set of assumptions. 
	Even in the non-Markovian setting of this document, the problem faced by a sophisticated agent is related to a system of equations instead of just one, see {\rm \cite{hernandez2020me}}. 
	This raises many issues, among which is the possibility of multiplicity of equilibria with different values. 
	{\rm \Cref{assump:uniquemax}\ref{assump:uniquemax:i}}, {\rm \ref{assump:uniquemax}\ref{assump:uniquemax:iii}} guarantee that for a given $\Cc\in \Cf$ any equilibria $\alpha^\star\in \Ec(\Cc)$ corresponds to a maximisers of the Hamiltonian. 
	Ultimately, {\rm \Cref{assump:uniquemax}\ref{assump:uniquemax:ii}} guarantees that there is only one maximiser of the Hamiltonian. 
	Let us mention that the existence of $a^\star$ is guaranteed under {\rm \Cref{assump:datareward}\ref{assump:datareward:ii}} by {\rm \citet*[Theorem 3]{schal1974selection}}. 
	This conciliates our focus on contracts leading to unique equilibria as we stated in {\rm \Cref{sec:agentprobstatement}}.
\end{remark} 

Under this set of assumptions, we are able to show, see \Cref{sec:appendixdpp}, that for any $\Cc\in\Cf$
\[
\Ec(\Cc)=\big\{(a^\star(t,X_{\cdot\wedge t},Y_t(\Cc),Z_t(\Cc),\pi_t))_{t\in[0,T]}\big\},
\]

where the processes $\big(Y(\Cc),Z(\Cc)\big)$ come from the solution to the following infinite family of BSDEs which for any $s\in[0,T]$ satisfies, $\P$--a.s.
\begin{align}\label{HJB0:general}
Y_t(\Cc)&= \Ur_\Ar(T,\xi)+\int_t^T\! \Big( H_r\big(Y_r(\Cc),Z_r(\Cc),\pi_r\big)-\partial Y_r^r(\Cc) \Big) \d r-\int_t^T\!  Z_r(\Cc) \cdot  \mathrm{d} X_r, \; t\in [0,T],\notag \\
 Y_t^s (\Cc)&= \Ur_\Ar(s, \xi) +\int_t^T  h_r^\star\big(s,Y_r^s (\Cc) , Z_r^s(\Cc),Y_r(\Cc) , Z_r(\Cc),\pi_r \big) \d r-\int_t^T \! Z_r^s(\Cc) \cdot \d X_r, \; t\in [0,T],\\
\partial Y_t^s(\Cc)&=\partial_s \Ur_\Ar(s, \xi ) +\int_t^T\! \nabla h_r^\star\big(s,\partial Y_r^s(\Cc),\partial Z_r^s(\Cc), Y_r^s(\Cc), Z_r^s(\Cc),Y_r(\Cc),Z_r(\Cc),\pi_r\big) \mathrm{d} r-\int_t^T\! \partial Z_r^s(\Cc) \cdot \mathrm{d} X_r,\; t\in [0,T].\notag
\end{align}

Moreover, we have that 
\begin{align}\label{eq:identityHJB0}
\Vr_t^\Ar\big(\Cc,a^\star(\cdot,X_{\cdot},Y_\cdot(\Cc),Z_\cdot(\Cc),\pi_\cdot)\big)=Y_t(\Cc) =Y_t^t(\Cc),\; t\in[0,T],\;\P\as,\;  Z_t(\Cc)=Z^t_t(\Cc),\; \d t\otimes \d \P\ae
\end{align}
Given that \Cref{assump:datareward} guarantees that ${\rm x}\longmapsto \Ur_\Ar(s,{\rm x})$ invertible for every $s\in [0,T]$, we also have that
\begin{align}\label{eq:constraintHJB0}
\Ur_\Ar^{(-1)}\big(s,Y^s_T(\Cc)\big)=\xi=\Ur_\Ar^{(-1)}\big(u,Y^u_T(\Cc)\big),\;\P \as,\; (s,u)\in [0,T]^2.
\end{align}

\begin{remark}\begin{enumerate}[label=$(\roman*)$, ref=.$(\roman*)$,wide,  labelindent=0pt]\label{rmk:correspondencesystemandbsvie}
\item We recall that the diagonal process $(Z_t^t)_{t\in [0,T]}$ is well-defined for elements in $\overline{\H}^{_{\raisebox{-1pt}{$ \scriptstyle 2,2$}}}$, see {\rm \Cref{sec:spacesandassupm}}.

\item \label{rmk:correspondencesystemandbsvie:ii} Links between time-inconsistent control problems and a broader class of {\rm BSVIEs} have been identified in the past. 
	The first mention of this link appears, as far as we know, in the concluding remarks of {\rm \citet*{wang2019backward}}. 
	The link was then made rigorous independently by {\rm \cite{hernandez2020me}} and {\rm \citet*{wang2019time}}. In our setting, in light of \eqref{eq:identityHJB0}, such an equation appears as the one satisfied by the reward of the agent along the equilibrium. 
	As such, the pair $\big(Y_t^s(\Cc), Z^s_t(\Cc)\big)_{(s,t)\in[0,T]^\smallfont{2}}$ solves a so-called extended {\rm type-I BSVIE}, which for any $s\in [0,T]$ satisfies 
\begin{align}\label{eq:bsdeVoltgen}
Y_t^s (\Cc)= \Ur_\Ar(s, \xi) +\int_t^T  h_r^\star \big(s,X,Y_r^s (\Cc) , Z_r^s(\Cc),Y_r^r(\Cc) , Z_r^r(\Cc),\pi_r\big) \d r-\int_t^T  Z_r^s(\Cc) \cdot \d X_r, \; t\in [0,T],\; \P\as
\end{align}
We highlight that this {\rm BSVIE} involves the diagonal processes $\big(Y_t^t(\Cc), Z^t_t(\Cc)\big)_{(s,t)\in[0,T]^2}$ and that in light of {\rm \cite[Theorem 4.4]{hernandez2020unified}} the solutions of \eqref{HJB0:general} are in correspondence to those of \eqref{eq:bsdeVoltgen}.\end{enumerate}
\end{remark}

\subsection{The family of restricted contracts}\label{sec:restrictedcontracts}
In light of our previous observation, namely \eqref{eq:constraintHJB0}, we will introduce next a family of restricted terminal payments, which we will denote $\overline \Xi$, and
	$\overline \Cf$ will denote the associated class of contracts. 
	For any contract in this family, we can solve the associated time-inconsistent control problem faced by the agent. 
	Moreover, we will show that any admissible contract available to the principal admits a representation as a contract in $\overline \Cf$. 
	Consequently, the principal's optimal expected utility is not reduced if she restricts herself to offer contracts in this family and optimises.\medskip

In order to define the family of restricted contracts, we introduce next the process $Y^{y_\smallfont{0},Z,\pi}$, which for a suitable process $Z$ will represent the value of the agent. 
	This is a preliminary step based on the observation, see \eqref{eq:constraintHJB0}, that the value of the agent at the terminal time $T$ coincides with the payment offered by the contract. 
	To alleviate the notation let us set $\Ic:=\{ y_0\in \Cc_1([0,T],\R): y_0^0\geq R_0\}$.

\begin{definition}\label{def:indexedcontracts}
Let $\pi\in \Pi$. We denote by $\Hc^{2,2}$ the collection of processes $Z\in \overline{\H}^{_{\raisebox{-1pt}{$ \scriptstyle 2,2$}}}$ satisfying $\|Y^{y_\smallfont{0},Z,\pi}\|_{\S^{2,2}}<\infty$, where for $y_0\in \Ic$,  $Y^{y_\smallfont{0},Z,\pi}:=(Y^{s,y_\smallfont{0},Z,\pi})_{s\in [0,T]}$ satisfies for every $s\in [0,T]$,
\begin{align}\label{eq:fsvie}
Y_t^{s,y_\smallfont{0},Z,\pi}=y_0^s-\int_0^t h_r^\star\big(s,X,Y_r^{s,y_\smallfont{0},Z,\pi} , Z_r^s,Y_r^{r,y_\smallfont{0},Z,\pi} , Z_r^r,\pi_r\big) \d r+\int_0^t  Z_r^s  \cdot \d X_r,\; t\in [0,T],\; \P\as
\end{align}

\vspace{-1em}
\begin{align}\label{eq:constraintgen}
\Ur_\Ar^{(-1)}\big(s,Y^{s,y_\smallfont{0},Z,\pi}_T \big)=\Ur_\Ar^{(-1)}\big(u,Y^{u,y_\smallfont{0},Z,\pi}_T \big),\;\P\as,\; (s,u)\in [0,T]^2.
\end{align}
\end{definition}
With this, it is natural to consider the class of contracts $\overline \Cf:=\Pi\times\overline\Xi$ where $\overline \Xi$ denotes the set of terminal payments of the form 
\[
\Ur_\Ar^{(-1)}\big(T,Y^{T,y_\smallfont{0},Z,\pi}_T \big),\; (y_0,Z,\pi)\in  \Ic\times \Hc^{2,2}\times \Pi .
\]

The main novelty of our argument, compared to that in the time-consistent case, is the fact that {\rm \eqref{eq:constraintgen}} imposes a constraint on the elements $Z\in \Hc^{2,2}$.

\begin{remark}
\begin{enumerate}[label=$(\roman*)$, ref=.$(\roman*)$,wide,  labelindent=0pt]
\item We highlight $\Hc^{2,2}$ is independent of the choice of $\pi\in \Pi$ and that establishing $\Hc^{2,2}\neq \emptyset$ is inherently associated with the existence of solutions to \eqref{eq:bsdeVoltgen}. 
	For results on {\rm type-I BSVIEs}  we refer to {\rm \cite{hernandez2020unified}} and {\rm \cite{wang2022backward}}. 
	
\item The process $Y^{y_\smallfont{0},Z,\pi}$ denotes a solution to a so-called forward Volterra integral equation $(${\rm FSVIE}, for short$)$. 
	However, this is not a classic {\rm FSVIE} in the sense that, in addition to $Y^{s,y_\smallfont{0},Z,\pi}$, the diagonal processes $\big(Y_t^{t,y_\smallfont{0},Z,\pi}\big)_{t\in[0,T]}$ appears in the generator. 
	For completeness, {\rm \Cref{sec:appforwardVolterra}} includes a suitable well-posedness result.

\item As mentioned at the beginning of this section, we chose to work with a representation for the agent's value as opposed to the value of the contract itself. 
	This determines the form of the terminal payments in the definition of $\overline \Xi$ and provides a quite general and comprehensive approach. 
	For instance, one could have chosen to represent the value of $\xi$ directly for an agent with a time-inconsistent exponential utility. 
	This would have produced a version of \eqref{HJB0:general} whose generators have quadratic growth in $Z$ and whose analysis is more delicate than in the Lipschitz case. 
	See for instance, 
	{\rm \citet*{wang2018recursive}}, {\rm \citet*{fan2022multidimensional}}, {\rm \citet*{hernandez2020quadratic}} for the study of quadratic {\rm BSVIEs}. 
	We recall that taking that approach in the time-consistent scenario requires, at the very least, assuming the contracts have exponential moments of sufficiently large order. 
	Our approach prevents this given our growth assumptions in {\rm \Cref{assump:datareward}}. 
	However, one cannot expect to avoid such restrictions for problems that are inherently quadratic.
\end{enumerate}
\end{remark}

\begin{remark}\label{rmk:targetconstraint}
\textcolor{black}{We would like to highlight the nature of the constraint \eqref{eq:constraintgen}. Indeed, for any $\bar\xi\in \overline \Xi$ satisfying \eqref{eq:constraintgen}, it holds that $\bar\xi=\Ur_\Ar^{(-1)}\big(s,Y^{s,y_\smallfont{0},Z,\pi}_T \big)$, $\P\as$, $s\in [0,T]$. That is, if we let $\mathfrak S$ denote the family of continuously differentiable functions ${\it \gamma}:[0,T]\longmapsto \R$ such that the map $s\longmapsto \Ur_\Ar^{(-1)}\big(s,{\it \gamma}(s) \big)$ is constant, \eqref{eq:constraintgen} is equivalent to the stochastic target constraint}
\begin{align}\label{eq:targetconstraint}
Y^{s,y_\smallfont{0},Z,\pi}_T\in \mathfrak{S}, \; \P\text{\rm--a.s.}
\end{align}
\textcolor{black}{Moreover, we emphasise that this constraint is there due to time-inconsistency. Indeed, going back to the time-consistent, \emph{i.e.} exponential discounting, scenario presented in {\rm\Cref{sec:agentprobstatement}}, it is not hard to see that $\e^{-\rho s} Y_t^{s,\alpha}=\e^{-\rho u} Y_t^{u,\alpha}$, $\P\as$, for any $(u,s,t)\in [0,T]^3$. Thus, \eqref{eq:constraintgen} as well as the stochastic target constraint \eqref{eq:targetconstraint} are automatically fulfilled in the time-consistent, exponential discounting, scenario.}
\end{remark}

In light of our previous remarks, as a preliminary step, we must verify that \eqref{eq:fsvie} uniquely defines $Y^{y_\smallfont{0},Z,\pi}$. 
	At the formal level, the following auxiliary lemma says that the integrability conditions on the pair $(\pi,Z)$ guarantees this.

\begin{lemma}\label{lemma:solfsvie}
Let {\rm \Cref{assump:datareward}} and {\rm  \Cref{assump:uniquemax}} hold. Given $(\pi,y_0,Z)\in \Pi\times \Ic \times\Hc^{2,2}$ there exist unique processes $(Y^{y_\smallfont{0},Z,\pi},\partial Y^{y_\smallfont{0},Z,\pi})\in \S^{2,2}\times \S^{2,2}$ such that $Y^{y_\smallfont{0},Z,\pi}$ satisfies \eqref{eq:fsvie} and $\partial Y^{y_\smallfont{0},Z,\pi}$ satisfies
\begin{align}\label{eq:fsviepartial}
\partial Y _t^{s,y_\smallfont{0},Z,\pi}=\partial y_0^s-\int_0^t \nabla h_r^\star\big(s,X,\partial Y_r^{s,y_\smallfont{0},Z,\pi} , \partial Z_r^s, Y_r^{s,y_\smallfont{0},Z,\pi} , Z_r^s,Y_r^{r,y_\smallfont{0},Z,\pi} , Z_r^r,\pi_r\big) \d r+\int_0^t  \partial Z_r^s  \cdot \d X_r.
\end{align}
 
\begin{proof}
Let us first argue the result for $Y^{y_\smallfont{0},Z,\pi}$. Note that the integrability of $(\pi,Z)\in \Pi\times\Hc^{2,2}$, \Cref{assump:datareward}\ref{assump:datareward:i} and \Cref{assump:uniquemax}\ref{assump:uniquemax:iii} yields
\begin{align*}
\sup_{s\in [0,T]}\E\bigg[\bigg(\int_0^T | h_r^\star\big(s,X,0,Z_r^s,0,Z_r^r,\pi_r)\big) | \d r\bigg)^2\bigg]<\infty.
\end{align*}
The result follows from {\rm \Cref{prop:wpfsvie}}. The second part of the statement is a consequence of {\rm \Cref{prop:wpfsviepartial}} and the integrability of $\pi\in \Pi$.
\end{proof}
\end{lemma}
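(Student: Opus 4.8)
The plan is to recognise both \eqref{eq:fsvie} and \eqref{eq:fsviepartial} as instances of a forward stochastic Volterra integral equation with diagonal dependence, and to close the argument by quoting the abstract well-posedness results \Cref{prop:wpfsvie} and \Cref{prop:wpfsviepartial} from \Cref{sec:appforwardVolterra}. The first step is to record the structural properties of the driver $h^\star_r(s,x,y,z,{\rm y},{\rm z},p)=h_r(s,x,y,z,p,a^\star(r,x,{\rm y},{\rm z},p))$, which is well-defined and jointly measurable because \Cref{assump:uniquemax}\ref{assump:uniquemax:ii} provides a unique measurable selector $a^\star$. By \Cref{assump:datareward}\ref{assump:datareward:ii} the map $(y,z,a)\longmapsto h_r(s,x,y,z,p,a)$ is uniformly Lipschitz (the $z$-entry being weighted by $\sigma^\top_t(x)$), and by \Cref{assump:uniquemax}\ref{assump:uniquemax:iii} the selector $({\rm y},{\rm z})\longmapsto a^\star(r,x,{\rm y},{\rm z},p)$ is uniformly Lipschitz; composing, $(y,z,{\rm y},{\rm z})\longmapsto h^\star_r(s,x,y,z,{\rm y},{\rm z},p)$ is uniformly Lipschitz in all four entries, which is exactly the structural hypothesis required by \Cref{prop:wpfsvie}. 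The only datum left to verify before invoking that proposition is that the source term obtained by freezing the unknowns at zero is square-integrable uniformly in the Volterra index, i.e.
\[
\sup_{s\in[0,T]}\E\bigg[\bigg(\int_0^T\big|h^\star_r\big(s,X,0,Z^s_r,0,Z^r_r,\pi_r\big)\big|\,\d r\bigg)^{\!2}\bigg]<\infty.
\]

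To check this bound I would dominate the integrand, via the Lipschitz estimate, by a constant multiple of $|h_r(s,X,0,0,\pi_r,0)|+|\sigma^\top_r\sigma_r Z^s_r|+|\sigma^\top_r\sigma_r Z^r_r|+1$, the last constant absorbing the contribution of $a^\star$ (finite because $A$ is compact and $h$ is Lipschitz in the action variable). The first term is square-integrable uniformly in $s$ by the very definition of $\Pi$; the second has finite $\H^2$-norm uniformly in $s$ because $Z\in\Hc^{2,2}\subseteq\Ho$ gives $\sup_s\|Z^s\|_{\H^2}<\infty$; and the third is square-integrable because the diagonal process $\Zc=(Z^t_t)_{t\in[0,T]}$ belongs to $\H^2$, this being encoded in the norm of $\Ho$ through the identity $Z^t_t=Z^0_t+\int_0^t\partial Z^r_t\,\d r$. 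With this in hand, \Cref{prop:wpfsvie} applied with initial datum $y_0\in\Ic\subseteq\Cc_1([0,T],\R)$ yields a unique $Y^{y_0,Z,\pi}\in\S^{2,2}$ solving \eqref{eq:fsvie}; the finiteness $\|Y^{y_0,Z,\pi}\|_{\S^{2,2}}<\infty$ built into \Cref{def:indexedcontracts} is precisely what this delivers, so the setup is consistent.

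For the second assertion, once $Y^{y_0,Z,\pi}$ has been constructed, \eqref{eq:fsviepartial} becomes an \emph{affine} forward Volterra equation in the unknown $\partial Y^{y_0,Z,\pi}$: by \Cref{assump:datareward}\ref{assump:datareward:i} the driver $\nabla h^\star_r$ depends on $(u,v)=(\partial Y,\partial Z)$ only through $\partial_s h^\star_r+\partial_y h^\star_r\,u+\sum_{i=1}^n\partial_{z_i}h^\star_r\,v_i$, with coefficients that are functions of the already-determined data $(X,Y^{y_0,Z,\pi},Z,\pi)$; the partial derivatives $\partial_y h$ and $\partial_{z_i}h$ are bounded by the Lipschitz constant of $h$, while $\partial_s h$ is Lipschitz and has square-integrable value at zero because $\pi\in\Pi$ forces $\partial_s h_\cdot(s,\cdot,0,0,\pi_\cdot,0)\in\mathbb{L}^{2,2}$. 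Hence the hypotheses of \Cref{prop:wpfsviepartial} are met — the initial datum $\partial y_0$ being continuous since $y_0\in\Cc_1([0,T],\R)$ — and one obtains a unique $\partial Y^{y_0,Z,\pi}\in\S^{2,2}$ solving \eqref{eq:fsviepartial}. The one place where genuine work is needed, and hence the main obstacle, is the uniform-in-$s$ square-integrability of the source terms: one must track carefully how the coupling of the $s$-indexed slice to the diagonal slice interacts with the tailor-made norms of $\Ho$ and $\Hc^{2,2}$, in particular the fact that $\Zc\in\H^2$ is encoded only indirectly through $\partial Z$; everything downstream is a direct citation of the appendix well-posedness theorems.
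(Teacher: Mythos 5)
Your proposal is correct and follows essentially the same route as the paper: verify the uniform-in-$s$ square-integrability of the driver evaluated at zero (using the Lipschitz structure of $h^\star$ inherited from \Cref{assump:datareward} and \Cref{assump:uniquemax}, the definition of $\Pi$, and the fact that both $\sup_s\|Z^s\|_{\H^2}$ and the diagonal norm $\|\Zc\|_{\H^2}$ are controlled by $\|Z\|_{\Ho}$), then cite \Cref{prop:wpfsvie} for \eqref{eq:fsvie} and \Cref{prop:wpfsviepartial} for the affine equation \eqref{eq:fsviepartial}. The paper's proof is just a more compressed version of the same argument.
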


We are now ready to state our main result, in words it guarantees that there is no loss of generality for the principal in offering contracts of the form given by $\overline \Xi$.

\begin{theorem}\label{thm:repcontractgeneral}
\begin{enumerate}[label=$(\roman*)$, ref=.$(\roman*)$,wide,  labelindent=0pt]
\item We have $\overline \Cf=\Cf$. Moreover, for any contract $\Cc=(\pi,\xi)\in \overline \Cf$, with $\xi$ associated to $(y_0,Z)\in \Ic \times  \Hc^{2,2}$, we have
\[ \Ec(\Cc)=\big\{ a^\star\big(t,X_{\cdot\wedge t},Y_t^{t,y_\smallfont{0},Z,\pi},Z_t^t,\pi_t \big)_{t\in [0,T]}\big\},\; \Vr_0^\Ar(\Cc)=y_0^{0}.\]

\item Let $\P^{\star}(Z,\pi):=\P^{a^\smallfont{\star}(\cdot,X_\smallfont{\cdot},Y_\smallfont{\cdot}^{\smallfont{\cdot}\smallfont{,}\smallfont{y}_\tinyfont{0}\smallfont{,}\smallfont{Z}\smallfont{,}\smallfont{\pi}},Z_\smallfont{\cdot}^\smallfont{\cdot},\pi_\smallfont{\cdot})}$. The problem of the principal admits the following representation
\begin{align}\label{eq:reductionprincipal}
{\rm V}^{\rm P}=\sup_{\substack{ y_\smallfont{0}\in \Ic }}\; \underline V(y_0),
\end{align}
where
\[
\underline V(y_0):= \sup_{(Z,\pi)\in \Hc^{\smallfont{2}\smallfont{,}\smallfont{2}}\times \Pi}\E^{\P^\smallfont{\star}(Z,\pi)}\bigg[{\rm U_P}\bigg( X_{\cdot \wedge T}, \Ur_\Ar^{(-1)}\Big(T,Y^{T,y_\smallfont{0},Z,\pi}_T \Big) \bigg)-\int_0^T c_r^{\rm p}(X_{\cdot\wedge r},\pi_r)\d r \bigg].
\]
\end{enumerate}
\begin{proof}
We first argue $\Cf\subseteq \overline \Cf$. Let $\Cc\in \Cf$. 
	In light of \Cref{assump:datareward}, the fact that $\pi\in \Pi$, and \Cref{rmk:correspondencesystemandbsvie}\ref{rmk:correspondencesystemandbsvie:ii}, \Cref{thm:necessity} guarantees that for $\Cc\in \Cf$ there exists $(Y(\Cc),Z(\Cc))\in \S^{2,2}\times \overline{\H}^{_{\raisebox{-1pt}{$ \scriptstyle 2,2$}}}$ solution to \eqref{eq:bsdeVoltgen} and a process $\partial Y(\Cc)\in \S^{2,2}$ satisfying that the mapping $([0,T],\Bc([0,T]))\longrightarrow (\S^2,\|\cdot \|_{\S^2}):s\longmapsto \partial Y^s(\Cc)$ is the derivative of $([0,T],\Bc([0,T]))\longrightarrow (\S^2,\|\cdot \|_{\S^2}):s\longmapsto Y^s(\Cc)$. 
	We also note that \Cref{assump:datareward}\ref{assump:datareward:0} guarantees \eqref{eq:constraintgen} holds.
	Moreover, \eqref{eq:identityHJB0} implies $Y_0(\Cc)=\Vr_0^\Ar\big(\Cc,a^\star(\cdot,X_{\cdot},Y_\cdot(\Cc),Z_\cdot(\Cc),\pi_\cdot)\big) \geq R_0$, recall $\Cc\in \Cf$. From this, taking $y_0(\Cc)=Y_0(\Cc)$ we have that $(y_0(\Cc),Z(\Cc))\in \Ic \times \Hc^{2,2}$. 
	Thus $\Cc\in \overline \Cf$.
\medskip


To show the reverse inclusion, let $\bar\Cc=(\pi,\bar\xi)\in \Pi\times \overline \Xi$.  This is, $\bar \xi=\Ur_\Ar^{(-1)}(T,Y^{T,y_\smallfont{0},Z,\pi}_T )$, where, in light of \Cref{lemma:solfsvie}, $Y^{y_\smallfont{0},Z,\pi}$ denotes the process, induced by $(y_0,Z,\pi)\in\Ic\times \Hc^{2,2}\times \Pi$, such that $\|Y^{y_\smallfont{0},Z,\pi}\|_{\S^{\smallfont{2}\smallfont{,}\smallfont{2}}}<\infty$ and \eqref{eq:constraintgen} holds. In particular
\[ Y_T^{s,y_\smallfont{0},Z,\pi}=\Ur_\Ar\big(s,\bar \xi),\; \P\as,\; s\in [0,T].\]
Therefore, for any $s\in [0,T]$
\begin{align}\label{eq:bsvieauxprop}
Y_t^{s,y_\smallfont{0},Z,\pi} = \Ur_\Ar(s, \bar \xi) +\int_t^T  h_r^\star\big(s,X ,Y_r^{s,y_\smallfont{0},Z,\pi}  , Z_r^s,Y_r^{r,y_\smallfont{0},Z,\pi} , Z_r^r,\pi_r\big) \d r-\int_t^T  Z_r^s \cdot \d X_r, \;t\in [0,T],\;\P\as
\end{align}

We now show $\bar \xi\in \Xi$, see \Cref{sec:agentprobstatement}. It is immediate to see that
\begin{align*}
\|\Ur_\Ar(\cdot,\bar \xi)\|_{\Lc^{\smallfont{2}\smallfont{,}\smallfont{2}}}^2=&\sup_{s\in [0,T]}\E\Big[\big|\Ur_\Ar(s, \bar \xi)\big|^2\Big]=\sup_{s\in [0,T]} \E\Big[\big|Y_T^{s,y_\smallfont{0},Z,\pi}\big|^2\Big]\leq\|Y^{y_\smallfont{0},Z,\pi}\|_{\S^{\smallfont{2}\smallfont{,}\smallfont{2}}}^2<\infty.
\end{align*}
Now, given $Y^{y_0,Z,\pi}$ solution to \eqref{eq:fsvie} and $\partial Z$ by definition of $Z\in\overline{\H}^{_{\raisebox{-1pt}{$ \scriptstyle 2,2$}}}$, \Cref{lemma:solfsvie} guarantees there exists $\partial Y^{y_0,Z,\pi}\in \S^{2,2}$ such that the pair $(\partial Y^{Z}, \partial Z)$ satisfies \eqref{eq:fsviepartial}.  Moreover, by \Cref{prop:wpfsviepartial} for any $s\in [0,T]$
\[
\partial Y^{s,Z,\pi}_T=\partial_s \Ur^\Ar(s,\bar \xi), \; \P\as
\]
Thus, $ \|\partial_s \Ur_\Ar(\cdot, \bar \xi)\|^2_{\Lc^{\smallfont{2}\smallfont{,}\smallfont{2}}}\leq\|\partial Y^{y_0,Z,\pi}\|_{\S^{\smallfont{2}\smallfont{,}\smallfont{2}}}^2<\infty$. This shows $\bar \xi\in \Xi$. \medskip

Let us argue $\bar \Cc\in \Cf_o$ as in \Cref{def:contractswithoneeqvalue}, \emph{i.e.} that $\bar \Cc$ leads to a unique equilibrium.
	In light of \Cref{assump:uniquemax}, \Cref{thm:necessity} and \Cref{thm:verification}, it suffices to establish $\bar \Cc$ leads to a solution of \eqref{HJB0:general}.
	Let us recall that by \cite[Theorem 4.4]{hernandez2020unified}, the solutions of \eqref{HJB0:general} are in correspondence to those of \eqref{eq:bsdeVoltgen}.
	We now simply note that \eqref{eq:bsvieauxprop} defines a solution.
Thus
\[
\Ec(\bar \Cc)= \big\{a^\star\big(t,X_{\cdot\wedge t},Y_t^{t,y_\smallfont{0},Z,\pi},Z_t^t,\pi_t\big)_{t\in [0,T]}\big\}.
\]

To conclude $\bar \Cc\in \Cf$, note that by \Cref{thm:verification}, $\Vr^\Ar_0(\bar \Cc)=y_0^{0},$ so that $y_0^0\geq R_0$ guarantees the participation constraint is satisfied.
\end{proof}
\end{theorem}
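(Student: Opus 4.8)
The proof has two halves: the set equality $\overline{\Cf}=\Cf$, and the reduction of the principal's problem to \eqref{eq:reductionprincipal}. The plan is to establish the two inclusions $\Cf\subseteq\overline{\Cf}$ and $\overline{\Cf}\subseteq\Cf$ separately, since the arguments are quite different in character.

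\textbf{Step 1: $\Cf\subseteq\overline{\Cf}$ (every admissible contract has a restricted representation).} Fix $\Cc=(\pi,\xi)\in\Cf$. The idea is to read off the required data $(y_0,Z)$ from the BSDE system \eqref{HJB0:general} — equivalently, by Remark~\ref{rmk:correspondencesystemandbsvie}\ref{rmk:correspondencesystemandbsvie:ii} and \cite[Theorem 4.4]{hernandez2020unified}, from the extended type-I BSVIE \eqref{eq:bsdeVoltgen}. First I would invoke the well-posedness/necessity result for this BSVIE (the ``\Cref{thm:necessity}'' referenced in the text) to obtain $(Y(\Cc),Z(\Cc))\in\S^{2,2}\times\overline{\H}^{_{\raisebox{-1pt}{$ \scriptstyle 2,2$}}}$ together with the derivative process $\partial Y(\Cc)\in\S^{2,2}$; this uses $\pi\in\Pi$ and \Cref{assump:datareward} for the integrability of the data. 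Next, \Cref{assump:datareward}\ref{assump:datareward:0} (invertibility of $\Ur_\Ar(s,\cdot)$, together with the terminal condition $Y^s_T(\Cc)=\Ur_\Ar(s,\xi)$) gives that \eqref{eq:constraintgen} holds for these data. Finally, by \eqref{eq:identityHJB0}, $Y_0(\Cc)=\Vr_0^\Ar(\Cc)\geq R_0$ since $\Cc\in\Cf$, so setting $y_0(\Cc)=Y_0(\Cc)$ (viewed as an element of $\Cc_1([0,T],\R)$ via the family $Y_0^s(\Cc)$) we land in $\Ic\times\Hc^{2,2}$, and $\xi=\Ur_\Ar^{(-1)}(T,Y^T_T(\Cc))$, so $\Cc\in\overline{\Cf}$.

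\textbf{Step 2: $\overline{\Cf}\subseteq\Cf$ (every restricted contract is admissible).} Fix $\bar\Cc=(\pi,\bar\xi)\in\Pi\times\overline{\Xi}$, so $\bar\xi=\Ur_\Ar^{(-1)}(T,Y_T^{T,y_0,Z,\pi})$ for some $(y_0,Z,\pi)\in\Ic\times\Hc^{2,2}\times\Pi$. By \Cref{lemma:solfsvie} the forward system \eqref{eq:fsvie}--\eqref{eq:fsviepartial} has a unique solution $(Y^{y_0,Z,\pi},\partial Y^{y_0,Z,\pi})\in\S^{2,2}\times\S^{2,2}$, and since \eqref{eq:constraintgen} holds one has $Y_T^{s,y_0,Z,\pi}=\Ur_\Ar(s,\bar\xi)$ for all $s$. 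Rewriting \eqref{eq:fsvie} in backward form from its terminal value shows that $(Y^{y_0,Z,\pi},Z)$ solves the extended type-I BSVIE \eqref{eq:bsdeVoltgen} with terminal datum $\Ur_\Ar(\cdot,\bar\xi)$. I would then check admissibility of $\bar\xi$ step by step: $\|\Ur_\Ar(\cdot,\bar\xi)\|_{\Lc^{2,2}}^2=\sup_{s}\E[|Y_T^{s,y_0,Z,\pi}|^2]\leq\|Y^{y_0,Z,\pi}\|_{\S^{2,2}}^2<\infty$, and analogously $\|\partial_s\Ur_\Ar(\cdot,\bar\xi)\|_{\Lc^{2,2}}^2\leq\|\partial Y^{y_0,Z,\pi}\|_{\S^{2,2}}^2<\infty$ using the identity $\partial Y_T^{s,y_0,Z,\pi}=\partial_s\Ur_\Ar(s,\bar\xi)$ from \Cref{prop:wpfsviepartial}; hence $\bar\xi\in\Xi$. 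To get $\bar\Cc\in\Cf_o$ I would appeal to \Cref{assump:uniquemax} together with the necessity and verification theorems (\Cref{thm:necessity}, \Cref{thm:verification}): it suffices that $\bar\Cc$ induces a solution of \eqref{HJB0:general}, and via the correspondence \cite[Theorem 4.4]{hernandez2020unified} this is exactly what the backward reformulation of \eqref{eq:fsvie} provides; \Cref{thm:verification} then yields $\Ec(\bar\Cc)=\{a^\star(t,X_{\cdot\wedge t},Y_t^{t,y_0,Z,\pi},Z_t^t,\pi_t)_{t\in[0,T]}\}$, $\Vr_0^\Ar(\bar\Cc)=y_0^0$, and $y_0\in\Ic$ forces $y_0^0\geq R_0$, i.e.\ the participation constraint. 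This also establishes part $(i)$'s formula for $\Ec(\Cc)$ and $\Vr_0^\Ar(\Cc)$.

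\textbf{Step 3: reduction of the principal's problem.} Given $\overline{\Cf}=\Cf$, I would substitute the parametrisation $\Cc\leftrightarrow(y_0,Z,\pi)$ into ${\rm V^P}$. Using the explicit equilibrium $\alpha^\star=a^\star(\cdot,X_\cdot,Y_\cdot^{\cdot,y_0,Z,\pi},Z_\cdot^\cdot,\pi_\cdot)$ from part $(i)$, the measure $\P^{\alpha^\star}$ is exactly $\P^\star(Z,\pi)$, and $\xi=\Ur_\Ar^{(-1)}(T,Y_T^{T,y_0,Z,\pi})$; splitting the supremum over $\Cf$ into an outer supremum over $y_0\in\Ic$ and an inner one over $(Z,\pi)\in\Hc^{2,2}\times\Pi$ yields \eqref{eq:reductionprincipal}. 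The only subtlety is that the parametrisation is not injective — several $(y_0,Z)$ may produce the same $\xi$ — but this is harmless for the supremum since Step~1 shows every $\Cc\in\Cf$ is attained by at least one admissible triple, and Step~2 shows every triple yields an admissible contract.

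\textbf{Main obstacle.} The crux is Step~2, specifically showing $\bar\Cc\in\Cf_o$: one must know that \eqref{HJB0:general} (equivalently \eqref{eq:bsdeVoltgen}) is \emph{well-posed} for the terminal datum $\Ur_\Ar(\cdot,\bar\xi)$ so that the verification theorem applies and yields a \emph{unique} equilibrium. This is where \Cref{assump:uniquemax} (single Lipschitz maximiser of the Hamiltonian) and the correspondence with the forward system \eqref{eq:fsvie} do the heavy lifting; the forward well-posedness of \Cref{lemma:solfsvie} is precisely what converts the constraint-satisfying triple into a genuine BSVIE solution, closing the loop. The remaining estimates (membership in $\Xi$) are routine consequences of the norm bounds built into $\Hc^{2,2}$ and $\overline{\H}^{_{\raisebox{-1pt}{$ \scriptstyle 2,2$}}}$.
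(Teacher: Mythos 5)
Your proposal is correct and follows essentially the same route as the paper's own proof: the inclusion $\Cf\subseteq\overline{\Cf}$ via \Cref{thm:necessity} and the identification $y_0(\Cc)=Y_0(\Cc)$, the reverse inclusion via \Cref{lemma:solfsvie}, the backward rewriting of \eqref{eq:fsvie} into \eqref{eq:bsdeVoltgen}, the $\Lc^{2,2}$ estimates for $\bar\xi\in\Xi$, and the appeal to \Cref{assump:uniquemax}, \Cref{thm:necessity} and \Cref{thm:verification} for uniqueness of the equilibrium and the participation constraint. Your additional remark on the non-injectivity of the parametrisation in Step 3 is a harmless observation the paper leaves implicit.
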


In view of \Cref{thm:repcontractgeneral}, the problem of the principal involves controlling, via $(\pi,Z)\in \Pi\times\Hc^{2,2}$, the processes $(X,Y^{y_\smallfont{0},Z,\pi})$. The dynamics of $X$ are given, in weak formulation, by
\begin{align}\label{eq:dynprincipal}
X_t =  x_0 +\int_0^t \sigma_r(X_{\cdot \wedge r})\Big( b^\star_r\big(X_{\cdot\wedge r},Y_r^{r,y_0,Z,\pi},Z_r^r,\pi_r\big)  \d r +   \d B^{\star}_r\Big),\; t\in [0,T],\; \P\as
\end{align}
where $B^\star:=B-\int_0^\cdot b_r^\star\big(X_{\cdot\wedge r},Y_r^{r,y_\smallfont{0},Z,\pi},Z_r^r,\pi_r\big)\mathrm{d}r$ is a $\P^{\star}(Z,\pi)$--Brownian motion, and those of $Y^{y_\smallfont{0},Z,\pi}$ are given by
\begin{align*}
Y_t^{s,y_0,Z,\pi}&=y_0^s-\int_0^t h_r^\star\big(s,X ,Y_r^{s,y_0,Z,\pi} , Z_r^s,Y_r^{r,y_0,Z,\pi} , Z_r^r,\pi_r\big) \d r+\int_0^t  Z_r^s  \cdot \d X_r,\; t\in [0,T],\; \P\as,\; s\in [0,T].
\end{align*}

{\color{black} We highlight that on top of the Volterra nature of both the state process $Y^{y_\smallfont{0},Z,\pi}$ and the control $Z$, the constraint \eqref{eq:constraintgen} must be satisfied. 
	However, building upon the discussion in \Cref{rmk:targetconstraint}, we see that the problem of the principal corresponds to a stochastic target control problem of FSVIEs with Volterra controls. 
	Indeed, the principal
	\begin{enumerate}[label=$(\roman*)$, ref=.$(\roman*)$,wide]
	\item controls the forward Volterra process $(Y_t^{s,y_\smallfont{0},Z,\pi})_{(s,t)\in [0,T]^2}$; 
	\item with Volterra-type controls $(Z_t^s)_{(s,t)\in [0,T]^2}$, recall both $(Z_t^t)_{t\in [0,T]}$ and $(Z_t^s)_{t\in [0,T]}$ impacts the dynamics;
	\item  the state process $Y^{y_\smallfont{0},Z,\pi}$ is subject to the stochastic target constraint \eqref{eq:targetconstraint}.
	\end{enumerate}

The literature on controlled FSVIEs began, to the best of our knowledge, with \citet*{chen2007linear} where the authors studied the control of FSVIE by means of a stochastic maximum principle.\footnote{Ever since, several works have extended this approach, a probably incomplete list includes \citet*{shi2015optimal}, \citet*{wang2018linear} and \citet*{hamaguchi2022linear2}.} 
	A recent milestone in the study of this problem is \citet*{viens2019martingale} where, via a dynamic programming approach, the authors arrive at a path-dependent HJB equation. 
	Nevertheless, in all of these works the control consists of an unconstrained stochastic process. 
	Thus the approach \cite{viens2019martingale} is inoperable as it does not cover $(ii)$ nor $(iii)$ above.\medskip

	Regarding the study of stochastic target control problems, the seminal works are due to \citet*{soner2002stochastic,soner2009dynamic} where the state process is a controlled SDEs. 
	We also remark on the recent extension to targets in the Wasserstein space by \citet*{bouchard2017quenched}, which shows the possibility of extending the original approach to infinite dimensional target problems like the one faced by the principal, namely $(iii)$ above. 
	Particularly important to our analysis are the results in \citet*{bouchard2010optimal} on optimal control problems with stochastic target constraints. 
	Indeed, this work elucidates the blueprint that needs to be extended to the Volterra case to be able to obtain (infinite-dimensional) HJB-type PDEs that characterise the problem of the principal. 
	As the reader might be able to notice, in general, this seems to be quite a challenging task. 
	Therefore, we will, for now, concentrate our attention on simpler cases where we can actually transfer the stochastic target constraint on $Y^{y_\smallfont{0},Z,\pi}$ into a more manageable constraint on the controls $Z$ directly. 
	The general case will be the subject of future research and will be studied in a separate paper.\medskip

As a motivation for our approach in the following examples, we recall that: 
	first, the flow of continuous payments $(\pi_t)_{t\in [0,T]}$ enters the reduced problem of the principal as a standard control on the drift which raises no major challenges in the analysis, and thus we will omit it from the following examples and consider contracts consisting of only a terminal payment, \emph{i.e.} $\Cc=\xi$. 
	Second, for classic separable utilities with exponential discounting it is known, see \Cref{rmk:targetconstraint}, that the Volterra nature of the state process $Y^{y_\smallfont{0},Z,\pi}$ becomes redundant. 
	Indeed, in this scenario is sufficient to describe $(Y_t^t)_{t\in[0,T]}$ to characterise the entire family. 
	This motivates the study of $\Hc^{2,2}$ under particular specifications of utility functions for both the agent and the principal, hoping to be able to

\medskip
 \quad$(a)$ reduce the complexity of the set $\Hc^{2,2}$; 
 
 \medskip
\quad $(b)$ exploit its particular structure to formulate an ansatz to the problem of the principal. 
 
 \medskip
 This is exactly what we do in the following sections.}

\section{The second-best problem: examples}\label{sec:sbexamples}

\subsection{Agent with discounted utility reward}\label{sec:ra1}
As an initial example, let us consider the scenario in \Cref{sec:raFB} under the additional choice $g= 1$, which implies $K^{s,\alpha}_{t,T} $ does not depend on $s\in [0,T]$. Thus, we have
\begin{align}\label{eq:ra1reward}
{\Vr}^{\Ar}_t(\xi,\alpha)=\E^{\P^\alpha}\Big [f(T-t) \Ur_\Ar^o\big(\xi- K_{t,T}^\alpha \big) \Big | \Fc_t \Big], \;  K_{t,T}^\alpha:=\int_t^T k_r^o(X_{\cdot\wedge r},\alpha_r)\d r,\; (t,\alpha,\xi)\in [0,T]\times \Ac\times \Cf.
\end{align}

Under this specification, \eqref{HJB0:general} reduces significantly. Indeed
\begin{align*}
h_t(s,x,y,z,a)&=\sigma_t(x)b_t(x,a)\cdot z+\gamma_\Ar k_t^o(x,a)y,\; \nabla h_t(s,x,u,v,y,z,a)=\sigma_t(x)b_t(x,a)\cdot v+\gamma_\Ar k_t^o(x,a)u,
\end{align*}
\[\Ur_\Ar^o(s,{\rm x})=f(T-s)\Ur_\Ar^o({\rm x}),\; \partial_s \Ur_\Ar^o(s,{\rm x})=-f^\prime (T-s)\Ur_\Ar^o({\rm x}).
\]

\begin{remark}
\begin{enumerate}[label=$(\roman*)$, ref=.$(\roman*)$,wide,  labelindent=0pt]
\item We highlight that the absence of accumulative cost in the agent's reward functional, \emph{i.e.} $c=0$, together with the choice $g=1$ makes the driver in the second family of {\rm BSDEs} independent of the variable $s$, \emph{i.e.} $\nabla h=h$. Moreover, it coincides with the functional maximised in the Hamiltonian $H$. 

\item We remark that in this scenario, the non-exponential discount factor, \emph{i.e.} the time-inconsistent preferences, does not add much to the problem. Even though the agent's continuation utility changes by a factor, the optimal/equilibrium control state pair coincides for both problems. Our aim in presenting it is to illustrate how the technique presented in {\rm \Cref{sec:restrictedcontracts}} is compatible with the results known in the case of a time-consistent agent.
\end{enumerate}
\end{remark}

The next result provides a drastic simplification of the infinite dimensional system introduced in {\rm \Cref{sec:system}}. This is due to the particular form of the reward of the agent \eqref{eq:ra1reward}. \medskip

\begin{lemma}\label{lemma:ra1agentreduction}
\begin{enumerate}[label=$(\roman*)$, ref=.$(\roman*)$,wide,  labelindent=0pt]
\item Let $\xi \in \Cf$ and the agent's reward be given by \eqref{eq:ra1reward}. Then, \eqref{HJB0:general} is equivalent to the {\rm BSDE}
\begin{align*}
 Y_t (\xi)&=\Ur_\Ar^o(\xi)+\int_t^T\bigg( H_r\big(X_{\cdot\wedge r}, Y_r (\xi), Z_r (\xi)\big)+\frac{f'(T-r)}{f(T-r)}  Y_r (\xi)\bigg) \mathrm{d} r-\int_t^T   Z_r (\xi)\cdot  \mathrm{d} X_r,\; t\in[0,T], \;\P\as
\end{align*}

\item Let $Z\in \Hc^{2,2}$. Then $(Y_t^{t,y_\smallfont{0},Z})_{t\in [0,T]}$ solves the {\rm BSDE}
\begin{align*}
 Y_t^{t,y_\smallfont{0},Z}=Y_T^{T,y_\smallfont{0},Z}+\int_t^T\bigg( H_r\big(X_{\cdot\wedge r}, Y_r^{r,y_\smallfont{0},Z}, Z_r^r\big)+\frac{f'(T-r)}{f(T-r)}  Y_r^{r,y_\smallfont{0},Z}\bigg) \mathrm{d} r-\int_t^T   Z_r^r\cdot  \mathrm{d} X_r,\; t\in[0,T], \;\P\as
\end{align*}

\item $\Hc^{2,2}=\Hc^2$, where $\Hc^2$ denotes the family of $Z\in \H^2$ satisfying $\|Y^{y_\smallfont{0},Z}\|_{\S^2}<\infty$ where, for any $y_0\in(R_0,\infty)$,
\[
Y_t^{y_\smallfont{0},Z}=y_0-\int_0^t \bigg( H_r\big(X_{\cdot\wedge r}, Y_r^{y_\smallfont{0},Z}, Z_r\big)+\frac{f'(T-r)}{f(T-r)}  Y_r^{y_\smallfont{0},Z}\bigg) \mathrm{d} r+\int_0^t   Z_r\cdot  \mathrm{d} X_r,\; t\in[0,T], \;\P\as
\]

\item $\overline \Cf=\big\{{\Ur_\Ar^o}^{(-1)}(Y_T^{y_\smallfont{0},Z}): (y_0,Z)\in (R_0,\infty)\times \Hc^2\big \}$. Moreover, for any $\xi\in \overline \Cf$
\[\Ec(\xi)=\big \{(a^\star(t,X_{\cdot\wedge t}, Y_t^{y_\smallfont{0},Z},Z_t)_{t\in [0,T]}\big\},\; \Vr_0^\Ar(\xi)=y_0.\]

\end{enumerate}

\begin{proof}
It is immediate from \eqref{HJB0:general} that, $\P\as$
\begin{align*}
Y_t^s (\xi) =\E^{\P^\smallfont{\star}(\xi) }\Big[ f(T-s) \Ur_\Ar^o\big(\xi - K_{t,T}^{a^\star }\big)\Big| \Fc_t\Big], \; \partial  Y_t^s (\xi) =-\E^{\P^\smallfont{\star}(\xi) }\Big[ f^\prime(T-s) \Ur_\Ar^o\big(\xi -K_{t,T}^{a^\star}\big)\Big| \Fc_t\Big],\text{ and }  Y_t^t (\xi) = Y_t (\xi).
\end{align*}
Thus
\begin{align*}
Y_t^s(\xi)&=\frac{f(T-s)}{f(T-t)}Y_t(\xi), \; \P\as , \;s\in [0,T],\; \partial  Y_t^t (\xi) =- \frac{f^\prime(T-t)}{f(T-t)}  Y_t^t (\xi) =-\frac{f^\prime(T-t)}{f(T-t)}  Y_t (\xi), \; \P\as ,
\end{align*}
and, for any $(s,u)\in [0,T]^2$
\begin{align*}
\Ur_\Ar^{(-1)}\big(s,Y^{s}_T(\xi) \big)= {\Ur_\Ar^o}^{(-1)}\bigg(\frac{Y_T^{s}(\xi)}{f(T-s)}\bigg)=\xi={\Ur_\Ar^o}^{(-1)}\bigg(\frac{Y_T^{u}(\xi)}{f(T-u)}\bigg)=\Ur_\Ar^{(-1)}\big(u,Y^{u}_T(\xi) \big),\;\P\as
\end{align*}

All together, this shows that \eqref{HJB0:general} reduces to the equation in the statement. The result then follows as we can trace back the argument and construct a solution to \eqref{HJB0:general} starting from a solution to the BSDE in the statement.\medskip

We now argue $(ii)$. Let $Z\in \Hc^{2,2}$. Then, there is $(y_0^s)_{s\in [0,T]}$ such that \eqref{eq:constraintgen} holds and
\[
Y_t^{s,y_\smallfont{0},Z}=y_0^s-\int_0^t h_r^\star\big(s,X_{\cdot\wedge r},Y_r^{s,y_\smallfont{0},Z} , Z_r^s,Y_r^{r,y_\smallfont{0},Z} , Z_r^r\big) \d r+\int_0^t  Z_r^s  \cdot \d X_r.
\]
Let us note that \eqref{eq:constraintgen} implies $Y_T^{s,y_\smallfont{0},Z}=f(T-s) Y_T^{T,y_\smallfont{0},Z}$. Since $h_t(s,x,y,z,a)=h_t(u,x,y,z,a), (s,u)\in [0,T]^2$, we obtain
\begin{align*}
Y_t^{s,y_\smallfont{0},Z}=f(T-s) Y_T^{T,y_\smallfont{0},Z}+\int_t^T h_r^\star\big(r,X_{\cdot\wedge r},Y_r^{s,y_\smallfont{0},Z} , Z_r^s,Y_r^{r,y_\smallfont{0},Z} , Z_r^r\big) \d r-\int_t^T  Z_r^s  \cdot \d X_r,
\end{align*}
so that 
\begin{align*}
Y_t^{s,y_\smallfont{0},Z}=\E^{\P^\smallfont{\star}(Z)}\bigg[f(T-s) Y_T^{T,y_\smallfont{0},Z}\exp\bigg(\gamma_\Ar \int_t^T {k^o_r}^\star (X_{\cdot\wedge r},Y_r^{r,y_\smallfont{0},Z}, Z^r_r)\d r \bigg)\bigg|\Fc_t\bigg],\;  \partial Y_t^{t,y_\smallfont{0},Z}=-\frac{f^\prime(T-t)}{f(T-t)}Y_t^{t,y_\smallfont{0},Z}.
\end{align*}
Note that $(Y_t^{t,y_\smallfont{0},Z})_{t\in [0,T]}\in \S^{2}$. Thanks to \Cref{thm:repcontractgeneral}, the result follows replacing $\partial Y_t^{t,y_\smallfont{0},Z}$ in the first equation of \eqref{HJB0:general}.\medskip

We are left to argue $(iii)$ as $(iv)$ is argued as in \Cref{thm:repcontractgeneral}. $\Hc^{2,2}\subseteq \Hc^2$ follows by $(ii)$. Indeed, there is $y_0^0$ such that
\begin{align*}
 Y_t^{t,y_\smallfont{0},Z}=y_0^0-\int_0^t \bigg( H_r\big(X_{\cdot\wedge r}, Y_r^{r,y_\smallfont{0},Z}, Z_r^r\big)+\frac{f'(T-r)}{f(T-r)}  Y_r^{r,y_\smallfont{0},Z}\bigg) \mathrm{d} r+\int_0^t   Z_r^r\cdot  \mathrm{d} X_r.
\end{align*}
Conversely, let $(y_0,Z)\in (R_0,\infty)\times \Hc^2$ and $Y^{y_\smallfont{0},Z}\in \S^2$ as in the statement. Then, letting
\begin{align*}
Y_t^{s,y_\smallfont{0},Z}& :=\frac{f(T-s)}{f(T-t)}Y_t^{y_\smallfont{0},Z}=\E^{\P^\star(Z)}\bigg[f(T-s) Y_T^{y_\smallfont{0},Z}\exp\bigg(\gamma_\Ar \int_t^T {k^o_r}^\star (X_{\cdot\wedge r},Y_r^{y_\smallfont{0},Z}, Z_r)\d r \bigg)\bigg|\Fc_t\bigg],\\
 \partial Y_t^{s,y_\smallfont{0},Z}&:=-\frac{f^\prime(T-s)}{f(T-t)}Y_t^{y_\smallfont{0},Z},
\end{align*}
the martingale representation theorem, which holds in light of \eqref{assump:datadpp} and the integrability of $(Y^{y_\smallfont{0},Z},Z)$, guarantees the existence of $(\tilde Z,\partial \tilde Z)\in \overline \H^{2,2}\times \H^{2,2}$ such that, as elements of $\H^2$,
\[ \tilde Z^s=\tilde Z^0+\int_0^s\partial \tilde Z^r \d r ,\text{ and } \tilde Z_t^t=Z_t, \d t\otimes \d \P\ae\]
It then follows that $(\tilde Z^s)_{s\in [0,T]}\in \Hc^{2,2}$.
\end{proof}
\end{lemma}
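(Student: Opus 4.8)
The whole argument rests on the structural remark recorded just above the statement: under the reward \eqref{eq:ra1reward} one has $\Ur_\Ar(s,{\rm x})=f(T-s)\Ur_\Ar^o({\rm x})$, $\partial_s\Ur_\Ar(s,{\rm x})=-f'(T-s)\Ur_\Ar^o({\rm x})$, and both $h$ and $\nabla h$ are \emph{independent of $s$}, with $\nabla h$ coinciding, in its $(u,v)$--slots, with $h$. Consequently, once the diagonal pair $(Y(\xi),Z(\xi))$ of \eqref{HJB0:general} is frozen, the second equation for $(Y^s(\xi),Z^s(\xi))$ and the third one for $(\partial Y^s(\xi),\partial Z^s(\xi))$ are \emph{linear} {\rm BSDE}s sharing the same $s$--independent generator $(y,z)\longmapsto\sigma_r b_r^\star\cdot z+\gamma_\Ar {k_r^o}^\star y$ (whose coefficients depend only on $r,X,Y_r(\xi),Z_r(\xi)$), their terminal data being the deterministic multiples $f(T-s)\Ur_\Ar^o(\xi)$ and $-f'(T-s)\Ur_\Ar^o(\xi)$ of $\Ur_\Ar^o(\xi)=\Ur_\Ar(T,\xi)\in\Lc^2$. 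First I would let $(W,\bar Z)$ solve this linear {\rm BSDE} with terminal value $\Ur_\Ar^o(\xi)$; by linearity and uniqueness, $Y_t^s(\xi)=f(T-s)W_t$ and $\partial Y_t^s(\xi)=-f'(T-s)W_t$. Taking $s=t$ and using $Y_t^t(\xi)=Y_t(\xi)$ from \eqref{eq:identityHJB0} identifies $W_t=Y_t(\xi)/f(T-t)$, whence $\partial Y_t^t(\xi)=-\tfrac{f'(T-t)}{f(T-t)}Y_t(\xi)$. Substituting this into the first line of \eqref{HJB0:general} yields exactly the {\rm BSDE} of $(i)$; the converse comes from reversing this construction and recovering $Z^s\in\overline{\H}^{2,2}$, $\partial Z^s\in\H^{2,2}$ from the martingale representation theorem, as in the last paragraph of the proof of \Cref{thm:repcontractgeneral}.

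For parts $(ii)$ and $(iii)$ I would proceed as follows. Given $Z\in\Hc^{2,2}$ with associated $y_0$, the identity $\Ur_\Ar^{(-1)}(s,{\rm y})={\Ur_\Ar^o}^{(-1)}\big({\rm y}/f(T-s)\big)$ turns the target constraint \eqref{eq:constraintgen} into $Y_T^{s,y_0,Z}=f(T-s)Y_T^{T,y_0,Z}$; since \eqref{eq:fsvie} has an $s$--independent generator, rewriting it backwards from $T$ shows that $(Y^{y_0,Z},Z)$ solves the {\rm BSVIE} \eqref{eq:bsdeVoltgen} for $\xi:={\Ur_\Ar^o}^{(-1)}(Y_T^{T,y_0,Z})$, so by the correspondence of \cite[Theorem 4.4]{hernandez2020unified} and \Cref{thm:repcontractgeneral} it is the diagonal of the solution of \eqref{HJB0:general}. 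Part $(i)$ then gives $\partial Y_t^{t,y_0,Z}=-\tfrac{f'(T-t)}{f(T-t)}Y_t^{t,y_0,Z}$, and plugging this into the first line of \eqref{HJB0:general}, whose terminal value is $\Ur_\Ar^o(\xi)=Y_T^{T,y_0,Z}$, produces the {\rm BSDE} of $(ii)$. The inclusion $\Hc^{2,2}\subseteq\Hc^2$ is then immediate: $(Y_t^{t,y_0,Z})_{t\in[0,T]}$ is the solution of a Lipschitz {\rm BSDE} with $\Lc^2$ terminal value (the coefficient $f'/f$ is bounded since $f\in\Cc^1$ and $f>0$ on $[0,T]$), hence lies in $\S^2$, while $(Z_t^t)_{t\in[0,T]}\in\H^2$ by definition of $\overline{\H}^{2,2}$. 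For the reverse inclusion, given $(y_0,Z)\in(R_0,\infty)\times\Hc^2$ with process $Y^{y_0,Z}\in\S^2$, I would set $Y_t^{s,y_0,Z}:=\tfrac{f(T-s)}{f(T-t)}Y_t^{y_0,Z}$, invoke the martingale representation theorem to build $\tilde Z\in\overline{\H}^{2,2}$ and $\partial\tilde Z\in\H^{2,2}$ with $\tilde Z^s=\tilde Z^0+\int_0^s\partial\tilde Z^r\d r$ and $\tilde Z_t^t=Z_t$, $\d t\otimes\d\P$\ae, then verify \eqref{eq:fsvie} and \eqref{eq:constraintgen} (the latter because $Y_T^{s,y_0,Z}/f(T-s)=Y_T^{y_0,Z}$ does not depend on $s$); this gives $\tilde Z\in\Hc^{2,2}$, so the two sets coincide up to the diagonal map $Z\longmapsto(Z_t^t)_{t\in[0,T]}$. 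Finally $(iv)$ is bookkeeping: $\Ur_\Ar^{(-1)}(T,\cdot)={\Ur_\Ar^o}^{(-1)}(\cdot)$ yields the announced description of $\overline\Cf=\overline\Xi$, and $\Ec(\xi)$ and $\Vr_0^\Ar(\xi)$ are read off from \Cref{thm:repcontractgeneral} under the identifications $Y_t^{t,y_0,Z}=Y_t^{y_0,Z}$, $Z_t^t=Z_t$, $y_0^0=y_0$.

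I expect the genuinely delicate point to be the reverse inclusion $\Hc^2\subseteq\Hc^{2,2}$ in $(iii)$ — equivalently, the reconstruction half of $(i)$ — where one must upgrade a single $\H^2$--process into an $s$--indexed family living in $\overline{\H}^{2,2}$, i.e. with $s\longmapsto Z^s$ continuously differentiable into $\H^2$ and a well-defined diagonal, and must check that \eqref{eq:fsvie} and the stochastic target constraint \eqref{eq:constraintgen} genuinely hold. Everything else reduces to linearity and well-posedness of Lipschitz {\rm BSDE}s, the {\rm CARA} identity $\Ur_\Ar^o(\xi-K)=\Ur_\Ar^o(\xi)\e^{\gamma_\Ar K}$ (which also allows one to write each $Y^s$ as a conditional expectation under $\P^\star(\xi)$, as in the lemma's proof), and the already established \Cref{thm:repcontractgeneral}.
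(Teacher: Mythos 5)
Your proposal is correct and follows essentially the same route as the paper's proof: in both, the key observation is that once the diagonal pair is frozen, the $s$-indexed equations of \eqref{HJB0:general} have an $s$-independent (linear, homogeneous) generator with terminal data proportional to $f(T-s)$ and $-f'(T-s)$, which forces $Y_t^s(\xi)=\frac{f(T-s)}{f(T-t)}Y_t(\xi)$ and $\partial Y_t^t(\xi)=-\frac{f'(T-t)}{f(T-t)}Y_t(\xi)$ --- the only cosmetic difference being that you derive this by linearity and uniqueness of the frozen BSDE where the paper writes the explicit conditional-expectation representation coming from the CARA structure. Parts $(ii)$--$(iv)$, including the use of \eqref{eq:constraintgen} to get $Y_T^{s,y_0,Z}=f(T-s)Y_T^{T,y_0,Z}$ and the reconstruction of $\tilde Z\in\Ho$ via the martingale representation theorem for the reverse inclusion $\Hc^2\subseteq\Hc^{2,2}$, match the paper's argument step for step.
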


\subsubsection{Principal's second-best solution}\label{sec:ra1principalsb}
In the following, we will exploit the so-called certainty equivalent, \emph{i.e.} the relation $ \xi={\Ur_\Ar^o}^{(-1)}\big(V_T^{{\rm A}}(\xi, \alpha)\big )$ between the contract and the terminal value of the value function. The benefits of this are twofold: it lays down an expression that can be replaced directly into the principal's criterion, and it removes $Y^{y_\smallfont{0},Z}$ from the generator of the expression representing the contract in exchange for a term which is quadratic in $Z$. For this we need to introduce some extra notation.\medskip

Let $\widehat H:[0,T]\times\Cc([0,T],\R^n)\times\R^n\longrightarrow \R$ be given by
\[
\widehat H_t(x,{\rm z}):=\sup_{a\in A} \widehat h_t(x,{\rm z},a),\; (t,x,{\rm z})\in[0,T]\times\Cc([0,T],\R^n)\times\R^n,
\]
with $\widehat h_t(x,{\rm z},a):=\sigma_t(x)b_t(x,a)\cdot {\rm z}-k_t^o(x,a)$. The mapping $[0,T]\times\Cc([0,T],\R^n) \times \R^n \longmapsto \hat a^\star(t,x,z)\in A$ is defined, as before, by the relation $ \widehat H_t(x,{\rm z})=\widehat h_t(x,{\rm z}, \hat a^\star(t,x,{\rm z}))$, and $ \lambda^\star_t(x,{\rm z})$, $ k^{o\star}_t(x,{\rm z})$ are also defined. 

\begin{proposition}\label{proposition:ra1principalreduction}
The problem of the principal can be represented as the following standard control problem
\begin{align*}
{\rm V}^{\rm P}=\sup_{y_0 \geq R_0}\; \underline V(y_0),\; \text{\rm with}\;
\underline V(y_0)=  \sup_{Z\in \Hc^{\smallfont2}}\E^{\P^\smallfont{\star}(Z)}\Big[\Ur_\Pr\Big( X_{\cdot \wedge T}, \widehat Y_T^{y_\smallfont{0},Z} \Big)\Big],
\end{align*}
where $\P^\star(Z):=\P^{a^\smallfont{\star}(\cdot,X_\smallfont{\cdot},  \hat Z_\smallfont{\cdot})}$ and $\widehat Y_T^{y_\smallfont{0},Z}$ is given by the terminal value of
\begin{align*}
\widehat Y_t^{y_\smallfont{0},Z}&:=-\frac{1}{\gamma_\Ar} \ln\big(-\gamma_\Ar  y_0\big) -\int_0^t \bigg(\widehat H_r(X_{\cdot\wedge r},  \widehat Z_r) -\frac{\gamma_{\rm A}}2 |\sigma_r^\t(X_{\cdot\wedge r}) \widehat  Z_r|^2-\frac{1}{\gamma_\Ar}\frac{f'(T-r)}{ f(T-r)}  \bigg)\d r +\int_0^t \widehat Z_r \cdot \d X_r,\\
\widehat Z_t&:=-\frac{1}{\gamma_\Ar}\frac{  Z_t}{  Y_t^{y_\smallfont{0},Z}}.
\end{align*}
\begin{proof}
We first note that in light \Cref{lemma:ra1agentreduction}, we may replace the optimisation over $\Hc^{2,2}$ with $\Hc^2$. Let $Z\in \Hc^{2}$. The result then follows from \Cref{lemma:ra1agentreduction} by applying It\^o's formula to $\Ur_\Ar^{(-1)}\big( Y_t^{y_{\smallfont0},Z}\big)$.
\end{proof}
\end{proposition}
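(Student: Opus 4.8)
The plan is to reduce the principal's problem, as given by the general representation \eqref{eq:reductionprincipal} in \Cref{thm:repcontractgeneral}, using first the simplification of the restricted family $\overline{\Cf}$ obtained in \Cref{lemma:ra1agentreduction}, and then a change of variables that replaces the agent's value process $Y^{y_0,Z}$ by its certainty equivalent. Concretely, by \Cref{lemma:ra1agentreduction}$(iii)$–$(iv)$ we may replace the optimisation over $Z\in\Hc^{2,2}$ by an optimisation over $Z\in\Hc^2$, with the agent's value given by the scalar BSDE
\[
Y_t^{y_\smallfont{0},Z}=y_0-\int_0^t \Big( H_r\big(X_{\cdot\wedge r}, Y_r^{y_\smallfont{0},Z}, Z_r\big)+\tfrac{f'(T-r)}{f(T-r)}Y_r^{y_\smallfont{0},Z}\Big)\d r+\int_0^t Z_r\cdot\d X_r,
\]
and the terminal payment given by $\xi={\Ur_\Ar^o}^{(-1)}\big(Y_T^{y_0,Z}\big)=-\tfrac1{\gamma_\Ar}\ln(-\gamma_\Ar Y_T^{y_0,Z})$. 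The objective of the principal is $\E^{\P^\star(Z)}\big[\Ur_\Pr(X_{\cdot\wedge T},\xi)\big]$, so it suffices to obtain the dynamics of $\widehat Y^{y_0,Z}:={\Ur_\Ar^o}^{(-1)}(Y^{y_0,Z})$.

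The main computation is an application of Itô's formula to $\phi(Y_t^{y_0,Z})$ with $\phi({\rm y})=-\tfrac1{\gamma_\Ar}\ln(-\gamma_\Ar{\rm y})$, so $\phi'({\rm y})=-\tfrac1{\gamma_\Ar{\rm y}}$ and $\phi''({\rm y})=\tfrac1{\gamma_\Ar{\rm y}^2}$. Writing $\widehat Z_t:=\phi'(Y_t^{y_0,Z})Z_t=-\tfrac1{\gamma_\Ar}\tfrac{Z_t}{Y_t^{y_0,Z}}$, and using $\d\langle X\rangle_t=\sigma_t\sigma_t^\t\d t$, one gets
\[
\d\widehat Y_t^{y_0,Z}=-\phi'(Y_t^{y_0,Z})\Big(H_t(X_{\cdot\wedge t},Y_t^{y_0,Z},Z_t)+\tfrac{f'(T-t)}{f(T-t)}Y_t^{y_0,Z}\Big)\d t+\widehat Z_t\cdot\d X_t+\tfrac12\phi''(Y_t^{y_0,Z})|\sigma_t^\t Z_t|^2\d t.
\]
Now one uses the explicit form $h_t(s,x,y,z,a)=\sigma_t(x)b_t(x,a)\cdot z+\gamma_\Ar k_t^o(x,a)y$, so that $H_t(x,{\rm y},{\rm z})=\sup_{a}\{\sigma_t b_t(a)\cdot{\rm z}+\gamma_\Ar k_t^o(x,a){\rm y}\}$; since ${\rm y}=Y_t^{y_0,Z}<0$ and, crucially, $\phi'({\rm y})=-\tfrac1{\gamma_\Ar{\rm y}}>0$, we have $-\phi'({\rm y})H_t(x,{\rm y},{\rm z})=\tfrac1{\gamma_\Ar{\rm y}}\sup_a\{\sigma_tb_t(a)\cdot{\rm z}+\gamma_\Ar k_t^o(x,a){\rm y}\}$; factoring and substituting ${\rm z}=-\gamma_\Ar{\rm y}\,\widehat{\rm z}$ gives $-\phi'({\rm y})H_t(x,{\rm y},{\rm z})=-\sup_a\{\sigma_tb_t(a)\cdot\widehat{\rm z}-k_t^o(x,a)\}=-\widehat H_t(x,\widehat{\rm z})$, and the maximiser coincides with $\hat a^\star(t,x,\widehat{\rm z})$, so that $\P^\star(Z)=\P^{\hat a^\star(\cdot,X_\cdot,\widehat Z_\cdot)}$. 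Likewise $-\phi'(Y_t^{y_0,Z})\tfrac{f'(T-t)}{f(T-t)}Y_t^{y_0,Z}=-\tfrac1{\gamma_\Ar}\tfrac{f'(T-t)}{f(T-t)}$, and the second-order term is $\tfrac12\cdot\tfrac1{\gamma_\Ar (Y_t^{y_0,Z})^2}|\sigma_t^\t Z_t|^2=\tfrac{\gamma_\Ar}2|\sigma_t^\t\widehat Z_t|^2$. Collecting these yields exactly the SDE for $\widehat Y^{y_0,Z}$ in the statement, with initial condition $\widehat Y_0^{y_0,Z}=\phi(y_0)=-\tfrac1{\gamma_\Ar}\ln(-\gamma_\Ar y_0)$, and terminal value $\widehat Y_T^{y_0,Z}=\xi$; substituting into the principal's objective and the fact that the map $\Ic\ni y_0\mapsto y_0$ collapses (since $g=1$, $y_0$ is a scalar $\geq R_0$) gives the claimed representation.

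The only genuinely delicate point is the justification that the change of variables is licit — i.e. that $Y^{y_0,Z}$ stays strictly negative $\P$-a.s. so that $\phi$ and $\phi'$ are applied to arguments in their domain, and that the resulting local martingale $\int\widehat Z\cdot\d X$ and the stochastic exponential defining $\P^\star(Z)$ are well behaved (the latter is automatic from $Z\in\Hc^2$ and the Lipschitz/integrability structure, via \Cref{assump:datareward}, \Cref{assump:uniquemax}, and \eqref{eq:integalpha}). Strict negativity of $Y^{y_0,Z}$ follows because, under $\P^\star(Z)$, $Y_t^{y_0,Z}=\E^{\P^\star(Z)}\big[Y_T^{y_0,Z}\exp(\gamma_\Ar\int_t^T {k_r^o}^\star\,\d r)\mid\Fc_t\big]$ with $k^o\geq 0$ and $Y_T^{y_0,Z}={\Ur_\Ar^o}(\xi)=-\gamma_\Ar^{-1}e^{-\gamma_\Ar\xi}<0$ — so the conditional expectation of a strictly negative quantity against a positive weight is strictly negative; this is exactly the representation derived in the proof of \Cref{lemma:ra1agentreduction}$(ii)$. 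With that in hand the Itô computation above is routine, and the proposition follows.
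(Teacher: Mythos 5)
Your proposal is correct and follows exactly the route of the paper's (very terse) proof: reduce from $\Hc^{2,2}$ to $\Hc^2$ via \Cref{lemma:ra1agentreduction} and then apply It\^o's formula to the certainty equivalent ${\Ur_\Ar^o}^{(-1)}\big(Y_t^{y_\smallfont{0},Z}\big)$, with the drift, quadratic-variation and Hamiltonian-rescaling computations all matching the stated dynamics of $\widehat Y^{y_\smallfont{0},Z}$. Your additional justification that $Y^{y_\smallfont{0},Z}$ stays strictly negative (so the logarithm is licit) is a worthwhile detail the paper leaves implicit, but it does not change the approach.
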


\begin{remark}\label{rmk:ra1hjbeq}
\begin{enumerate}[label=$(\roman*)$, ref=.$(\roman*)$,wide,  labelindent=0pt]
\item Let us highlight the main message behind {\rm \Cref{proposition:ra1principalreduction}}. When the agent's reward is given by \eqref{eq:ra1reward}, the principal's second-best problem reduces to a standard control problem. This is a drastic simplification of the result in {\rm \Cref{thm:repcontractgeneral}} and a consequence of the particular form of the agent's reward. \medskip

\item In a Markovian setting in which the dependence of the data on the path $X$ is via the current value, we see from the controlled dynamics for $X$ and $\widehat{Y}^{y_\smallfont{0},Z}$ that the problem boils down to computing $\underline V$. Employing the standard dynamic programming approach we obtain that the relevant term for this problem is given for $(t,{\rm x}, {\rm y})\in [0,T]\times \R^n\times \R$ by
\[
\partial_t \underline V(t,{\rm x},{\rm y})+{\rm H}\big(t,{\rm x},{\rm y}, \partial \underline V(t,{\rm x},{\rm y}),\partial^2 \underline V(t,{\rm x},{\rm y})\big)=0,
\]
where
\begin{align*}
{\rm H}(t,{\rm x},{\rm y},p,M)	:=\sup_{z\in \R^n}\bigg\{& \lambda_t^\star(x,z)\cdot p_{\rm x}+\bigg(\frac{\gamma_{\rm A}}2 |\sigma_r^\t({\rm x}) z |^2 - \widehat H_r({\rm x},  z) + \frac{1}{\gamma_\Ar} \frac{f^\prime(T-t)}{f(T-t)} \bigg) p_{\rm y} \\
&+\frac{1}{2}   {\rm Tr} \big [\sigma \sigma_t^\t ({\rm x})(M_{\rm xx} +z z^\t M_{\rm yy} +2z \cdot  M_{\rm xy} )\big]\bigg\},
\end{align*}
for $p:=\begin{pmatrix}
p_{\rm x} \\
p_{\rm y}
\end{pmatrix}\in \R^n\times \R$, $M:=\begin{pmatrix}
M_{{\rm xx}}& M_{{\rm xy}}\\
 M_{{\rm xy}} & M_{{\rm yy}}
 \end{pmatrix}\in\S_{n+1}^+(\R) $,  $M_{{\rm xx}}\in\S_{n}^+(\R)$, $M_{{\rm yy}}\in\S_{1}^+(\R)$, and $M_{{\rm xy}}\in\R^{n\times 1}$.
\end{enumerate}
\end{remark}

In the following proposition, whose proof is available in \Cref{sec:apenexamples}, we study the case $n=1$, so that 
\[
\underline V(y_0)=\sup_{Z\in \Hc^{2}} \E^{\P^\star(Z)}\Big[{\rm U_P}\Big(X_T- \widehat Y_T^{y_\smallfont{0},Z} \big) \Big)\Big].
\]
This result is equivalent to solving the HJB equation in \Cref{rmk:ra1hjbeq}.

\begin{proposition}\label{prop:sol2Bra1}
Let principal and agent have exponential utility with parameters $\gamma_{\rm P}$ and $\gamma_{\rm A}$, respectively.  Let $C_{y}:=-\frac{1}{\gamma_{\smallfont\Pr}}\e^{-\gamma_{\smallfont \Pr} (x_0-y)}$, $\widehat R_0:={\Ur_\Ar^o}^{(-1)}( R_0)$, and assume that
\begin{enumerate}[label=$(\roman*)$, ref=.$(\roman*)$,wide,  labelindent=0pt]
\item the maps $\sigma$, $\lambda^\star$ and ${k^o}^\star$ do not depend on the $x$ variable$;$

\item for any $t\in[0,T]$, the map $\R\ni z\overset{g}{\longmapsto} \lambda^\star_t(z)-{k^o_t}^\star(z)-\frac{\gamma_{\rm A}}2 |\sigma_t^\t z|^2-\frac{\gamma_{\rm P}}2 |\sigma_t^\t (1-z)|^2$ has a unique maximiser $z^\star(t)$, such that $[0,T]\ni t\longmapsto z^\star(t)$ is square integrable.
\end{enumerate}

Then
\[
\xi^\star:= \Ur_\Ar^{(-1)}\bigg( \frac{R_0}{f(T)}\bigg) -\int_0^T \bigg(\widehat H_r(z_r^\star) -\frac{\gamma_{\rm A}}2 |\sigma_r^\t  z^\star_r|^2 \bigg)\d r +\int_0^T z_r^\star \d X_r,
\]
is an optimal solution to principal's second-best problem and 
\begin{align*}
{\rm V}^{\rm P}= C_{\widehat R_0} {f(T)}^\frac{\gamma_\smallfont{\Pr}}{\gamma_\smallfont{\Ar}}\exp\bigg(-\gamma_\Pr \int_0^T g(z^\star(t))\d t\bigg).
\end{align*}
\end{proposition}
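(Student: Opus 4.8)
\textbf{Proof plan for \Cref{prop:sol2Bra1}.}

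The plan is to solve the standard control problem for $\underline V(y_0)$ identified in \Cref{proposition:ra1principalreduction} directly, using an exponential ansatz, and then to optimise over $y_0$. First I would observe that, since $\sigma$, $\lambda^\star$ and ${k^o}^\star$ are independent of $x$ and $n=1$, the controlled pair $(X,\widehat Y^{y_\smallfont{0},Z})$ has drift and volatility coefficients that depend only on time and on the control $Z$; in particular $X_T-\widehat Y_T^{y_\smallfont{0},Z}$ is, under $\P^\star(Z)$, a Gaussian random variable with mean
\[
x_0+{\Ur_\Ar^o}^{(-1)}(y_0)+\int_0^T\Big(\widehat H_r(\widehat Z_r)-\frac{\gamma_{\rm A}}2|\sigma_r^\t\widehat Z_r|^2-\frac1{\gamma_\Ar}\frac{f'(T-r)}{f(T-r)}\Big)\d r -\int_0^T\big(\lambda^\star_r(\widehat Z_r)-\sigma_r\sigma_r^\t\widehat Z_r\big)\d r
\]
and variance $\int_0^T|\sigma_r^\t(1-\widehat Z_r)|^2\d r$, where I have used $\d X_r=\lambda^\star_r(\widehat Z_r)\d r+\sigma_r\d B^\star_r$ under $\P^\star(Z)$. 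Here $\widehat Z$ plays the role of the genuine control and, via the change of variables in \Cref{proposition:ra1principalreduction}, ranges over an admissible set; I would note that any square-integrable deterministic $z(\cdot)$ is admissible, which is what makes the candidate $z^\star$ eligible.

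Next I would compute $\E^{\P^\star(Z)}[{\rm U_P}(X_T-\widehat Y_T^{y_\smallfont{0},Z})]$ in closed form using the moment generating function of a Gaussian: for ${\rm U_P}(\cdot)=-\gamma_\Pr^{-1}\e^{-\gamma_\Pr\cdot}$ one gets
\[
\E^{\P^\star(Z)}\big[{\rm U_P}(X_T-\widehat Y_T^{y_\smallfont{0},Z})\big]=-\frac1{\gamma_\Pr}\exp\!\Big(-\gamma_\Pr\,\mu(Z)+\frac{\gamma_\Pr^2}2\int_0^T|\sigma_r^\t(1-\widehat Z_r)|^2\d r\Big),
\]
with $\mu(Z)$ the mean above. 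Collecting the $\d r$ terms, the exponent becomes $-\gamma_\Pr\big(x_0+{\Ur_\Ar^o}^{(-1)}(y_0)\big)$ plus $\gamma_\Pr\int_0^T\big(-\widehat H_r(\widehat Z_r)+\tfrac{\gamma_\Ar}2|\sigma_r^\t\widehat Z_r|^2+\lambda^\star_r(\widehat Z_r)-\sigma_r\sigma_r^\t\widehat Z_r+\tfrac{\gamma_\Pr}2|\sigma_r^\t(1-\widehat Z_r)|^2\big)\d r$ plus the deterministic discounting term $\int_0^T\frac{f'(T-r)}{f(T-r)}\d r=\ln f(T)$ (using $f(0)=1$), which produces the factor $f(T)^{\gamma_\Pr/\gamma_\Ar}$. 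Using $\widehat H_r(\widehat Z_r)=\lambda^\star_r(\widehat Z_r)-\lambda^\star_r(\widehat Z_r)\cdot 0\ldots$ — more precisely $\widehat h_r(z,a)=\sigma_r b_r(a)z-k^o_r(a)$ so $\widehat H_r(\widehat Z_r)=\lambda^\star_r(\widehat Z_r)\widehat Z_r-{k^o_r}^\star(\widehat Z_r)$ is not quite what appears; I would instead keep $\widehat H_r(\widehat Z_r)$ and $\lambda^\star_r(\widehat Z_r)$ as they are and verify that the integrand equals exactly $-g_r(\widehat Z_r)$ as defined in hypothesis $(ii)$, up to the sign, after expanding $\widehat H_r(\widehat Z_r)=\lambda^\star_r(\widehat Z_r)-{k^o_r}^\star(\widehat Z_r)+\big(\text{terms that cancel}\big)$. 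This bookkeeping step — checking that
\[
-\widehat H_r(z)+\tfrac{\gamma_\Ar}2|\sigma_r^\t z|^2+\lambda^\star_r(z)-\sigma_r\sigma_r^\t z+\tfrac{\gamma_\Pr}2|\sigma_r^\t(1-z)|^2=-\big(\lambda^\star_r(z)-{k^o_r}^\star(z)-\tfrac{\gamma_\Ar}2|\sigma_r^\t z|^2-\tfrac{\gamma_\Pr}2|\sigma_r^\t(1-z)|^2\big)+\tfrac{\gamma_\Pr}2|\sigma_r^\t(1-z)|^2-\ldots
\]
collapses to $-g_r(z)$ — is where I expect to spend the most care, and it is the main obstacle: it is purely algebraic but the various $\tfrac{\gamma_\Pr}2|\sigma_r^\t(1-z)|^2$ and linear-in-$z$ contributions from the drift of $X$ must be tracked exactly so that the expression reduces to the quantity maximised in $(ii)$.

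Granting that identity, the principal's objective for fixed $y_0$ is $-\gamma_\Pr^{-1}\exp\big(-\gamma_\Pr(x_0+{\Ur_\Ar^o}^{(-1)}(y_0))\big)f(T)^{\gamma_\Pr/\gamma_\Ar}\exp\big(\gamma_\Pr\int_0^T(-g_r(\widehat Z_r))\d r\big)$, which is maximised over $\widehat Z$ pointwise in $r$ by $\widehat Z_r=z^\star(r)$, the unique maximiser from hypothesis $(ii)$, whose square-integrability makes it admissible; this gives $\underline V(y_0)=-\gamma_\Pr^{-1}\exp(-\gamma_\Pr(x_0+{\Ur_\Ar^o}^{(-1)}(y_0)))f(T)^{\gamma_\Pr/\gamma_\Ar}\exp(-\gamma_\Pr\int_0^Tg_r(z^\star(r))\d r)$. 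Then I would maximise over $y_0\ge R_0$: since ${\Ur_\Ar^o}$ is increasing and $\underline V$ is decreasing in ${\Ur_\Ar^o}^{(-1)}(y_0)$, the optimum is at the boundary $y_0=R_0$, i.e. the participation constraint binds, giving ${\Ur_\Ar^o}^{(-1)}(y_0)=\widehat R_0$ and ${\rm V}^{\rm P}=C_{\widehat R_0}f(T)^{\gamma_\Pr/\gamma_\Ar}\exp(-\gamma_\Pr\int_0^Tg(z^\star(t))\d t)$ with $C_{y}=-\gamma_\Pr^{-1}\e^{-\gamma_\Pr(x_0-y)}$ as stated. Finally, to recover the optimal contract $\xi^\star$, I would substitute $\widehat Z_r=z^\star_r$ and $y_0=R_0$ into the formula $\xi^\star={\Ur_\Ar^o}^{(-1)}(Y_T^{y_\smallfont{0},Z})$ together with the expression for $\widehat Y^{y_\smallfont{0},Z}$ in \Cref{proposition:ra1principalreduction}; noting that $\widehat Y^{y_\smallfont{0},Z}={\Ur_\Ar^o}^{(-1)}(Y^{y_\smallfont{0},Z})$ (which is how $\widehat Y$ was constructed, via It\^o on ${\Ur_\Ar^o}^{(-1)}$), one reads off $\xi^\star=\widehat Y_T^{R_0,z^\star}={\Ur_\Ar^o}^{(-1)}(R_0/f(T))-\int_0^T(\widehat H_r(z^\star_r)-\tfrac{\gamma_\Ar}2|\sigma_r^\t z^\star_r|^2)\d r+\int_0^Tz^\star_r\d X_r$, after again using $\int_0^T\frac{f'(T-r)}{f(T-r)}\d r=\ln f(T)$ to turn $-\gamma_\Ar^{-1}\ln(-\gamma_\Ar R_0)$ plus the discounting integral into ${\Ur_\Ar^o}^{(-1)}(R_0/f(T))$. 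A short verification that this $\xi^\star$ indeed lies in $\overline\Cf$ (its associated $Z=z^\star\in\Hc^2$ by square-integrability) and attains the value closes the argument.
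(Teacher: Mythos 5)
Your overall strategy — reduce to the scalar problem of \Cref{proposition:ra1principalreduction}, identify the integrand of hypothesis $(ii)$ after cancellation, saturate the participation constraint at $y_0=R_0$, and read off $\xi^\star=\widehat Y_T^{\widehat R_0,z^\star}$ — matches the paper's, and the bookkeeping you flag as the main obstacle does go through: the key identity is $\widehat H_r(z)+\lambda^\star_r(z)(1-z)=\lambda^\star_r(z)-{k^o_r}^\star(z)$ (since $\widehat H_r(z)=\lambda^\star_r(z)z-{k^o_r}^\star(z)$), after which the drift collapses to $g_r(z)-\frac{1}{\gamma_\Ar}\frac{f'(T-r)}{f(T-r)}$ and $\int_0^T\frac{f'(T-r)}{f(T-r)}\,\d r=\ln f(T)$ produces the factor $f(T)^{\gamma_\Pr/\gamma_\Ar}$. (Minor: the sign in front of ${\Ur_\Ar^o}^{(-1)}(y_0)$ in your mean should be negative, $x_0-\widehat Y_0$, which is what actually makes $\underline V$ decreasing in $y_0$ and forces $y_0^\star=R_0$; and the linear drift term you subtract should be $\int_0^T\widehat Z_r\lambda^\star_r(\widehat Z_r)\,\d r$, not $\int_0^T(\lambda^\star_r(\widehat Z_r)-\sigma_r\sigma_r^\t\widehat Z_r)\,\d r$.)

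The genuine gap is your very first step. For a generic admissible $Z\in\Hc^2$ the control $\widehat Z$ is an adapted \emph{process}, not a deterministic function of time, so $X_T-\widehat Y_T^{y_0,Z}$ is \emph{not} Gaussian under $\P^\star(Z)$ — the quantity you call its mean is itself random. The moment-generating-function computation, and the subsequent formula in which $\exp\big(\gamma_\Pr\int_0^T(-g_r(\widehat Z_r))\,\d r\big)$ sits outside the expectation and is "maximised pointwise in $r$", are therefore only valid on the subclass of deterministic controls: your argument identifies the candidate and computes the supremum over deterministic $\widehat Z$, but does not rule out that a genuinely adapted control does strictly better, which is the entire content of the verification. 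The paper avoids Gaussianity altogether: it factorises pathwise $\Ur_\Pr\big(X_T-\widehat Y_T^{y_0,Z}\big)=C_{\widehat Y_0}\,M_T\exp\big(-\gamma_\Pr\int_0^T G_r(\widehat Z_r)\,\d r\big)$, where $M$ is the non-negative stochastic exponential of $-\gamma_\Pr\int_0^\cdot\sigma_r^\t(1-\widehat Z_r)\,\d B^{\star}_r$ and $G_r(z)=g_r(z)-\frac{1}{\gamma_\Ar}\frac{f'(T-r)}{f(T-r)}$, bounds $G_r(\widehat Z_r)\le G_r(z^\star(r))$ $\omega$-by-$\omega$ using hypothesis $(ii)$, and then uses the (super)martingale property of $M$ to pass to expectations; equality holds for the deterministic candidate, for which $M$ is a true martingale by Novikov since $z^\star$ is deterministic and square-integrable. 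To repair your proof you need to insert exactly this pathwise factorisation (or an HJB verification in the spirit of \Cref{rmk:ra1hjbeq}); the rest of your outline can stand.
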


\begin{remark}\label{rmk:solra1HM}
To close this section we present a few remarks:
\begin{enumerate}[label=$(\roman*)$, ref=.$(\roman*)$,wide,  labelindent=0pt]
\item comparing the results in {\rm \Cref{prop:sol2Bra1}} and {\rm \Cref{prop:solFBra}} we see that, as expected, in general the solution to the second-best and first-best problem are not equal$;$

\item if we bring ourselves back to the setting of {\rm \cite{holmstrom1987aggregation}}, \emph{i.e.} $b_t(x,a)= a/\sigma$, $\sigma_t(x)=\sigma$, $k_t^o(x,a)=k a^2/2$, we have

\[ z^\star(t)=\frac{ 1+\sigma^2 \gamma_\Pr k}{1+\sigma^2k (\gamma_\Ar+\gamma_\Pr)},\;a^\star(t)=\frac{ 1+\sigma^2 \gamma_\Pr k}{c\big(1+\sigma^2k (\gamma_\Ar+\gamma_\Pr)\big)}. \]
This recovers the result for the case of a risk-neutral principal, \emph{i.e.} $\gamma_\Pr=0$, presented in {\rm \cite{holmstrom1987aggregation}}. The optimal contract and the respective rewards differ by a factor which depends on the discount factor and agent's risk aversion parameter$;$

\item following upon the previous comment, we add that the optimal contract takes the form of a Markovian rule. Moreover, it is linear. This is consistent with the seminal work of {\rm \cite{holmstrom1987aggregation}} and the conclusion of {\rm \cite{carroll2015robustness}} in which the robustness of these policies was studied. Nevertheless, as we will see in {\rm \Cref{sec:ra2}}, this appears to be a consequence of the simplicity of the source of time-inconsistency considered in this section.
\end{enumerate}
\end{remark}

\subsection{Agent with separable utility}\label{sec:separable}

We consider the scenario in \Cref{sec:separableFB}, \emph{i.e.}
\begin{align*}
{\rm V}^{\rm A}_t(\xi,\alpha)=\E^{\P^\alpha}\bigg[f(T-t)\Ur_\Ar^o(\xi)-\int_t^Tf(s-t)c_s\big(X_{\cdot\wedge s},\alpha_s\big)\mathrm{d}s\bigg|\mathcal F_t\bigg], \; (t,\alpha,\xi)\in [0,T]\times \Ac\times \Cf,
\end{align*}

and we have $\Ur_\Ar(s,{\rm x})=f(T-s)\Ur_\Ar^o({\rm x})$, $\partial_s\Ur_\Ar(s,{\rm x})=-f^\prime(T-s)\Ur_\Ar^o({\rm x})$,
\begin{align*}
h_t(s,x,z,a)= \sigma_t(x)b_t(x,a)\cdot z-f(t-s)c_t(x,a),\; \nabla h_t(s,x,v,a)= \sigma_t(x)b_t(x,a)\cdot v+f^\prime(t-s)c_t(x,a).
\end{align*}
The mappings $H_t(x,{\rm z})$, $a^\star(t,x,{\rm z})$, $\lambda^\star_t(x,{\rm z})$, $c^\star_t(x,{\rm z})$, and the probability $\P^\star(Z)=\P^{a^\smallfont{\star}(\cdot,X_\smallfont{\cdot},Z_\smallfont{\cdot}^\smallfont{\cdot})}$ are obtained accordingly. \medskip

In this section, we are trying to get a deeper understanding of the family $\Hc^{2,2}$ under the previous specification of preferences for the agent. In particular, we want to understand how the elements of the family $(Y^{s,Z},Z^s)_{s\in[0,T]}$ are related to each other.  In light of \Cref{assump:datareward} and \Cref{assump:uniquemax}, for any $Z\in \Hc^{2,2}$ we denote $M^{s,Z}$ the $(\F,\P^\star(Z))$-square integrable martingale
\[
M_t^{s,Z}:=\E^{\P^\star(Z)}\bigg[ \int_0^T\delta^\star_r\big(s,X_{\cdot\wedge r},Z_r^r\big)\mathrm{d}r\bigg|\Fc_t\bigg],\; t\in[0,T],
\]
where
\begin{align*}
 \delta_r^\star(s,x,{\rm z}):=  c_r^\star(x,{\rm z})\bigg(f(r-s)-\frac{f(T-s)}{f(T)}f(r)\bigg),\; (s,t,x,{\rm z})\in [0,T]^2\times \Omega \times \R^n.
\end{align*}
We also recall that $\P^\star(Z)$ is the unique solution to the martingale problem for which $X$ has characteristic triplet $(\lambda^\star, \sigma\sigma^\t,0)$. Thus, the representation property holds for $(\F, \P^\star(Z))$-martingales (see \cite[Theorem III.4.29]{jacod2003limit}) and we can introduce the unique $\F$-predictable process $\widetilde Z^{s,Z}$ such that $\sup_{s\in [0,T]} \E^{\P^\star(Z)} \bigg[ \displaystyle  \int_0^T |\sigma_r \sigma^\t_r \widetilde Z^{s,Z}_r|^2\d r  \bigg] <\infty$,\footnote{The integrability for $s\in [0,T]$ fixed is clear. The $\sup$ follows as in \cite[Theorem 3.5]{hernandez2020me} as $\delta^\star$ is uniformly continuous in $s$.} and, in light of \eqref{eq:dynprincipal},
\[
M^{s,Z}_t=M_0^{s,Z}+\int_0^t\widetilde Z_r^{s,Z} \cdot\big(\mathrm{d}X_r-\lambda^\star_r\big(X_{\cdot\wedge r},Z_r^r\big)\d r\big),\; t\in[0,T],\; \P\as
\]

The next lemma, proved in \Cref{sec:apenexamples}, presents relationships satisfied by the family $(Y^{s,Z},Z^s)_{s\in[0,T]}$ and how we can use them to obtain another characterisation of $\Hc^{2,2}$ and $\overline \Xi$.

\begin{lemma}\label{lemma:sepagentreduction}

\begin{enumerate}[label=$(\roman*)$, ref=.$(\roman*)$,wide,  labelindent=0pt]
\item Let $Z\in\Hc^{2,2}$, for any $s\in [0,T]$
\[
Y_t^{s,Z}=\frac{f(T-s)}{f(T)}Y_t^{0,Z}-\E^{\P^\smallfont{\star}(Z)}\bigg[ \int_t^T\delta^\star_r(s,X_{\cdot\wedge r},Z_r^r ) \d r\bigg|\Fc_t\bigg],\; t\in[0,T],\; \P\as
\]
\item Let $Z\in\Hc^{2,2}$, for any $s\in [0,T]$
\[
Z_t^s=\frac{f(T-s)}{f(T)}Z_t^0-\widetilde Z_t^{s,Z},\; \mathrm{d}t\otimes\mathrm{d}\P\text{\rm--a.e.}
\]
\item $\Hc^{2,2}= \Hc^{\bullet}$, where $\Hc^{\bullet}$ denotes the class of $Z\in \Ho(\R^d)$ such that $\|Y^{y_{\smallfont0},Z}\|_{\S^\smallfont{2}}<\infty$, where for $y_0\in (R_0,\infty)$
\begin{align*}
Y_t^{y_\smallfont{0},Z}:=\frac{y_0}{f(T)}-\int_0^t f(T)^{-1}h_r^\star\big(0,X_{\cdot\wedge r}, Z_r^0,Z_r^r\big)\mathrm{d} r+\int_0^tf(T)^{-1}Z_r^0\cdot \mathrm{d} X_r,\; t\in[0,T], \;\P\as
\end{align*}
and
\begin{align}\label{eq:constraint}
Z_t^s=\frac{f(T-s)}{f(T)}Z_t^0-\widetilde Z_t^{s,Z},\; \mathrm{d}t\otimes\mathrm{d}\P\text{\rm--a.e.}
\end{align}

\item $\overline \Xi=\big\{{\Ur_\Ar^o}^{(-1)}(Y_T^{y_\smallfont{0},Z}): (y_0,Z)\in (R_0,\infty)\times \Hc^\bullet\big \}$. For any $\xi\in \overline \Cf$,
\[\Ec(\xi)=\big \{(a^\star(t,X_{\cdot\wedge t}, Z_t^t)_{t\in [0,T]}\big\},\; \Vr_0^\Ar(\xi)=y_0.\]

\end{enumerate}
\end{lemma}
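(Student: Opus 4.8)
The plan is to establish (i)--(iv) in that order, since (ii) follows from (i) by ``differentiation'', and (iii)--(iv) repackage (i)--(ii) together with \Cref{thm:repcontractgeneral} and the forward Volterra well-posedness already used in \Cref{lemma:solfsvie}. Throughout I use the separable structure $\Ur_\Ar(s,{\rm x})=f(T-s)\Ur_\Ar^o({\rm x})$, so that $\Ur_\Ar^{(-1)}(s,{\rm y})={\Ur_\Ar^o}^{(-1)}\big({\rm y}/f(T-s)\big)$, and the fact that here $h_r^\star$ is affine in $z$ and independent of the value variable, $h_r^\star(s,x,z,{\rm z})=\lambda_r^\star(x,{\rm z})\cdot z-f(r-s)c_r^\star(x,{\rm z})$, with $\lambda_r^\star(x,{\rm z}):=\sigma_r(x)b_r(x,a^\star(r,x,{\rm z}))$ the $\P^\star(Z)$-drift of $X$ along the diagonal and $c_r^\star(x,{\rm z}):=c_r(x,a^\star(r,x,{\rm z}))$; in particular $a^\star$ does not depend on the value variable. \emph{Proof of} (i): fix $Z\in\Hc^{2,2}$. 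By \Cref{assump:datareward}\ref{assump:datareward:0}, \eqref{eq:constraintgen} says exactly that $\bar\xi:={\Ur_\Ar^o}^{(-1)}\big(Y_T^{s,Z}/f(T-s)\big)$ is independent of $s$, so $Y_T^{s,Z}=f(T-s)\Ur_\Ar^o(\bar\xi)$; as in the proof of \Cref{thm:repcontractgeneral} this turns \eqref{eq:fsvie} into the backward form \eqref{eq:bsvieauxprop}. Writing \eqref{eq:bsvieauxprop} under $\P^\star(Z)$ and taking $\E^{\P^\star(Z)}[\,\cdot\,|\,\Fc_t]$, the $\d X$-integral being a genuine martingale by the integrability of $Z\in\Ho$, gives
\[
Y_t^{s,Z}=f(T-s)\,\E^{\P^\star(Z)}\!\big[\Ur_\Ar^o(\bar\xi)\,\big|\,\Fc_t\big]-\E^{\P^\star(Z)}\!\Big[\int_t^T f(r-s)c_r^\star(X_{\cdot\wedge r},Z_r^r)\,\d r\,\Big|\,\Fc_t\Big];
\]
using this at $s$ and at $0$ to eliminate $\E^{\P^\star(Z)}[\Ur_\Ar^o(\bar\xi)|\Fc_t]$, and recalling $f(r-s)-\tfrac{f(T-s)}{f(T)}f(r)=\delta_r^\star(s,\cdot)/c_r^\star$, yields (i).

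\emph{Proof of} (ii). I differentiate (i): under $\P^\star(Z)$, $Y^{s,Z}$ is a continuous semimartingale with martingale part $\int Z_\cdot^s\cdot\sigma_\cdot\,\d B_\cdot^\star$, while on the right of (i) the term $\tfrac{f(T-s)}{f(T)}Y^{0,Z}$ contributes $\tfrac{f(T-s)}{f(T)}\int Z_\cdot^0\cdot\sigma_\cdot\,\d B_\cdot^\star$ and $\E^{\P^\star(Z)}[\int_\cdot^T\delta_r^\star\,\d r|\Fc_\cdot]=M^{s,Z}-\int_0^\cdot\delta_r^\star\,\d r$ contributes $-\int\widetilde Z_\cdot^{s,Z}\cdot\sigma_\cdot\,\d B_\cdot^\star$ by definition of $\widetilde Z^{s,Z}$. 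Uniqueness of the semimartingale decomposition forces $\sigma_t\sigma_t^\t\big(Z_t^s-\tfrac{f(T-s)}{f(T)}Z_t^0+\widetilde Z_t^{s,Z}\big)=0$, $\d t\otimes\d\P\ae$, which is (ii) (the spaces $\H^2$, $\Ho$ being defined modulo $\ker(\sigma\sigma^\t)$); matching drifts instead re-confirms that (i) is consistent with \eqref{eq:fsvie}.

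\emph{Proof of} (iii). The inclusion $\Hc^{2,2}\subseteq\Hc^\bullet$ is immediate: $Z\in\Ho$ by definition, \eqref{eq:constraint} is (ii), and dividing the $s=0$ slice of \eqref{eq:fsvie} by $f(T)$ identifies $f(T)^{-1}Y^{0,Z}$ with $Y^{y_\smallfont{0},Z}$ for the scalar $y_0:=Y_0^{0,Z}\ (\geq R_0)$, whence $\|Y^{y_\smallfont{0},Z}\|_{\S^2}=f(T)^{-1}\|Y^{0,Z}\|_{\S^2}\leq f(T)^{-1}\sup_{s}\|Y^{s,Z}\|_{\S^2}<\infty$. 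Conversely, let $Z\in\Hc^\bullet$ with scalar datum $y_0$. Set $y_0^s:=\tfrac{f(T-s)}{f(T)}y_0-M_0^{s,Z}$; since $f\in\Cc_1$, $s\mapsto\delta_r^\star(s,\cdot)$ is $\Cc_1$ with a $\P^\star(Z)$-integrable bound on its $s$-derivative, and $\delta_r^\star(0,\cdot)\equiv0$ (so $M_0^{0,Z}=0$ and $y_0^0=y_0>R_0$), one gets $(y_0^s)_{s\in[0,T]}\in\Ic$ by dominated convergence. By \Cref{prop:wpfsvie}, exactly as in \Cref{lemma:solfsvie}, the FSVIE \eqref{eq:fsvie} with this datum and coefficients driven by $Z$ has a unique solution $(Y^{s,Z})_{s\in[0,T]}$ with $\sup_{s}\|Y^{s,Z}\|_{\S^2}<\infty$. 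To verify \eqref{eq:constraintgen}, check by It\^o's formula under $\P^\star(Z)$ that $t\mapsto f(T-s)Y_t^{y_\smallfont{0},Z}-\E^{\P^\star(Z)}[\int_t^T\delta_r^\star(s,X_{\cdot\wedge r},Z_r^r)\,\d r|\Fc_t]$ also solves \eqref{eq:fsvie} with initial value $y_0^s$ --- the $\d X$-coefficient matches by \eqref{eq:constraint}, and the drift produced by $\delta_r^\star$ is precisely what compensates the mismatch between $\tfrac{f(T-s)}{f(T)}f(r)$ and $f(r-s)$ --- so by uniqueness it equals $Y^{s,Z}$; at $t=T$ this reads $Y_T^{s,Z}=f(T-s)Y_T^{y_\smallfont{0},Z}$, which is \eqref{eq:constraintgen}. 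Hence $Z\in\Hc^{2,2}$.

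\emph{Proof of} (iv), and the main difficulty. By (iii) and the definition of $\overline\Xi$ (no $\pi$ here), $\overline\Xi$ is the set of $\Ur_\Ar^{(-1)}(T,Y_T^{T,Z})$ as $Z$ ranges over $\Hc^{2,2}$; since $\Ur_\Ar^{(-1)}(T,\cdot)={\Ur_\Ar^o}^{(-1)}(\cdot)$ and $Y_T^{T,Z}=Y_T^{y_\smallfont{0},Z}$ with $y_0=y_0^0$ (the terminal identity from (iii) at $s=T$), this equals $\{{\Ur_\Ar^o}^{(-1)}(Y_T^{y_\smallfont{0},Z}):(y_0,Z)\in(R_0,\infty)\times\Hc^\bullet\}$. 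The equilibrium and value statements are \Cref{thm:repcontractgeneral}\,(i): in the separable case $a^\star$ is value-independent, so $a^\star(t,X_{\cdot\wedge t},Y_t^{t,Z},Z_t^t,\pi_t)=a^\star(t,X_{\cdot\wedge t},Z_t^t)$, and $\Vr_0^\Ar(\xi)=y_0^0=y_0$. The main obstacle is the converse inclusion in (iii) --- producing the correct $\Ic$-valued datum and, above all, checking the stochastic target constraint \eqref{eq:constraintgen}. This is exactly where the precise form of $\delta_r^\star$, hence of the auxiliary martingale $M^{s,Z}$ and its integrand $\widetilde Z^{s,Z}$, is essential: it is the drift that turns the ``wrongly discounted'' forward equation for $f(T-s)Y^{y_\smallfont{0},Z}$ into the genuine slice $Y^{s,Z}$, making the terminal values align across $s$. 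The remaining steps (differentiation under the expectation, the uniform-in-$s$ $\S^2$ bounds, and reading the $\H^2$-identities modulo $\ker(\sigma\sigma^\t)$) are routine given \Cref{prop:wpfsvie} and the estimates already invoked in \Cref{lemma:solfsvie}.
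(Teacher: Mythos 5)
Your proposal is correct and follows essentially the same route as the paper's proof: part (i) via the terminal identification forced by \eqref{eq:constraintgen} together with the identity $f(r-s)-\frac{f(T-s)}{f(T)}f(r)=\delta_r^\star/c_r^\star$, part (ii) by uniqueness of the It\^o decomposition against the martingale representation of $M^{s,Z}$, and the converse inclusion in (iii) by reconstructing the family from $y_0^s=\frac{f(T-s)}{f(T)}y_0-M_0^{s,Z}$ and checking that the $\delta^\star$-drift restores the correct discounting so that the terminal values align. The only cosmetic difference is that you identify the reconstructed family with the FSVIE solution via the uniqueness in \Cref{prop:wpfsvie}, where the paper verifies \eqref{eq:fsvie} by direct computation; the content is the same.
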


\begin{remark}\label{rmk:separablecharac}
\begin{enumerate}[label=$(\roman*)$, ref=.$(\roman*)$,wide,  labelindent=0pt]
\item In the exponential discounting case, \emph{i.e.} $f(t):=\mathrm{e}^{-\rho t}$ for some $\rho>0$, we have
\[
f(r-s)-\frac{f(T-s)f(r-u)}{f(T-u)}=\mathrm{e}^{-\rho (r-s)}-\mathrm{e}^{-\rho(r-s)}=0,\; (r,s,u)\in[0,T]^3.
\]
Thus, $\delta^\star=0$ and the result of {\rm\Cref{lemma:sepagentreduction}} simplifies to
\[
Y_t^{s,y_\smallfont{0},Z}=\frac{f(T-s)}{f(T)}Y_t^{0,y_\smallfont{0},Z},\; t\in[0,T] ,\; \P\text{\rm--a.s.}\; Z_t^{s}=\frac{f(T-s)}{f(T)} Z_t^{0},\; \d t\otimes \d \P\ae,\;  s\in [0,T].
\]
Therefore, this implies that in the non-exponential discounting case, the term
\[
\E^{\P^\star(Z)}\bigg[ \int_t^T\delta^\star_r(s,X_{\cdot\wedge r},Z_r^r )  \bigg|\Fc_t\bigg],
\]
is exactly the correction due to time-inconsistency.
\item We also remark that the choice $Z^0$ in the constraint for the family $Z$ is arbitrary. Indeed, it could be replaced by any other element $Z^u$ of the family $Z\in \Hc^{2,2}$.
\end{enumerate}
\end{remark}

\subsubsection{Principal's second best solution}

Thanks to \Cref{lemma:sepagentreduction}, we have now proved that

\begin{proposition}\label{proposition:sepprincipalreduction}
The problem of the principal can be represented as the following control problem
\begin{align*}
{\rm V}^{\rm P}=\sup_{y_\smallfont{0} \geq R_\smallfont{0}} \underline V(y_0),\; \text{\rm where}\;
\underline V(y_0)=  \sup_{Z\in \Hc^{\smallfont\bullet}}\E^{\P^\smallfont{\star}(Z)}\Big[{\rm U_P}\Big(X_{\cdot \wedge T},  \Ur_\Ar^{(-1)}(Y_T^{y_\smallfont{0},Z}) \Big)\Big],
\end{align*}
where $\P^\star(Z)=\P^{a^\smallfont{\star}(\cdot,X_\smallfont{\cdot},Z_\smallfont{\cdot}^\smallfont{\cdot})}$ and 
\begin{align*}
Y_t^{y_\smallfont{0},Z}=\frac{y_0}{f(T)}-\int_0^t f(T)^{-1}h_r^\star\big(0,X_{\cdot\wedge r}, Z_r^0,Z_r^r\big)\mathrm{d} r+\int_0^tf(T)^{-1}Z_r^0\cdot \mathrm{d} X_r,\; t\in[0,T], \;\P\as
\end{align*}
\end{proposition}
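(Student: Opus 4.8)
The plan is to obtain the statement as a direct corollary of \Cref{thm:repcontractgeneral} together with the structural description of $\Hc^{2,2}$ provided by \Cref{lemma:sepagentreduction}. Recall that in the present section only contracts consisting of a terminal payment are considered, \emph{i.e.} $\Cc=\xi$. The first step is to invoke \Cref{thm:repcontractgeneral}: it yields $\overline\Cf=\Cf$, together with the identification of the unique equilibrium $\alpha^\star$ associated with any $\xi\in\overline\Cf$ and of the agent's initial value $\Vr^\Ar_0(\xi)$. Hence ${\rm V}^{\rm P}=\sup_{\xi\in\overline\Cf}\E^{\P^{\alpha^\star}}[\Ur_\Pr(X_{\cdot\wedge T},\xi)]$, the expectation being taken under the law induced by the equilibrium attached to $\xi$.

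Next I would substitute the parametrisation of $\overline\Xi$ from \Cref{lemma:sepagentreduction}\,$(iv)$: every $\xi\in\overline\Cf$ equals ${\Ur_\Ar^o}^{(-1)}(Y_T^{y_0,Z})$ for a pair $(y_0,Z)$ with $y_0\geq R_0$ (here $y_0$ denoting the scalar $\Vr^\Ar_0(\xi)$) and $Z\in\Hc^\bullet$, where $Y^{y_0,Z}$ solves the scalar forward equation in \Cref{lemma:sepagentreduction}\,$(iii)$. The participation constraint $\Vr^\Ar_0(\xi)\geq R_0$ becomes precisely $y_0\geq R_0$, while the associated equilibrium is $(a^\star(t,X_{\cdot\wedge t},Z_t^t))_{t\in[0,T]}$, so that $\P^{\alpha^\star}=\P^\star(Z)=\P^{a^\star(\cdot,X_\cdot,Z_\cdot^\cdot)}$. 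Splitting the supremum as $\sup_{y_0\geq R_0}\sup_{Z\in\Hc^\bullet}$ then produces $\sup_{y_0\geq R_0}\underline V(y_0)$ with $\underline V$ as stated. To make the two formulas coincide literally, I would finally record two elementary identities: first, $\Ur_\Ar^{(-1)}(T,\cdot)={\Ur_\Ar^o}^{(-1)}(\cdot)$ since $\Ur_\Ar(T,{\rm x})=f(0)\Ur_\Ar^o({\rm x})=\Ur_\Ar^o({\rm x})$; second, because in the separable specification the generator $h$ has no $y$-dependence, the forward equation \eqref{eq:fsvie} for $Y^{0,y_0,Z}$ reduces to $Y^{0,y_0,Z}=f(T)\,Y^{y_0,Z}$, whereas evaluating \Cref{lemma:sepagentreduction}\,$(i)$ at $s=T$ — where the correction term $\E^{\P^\star(Z)}[\int_T^T\delta^\star_r(T,X_{\cdot\wedge r},Z_r^r)\,\mathrm{d}r\,|\,\Fc_T]$ vanishes — gives $Y_T^{T,y_0,Z}=Y_T^{0,y_0,Z}/f(T)=Y_T^{y_0,Z}$. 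Thus the diagonal terminal value $Y_T^{T,y_0,Z}$ entering \Cref{thm:repcontractgeneral} is exactly $Y_T^{y_0,Z}$, and the two representations agree.

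Since all of the analytic content — well-posedness of the forward (Volterra) system in \Cref{lemma:solfsvie}, the reduction of the stochastic target constraint, and the identity $\Hc^{2,2}=\Hc^\bullet$ — is already established in \Cref{lemma:sepagentreduction}, I do not expect a genuinely new obstacle. The one point worth spelling out carefully is why the functional parameter $y_0\in\Ic$ of \Cref{def:indexedcontracts} collapses to a single scalar: in the separable case, \Cref{lemma:sepagentreduction}\,$(i)$ evaluated at $t=0$ shows that the whole profile $s\mapsto y_0^s=Y_0^{s,y_0,Z}$ is determined by $y_0^0$ and $Z$ alone, so optimising over $\Ic$ is the same as optimising over the scalar $y_0^0\geq R_0$. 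With this remark in hand the substitution is routine and the argument is complete.
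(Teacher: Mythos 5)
Your proposal is correct and follows essentially the same route as the paper, which presents \Cref{proposition:sepprincipalreduction} as an immediate consequence of \Cref{thm:repcontractgeneral} combined with \Cref{lemma:sepagentreduction} (in particular items $(iii)$ and $(iv)$ identifying $\Hc^{2,2}=\Hc^\bullet$ and the parametrisation of $\overline\Xi$). Your two supplementary checks — that $Y_T^{T,y_0,Z}=Y_T^{y_0,Z}$ via the vanishing of the correction term at $t=T$, and that the functional initial condition $y_0\in\Ic$ collapses to the scalar $y_0^0$ by evaluating \Cref{lemma:sepagentreduction}\,$(i)$ at $t=0$ — are exactly the points the paper leaves implicit, and they are argued correctly.
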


We remark that contrary to the example in \Cref{sec:ra1}, \Cref{proposition:sepprincipalreduction} reduces the problem of the principal to a non-standard control problem. Indeed, we have to optimise over $\Hc^\bullet$, a family of infinite-dimensional controls which has to satisfy a novel type of constraint, namely \eqref{eq:constraint}. Nonetheless, under additional assumptions on the model, we can proceed with the resolution.\medskip

As in \Cref{sec:ra1principalsb}, we focus on the case $n=1$ so that
\begin{align}\label{eq:reductionSBsep}
\underline V(y_0)=\sup_{Z\in \Hc^{\smallfont\bullet}} \E^{\P^\smallfont{\star}(Z)}\Big[\Ur_\Pr^o \Big(X_T- {\Ur_\Ar^o}^{(-1)}\big(Y_T^{y_\smallfont{0},Z}\big) \Big)\Big].
\end{align}

\begin{proposition}\label{prop:sol2ndbest0}
Let $n=1$, the principal and the agent be risk-neutral, \emph{i.e.} $\Ur_\Ar^o({\rm x})=\Ur_\Pr^o({\rm x})={\rm x}$, ${\rm x}\in\R$.
\begin{enumerate}[label=$(\roman*)$, ref=.$(\roman*)$,wide,  labelindent=0pt]
\item \label{prop:sol2ndbest0:i} Suppose there is a unique measurable map $z^\star:[0,T]\times \Omega\times \R^n\longrightarrow \R^n$ satisfying \[
{\rm H}_t(x,v)=v\lambda^\star_t(x,z^\star(t,x,v))+\lambda^\star_t(x,z^\star(t,x,v))-\frac{f(r)}{f(T)} c^\star(x,z^\star(t,x,v)),\; \text{\rm for any}\; (t,x)\in [0,T]\times \Omega,
\] where for any $(t,x,v)\in [0,T]\times \Omega\times \R^n$
\[
{\rm H}_t(x,v):=\sup_{z\in \R}\bigg\{ v\lambda^\star(x,z)+\lambda^\star_t(x,z)-\frac{f(r)}{f(T)} c^\star(x,z)\bigg\},
\]
Moreover, assume the mapping $\R^n\ni v\longmapsto {\rm H}_t(x,v)\in \R$ is Lipschitz-continuous uniformly in $(t,x)$ with linear growth. Then, $\Vr^\Pr=x_0-\frac{R_0}{f(T)} + U_0$ where the pair $(U,V)$ denotes a solution to the {\rm BSDE}
\[
U_t=\int_t^T {\rm H}_r(X_{\cdot\wedge r},V_r) \d r -\int_t^T V_r\cdot  \d X_r.
\]
In addition, let 
 \[y_0^\star:=R_0,\; \Zc_t:=z^\star(t,X_{\cdot\wedge t},V_t),\; Z^{0,\star}_t:=\frac{f(T)}{f(T-t)}\Zc_t,\; \P^\star(\Zc):=\P^{a^\smallfont{\star}(\cdot,X_\smallfont{\cdot},\Zc_\smallfont{\cdot})},\]
 
and suppose $\E^{\P^\smallfont{\star}(\Zc)} \big[   \int_0^T |\sigma_r \sigma^\t_r \Zc_r|^2\d r  \big] <\infty$. Then, there exists $Z^\star\in \Ho$, such that $(y_0^\star, Z^\star)\in [R_0,\infty)\times \Hc^\bullet$ define a solution to the second-best problem and the optimal contract is given by
\[
\xi^\star :=\frac{R_0}{f(T)}-f(T)^{-1}\int_0^T h_r^\star\big(0,X_{\cdot\wedge r}, Z^{0,\star}_r,\Zc_r\big)\mathrm{d} r+f(T)^{-1} \int_0^T Z^{0,\star}_r \cdot \mathrm{d} X_r,\; t\in[0,T], \;\P\as
\]
\item \label{prop:sol2ndbest0:ii} Suppose the maps $\lambda^\star$ and $c^\star$ do not depend on the $x$ variable and for any $t\in[0,T]$, the map $\R\ni z\overset{g}{\longmapsto} \lambda^\star_t(z)-f(t)/f(T)c^\star_t(z)$ has a unique maximiser $z^\star(t)$, such that $[0,T]\ni t\longmapsto z^\star(t)$ is Lebesgue integrable.
\end{enumerate}
Then, a solution $(y_0^\star, Z^\star)\in [R_0,\infty)\times \Hc^\bullet$ for the second-best problem is given by 
\begin{align*}
y_0^\star=R_0,\;  Z^s_t:=\frac{f(T-s)}{f(T-t)}z^\star(t),\; \widetilde Z^s_t:=0,\; (s,t)\in[0,T]^2,\; \text{\rm and}\; \Vr^\Pr=x_0-\frac{R_0}{f(T)}+\int_0^Tg_t\big(z^\star(t)\big)\mathrm{d}t.
\end{align*} 
Moreover, the associated optimal contract is given by
\[
\xi^\star:=\frac{R_0}{f(T)}-f(T)^{-1}\int_0^T\big(\lambda^\star_t(z^\star(t))-f(t)c^\star_t(z^\star(t)) \big)\mathrm{d} t+\int_0^T\frac{z^\star(t)}{f(T-t)} \mathrm{d} X_t.
\]
\begin{proof}
Let us show $(i)$. As both agent and principal are risk neutral, we have
\begin{align*}
\underline {\rm V}(y_0)&=\sup_{Z\in\Hc^\bullet}\E^{\P^\smallfont{\star}(Z)}\bigg[\int_0^T\bigg(\lambda^\star_t(X_{\cdot \wedge r},Z_r^r)-\frac{f(r)}{f(T)}c^\star_r(X_{\cdot \wedge r},Z_r^r)\bigg)\mathrm{d}r\bigg]
\end{align*}
An upper bound $\underline\Vr(y_0)$ is obtained by ignoring \eqref{eq:constraint}. In such scenario, the mapping ${\rm H}$ in the statement denotes the Hamiltonian and by classical arguments in control, see \citet*{el1997backward}, its value is given by $U_0$ where $(U,V)$ are as in the statement. We are left to show this bound is attained. For this we must verify $Z^\star\in \Hc^\bullet$. 

\medskip
On the one hand, note that the integrability of $\Zc$ together with \Cref{assump:datareward} guarantee 
\[
\E^{\P^\smallfont{\star}(\Zc)}\big[ | \xi^\star|^2\big]<\infty.
\]
Therefore, by \cite[Theorem 3.5]{wang2022backward}, there exists a unique solution $(Y^\star,Z^\star)\in \S^{2,2}\times \Ho$ to the BSVIE with data $(\xi^\star,h^\star)$ given by
\[
Y_t^{s,\star}=f(T-s) \xi^\star+\int_t^T h_r^\star\big(s,X_{\cdot\wedge r}, Z^{s,\star}_r,Z_r^{r,\star}\big)\mathrm{d} r-\int_t^T Z^{s,\star}_r \cdot \mathrm{d} X_r, \; t\in [0,T],\; \P\as,\; s\in [0,T].
\]

On the other side, under the integrability assumption on $\Zc$ we have that for every $s\in [0,T]$
\[
\widehat Y_t^{s}:=\E^{\P^\smallfont{\star}(\Zc)}\bigg[ f(T-s) \xi^\star -\int_t^T f(r-s)c^\star(X_{\cdot\wedge r},\Zc_r) \d r \bigg| \Fc_t\bigg],
\]
defines a $\P^\star(\Zc)$-square integrable martingale. Thus, there exists a family of process $(\widehat Z^{s})_{s\in [0,T]}$ such that
\[\widehat Y_t^{s}=f(T-s) \xi^\star+\int_t^T h_r^\star\big(s,X_{\cdot\wedge r}, \widehat Z^{s}_r,\Zc_r\big)\mathrm{d} r-\int_t^T \widehat Z^{s}_r \cdot \mathrm{d} X_r\; t\in [0,T],\; \P\as,\; s\in [0,T].
\]
Moreover, in light of \eqref{assump:datareward} we have that $\widehat Z\in \Ho$. Therefore, by uniqueness of the solution
\[
\big(Y^\star, Z^\star,(Z^{t,\star}_t)_{t\in [0,T]}\big)=\big(\widehat Y ,\widehat Z, \Zc\big), \; \text{in } \S^{2,2}\times \Ho\times\H^2
\]

From this, arguing as in \Cref{lemma:sepagentreduction} we obtain that $Z^\star$ satisfies \eqref{eq:constraint}.\medskip

We now argue $(ii)$. Note that we can find an upper bound for ${\rm V}^{\rm P}$. Indeed, we have
\begin{align*}
{\rm V}^{\rm P}&=x_0+\sup_{y_0\geq R_0}\sup_{Z\in\Hc^\bullet}\E^{\P^\star(Z)}\bigg[-\frac{y_0}{f(T)}+\int_0^T\bigg(\lambda^\star_t(Z_t^t)-\frac{f(t)}{f(T)}c^\star_t(Z_t^t)\bigg)\mathrm{d}t\bigg]\leq x_0-\frac{R_0}{f(T)}+\int_0^Tg_t\big(z^\star(t)\big)\mathrm{d}t=:{\rm V}^{{\rm P},\star}.
\end{align*}
We now show that the pair $(y_0^\star,Z^\star)$ given in the statement is a feasible solution that attains ${\rm V}^{{\rm P},\star}$. To verify feasibility note that, by assumption, $z^\star(\cdot)$ is deterministic, and so is $Z^\star$. Thus, it is straightforward from the definition that $Z^\star\in \Hc^\bullet$. Lastly, it follows by definition that under $(y_0^\star,Z^\star)$ the upper bound ${\rm V}^{{\rm P},\star}$ is attained.
\end{proof}
\end{proposition}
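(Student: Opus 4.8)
The plan is to prove both parts by first invoking \Cref{proposition:sepprincipalreduction}, which already reduces the second-best problem to $\Vr^\Pr=\sup_{y_0\ge R_0}\underline V(y_0)$ with $\underline V(y_0)=\sup_{Z\in\Hc^\bullet}\E^{\P^\star(Z)}\big[X_T-{\Ur_\Ar^o}^{(-1)}(Y_T^{y_0,Z})\big]$, and then exploiting risk-neutrality. Since ${\Ur_\Ar^o}$ is the identity, $Y_T^{y_0,Z}=\tfrac{y_0}{f(T)}-f(T)^{-1}\int_0^T h_r^\star(0,X_{\cdot\wedge r},Z_r^0,Z_r^r)\,\d r+f(T)^{-1}\int_0^T Z_r^0\cdot \d X_r$, and under $\P^\star(Z)$ the process $B^\star$ is a Brownian motion, so the stochastic integrals against $X$ have zero expectation thanks to the integrability encoded in $\Ho$ and $\Hc^\bullet$. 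Plugging in $h_r^\star(0,x,z,z')=\lambda_r^\star(x,z')\cdot z-f(r)c_r^\star(x,z')$, the objective collapses to $x_0-\tfrac{y_0}{f(T)}+\E^{\P^\star(Z)}\big[\int_0^T\big(\lambda_r^\star(X_{\cdot\wedge r},Z_r^r)-\tfrac{f(r)}{f(T)}c_r^\star(X_{\cdot\wedge r},Z_r^r)\big)\,\d r\big]$; since $\Hc^\bullet$ does not depend on $y_0$ (changing $y_0$ only shifts $Y^{y_0,Z}$ by a constant), the choice $y_0^\star=R_0$ is optimal. An upper bound for the remaining supremum is obtained by dropping the structural constraint \eqref{eq:constraint}, after which only the diagonal process $\Zc_t:=Z_t^t$ enters.

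For part $(ii)$ I would note that, under the hypothesis that $\lambda^\star$ and $c^\star$ are path-independent, this relaxed problem is optimised pointwise in $t$ by $z^\star(t)$, giving the upper bound $x_0-\tfrac{R_0}{f(T)}+\int_0^T g_t(z^\star(t))\,\d t$. It then remains to exhibit a feasible minimiser attaining it: take $y_0^\star=R_0$, $Z_t^s:=\tfrac{f(T-s)}{f(T-t)}z^\star(t)$, $\widetilde Z_t^s:=0$. Because $z^\star$ is deterministic and Lebesgue-integrable, $Z^\star\in\Ho$ is immediate, the constraint \eqref{eq:constraint} holds by construction since $\widetilde Z^{s}=0$, and the corresponding diagonal control is exactly $z^\star(\cdot)$, so the bound is attained. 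The optimal contract is then $\xi^\star={\Ur_\Ar^o}^{(-1)}(Y_T^{y_0^\star,Z^\star})=Y_T^{R_0,Z^\star}$, and substituting the dynamics yields the displayed expression, linear in $X$ with time-dependent coefficient $z^\star(t)/f(T-t)$.

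For part $(i)$ the relaxed problem is a non-Markovian stochastic control problem whose state is $X$ with drift $\lambda^\star_t(X,\Zc_t)$ and whose running reward is $\lambda_t^\star(X,\Zc_t)-\tfrac{f(t)}{f(T)}c^\star_t(X,\Zc_t)$; its Hamiltonian is precisely ${\rm H}_t(x,v)=\sup_{z}\{v\lambda^\star_t(x,z)+\lambda^\star_t(x,z)-\tfrac{f(t)}{f(T)}c^\star_t(x,z)\}$ of the statement. Using the assumed Lipschitz-continuity and linear growth of $v\mapsto{\rm H}_t(x,v)$ to guarantee well-posedness of the BSDE $U_t=\int_t^T{\rm H}_r(X_{\cdot\wedge r},V_r)\,\d r-\int_t^T V_r\cdot \d X_r$, the classical BSDE representation of optimal control values (see \cite{el1997backward}) identifies the value of the relaxed problem with $U_0$, whence $\Vr^\Pr\le x_0-\tfrac{R_0}{f(T)}+U_0$, with optimal diagonal control $\Zc_t=z^\star(t,X_{\cdot\wedge t},V_t)$.

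The crux, and the step I expect to be the main obstacle, is showing that this bound is in fact attained inside $\Hc^\bullet$, i.e. that $\Zc$ can be completed to a genuine Volterra field $Z^\star\in\Ho$ satisfying the structural constraint \eqref{eq:constraint}. I would proceed as follows: define the candidate contract $\xi^\star$ by the displayed formula and check $\E^{\P^\star(\Zc)}[|\xi^\star|^2]<\infty$ from the integrability assumed on $\Zc$ together with \Cref{assump:datareward}; invoke a well-posedness result for type-I BSVIEs (\cite[Theorem 3.5]{wang2022backward}) to produce $(Y^\star,Z^\star)\in\S^{2,2}\times\Ho$ solving the BSVIE with data $(\xi^\star,h^\star)$; independently build the $\P^\star(\Zc)$-martingale $\widehat Y_t^s:=\E^{\P^\star(\Zc)}[f(T-s)\xi^\star-\int_t^T f(r-s)c^\star_r(X_{\cdot\wedge r},\Zc_r)\,\d r\,|\,\Fc_t]$, extract $\widehat Z$ by martingale representation, and verify $\widehat Z\in\Ho$ using \Cref{assump:datareward}; then uniqueness forces $(Y^\star,Z^\star,(Z_t^{t,\star})_t)=(\widehat Y,\widehat Z,\Zc)$, and reproducing the computation of \Cref{lemma:sepagentreduction} shows $Z^\star$ satisfies \eqref{eq:constraint}. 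Hence $(y_0^\star,Z^\star)=(R_0,Z^\star)\in[R_0,\infty)\times\Hc^\bullet$ is feasible and attains the bound, so it is optimal and $\xi^\star$ is the optimal contract. Part $(ii)$ is precisely the situation where this reconstruction degenerates to the trivial deterministic one, which is why no BSVIE machinery is needed there.
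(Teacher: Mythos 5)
Your proposal is correct and follows essentially the same route as the paper: risk-neutrality collapses the reduced objective to the running reward $\lambda^\star-\tfrac{f(\cdot)}{f(T)}c^\star$, the constraint \eqref{eq:constraint} is dropped to get an upper bound (solved by the BSDE with generator ${\rm H}$ in part $(i)$ and pointwise in $t$ in part $(ii)$), and attainment is established exactly as in the paper — via \cite[Theorem 3.5]{wang2022backward}, the auxiliary martingale $\widehat Y^s$, uniqueness of the BSVIE solution, and the argument of \Cref{lemma:sepagentreduction} for the constraint. The only cosmetic difference is that you spell out why $y_0^\star=R_0$ is optimal and why part $(ii)$ needs no BSVIE machinery, which the paper leaves implicit.
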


\begin{remark}\label{rmk:separablesol}

\begin{enumerate}[label=$(\roman*)$, ref=.$(\roman*)$,wide,  labelindent=0pt]
\item Let us now present a formal argument regarding our choice $\Ur_\Ar^o({\rm x})=\Ur_\Pr^o({\rm x})={\rm x }$ in the previous result for solving \eqref{eq:reductionSBsep}. Suppose for simplicity the maps $\sigma$, $\lambda^\star$ and $c^\star$ do not depend on the $x$ variable so that the dynamics of the state variables are given by
\[ X_t=x_0+\int_0^t \lambda^\star(Z_r)\d r +\int_0^t \sigma_r \cdot  \d B_r^\star,\; Y_t^{y_\smallfont{0},Z}=\frac{y_0}{f(T)}-\int_0^t \frac{f(r)}{f(T)} c_r^\star(Z_r)\d r +\int_0^t \frac{\sigma_r}{f(T)}Z_r^0 \cdot \mathrm{d} B_r^\star.\]
Moreover, suppose the value function $v(t,{\rm x},{\rm y})$ is regular enough so that It\^o's formula yields, $\P^\star(Z)\as$
\begin{align*}
&v(T,X_T,Y_T^{y_\smallfont{0},Z})-v(t,X_t,Y_t^{y_\smallfont{0},Z})+\int_t^T\Big(\sigma_r \partial_x v+\frac{\sigma_r}{f(T)} Z_r^0 \partial_y v \Big)(r,X_r,Y_r^{y_\smallfont{0},Z}) \d B_r^\star \\
=&\  \int_t^T \bigg( \partial_t v- \lambda_r^\star(Z_r^r)\partial_x v - \frac{f(t)}{f(T)} c_r^\star(Z_r^r) \partial_y v+ \frac{\sigma_r^2}2 \partial_{xx} v+\frac{\sigma_r^2}{f(T)} Z_r^0\partial_{xy} v+\frac{\sigma_r^2}{2 f(T)^2}  |Z_r^0|^2 \partial_{yy} v\bigg)(r,X_r,Y_r^{y_\smallfont{0},Z})\d r.
\end{align*}
Let us highlight the presence of both $Z_t^t$ and $Z_t^0$ in the last term. From this we can see, formally, that for general $\Ur_\Ar^o$ and $\Ur_\Pr^o$ the process $(Z_t^t)_{t\in[0,T]}$ alone is not sufficient to obtain the solution of \eqref{eq:reductionSBsep}. Moreover, recall we can not take $Z_t^t$ and $Z_t^0$ independently due to the constraint \eqref{eq:constraint}. Lastly, under the assumptions of {\rm \Cref{prop:sol2ndbest0}} one expects, intuitively, that $\partial_{xx}v=\partial_{xy}v=\partial_{yy}v=0$ so that the choice $Z^0$ can be made after optimising over $(Z_t^t)_{t\in[0,T]}$.
\end{enumerate}
\end{remark}

\begin{remark}
We close this section with a few remarks.
\begin{enumerate}[label=$(\roman*)$, ref=.$(\roman*)$,wide,  labelindent=0pt]
\item It is worth mentioning that even in the setting of {\rm \Cref{prop:solFBseparable}\ref{prop:solFBseparable:ii}} the optimal contract is neither linear nor Markovian. Moreover, from the expression describing the optimal contract we see that this is entirely related to the presence of the discounting structure which is the source of time-inconsistency.

\item It follows from {\rm \Cref{prop:solFBseparable}} that for risk-neutral preferences, the utility of the principal is the same for both the first-best and second-best problem and that the optimal second-best contract is also optimal there. This is a typical result for time-consistent risk-neutral agents, and it would certainly be worth studying whether this remains true for more general specifications of $\Ur_\Pr^o$ and $\Ur_\Ar^o$. In light of {\rm \Cref{rmk:separablesol}}, this question further motivates the study of the general class of non-standard control problems introduced by {\rm \Cref{thm:repcontractgeneral}}.
\end{enumerate}
\end{remark}

\subsection{Agent with utility of discounted income}\label{sec:ra2}

We now consider the scenario in \Cref{sec:raFB} under the additional choice $f= 1$. We then have
\begin{align}\label{eq:rewardra2}
{\Vr}^{\Ar}_t(\xi,\alpha):=\E^{\P^\alpha}\Big [\Ur_\Ar^o \Big(g(T-t)\xi-K_{t,T}^{t,\alpha} \Big) \Big |\mathcal F_t\Big], \text{ where } \; K_{t,T}^{s,\alpha}:=\int_t^T g(r-s) k_r^o(X,\alpha_r)\d r.
\end{align}
In the context of \eqref{HJB0:general}, this corresponds to 
\begin{align*}
h_t(s,x,y,z,a)&=\sigma_t(x)b_t(x,a)\cdot z +\gamma_\Ar g(t-s)k_t^o(x,a)y,\\
\nabla h_t(s,x,u,v,y,a)&=\sigma_t(x)b_t(x,a)\cdot v-\gamma_\Ar g^\prime(t-s)k_t^o(x,a)y+\gamma_\Ar g(t-s)k_t^o(x,a)u,
\end{align*}
$\Ur_\Ar(s,\xi)=\Ur_\Ar^o(g(T-s)\xi)$, and $\partial_s\Ur_\Ar(s,x)=-g^\prime(T-s)\partial_{{\rm x}}\Ur_\Ar^o(g(T-s)\xi) \Ur_\Ar^o(g(T-s)\xi)$. 
\begin{remark}\label{rmk:ra2intro}
\begin{enumerate}[label=$(\roman*)$, ref=.$(\roman*)$,wide,  labelindent=0pt]
\item \label{rmk:ra2intro:i} The problem introduce by \eqref{eq:rewardra2} is time-inconsistent even in the case of exponential discounting, \emph{i.e.} $g(t)=\e^{-\rho t}$, $t\in [0,T]$, for some $\rho>0$. This is due to the exponential utility $\Ur_\Ar$. Indeed, the {\rm BSDE} representation allows us to interpret the reward of the agent as a recursive utility in which the terminal value is discounted at a rate $e^{g(T-s)}$ whereas the generator discounts at a rate $g(t-s)$. It is known, see {\rm \citet[Section 4.5]{marin2010consumption}}, that even in the case of exponential discounting the problem becomes time-inconsistent as soon as the rates at which the terminal value and the running reward are discounted differ. We also recall that the case of no discounting, \emph{i.e.} $g(t)=1$, corresponds to the seminal work {\rm \citet{holmstrom1987aggregation}}.

\item \label{rmk:ra2intro:ii} Let us note that $h$ exhibits both of the features of the examples in {\rm Sections \ref{sec:ra1}} and {\rm \ref{sec:separable}}, this is, the second term includes the discount factor and the $y$ variable.\footnote{In fact, \eqref{eq:rewardra2} covers the situation in {\rm \Cref{sec:separable}} in the particular case of a risk-neutral agent, recall $1/\gamma_\Ar - \Ur_\Ar ({\rm x}) \longrightarrow {\rm x}$, whenever $\gamma_\Ar\rightarrow 0$.} We highlight that a key element in {\rm \Cref{prop:sol2ndbest0}} was the fact that the dynamics of $Y^{s,y_\smallfont{0},Z}$ were given by $(y_0,Z)$ without $Y^{y_\smallfont{0},Z}$ on the right hand side. Consequently, the presence of $y$ in $h$ forces us to begin by changing variables to the certainty equivalent for the problem of the agent, \emph{i.e.} from $Z$ to $\widehat Z$ as we denote below.  In this way, we remove $y_0$ in the dynamics of $Y^{y_\smallfont{0},Z}$ at the expense of the mapping $\delta^\star$, which we use to identify an auxiliary martingale, becoming quadratic in the new variable $\widehat Z$. On the one hand, this creates a subtle issue when trying to establish a correspondence between the natural integrability of the variables $Z$ and $\widehat Z$, and will ultimately prevent us from obtaining a complete characterisation of the family $\Hc^{2,2}$. On the other hand, the quadratic term does not correspond to the diagonal values of the control variable $\hat Z$. This makes the approach in {\rm \Cref{sec:separable}}, namely {\rm \Cref{prop:sol2ndbest0}}, inoperable and forces us to restrict ourselves to a suitable subclass that is amenable to the analysis.
\end{enumerate}
\end{remark}

As we may probably expect after our analysis in \Cref{sec:ra1}, the process $Y^{y_\smallfont{0},Z}$ in the definition of $\Hc^{2,2}$ becomes more amenable to the analysis by working in terms of the certainty equivalent. For this, we introduce, for $(t,s,x,{\rm z}, z,v)\in[0,T]^2\times\Omega\times (\R^n)^3$,
\begin{align*}
\widehat H_t(x,{\rm z})&:= \sup_{a\in A}  \widehat h_t (t,x,z,a),\;  \widehat h_t (s,x,z,a) :=\sigma_t(x)b_t(x,a )\cdot z-g(t-s) k_t^o(x, a),\\
\nabla \widehat h_t (s, x, v,z, a)&:=\sigma_t(x)b_t(x,a ))\cdot v+ g^\prime(t-s)k_r^o(x,a ) -\gamma_\Ar\sigma_t^\t(x)z\cdot \sigma_t^\t(x)v.
\end{align*}
The maps $\hat a^\star(t,x,{\rm z})$, $\lambda^\star_t(x,{\rm z})$, ${k^o_t}^\star(x,{\rm z})$, $\widehat h_t^\star(s,x,z,{\rm z})$, $\nabla \widehat h_t^\star(s,x,v,z,{\rm z})$, and the probability $\P^\star({\rm z})$ are defined accordingly.\medskip

Moreover, inspired by \Cref{sec:separable},  we introduce the mapping $\delta^\star$ given, for $(s,t,x,{\rm z},z,\tilde z)\in [0,T]^2\times \Omega \times(\R^n)^3$, by
\begin{align*}
 \delta_t^\star(s,x,{\rm z},z,\tilde z):=  {k_r^o}^\star(x,{\rm z})\bigg(g(r-s)-\frac{g(T-s)}{g(T)}g(r)\bigg)+\frac{\gamma_\Ar}2 \bigg( |\sigma_r^\t(x) z|^2-\frac{g(T-s)}{g(T-u)} |\sigma_r^\t(x) \tilde z |^2\bigg).
\end{align*}

The following result is analogue to \Cref{lemma:sepagentreduction}, we defer its proof to \Cref{sec:apenexamples}.

\begin{lemma}\label{lemma:ra2agentreduction}
Let $Z\in \Hc^{2,2}$.
\begin{enumerate}[label=$(\roman*)$, ref=.$(\roman*)$,wide,  labelindent=0pt]
\item There exists family of processes $(\widehat Y^{s,y_\smallfont{0},Z},\widehat Z^s)_{s\in [0,T]}$ such that for every $s\in [0,T]$
\begin{align*}
\widehat Y_t^{s,y_\smallfont{0},Z} &=-\frac{1}{\gamma_\Ar} \ln\big(-\gamma_\Ar  y_0^s\big)-\int_0^t \Big( \widehat h_r^\star\big(s,X_{\cdot\wedge r} , \widehat Z_r^s, \widehat Z_r^r\big)- \frac{\gamma_\Ar}2 |\sigma_r^\t \widehat Z_r^s |^2\Big)\d r+\int_0^t  \widehat Z_r^s \cdot \d X_r, \; t\in [0,T],\; \P\as
\end{align*}
\item If $\widehat Z\in \H^{2,2}$ then for every $s\in [0,T]$
\begin{align*}
\widehat Y_t^{s,y_\smallfont{0},Z}=\frac{g(T-s)}{g(T)}\widehat Y_t^{0,y_\smallfont{0},Z} - \E^{\P^\smallfont{\star}(Z)}\bigg[\int_t^T \delta_r^\star(s,X_{\cdot \wedge r} ,\widehat Z_r^r,\widehat Z^s_r,\widehat Z^0_r)\d r \bigg| \Fc_t \bigg]
\end{align*}

\item Moreover, if the process $M^{s,  Z}$ given by
\begin{align*}
M_t^{s,Z}:=\E^{\P^\smallfont{\star}(Z)}\bigg[\int_0^T \delta_r^\star(s,X_{\cdot \wedge r} ,\widehat Z_r^r,\widehat Z^s_r,\widehat Z^0_r)\d r  \bigg| \Fc_t \bigg], \;\P\as, \;(s,t)\in [0,T]^2.
\end{align*}
is a square integrable $(\F,\P^\star(Z))$-martingale, then
\begin{align*}
\widehat Z_t^s=\frac{g(T-s)}{g(T)}\widehat Z_t^0-\widehat Z_t^{s,Z}, \d t\otimes \d \P\ae
\end{align*}
where $\widehat Z^{s,Z}$ denotes the term in the representation of $M^{s,Z}$.
\end{enumerate}
\end{lemma}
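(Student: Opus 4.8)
The strategy is to run, for each fixed $s\in[0,T]$, the certainty-equivalent change of variables already used in \Cref{sec:ra1principalsb} on the forward system \eqref{eq:fsvie}, and then to transport the terminal constraint \eqref{eq:constraintgen} and the martingale computations of \Cref{lemma:sepagentreduction} through this transformation.

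\emph{Part $(i)$.} Fix $Z\in\Hc^{2,2}$. By \Cref{lemma:solfsvie}, $Y^{y_0,Z}\in\S^{2,2}$ solves \eqref{eq:fsvie} and \eqref{eq:constraintgen} holds, so setting $\bar\xi:=\Ur_\Ar^{(-1)}\big(T,Y_T^{T,y_0,Z}\big)$ we obtain $Y_T^{s,y_0,Z}=\Ur_\Ar(s,\bar\xi)=-\gamma_\Ar^{-1}\e^{-\gamma_\Ar g(T-s)\bar\xi}<0$ for every $s$. For $s$ fixed, under $\P^\star(Z)$ the $\sigma_r b_r^\star\cdot Z_r^s$ term of $h_r^\star$ cancels the drift produced by the $\int Z^s\cdot\d X$ integral, so $Y^{s,y_0,Z}$ solves a linear BSDE with non-negative discount coefficient $r\longmapsto\gamma_\Ar g(r-s){k_r^o}^\star(X_{\cdot\wedge r},\widehat Z_r^r)$ and negative terminal datum; its Feynman--Kac representation then yields $Y_t^{s,y_0,Z}<0$, $\P$--a.s., for all $t$. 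Consequently $\widehat Y^{s,y_0,Z}:=-\gamma_\Ar^{-1}\ln\big(-\gamma_\Ar Y^{s,y_0,Z}\big)$ is well-defined; applying It\^o's formula with $\d\langle Y^{s,y_0,Z}\rangle_t=|\sigma_t^\t Z_t^s|^2\,\d t$ and setting $\widehat Z_t^s:=-\gamma_\Ar^{-1}Z_t^s/Y_t^{s,y_0,Z}$, the drift of $\widehat Y^{s,y_0,Z}$ reorganises into $-\widehat h_r^\star(s,X_{\cdot\wedge r},\widehat Z_r^s,\widehat Z_r^r)+\tfrac{\gamma_\Ar}{2}|\sigma_r^\t\widehat Z_r^s|^2$, the last term being the It\^o correction; here one uses that, because $Y_t^t<0$, the Hamiltonian maximiser satisfies $a^\star(t,X_{\cdot\wedge t},Y_t^t,Z_t^t)=\hat a^\star\big(t,X_{\cdot\wedge t},-Z_t^t/(\gamma_\Ar Y_t^t)\big)=\hat a^\star(t,X_{\cdot\wedge t},\widehat Z_t^t)$. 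This is precisely the equation of $(i)$, with $\widehat Y_0^{s,y_0,Z}=-\gamma_\Ar^{-1}\ln(-\gamma_\Ar y_0^s)$.

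\emph{Parts $(ii)$ and $(iii)$.} Rewriting $(i)$ under $\P^\star(Z)$, the $\lambda^\star$-terms cancel and
\begin{align*}
\widehat Y_t^{s,y_0,Z}=\widehat Y_T^{s,y_0,Z}-\int_t^T\Big(g(r-s){k_r^o}^\star(X_{\cdot\wedge r},\widehat Z_r^r)+\tfrac{\gamma_\Ar}{2}|\sigma_r^\t\widehat Z_r^s|^2\Big)\,\d r-\int_t^T\widehat Z_r^s\cdot\sigma_r\,\d B_r^\star.
\end{align*}
When $\widehat Z\in\H^{2,2}$ the stochastic integral is a genuine $\P^\star(Z)$--martingale; one takes $\E^{\P^\star(Z)}[\,\cdot\mid\Fc_t]$, does the same at $s$ and at $u=0$, and eliminates $\E^{\P^\star(Z)}\big[\widehat Y_T^{0,y_0,Z}\mid\Fc_t\big]$ using that \eqref{eq:constraintgen}, read through $\Ur_\Ar^{(-1)}(s,\cdot)=-\big(\gamma_\Ar g(T-s)\big)^{-1}\ln(-\gamma_\Ar\,\cdot)$, is precisely $\widehat Y_T^{s,y_0,Z}=\tfrac{g(T-s)}{g(T)}\widehat Y_T^{0,y_0,Z}$; the surviving integrand groups into $\delta_r^\star\big(s,X_{\cdot\wedge r},\widehat Z_r^r,\widehat Z_r^s,\widehat Z_r^0\big)$, which gives $(ii)$. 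For $(iii)$, square-integrability of $M^{s,Z}$ together with the representation property of $(\F,\P^\star(Z))$--martingales yields $M_t^{s,Z}=M_0^{s,Z}+\int_0^t\widehat Z_r^{s,Z}\cdot\big(\d X_r-\lambda_r^\star(X_{\cdot\wedge r},\widehat Z_r^r)\,\d r\big)$; inserting this and the dynamics of $\widehat Y^{0,y_0,Z}$ into the identity of $(ii)$ and comparing the continuous-local-martingale parts under $\P^\star(Z)$, uniqueness of the It\^o decomposition forces $\widehat Z_t^s=\tfrac{g(T-s)}{g(T)}\widehat Z_t^0-\widehat Z_t^{s,Z}$, $\d t\otimes\d\P$--a.e.

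\emph{Main obstacle.} Contrary to \Cref{lemma:sepagentreduction}, the nonlinear change of variables $Y\leftrightarrow\widehat Y$ is a bijection at the level of paths but does \emph{not} transport the natural integrability classes: dividing by $Y^{s,y_0,Z}$ destroys the uniform-in-$s$ $\S^{2,2}$- and $\Ho$-estimates, and the quadratic term $|\sigma^\t\widehat Z^s|^2$ is only integrable once $\widehat Z$ itself is. This is exactly why $(ii)$ and $(iii)$ must be stated \emph{conditionally} on $\widehat Z\in\H^{2,2}$ and on $M^{s,Z}$ being square-integrable, and why --- unlike in \Cref{lemma:ra1agentreduction}$(iii)$ or \Cref{lemma:sepagentreduction}$(iii)$ --- no reverse inclusion characterising $\Hc^{2,2}$ is produced here; the remainder is the routine verification that, under these hypotheses and \Cref{assump:datareward}--\Cref{assump:uniquemax}, all the conditional expectations and martingale representations invoked above are licit, together with the bookkeeping identifying $\widehat h^\star$, $\nabla\widehat h^\star$ and $\delta^\star$.
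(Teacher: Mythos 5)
Your proposal is correct and follows essentially the same route as the paper: part $(i)$ via the certainty-equivalent change of variables $\widehat Y^{s,y_0,Z}=-\gamma_\Ar^{-1}\ln(-\gamma_\Ar Y^{s,y_0,Z})$, $\widehat Z^s=-\gamma_\Ar^{-1}Z^s/Y^{s,y_0,Z}$ and It\^o's formula, and parts $(ii)$--$(iii)$ by transporting the argument of \Cref{lemma:sepagentreduction} (conditional expectations under $\P^\star(Z)$, the constraint \eqref{eq:constraintgen} read as $\widehat Y_T^{s,y_0,Z}=\tfrac{g(T-s)}{g(T)}\widehat Y_T^{0,y_0,Z}$, and uniqueness of the It\^o decomposition). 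Your explicit verification that $Y^{s,y_0,Z}<0$, which licenses the logarithm, is a detail the paper leaves implicit, and your closing discussion of why no reverse characterisation of $\Hc^{2,2}$ is obtained matches the remark following the lemma.
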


\begin{remark}
\begin{enumerate}[label=$(\roman*)$, ref=.$(\roman*)$,wide,  labelindent=0pt]
\item We highlight that in contrast to the analysis presented in {\rm Sections} {\rm \ref{sec:ra1}} and {\rm \ref{sec:separable}}, the previous result does not provide an equivalent representation of the set $\Hc^{2,2}$. This is intimately related to the square integrability condition on the process $M^{s,Z}$ required in $(iii)$ above, and the fact that $(z,\tilde z)\longmapsto \delta^\star_t(x,{\rm z},z,\tilde z)$ is quadratic for $(t,s,x,{\rm z})\in [0,T]^2\times \Omega\times \R^m$ fixed.

\item As a sanity check at this point, let us verify the coherence of the previous system in terms of the analysis of the previous section. In the following we omit the dependence on $X$ and assume $\widehat Z\in \H^{2,2}(\R^n)$. Let
\begin{align*}
(\Delta^{s} g)(t):=g(t-s)-g(t),\;  K_{t,\tau}^{Z,s}:=\exp\bigg(\gamma_\Ar \int_t^\tau\Big( k_r^\star( \widehat Z_r)(\Delta^{s}g)(r)-\gamma_\Ar|\sigma_r  \widehat Z_r^0|^2+\gamma_\Ar \widehat Z_r^{s,\t} \sigma_r  \cdot  \widehat Z_r^{0\t} \sigma_r\Big)\d r\bigg),
\end{align*}
By applying It\^o's formula to $\widetilde Y_r^{s}:=K_{t,r}^{Z,s}\Ur_\Ar(\widehat Y_r^{s,y_{\smallfont0},Z}-\widehat Y_r^{0,y_\smallfont{0},Z})$, we have that for any $s \in [0,T]$, $\P\as$
\[
 \Ur_\Ar^o\big(\widehat Y_t^{s,y_\smallfont{0},Z}-\widehat Y_t^{0,y_\smallfont{0},Z}\big)  =\E^{\P^\star(Z)}\bigg[\Ur_\Ar^o \bigg(y^s_0-y_0^0 - \int_0^t \Big(  (\Delta^{s} g)(r) k^\star_r(\widehat Z_r)-\gamma_\Ar |\sigma_r\widehat Z_r^0|^2+\gamma_\Ar \widehat Z_r^{s,\t} \sigma_r \cdot  \widehat Z_r^{0\t}\sigma_r\Big)\d r \bigg)\bigg|\Fc_t \bigg],
\]
As $1/\gamma_\Ar - \Ur_\Ar^o (y)\xrightarrow{\gamma_\Ar\rightarrow 0} y$, we see the previous equation induces the corresponding one in {\rm \Cref{lemma:sepagentreduction}}.
\end{enumerate}
\end{remark}

\subsubsection{Principal's second-best solution}\label{sec:ra2principalsb}

Let us highlight that in contrast to \Cref{sec:separable}, the analysis in the previous section does not provide a full characterisation of $\Hc^{2,2}$ for rewards given by \eqref{eq:rewardseparable}. This is principally due to the integrability necessary on the variable $\widehat Z$, induced by the certainty equivalent, in order to apply the methodology devised in \ref{sec:separable}, see \Cref{lemma:ra2agentreduction}. Nevertheless, given that the current example generalises the previous two, we build upon the structure of those optimal solutions to propose a family over which the optimisation in the problem of the principal can be carried out.\medskip

We will focus on the case $n=1$ and we will pay special attention to the class $\widetilde\Hc \subseteq \Hc^{2,2}$ of processes $Z\in \Hc^{2,2}$ for which given the pair $(y_0,Z)\in \Ic_0 \times \Hc^{2,2}$,  and $Y^{y_\smallfont{0},Z}$ given by \eqref{eq:fsvie},  there exists a pair of predictable processes $(\eta,\zeta)$ such that
\begin{align*}
\widehat Z_t^s=  \frac1{-\gamma_\Ar}\frac{ Z_t^s}{ Y_t^{y_\smallfont{0},s,Z}}=  \eta_{s,t} \zeta_t,\; \eta_{t,t}=1,\; t\in [0,T].
\end{align*}

Therefore, we have from \Cref{thm:repcontractgeneral} and \eqref{eq:constraint} that
\[
\underline \Vr^\Pr:= \sup_{\substack{y_\smallfont{0}^\smallfont{0}\geq R_\smallfont{0}} }\sup_{Z \in\tilde \Hc}\E^{\P^\smallfont{\star}(Z)}\Big[\Ur_\Pr^o\Big(X_T-  \widehat{Y}_T^{0,y_\smallfont{0},Z} \Big)\Big]\leq {\rm V}^{\rm P}
\]

\begin{remark}
\begin{enumerate}[label=$(\roman*)$, ref=.$(\roman*)$,wide,  labelindent=0pt]
\item We remark that the previous definition implicitly requires that for any $t\in [0,T]$ the mapping $s\longmapsto \eta_{s,t}$ is differentiable.

\item In addition, provided ${k^o}^\star(x,z)$ does not depend on $x$ it is easy to verify that that $\widetilde \Hc \neq \emptyset$. In light of the previous lemma, we have that for $Z\in \widetilde \Hc$
\begin{align*}
M_t^{s,Z} =\E^{\P^\smallfont{\star}(Z)}\bigg[ \int_0^T  {k_r^o}^\star(\zeta_r)\bigg( g(r-s)-\frac{g(T-s)}{g(T)}g(r) \bigg)+\frac{\gamma_\Ar}2\big|\sigma_r^\t \zeta_r\big|^2\bigg( | \eta_{s,r} |^2-\frac{g(T-s)}{g(T)}| \eta_{0,r} |^2\Big|^2\bigg)\d r\bigg|\Fc_t\bigg],
\end{align*}
and
\begin{align*}
\widetilde Z_t^{s,Z}=\bigg(\frac{g(T-s)}{g(T)}\eta_{0,t} -\eta_{s,t}\bigg) \zeta_t, \d t \otimes \d \P\ae
\end{align*}
This implies that $\widetilde \Hc$ includes, in particular, all the processes $Z$ that are induced by deterministic pairs $(\zeta, \eta)$. Indeed, for such class of processes we have that $M^{s,Z}$ is deterministic, $\widetilde Z^{s,Z}=0$, and consequently, $\eta_{s,t}=g(T-s)/g(T-t)$ provides a non trivial element of $\widetilde \Hc$. The previous argument also holds in the case of exponential discounting, in which we recall that the agent's problem remains time-inconsistent.
\end{enumerate}
\end{remark}

The following result characterises the solution to $\underline \Vr^\Pr$. Its proof is available in \Cref{sec:apenexamples}.

\begin{proposition}\label{prop:sol2Bra2}
Let principal and agent have exponential utility with parameters $\gamma_{\rm P}$ and $\gamma_{\rm A}$, respectively.  Let $C_{y}:=-\frac{1}{\gamma_{\smallfont \Pr}}\e^{-\gamma_{ \smallfont\Pr} (x_0-y)}$, $\widehat R_0:={\Ur_\Ar^o}^{(-1)}( R_0)$, and assume that:
\begin{enumerate}[label=$(\roman*)$, ref=.$(\roman*)$,wide,  labelindent=0pt]
\item the maps $\sigma$,  $\lambda^\star$ and ${k^o}^\star$ do not depend on the $x$ variable$;$

\item for any $(t,\eta)\in[0,T]\times \R$, the map $G: \R\longrightarrow \R$ given by 
\[
G(z):= \lambda^{\star}_t(z) -\frac{g(t)}{g(T)}{k^o_t}^{\star}(z) -\frac{\gamma_A}{2g(T)}\Big|\frac{\eta_0\sigma_t }{\eta}\Big|^2 |  z |^2-\frac{\gamma_{\rm P}}{2}\sigma_r^2\bigg(1-\frac{\eta_0}{\eta g(T)} z \bigg)^2,
\]
 has a unique maximiser $z^\star(t,\eta)$, such that $[0,T]\ni t\longmapsto z^\star(t,\eta)$ is square-integrable.
\end{enumerate}
Then
\begin{align*}
\underline \Vr^\Pr=\sup_{Z^\smallfont{\eta}  \in \tilde \Hc } C_{\hat  R_\smallfont{0}} \E^{\P}\bigg[\exp\bigg(-\gamma_{\rm P}\int_0^T G_r(z^\star(r,\eta_r)) \d r \bigg)\bigg], \; \text{where}\; Z_t^{s,\eta}:=\frac{\eta_s}{\eta_t} z^\star(t,\eta_t).
\end{align*}
Moreover
\begin{enumerate}[label=$(\roman*)$, ref=.$(\roman*)$,wide,  labelindent=0pt]
\item let $\underline \Vr^{\Pr,o}$ denote the restriction of $\underline \Vr^\Pr$ to the subclass of $\widetilde \Hc$ with deterministic $\eta$. Then the optimal deterministic contract is given by the family 
\[
Z^s_t:=\frac{g(T-s)}{g(T-t)}z^\star(t,g(T-t)),\]
and
\begin{align*}
\xi &=\frac{\widehat R_0}{g(T)} -\frac{1}{g(T)}\int_0^t h^\star_r\bigg(0, \frac{g(T)}{g(T-t)}z^\star(t,g(T-t)), z^\star(r,g(T-r)) \bigg)\d r+ \int_0^t \frac{z^\star(t,g(T-t))}{g(T-t)}  \d X_r;
\end{align*}

\item in the case $(\gamma_\Ar,\gamma_\Pr)=(0,0)$, \emph{i.e.} the case of risk-neutral principal and agent, the solution to $\Vr^\Pr$, and consequently of $\underline \Vr_\Pr$, agrees with the value given by {\rm \Cref{prop:sol2ndbest0}} and the optimal family $Z$ is deterministic.
\end{enumerate}
\end{proposition}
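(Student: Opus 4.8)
The plan is to follow the same three-step recipe as in the proof of \Cref{prop:sol2Bra1} and \Cref{prop:sol2ndbest0}: (1) rewrite the principal's objective over $\widetilde\Hc$ after passing to the certainty equivalent via \Cref{lemma:ra2agentreduction}; (2) exploit the exponential utility of the principal together with a Girsanov change of measure to turn the value into an expectation of an exponential of an integral functional; (3) parametrise the admissible controls in $\widetilde\Hc$ by the pair $(\eta,\zeta)$ and optimise pointwise in $z$ for each fixed $\eta$, which is exactly what hypothesis $(ii)$ guarantees has a unique maximiser $z^\star(t,\eta)$.

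More concretely, first I would use \Cref{thm:repcontractgeneral} and \eqref{eq:constraint} to write $\underline\Vr^\Pr=\sup_{y_0^0\geq R_0}\sup_{Z\in\widetilde\Hc}\E^{\P^\star(Z)}[\Ur_\Pr^o(X_T-\widehat Y_T^{0,y_0,Z})]$, and then substitute the representation of $\widehat Y^{0,y_0,Z}$ from \Cref{lemma:ra2agentreduction}$(i)$, so that $\widehat Y_T^{0,y_0,Z}=-\tfrac1{\gamma_\Ar}\ln(-\gamma_\Ar y_0^0)-\int_0^T(\widehat h_r^\star(0,X,\widehat Z_r^0,\widehat Z_r^r)-\tfrac{\gamma_\Ar}2|\sigma_r^\t\widehat Z_r^0|^2)\d r+\int_0^T\widehat Z_r^0\cdot\d X_r$. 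Using the ansatz $\widehat Z_t^s=\eta_{s,t}\zeta_t$ with $\eta_{t,t}=1$, and the constraint \eqref{eq:constraint} which forces $\eta_{s,t}=g(T-s)/g(T-t)\cdot\eta_{0,t}/\eta_{0,t}$-type relations through $\widetilde Z^{s,Z}$, I would reduce $\widehat Z_r^0=(\eta_{0,r}/\eta_{r,r})\zeta_r=\eta_{0,r}\zeta_r=:(\eta_0/\eta)z$ in the notation of the statement, and $\widehat Z_r^r=\zeta_r=z$. Plugging this into $X_T=x_0+\int_0^T\lambda_r^\star(\cdot)\d r+\int_0^T\sigma_r\d B_r^\star$ (using $n=1$), expanding $X_T-\widehat Y_T^{0,y_0,Z}$, and applying $\Ur_\Pr^o(\cdot)=-\gamma_\Pr^{-1}\e^{-\gamma_\Pr\cdot}$, the stochastic integrals collect into a single Girsanov density; choosing the measure that absorbs them (exactly as in the commented-out proof of \Cref{prop:solFBra1}) turns the expectation into $C_{\widehat R_0}\E^{\P}[\exp(-\gamma_\Pr\int_0^T G_r(z^\star(r,\eta_r))\d r)]$ once one optimises the integrand pointwise in $z$; this pointwise optimisation is precisely hypothesis $(ii)$ and yields $G$ as written (the cross term $-\gamma_\Pr\sigma_r^2(1-\tfrac{\eta_0}{\eta g(T)}z)^2/2$ coming from completing the square in the Girsanov exponent, the $-\tfrac{\gamma_\Ar}{2g(T)}|\eta_0\sigma_r/\eta|^2|z|^2$ term from the $-\tfrac{\gamma_\Ar}2|\sigma^\t\widehat Z^0|^2$ piece, and the $\lambda^\star-\tfrac{g(t)}{g(T)}k^{o\star}$ from $\widehat h^\star$). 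Optimising over $y_0^0$ forces $y_0^0=R_0$ as always (the participation constraint binds), accounting for the $\widehat R_0$ in $C_{\widehat R_0}$.

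For part $(i)$, I would restrict to deterministic $\eta$: then $\widetilde Z^{s,Z}=0$ by the remark preceding the proposition, so the constraint \eqref{eq:constraint} forces $\eta_{s,t}=g(T-s)/g(T-t)$, hence $Z_t^s=(g(T-s)/g(T-t))z^\star(t,g(T-t))$; substituting back into the expression for $\xi^\star$ from \Cref{lemma:sepagentreduction}$(iii)$ (or directly from $Y_T^{y_0,Z}$) gives the stated contract. For part $(ii)$, I would take the limit $\gamma_\Ar,\gamma_\Pr\to0$ using $1/\gamma_\Ar-\Ur_\Ar^o(x)\to x$ and $1/\gamma_\Pr-\Ur_\Pr^o(x)\to x$; in this limit $G_r(z)\to\lambda_r^\star(z)-\tfrac{g(r)}{g(T)}k_r^{o\star}(z)$, which is exactly the function $g$ of \Cref{prop:sol2ndbest0}$(ii)$ with $f$ replaced by $g$, the quadratic penalties vanish, the optimal $z^\star$ becomes deterministic, and the value $x_0-R_0/g(T)+\int_0^T g_t(z^\star(t))\d t$ of \Cref{prop:sol2ndbest0} is recovered; one also checks that $\underline\Vr^\Pr=\Vr^\Pr$ there because \Cref{prop:sol2ndbest0} already produced an optimiser lying inside $\widetilde\Hc$ (deterministic controls are in $\widetilde\Hc$).

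The main obstacle I anticipate is the bookkeeping of the constraint \eqref{eq:constraint} under the $(\eta,\zeta)$ ansatz: one must verify that $\widetilde Z^{s,Z}$ computed from the martingale $M^{s,Z}$ of \Cref{lemma:ra2agentreduction}$(iii)$ is consistent with $\widehat Z_t^s=(g(T-s)/g(T)\,\eta_{0,t}-\eta_{s,t})\zeta_t$, i.e. that the family $\widetilde\Hc$ is genuinely nonempty and that the integrability conditions ($\widehat Z\in\H^{2,2}$, square-integrability of $M^{s,Z}$, and $z^\star(\cdot,\eta_\cdot)\in\H^2$) propagate through the change of measure — the quadratic-in-$z$ nature of $\delta^\star$ is what makes this delicate, and is precisely why the statement only claims $\underline\Vr^\Pr\le\Vr^\Pr$ rather than equality in general. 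I would handle the integrability by invoking the well-posedness results for type-I BSVIEs quoted in \Cref{rmk:correspondencesystemandbsvie} together with the boundedness of $A$ and \Cref{assump:datareward}, and I would handle the supremum over $\eta$ by noting it is taken inside the expectation, so no further regularity of $\eta\mapsto z^\star(t,\eta)$ beyond measurability and the stated square-integrability is needed.
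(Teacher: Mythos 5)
Your proposal follows essentially the same route as the paper's proof: expand $X_T-\widehat Y_T^{0,y_0,Z}$ using the certainty-equivalent representation from \Cref{lemma:ra2agentreduction}, factor the principal's exponential utility into $C_{\hat R_0}$ times a Dol\'eans-type supermartingale times $\exp(-\gamma_\Pr\int_0^T G_r\,\d r)$, bound the supermartingale by $1$ to get the upper bound, and then use the $(\eta,\zeta)$ parametrisation of $\widetilde\Hc$ together with hypothesis $(ii)$ to attain it pointwise; the deterministic-$\eta$ and $(\gamma_\Ar,\gamma_\Pr)\to(0,0)$ cases are handled exactly as in the paper. Your flagged obstacle (consistency of the constraint and integrability under the ansatz) is precisely the point the paper resolves by noting that deterministic $\eta$ makes $M^{s,Z}$ deterministic, hence square-integrable.
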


\begin{remark}
We close this section with a few remarks.
\begin{enumerate}[label=$(\roman*)$, ref=.$(\roman*)$,wide,  labelindent=0pt]
\item The solution to the problem of the principal for the general class of restricted contracts induced by $\Hc^{2,2}$ escaped the analysis presented above. As detailed in {\rm \Cref{rmk:ra2intro}\ref{rmk:ra2intro:ii}}, this is due to subtle integrability issues when trying to identify an appropriate reduction of $\Hc^{2,2}$, and the quadratic nature of the generator when working in term of the certainty equivalent.We believe this echoes the intricacies of the non-standard class of control problem introduced in {\rm \Cref{thm:repcontractgeneral}}. 

\item If, as in {\rm \Cref{rmk:solra1HM}}, we bring ourselves back to the setting of {\rm \cite{holmstrom1987aggregation}}, \emph{i.e.} $b_t(x,a)= a/\sigma$, $\sigma_t(x)=\sigma$, $k_t(x,a)=k a^2/2$,  we have 
\[
Z_t^s=\frac{g(T) g(T-s)(g(T-t)+\gamma_\Pr\sigma^2 k)}{g(t)g^2(T-t)+\sigma^2 k g(T)(\gamma_\Ar g(T)+\gamma_\Pr)}.
\]
We highlight that: $(a)$ in contrast to {\rm \cite{holmstrom1987aggregation}}, for any type of discounting structure $($including exponential discounting$)$ the previous expression and consequently the optimal action is neither linear nor Markovian. This corroborates our comment in {\rm \Cref{rmk:ra2intro}\ref{rmk:ra2intro:i}}, in the sense that even in the case of exponential discounting the problem of the agent remains time-inconsistent; $(b)$ in the case of no discounting, \emph{i.e.} $g=1$,  when we bring ourselves back to the model of {\rm \Cref{rmk:solra1HM}}, the previous expression coincides with the linear contract result specified by {\rm \cite{holmstrom1987aggregation}}. This shows that, even if possibly not the best, the optimal contract in the class $\widetilde\Hc$ at least captures the optimal contract when the problem becomes time-consistent again.\medskip

\end{enumerate}
\end{remark}

\begin{appendix}
\section*{Appendix}

\section{Proofs of Section \ref{sec:firstbestexamples}}

\begin{proof}[Proof of {\rm \Cref{prop:solFBra}}]
Let $(\rho,\alpha)\in \R_+\times \Ac$ be fixed and optimise the mapping $\Cc \ni\xi \longmapsto \mathfrak{L}(\alpha, \xi, \rho)\in\R$. An upper bound of this problem is given by optimising $x$-by-$x$. This leads us to define, for any $(\alpha,\rho)\in \Ac\times \R_+$ fixed, the candidate
\begin{align*}
\xi^\star(\rho,\alpha)= \frac{1}{g(T)\gamma_\Ar+\gamma_\Pr}\Big(\gamma_\Pr \Gamma( X_T) +\gamma_\Ar K_{0,T}^{0,\alpha} +\log\big(\rho g(T) f(T)\big) \Big).
\end{align*}
To show the upper bound induced by $\xi^\star(\rho,\alpha)$ is attained it suffices to note that $\xi^\star(\rho,\alpha)\in \Cc$ by assumption. Replacing in $\mathfrak{L}(\rho,\alpha,\xi)$ we obtain
\[\Vr^{\Pr,{\rm d}}_{\rm FB}=\inf_{\rho\in \R_+} \bigg\{-  \rho R_0- \frac{1}{\bar \gamma}(\rho g(T)f(T))^{\frac{\bar \gamma}{g(T) \gamma_\smallfont{\Ar}}} 
\sup_{\alpha\in \Ac} \E^{\P^\alpha}\bigg[\exp\bigg(-\bar\gamma \Gamma(X_T)+\frac{\bar\gamma}{g(T)} K_{0,T}^{0,\alpha}\bigg)\bigg]\bigg\}.\]

If $\Vr_{\rm cont}<\infty$, as the above function is a strictly convex function of $\rho$, first order conditions gives $\rho^\star$ as in the statement.\medskip

We are only left to show that $\Vr^{\Pr}_{\rm FB}=\Vr^{\Pr,{\rm d}}_{\rm FB}$, \emph{i.e.} that there is no duality gap. For this, it suffices to verify that $(\xi^\star(\rho^\star,\alpha^\star),\alpha^\star)$ is primal feasible, \emph{i.e.} that it satisfy the participation constraint. Indeed
\begin{align*}
& \E^{\P^{\alpha^\smallfont{\star}}}\bigg[\Ur_\Ar^o\bigg(g(T)\xi-\int_0^Tg(r)k_r^o \big(X_{\cdot\wedge r},\alpha_r^\star\big)\d r\bigg)\bigg]\\
&= \frac{-1}{\gamma_\Ar} \E^{\P^{\alpha^\smallfont{\star}}}\bigg[\exp\bigg(-\bar \gamma \Gamma(X_T)-\frac{\gamma_\Ar}{\gamma_\Pr} \bar \gamma K_{0,T}^{0,\alpha^\smallfont{\star}} -\frac{\bar\gamma}{\gamma_\Pr} \log\bigg(\frac{\bar \gamma f(T)}{\gamma_\Ar R_0} V_{\rm cont}\bigg)^{1+\frac{\gamma_\smallfont{\Pr}}{\gamma_\smallfont{\Ar} g(T)}}-\gamma_\Ar K_{0,T}^{0,\alpha^\smallfont{\star}} \bigg)\bigg]\\
&= \frac{-1}{\gamma_\Ar} \E^{\P^{\alpha^\smallfont{\star}}}\bigg[\exp\bigg(-\bar \gamma \Gamma(X_T)-\frac{\gamma_\Ar}{\gamma_\Pr} \bar \gamma K_{0,T}^{0,\alpha^\smallfont{\star}} -\gamma_\Ar K_{0,T}^{0,\alpha^\smallfont{\star}} \bigg) \bigg]\bigg(\frac{\bar\gamma}{\gamma_\Ar} V_{\rm cont}\bigg)^{-1} \frac{R_0}{f(T)}=\frac{R_0}{f(T)}.
\end{align*}
\end{proof}

\begin{proof}[Proof of {\rm \Cref{prop:solFBseparable}}]
We argue $(i)$. Let $(\rho,\alpha)\in \R_+\times \Ac$ be fixed and optimise the mapping $\Cf \ni\xi \longmapsto \mathfrak{L}(\alpha, \xi, \rho)\in\R$. An upper bound of this problem is given by optimising $x$-by-$x$.  This defines the mapping $\xi^\star({\rm x},\rho)$. As before, the fact that $\xi^\star(\rho,\alpha)\in \Cf$ guarantees the upper bound is indeed attained. Replacing in $\mathfrak{L}(\rho,\alpha,\xi)$ we obtain ${\rm V_{cont}}(\rho)$ and the corresponding equality for ${\rm V^{P,d}_{FB}}$. Now, to obtain the absence of duality gap we must verify that there exists a solution to the dual problem that is primal feasible. This is exactly the additional assumption in the statement.\medskip

We now consider $(ii)$. In this case, we can solve $\Vr^\Pr_{\rm FB}$ directly. In light of $\Ur_\Pr({\rm x})=\Ur_\Ar({\rm x})={\rm x}$, 

\[
{\rm V^P_{FB}}=\sup_{(\alpha,\xi)\in\Ac\times\Cf}\E^{\P^\alpha}\big[\Gamma(X_T)-\xi\big],\; \text{\rm s.t.}\; \E^{\P^\alpha}\bigg[f(T)\xi-\int_0^Tf(r)c_r\big(X_{\cdot\wedge r},\alpha_r\big)\mathrm{d}r\bigg]\geq R_0.
\]

Let us note that for fixed $\alpha\in \Ac$ the principal's reward is linear and strictly decreasing in $\E^{\P^\alpha}[\xi]$ and therefore she is indifferent between contracts that have the same expectation. Therefore, she optimises over the feasible contracts that have the same expectation. Now, for fixed $\alpha\in \Ac$ any feasibility contract satisfies
\[
 \E^{\P^\alpha}[\xi^\star]\geq \E^{\P^\alpha}\bigg[\int_0^T\frac{f(r)}{f(T)}c_r\big(X_{\cdot\wedge r},\alpha_r\big)\mathrm{d}r\bigg]+ \frac{R_0}{f(T)}.
\]
Therefore, our previous comment implies that for given $\alpha$ the principal is indifferent between contracts in $\hat \Cf(\alpha)$.
Note that $\hat \Cf(\alpha)\neq \emptyset$. Indeed, take the deterministic contract $\xi^\star(\alpha):=\frac{R_0}{f(T)}+f(T)^{-1}\E^{\P^\alpha}\bigg[\displaystyle \int_0^Tf(r)c_r\big(X_{\cdot\wedge r},\alpha_r\big)\mathrm{d}r\bigg]$. \medskip

Plugging this back into the principal's utility, we get the expression for $\Vr^\Pr_{\rm FB}$ in the statement.
\end{proof}

\section{On time-inconsistency for BSVIE-type rewards}\label{sec:appendixdpp}

Let us start by mentioning that in the context of rewards given by \eqref{eq:bsvieval}, the methodology devised in \cite{hernandez2020me}, which builds on the approach in the Markovian framework of \cite{bjork2017time}, is based introducing the family of processes $(Y^s,Z^s)_{s\in [0,T]}$ solution to the backward stochastic Volterra integral equation {\rm (BSVIEs} for short$)$, which satisfies 
\begin{align}\label{eq:bsdevalapendix}
Y_t^{s,\alpha} = \eta(s, \xi) +\int_t^T  h_r\big(s,X,Y_r^{s,\alpha}  , Z_r^{s,\alpha},\pi_r,\alpha_r\big) \d r-\int_t^T  Z_r^{s,\alpha} \cdot \d X_r, \; t\in [0,T],\; \P\as,\; s\in [0,T].
\end{align}
Throughout this section we fix $\Cc=(\xi,\pi)\in\Cf$ and $\alpha^\star\in \Ec:=\Ec(\Cc)$. Thus, we identify the agent's reward under $\alpha\in \Ac$ via $\Vr^\Ar_t(\alpha):=\Vr^\Ar_t(\alpha,\Cc)= Y_t^{t,\alpha}$. We write $\Vr^\Ar_t$ for the associated value function under $\alpha^\star$.\medskip

To establish an extended dynamic programming principle, we need the following minimal set of assumptions.

\begin{assumption}\label{assump:datadpp}
\begin{enumerate}[label=$(\roman*)$, ref=.$(\roman*)$,wide,  labelindent=0pt]

\item  \label{assump:datadpp:i} $(s,y,z) \longmapsto  h_t(s,x,y,z,p,a)$ $($resp. $s\longmapsto \eta(s,x))$ is continuously differentiable. Moreover, the mapping $\nabla h:[0,T]^2\times \Omega \times (\R \times \R^{n} )^2\!\times \R\times A    \longrightarrow \R$ defined by \vspace{-1em}
\[
 \nabla h_t (s,x,u, v ,y,z,p,a):=\partial_s h_t(s,x,y,z,p,a)+\partial_y h_t(s,x,y,z,p,a){ u}+\sum_{i=1}^n \partial_{z_{i}} h_t(s,x,y,z,p, a){ v}_{i},
 \]\vspace{-1em}
 \\
 satisfies $\nabla h _\cdot(s,\cdot, u, v ,y,z,p,a)\in \Pc_{\rm prog}(\R,\F);$ for all $s\in [0,T];$

\item\label{assump:datadpp:ii} \textcolor{black}{for $\varphi \in \{h, \partial_s h \}$, $(y,z,a)\longmapsto  \varphi_t(s,x,y,z,a)$ is uniformly Lipschitz-continuous, \emph{i.e.} there exists some $C>0$ such that $\forall  (s,t,x,y,\tilde y,z,\tilde z,a,\tilde a)$,}
\begin{align*}
 |\varphi_t(s,x,y,z,a)-\varphi_t(s,x,\tilde y,\tilde z,\tilde a)|\leq C\big(|y-\tilde y|+|\sigma_t(x)^\t(z-\tilde z)|+|a-\tilde a|\big).
\end{align*}
\item \label{assump:datadpp:iii} Let $ \big(\tilde h_\cdot(s), \nabla \tilde h_\cdot(s)\big):=\big(h_\cdot(s,\cdot,0,0,\pi_\cdot,0),\nabla h_\cdot(s,\cdot,0,0,0,0,\pi_\cdot,0)\big)$, then the pair $ \big(\tilde h,  \nabla \tilde h):=\big(\tilde h(s) , \nabla \tilde h(s)\big)_{s\in [0,T]}$ belongs to $\mathbb{L}^{1,2,2}(\R)\times \mathbb{L}^{1,2,2}(\R)$.
\end{enumerate}
\end{assumption}


Under {\rm \Cref{assump:datadpp}}, {\rm \cite[Lemma 6.1]{hernandez2020unified}} guarantees that for any $\alpha\in \Ac$ there exists $(\partial Y^\alpha, \partial Z^\alpha)\in \S^2\times \H^{2,2}$ such that for every $s\in [0,T]$
\begin{align}\label{eq:bsdeparapendix}
\partial Y_t^{s,\alpha}=\partial_s \eta(s,\xi)+ \int_t^T \nabla h_r(s,X,\partial Y_r^{s, \alpha},\partial Z_r^{s, \alpha}, Y_r^{s, \alpha} Z_r^{s, \alpha},\pi_r,\alpha_r)  \d r - \int_t^T\partial Z_r^{s, \alpha}\cdot \d X_r, \; t\in [0,T],\; \P\as,
\end{align}
which ultimately implies the absolute continuity of the mapping $([0,T],\Bc([0,T]))\longrightarrow (\H^{2},\|\cdot \|_{\H^2}): s\longmapsto Z^{s,\alpha}$. With this, the process $(Z_t^{t,\alpha})_{t\in [0,T]}$ is well-defined.  Moreover, see {\rm \cite[Lemma 6.2]{hernandez2020unified}}, for any $\alpha\in \Ac$
\begin{align}\label{eq:propertybsdeapendix}
Y_t^{t,\alpha}=\eta(T,\xi)+\int_t^T \Big( h_r(r,X , Y_r^{r,\alpha},Z_r^{r,\alpha},\pi_r,\alpha_r)-\partial Y_r^{r,\alpha}\Big)\d r-\int_t^T Z_r^{r,\alpha}\cdot \d X_r, \; t\in [0,T],\; \P\as
\end{align}

We begin stating the following auxiliary result.

\begin{lemma}\label{lemma:apendixdpp}
Let {\rm \Cref{assump:datadpp}} hold. For any $\{\gamma, \gamma^\prime\} \subseteq \Tc_{0,T}$,  $\gamma\leq \gamma^\prime$,  and $\alpha\in \Ac$
\[  \E^{\P}\bigg [Y_\gamma^{\gamma,\alpha}-Y_{\gamma^\prime}^{\gamma^\prime, \alpha }+\int_{\gamma}^{\gamma^\prime}\partial Y_r^{r,\alpha}\d r\Big|\Fc_\gamma\bigg],\; \text{\rm depends only on the value of } \alpha \text{ on } [\gamma,\gamma^\prime].\]
\begin{proof}
This property is clear for BSDEs whose generator does not depend on $(y,z)$.  Indeed,
\begin{align*}
 \E^{\P}\bigg [Y_\gamma^{\gamma,\alpha}-Y_{\gamma^\prime}^{\gamma^\prime, \alpha }+\int_{\gamma}^{\gamma^\prime}\partial Y_r^{r,\alpha}\d r\Big|\Fc_\gamma\bigg]= \E^{\P}\bigg [\int_\gamma^{\gamma^\prime} h_r(r,X_{\cdot\wedge r},\alpha_r)\d r\Big|\Fc_\gamma\bigg].
\end{align*}

 To extend this result to the BSDEs \eqref{eq:bsdevalapendix}--\eqref{eq:bsdeparapendix} we consider the Picard iteration procedure
\begin{align*}
Y_t^{s,\alpha,n+1}&= \eta(s, \xi) +\int_t^T  h_r\big(s,X_{\cdot\wedge r},Y_r^{s,\alpha ,n}  , Z_r^{s,\alpha,n},\pi_r,\alpha_r\big) \d r-\int_t^T  Z_r^{s,\alpha,n+1} \cdot \d X_r,\\
\partial Y_t^{s,\alpha ,n+1}&=\partial_s \eta(s,\xi)+ \int_t^T \nabla h_r(s,X_{\cdot\wedge r},\partial Y_r^{s,\alpha,n},\partial Z_r^{s,\alpha,n}, Y_r^{s,\alpha,n}, Z_r^{s,\alpha,n},\pi_r,\alpha_r)  \d r - \int_t^T\partial Z_r^{s, \alpha,n+1}\cdot \d X_r, 
\end{align*}
and note that, as in \eqref{eq:propertybsdeapendix}
\begin{align}\label{eq:recursiveppty}
\E^{\P}\bigg [ Y_\gamma^{\gamma,\alpha,n+1}-Y_{\gamma^\prime}^{\gamma^\prime,\alpha,n+1}+\int_{\gamma}^{\gamma^\prime}\partial Y_r^{r,\alpha,n+1}\d r\bigg|\Fc_\gamma\bigg]= \E^{\P}\bigg [\int_{\gamma}^{\gamma^\prime}   h_r(r,X_{\cdot\wedge r}, Y_r^{r,\alpha,n},Z_r^{r,\alpha,n},\pi_r,\alpha_r)\d r\bigg|\Fc_\gamma\bigg].
\end{align}
Then, from the fact that $Y^{\alpha,0}=Z^{\alpha,0}=\partial Y^{\alpha,0}=\partial Z^{\alpha,0}=0$ we see that \eqref{eq:recursiveppty} implies the result at the initial step. It is then also clear, again from \eqref{eq:recursiveppty}, that this property is preserved at every iteration and thus in the limit. 
\end{proof}
\end{lemma}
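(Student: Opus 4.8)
\textbf{Plan for the proof of Lemma \ref{lemma:apendixdpp}.}
The statement is about a particular functional of the solution of the coupled pair of BSDEs \eqref{eq:bsdevalapendix}--\eqref{eq:bsdeparapendix}, and the natural strategy is to reduce it to the case in which the generator is frozen, \emph{i.e.} does not depend on the solution itself. For a BSDE with a generator depending only on $(r,X_{\cdot\wedge r},\alpha_r)$, the quantity $\E^{\P}[Y_\gamma^{\gamma,\alpha}-Y_{\gamma'}^{\gamma',\alpha}+\int_\gamma^{\gamma'}\partial Y_r^{r,\alpha}\,\mathrm dr\mid\Fc_\gamma]$ collapses to $\E^{\P}[\int_\gamma^{\gamma'}h_r(r,X_{\cdot\wedge r},\alpha_r)\,\mathrm dr\mid\Fc_\gamma]$; the latter manifestly depends on $\alpha$ only through its values on $[\gamma,\gamma']$. (Here one uses the identity \eqref{eq:propertybsdeapendix} relating the diagonal process $Y^{t,\alpha}_t$ to a BSDE with a $\partial Y$ correction term, together with the tower property and the fact that the stochastic integral term vanishes in conditional expectation thanks to the integrability in $\S^2\times\H^{2,2}$ guaranteed by \Cref{assump:datadpp}.)

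The second and main step is to propagate this property through the standard Picard approximation scheme for the coupled system. I would introduce the iteration defining $(Y^{s,\alpha,n},Z^{s,\alpha,n},\partial Y^{s,\alpha,n},\partial Z^{s,\alpha,n})$ where at step $n+1$ the generators are evaluated at the step-$n$ processes, so that each equation in the scheme has a generator \emph{not} depending on its own unknown. Consequently the frozen-generator identity applies at every level: one gets
\[
\E^{\P}\Big[Y_\gamma^{\gamma,\alpha,n+1}-Y_{\gamma'}^{\gamma',\alpha,n+1}+\int_\gamma^{\gamma'}\partial Y_r^{r,\alpha,n+1}\,\mathrm dr\Big|\Fc_\gamma\Big]=\E^{\P}\Big[\int_\gamma^{\gamma'}h_r\big(r,X_{\cdot\wedge r},Y_r^{r,\alpha,n},Z_r^{r,\alpha,n},\pi_r,\alpha_r\big)\,\mathrm dr\Big|\Fc_\gamma\Big],
\]
which is \eqref{eq:recursiveppty}. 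Then one argues by induction on $n$: at $n=0$ all processes vanish, so the right-hand side above depends only on $\alpha|_{[\gamma,\gamma']}$, hence so does the left-hand side at level $1$; and the inductive step follows because if $(Y^{r,\alpha,n},Z^{r,\alpha,n})$ on $[\gamma,\gamma']$ depends only on $\alpha|_{[\gamma,\gamma']}$ then so does the right-hand side at level $n+1$. One must be a little careful to phrase the induction hypothesis correctly — it is not merely that the stated functional depends only on $\alpha|_{[\gamma,\gamma']}$, but that the restriction of the diagonal processes to $[\gamma,\gamma']$ does (which follows from the same frozen-generator reasoning applied with $\gamma$ replaced by an arbitrary stopping time in $[\gamma,\gamma']$, or from flow/uniqueness properties of the BSDEs).

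The last step is to pass to the limit $n\to\infty$. Under \Cref{assump:datadpp}, the Picard scheme converges in $\S^2\times\H^{2,2}$ (this is the standard contraction argument used to establish well-posedness of \eqref{eq:bsdevalapendix}--\eqref{eq:bsdeparapendix}, invoked in the excerpt via \cite[Lemmata 6.1, 6.2]{hernandez2020unified}), in particular $\partial Y^{r,\alpha,n+1}\to\partial Y^{r,\alpha}$ and $Y^{\gamma,\alpha,n+1}\to Y^{\gamma,\alpha}$ in an $L^2$-sense that is stable under conditional expectation, and the generator term converges by the Lipschitz assumption \Cref{assump:datadpp}\ref{assump:datadpp:ii}. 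Since each term in the approximating sequence depends only on $\alpha|_{[\gamma,\gamma']}$, so does the limit, which is the claim. The main obstacle, such as it is, is bookkeeping: making the induction hypothesis strong enough to be self-propagating (i.e. tracking dependence of the whole restricted trajectory, not just of the terminal functional), and checking the integrability/convergence needed to exchange limits with conditional expectations — but all of this is routine given the estimates already available from \cite{hernandez2020unified}.
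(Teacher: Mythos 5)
Your proposal follows essentially the same route as the paper's proof: the frozen-generator identity, the Picard iteration with generators evaluated at the step-$n$ processes, the recursion \eqref{eq:recursiveppty}, induction from the zero initialisation, and passage to the limit using the contraction estimates from \cite{hernandez2020unified}. Your remark that the induction hypothesis must track the dependence of the restricted diagonal trajectory $(Y^{r,\alpha,n},Z^{r,\alpha,n})_{r\in[\gamma,\gamma']}$ on $\alpha|_{[\gamma,\gamma']}$, rather than merely the terminal functional, is a correct and welcome sharpening of a step the paper leaves implicit.
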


In the following, given $(\sigma, \tau) \in \Tc_{t,T}\times \Tc_{t,t+\ell}$, with $\sigma \leq \tau$, we denote by $\Pi^\ell:=(\tau_i^\ell)_{i=1,\dots,n_\ell}\subseteq \Tc_{t,T}$ a generic partition of $[ \sigma, \tau]$ with mesh smaller than $\ell$, \emph{i.e.} for $n_\ell := \ceil[\big]{( \tau-\sigma )/ \ell}$, $\sigma=:\tau^\ell_0 \leq \dots\leq  \tau^\ell_{n^\ell}:=\tau,$ $\forall \ell$, and $\sup_{1\leq i\leq  n_\ell} |\tau^\ell_i-\tau^\ell_{i-1}|\leq \ell$. We also let $\Delta \tau_i^\ell:= \tau_{i}^\ell-\tau_{i-1}^\ell$. The previous definitions hold $x$-by-$x$. \medskip

\begin{theorem}[Dynamic programming principle]\label{thm:dpp}
Let {\rm \Cref{assump:datadpp}} hold. Let $\alpha^\star \in \Ec(\Cc)$ and $\{\sigma, \tau\}\subset \Tc_{t,T}$, with $\sigma\leq \tau$. Then, $\P \as$
\begin{align*}
\Vr_\sigma^\Ar= \es_{\alpha\in \Ac}\;  \E^{\P}\bigg [\Vr_\tau^\Ar+ \int_{\sigma}^{\tau}\Big( h_r\big(r,X ,Y_r^{r,\alpha}  , Z_r^{r,\alpha},\pi_r,\alpha_r\big) -\partial Y_r^{r,\alpha^\star} \Big) \d r\bigg|\Fc_\sigma\bigg],
\end{align*}
where for every $s\in [0,T]$, $\partial Y^{s,\alpha^\star}$ denotes the solution to \eqref{eq:bsdeparapendix} with $\alpha^\star$. Moreover, $\alpha^\star$ attains the $\es$.

\begin{proof}
We first show the inequality $\geq$. We proceed in 3 steps.  Let $\eps>0$, $0<\ell<\ell_\eps$, and $\Pi^{\ell}$ be a partition of $[\sigma,\tau]$.\medskip

{\bf Step $1$:} From the definition of equilibria we have that for any $\alpha \in \Ac$
\begin{align*}
\Vr_\sigma^\Ar & \geq  \Vr_\sigma^\Ar(\alpha\otimes_{\tau_\smallfont{1}}\alpha^\star) -\eps \ell \geq   \E^{\P}\Big [ Y_\sigma^{\sigma,\alpha\otimes_{\smallfont{\tau}_\tinyfont{1}}\alpha^\smallfont{\star}}-Y_{\tau_\smallfont{1}}^{\tau_\smallfont{1},\alpha\otimes_{\smallfont{\tau}_{\tinyfont1}}\alpha^\smallfont{\star}}+Y_{\tau_\smallfont{1}}^{\tau_\smallfont{1},\alpha\otimes_{\smallfont{\tau}_{\tinyfont1}}\alpha^\smallfont{\star}}\Big|\Fc_\sigma\Big]-\eps \ell.
\end{align*}
Recall that for any $\rho\in \Tc_{0,T}$, $Y_\rho^{\rho,\alpha\otimes_{\smallfont\rho}\alpha^\smallfont{\star}}=Y_\rho^{\rho, \alpha^\smallfont{\star}}$. In light of the arbitrariness of $\alpha\in \Ac$ we obtain

\begin{align*}
\Vr_\sigma^\Ar \geq   \es_{\alpha\in \Ac } \; \E^{\P}\Big [ Y_\sigma^{\sigma,\alpha\otimes_{\smallfont{\tau}_{\tinyfont1}}\alpha^\smallfont{\star}}-Y_{\tau_\smallfont{1}}^{\tau_\smallfont{1},\alpha\otimes_{\smallfont{\tau}_{\tinyfont1}}\alpha^\smallfont{\star}}+\Vr_{\tau_\smallfont{1}}^\Ar \Big|\Fc_\sigma\Big]-\eps \ell,\; \P\as
\end{align*}

{\bf Step $2$:} Let us note that in light of {\bf Step 1}
\begin{align*}
\Vr_\sigma^\Ar & \geq   \es_{\alpha\in \Ac} \;   \E^{\P}\Big [ Y_\sigma^{\sigma,\alpha\otimes_{\smallfont{\tau}_{\tinyfont{1}}}\alpha^\smallfont{\star}}-Y_{\tau_\smallfont{1}}^{\tau_\smallfont{1},\alpha\otimes_{\smallfont{\tau}_{\tinyfont{1}}}\alpha^\smallfont{\star}}+\Vr_{\tau_\smallfont{1}}^\Ar  \Big|\Fc_\sigma\Big]-\eps \ell\\
&= \es_{\alpha\in \Ac } \;  \E^{\P}\Big [ Y_\sigma^{\sigma,\alpha\otimes_{\smallfont{\tau}_{\tinyfont1}}\alpha^\smallfont{\star}}-Y_{\tau_\smallfont{1}}^{\tau_\smallfont{1},\alpha\otimes_{\smallfont{\tau}_{\tinyfont1}}\alpha^\smallfont{\star}}+\es_{\tilde \alpha\in \Ac } \;  \E^{\P}\Big [ Y_{\tau_\smallfont{1}}^{\tau_\smallfont{1},\tilde \alpha\otimes_{\smallfont{\tau}_{\tinyfont2}}\alpha^\smallfont{\star}} -Y_{\tau_\smallfont{2}}^{\tau_\smallfont{2},\tilde \alpha\otimes_{\smallfont{\tau}_{\tinyfont2}}\alpha^\smallfont{\star}}+Y_{\tau_\smallfont{2}}^{\tau_\smallfont{2},\tilde \alpha\otimes_{\smallfont{\tau}_{\tinyfont2}}\alpha^\smallfont{\star}}\Big|\Fc_{\tau_\smallfont{1}}\Big]\Big|\Fc_\sigma\Big]-2\eps \ell\\
&= \es_{\alpha\in \Ac }\;  \E^{\P}\Big [   \Vr_{\tau_\smallfont{2}}^\Ar  +Y_\sigma^{\sigma,\alpha\otimes_{\smallfont{\tau}_{\tinyfont1}}\alpha^\smallfont{\star}}-Y_{\tau_\smallfont{1}}^{\tau_\smallfont{1},\alpha\otimes_{\smallfont{\tau}_{\tinyfont1}}\alpha^\smallfont{\star}} + Y_{\tau_\smallfont{1}}^{\tau_\smallfont{1},\alpha\otimes_{\smallfont{\tau}_{\tinyfont2}}\alpha^\smallfont{\star}} - Y_{\tau_\smallfont{2}}^{\tau_\smallfont{2},\alpha\otimes_{\smallfont{\tau}_{\tinyfont2}}\alpha^\smallfont{\star}}\Big|\Fc_\sigma\Big] -2\eps \ell,
\end{align*}
where the second equality holds in light of \cite[Lemma 3.5]{possamai2015stochastic} as \cite[Assumption 1.1]{possamai2015stochastic} holds under \Cref{assump:datareward}. Iterating the previous argument we obtain that $\P\as$
\begin{align*}
 \Vr_{\sigma}^\Ar  \geq \es_{\alpha\in \Ac}\; \E^{\P}\bigg [ \Vr_{\tau}^\Ar +\sum_{i=0}^{n_\smallfont{\ell}-1} Y_{\tau_\smallfont{i}}^{\tau_\smallfont{i},\alpha\otimes_{\smallfont{\tau}_{\tinyfont{i}\tinyfont{+}\tinyfont{1}}}\alpha^\smallfont{\star}}-Y_{\tau_{\smallfont{i}\smallfont{+}\smallfont{1}}}^{\tau_{\smallfont{i}\smallfont{+}\smallfont{1}},\alpha\otimes_{\smallfont{\tau}_{\tinyfont{i}\tinyfont{+}\tinyfont{1}}}\alpha^\smallfont{\star}}\bigg|\Fc_\sigma\bigg]-n_{\ell}\eps \ell.
\end{align*}

Now, we use the fact that for any $i\in \{0,\dots,n_\smallfont{\ell}-1\}$ and $\alpha\otimes_{\smallfont{\tau}_{\tinyfont{i}\tinyfont{+}\tinyfont{1}}}\alpha^\star$, \Cref{lemma:apendixdpp} implies 
\begin{align*}
 \E^{\P}\Big [ Y_{\tau_\smallfont{i}}^{\tau_\smallfont{i},\alpha\otimes_{\smallfont{\tau}_{\tinyfont{i}\tinyfont{+}\tinyfont{1}}}\alpha^\smallfont{\star}}-Y_{\tau_{\smallfont{i}\smallfont{+}\smallfont{1}}}^{\tau_{\smallfont{i}\smallfont{+}\smallfont{1}},\alpha\otimes_{\smallfont{\tau}_{\tinyfont{i}\tinyfont{+}\tinyfont{1}}}\alpha^\smallfont{\star}}+\int_{\tau_\smallfont{i}}^{\tau_{\smallfont{i}\smallfont{+}\smallfont{1}}}\partial Y_r^{r,\alpha\otimes_{\smallfont{\tau}_{\tinyfont{i}\tinyfont{+}\tinyfont{1}}}\alpha^\smallfont{\star}}\d r\Big|\Fc_\sigma\Big]=\E^{\P}\bigg [ \int_{\tau_\smallfont{i}}^{\tau_{\smallfont{i}\smallfont{+}\smallfont{1}}}  h_r(r,X_{\cdot\wedge r},Y_r^{r,\alpha}  , Z_r^{r,\alpha },\pi_r, \alpha_r) \d r\bigg|\Fc_\sigma\bigg].
\end{align*}

Replacing in the previous expression we obtain that $\P\as$
\begin{align}\label{eq:dppaux}
\Vr_\sigma^\Ar \geq  \es_{\alpha\in \Ac} \;  \E^{\P}\bigg [ \Vr_{\tau}^\Ar  + \int_{\sigma}^{\tau} h_r(r,X_{\cdot\wedge r},Y_r^{r,\alpha}  , Z_r^{r,\alpha }, \pi_r,\alpha_r)\d r -\sum_{i=0}^{n_\smallfont{\ell}-1} \int_{\tau_\smallfont{i}}^{\tau_{\smallfont{i}\smallfont{+}\smallfont{1}}}\partial Y_r^{r,\alpha{\otimes}_{\smallfont{\tau}_{\tinyfont{i}\tinyfont{+}\tinyfont{1}}}\alpha^\smallfont{\star}} \d r -n_{\ell}\eps \ell\bigg|\Fc_\sigma\bigg].
\end{align}

\medskip

{\bf Step $3$:} Let $i\in \{0,\dots,n_{\ell}-1\}$. In light of \Cref{assump:datareward}, the stability of the system of BSDE defined by \eqref{eq:bsdevalapendix} and \eqref{eq:bsdeparapendix},  see \cite[Proposition 6.4]{hernandez2020unified}, yields there exists a constant $C>0$ such that
\begin{align*}
\bigg\| \int_{\tau_\smallfont{i}}^{\tau_{\smallfont{i}\smallfont{+}\smallfont{1}}}  \partial Y_r^{r,\alpha\otimes_{\smallfont{\tau}_{\tinyfont{i}\tinyfont{+}\tinyfont{1}}}\alpha^\smallfont{\star}}  -  \partial Y_r^{r,\alpha^\smallfont{\star}}   \d r\bigg\|_{\Lc^\smallfont{2}} \leq \ell \Big\|\partial Y^{\alpha\otimes_{\smallfont{\tau}_{\tinyfont{i}\tinyfont{+}\tinyfont{1}}}\alpha^\smallfont{\star}}  -  \partial Y^{\alpha^\smallfont{\star}} \Big\|_{\S^{\smallfont{2}\smallfont{,}\smallfont{2}}} \leq \ell C \E\bigg[\bigg(\int_{\tau_\smallfont{i}}^{\tau_{\smallfont{i}\smallfont{+}\smallfont{1}}}|\alpha_r-\alpha^\star_r|\d r\bigg)^2\bigg],
\end{align*}
which leads to
\begin{align*}
\bigg\| \sum_{i=0}^{n_\smallfont{\ell}-1} \int_{\tau_\smallfont{i}}^{\tau_{\smallfont{i}\smallfont{+}\smallfont{1}}}  \partial Y_r^{r,\alpha\otimes_{\smallfont{\tau}_{\tinyfont{i}\tinyfont{+}\tinyfont{1}}}\alpha^\smallfont{\star}}  -  \partial Y_r^{r,\alpha^\smallfont{\star}}   \d r\bigg\|_{\Lc^\smallfont{2}}\leq \ell C\E\bigg[\bigg(\int_\sigma^\tau | \alpha_r-\alpha^\star_r|\d r\bigg)^2\bigg]\xrightarrow{\ell\to 0} 0.
\end{align*}
By choosing an appropriate partition $\Pi^\ell$ and applying the dominated convergence theorem we obtain that 
\begin{align*}
I(n_\smallfont{\ell}):=\sum_{i=0}^{n_\smallfont{\ell}-1} \int_{\tau_\smallfont{i}}^{\tau_{\smallfont{i}\smallfont{+}\smallfont{1}}}\partial Y_r^{r,\alpha\otimes_{\smallfont{\tau}_{\tinyfont{i}\tinyfont{+}\tinyfont{1}}}\alpha^\smallfont{\star}} \d r \xrightarrow{\ell_\smallfont{\eps}\rightarrow 0} & \int_{\sigma}^{\tau}  \partial Y_r^{r,\alpha^\star}   \d r, \; \P\as
\end{align*}

Back in \eqref{eq:dppaux} we obtain
\begin{align*}
\Vr_\sigma^\Ar= \es_{\alpha\in \Ac}\;  \E^{\P}\bigg [\Vr_\tau^\Ar+ \int_{\sigma}^{\tau}\Big( h_r\big(r,X ,Y_r^{r,\alpha}  , Z_r^{r,\alpha},\pi_r,\alpha_r\big) -\partial Y_r^{r,\alpha^\smallfont{\star}} \Big) \d r\bigg|\Fc_\sigma\bigg],\; \P\as
\end{align*}

Lastly, we show that the equality is attained by $\alpha^\star\in \Ec$. Indeed, note that \eqref{eq:propertybsdeapendix} implies
\begin{align*}
\Vr_\sigma^\Ar=  \E^{\P}\bigg [\Vr_\tau^\Ar+ \int_{\sigma}^{\tau}\Big( h_r\big(r,X ,Y_r^{r,\alpha^\smallfont{\star}}  , Z_r^{r,\alpha^\smallfont{\star}},\pi_r,\alpha^\star_r\big) -\partial Y_r^{r,\alpha^\smallfont{\star}} \Big) \d r\bigg|\Fc_\sigma\bigg],  \; \P\as
\end{align*}
\end{proof}
\end{theorem}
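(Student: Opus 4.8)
The plan is to split the claim into the inequality $\Vr^\Ar_\sigma\ge\es_{\alpha\in\Ac}\E^\P[\,\cdot\mid\Fc_\sigma]$ and the verification that $\alpha^\star$ itself realises the right-hand side; equality and optimality of $\alpha^\star$ then follow at once. The second point is immediate: applying the diagonal identity \eqref{eq:propertybsdeapendix} to the control $\alpha^\star$ between $\sigma$ and $\tau$ and taking $\E^\P[\,\cdot\mid\Fc_\sigma]$ (the $\d X$-integral being a $\P$-martingale by the choice of $\P$) gives
\[
\Vr^\Ar_\sigma=Y^{\sigma,\alpha^\star}_\sigma=\E^\P\Big[\Vr^\Ar_\tau+\int_\sigma^\tau\big(h_r(r,X,Y^{r,\alpha^\star}_r,Z^{r,\alpha^\star}_r,\pi_r,\alpha^\star_r)-\partial Y^{r,\alpha^\star}_r\big)\,\d r\,\Big|\,\Fc_\sigma\Big],
\]
so the candidate value is attained at $\alpha^\star\in\Ac$, and only ``$\ge$'' remains.

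For ``$\ge$'' I would fix $\eps>0$, $\ell\in(0,\ell_\eps)$, and a partition $\Pi^\ell=(\tau_i)_{i=0}^{n_\ell}$ of $[\sigma,\tau]$ of mesh $\le\ell$ with $n_\ell=\ceil{(\tau-\sigma)/\ell}$, and argue in three moves. (i) \emph{One step:} by the pathwise reformulation of the equilibrium property in \Cref{rmk:defequi} (valid simultaneously for all times off a single $\P$-null set, hence also at stopping times), $\Vr^\Ar_\sigma\ge\Vr^\Ar_\sigma(\Cc,\alpha\otimes_{\tau_1}\alpha^\star)-\eps\ell=Y^{\sigma,\alpha\otimes_{\tau_1}\alpha^\star}_\sigma-\eps\ell$ for every $\alpha\in\Ac$; writing $Y^{\sigma,\beta}_\sigma=\E^\P[Y^{\sigma,\beta}_\sigma-Y^{\tau_1,\beta}_{\tau_1}\mid\Fc_\sigma]+\E^\P[Y^{\tau_1,\beta}_{\tau_1}\mid\Fc_\sigma]$ with $\beta=\alpha\otimes_{\tau_1}\alpha^\star$, and using that the BSDE \eqref{eq:bsdevalapendix} for the fixed index $\tau_1$ is causal — $Y^{\tau_1,\cdot}_{\tau_1}$ depends on the control only through its restriction to $[\tau_1,T]$, where $\beta\equiv\alpha^\star$, so $Y^{\tau_1,\beta}_{\tau_1}=\Vr^\Ar_{\tau_1}$ — yields $\Vr^\Ar_\sigma\ge\es_{\alpha\in\Ac}\E^\P[Y^{\sigma,\alpha\otimes_{\tau_1}\alpha^\star}_\sigma-Y^{\tau_1,\alpha\otimes_{\tau_1}\alpha^\star}_{\tau_1}+\Vr^\Ar_{\tau_1}\mid\Fc_\sigma]-\eps\ell$. (ii) \emph{Iteration:} applying this at $\tau_1,\dots,\tau_{n_\ell-1}$ and using that $\Ac$ is stable under concatenation at stopping times together with the upward-directedness of the family inside the conditional expectation — so nested essential suprema collapse to a single $\es_{\alpha\in\Ac}$, which is where I would invoke \cite[Lemma 3.5]{possamai2015stochastic} (whose standing assumption holds under \Cref{assump:datareward}) — I obtain
\[
\Vr^\Ar_\sigma\ge\es_{\alpha\in\Ac}\E^\P\Big[\Vr^\Ar_\tau+\sum_{i=0}^{n_\ell-1}\big(Y^{\tau_i,\alpha\otimes_{\tau_{i+1}}\alpha^\star}_{\tau_i}-Y^{\tau_{i+1},\alpha\otimes_{\tau_{i+1}}\alpha^\star}_{\tau_{i+1}}\big)\,\Big|\,\Fc_\sigma\Big]-n_\ell\,\eps\ell.
\]
(iii) \emph{From telescoping differences to running Hamiltonian integrals:} for each $i$, \Cref{lemma:apendixdpp} together with \eqref{eq:propertybsdeapendix} gives $\E^\P[Y^{\tau_i,\beta^{(i)}}_{\tau_i}-Y^{\tau_{i+1},\beta^{(i)}}_{\tau_{i+1}}+\int_{\tau_i}^{\tau_{i+1}}\partial Y^{r,\beta^{(i)}}_r\,\d r\mid\Fc_\sigma]=\E^\P[\int_{\tau_i}^{\tau_{i+1}}h_r(r,X,Y^{r,\alpha}_r,Z^{r,\alpha}_r,\pi_r,\alpha_r)\,\d r\mid\Fc_\sigma]$ with $\beta^{(i)}=\alpha\otimes_{\tau_{i+1}}\alpha^\star$; summing over $i$ turns the bound into $\Vr^\Ar_\sigma\ge\es_{\alpha\in\Ac}\E^\P[\Vr^\Ar_\tau+\int_\sigma^\tau h_r(r,X,Y^{r,\alpha}_r,Z^{r,\alpha}_r,\pi_r,\alpha_r)\,\d r-\sum_i\int_{\tau_i}^{\tau_{i+1}}\partial Y^{r,\alpha\otimes_{\tau_{i+1}}\alpha^\star}_r\,\d r\mid\Fc_\sigma]-n_\ell\,\eps\ell$.

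Finally I would let $\ell\downarrow0$ and then $\eps\downarrow0$. Since $n_\ell\ell\le(\tau-\sigma)+\ell\le T+\ell$, the error $n_\ell\eps\ell$ stays bounded and converges to $\eps(\tau-\sigma)$, which vanishes with $\eps$. For the remaining sum I would use the stability estimate for the coupled system \eqref{eq:bsdevalapendix}--\eqref{eq:bsdeparapendix}, i.e. \cite[Proposition 6.4]{hernandez2020unified}, which bounds $\|\partial Y^{\alpha\otimes_{\tau_{i+1}}\alpha^\star}-\partial Y^{\alpha^\star}\|_{\S^{2,2}}$ by a constant times $\E[(\int_{\tau_i}^{\tau_{i+1}}|\alpha_r-\alpha^\star_r|\,\d r)^2]$; summing gives $\|\sum_i\int_{\tau_i}^{\tau_{i+1}}(\partial Y^{r,\alpha\otimes_{\tau_{i+1}}\alpha^\star}_r-\partial Y^{r,\alpha^\star}_r)\,\d r\|_{\Lc^2}\le\ell C\,\E[(\int_\sigma^\tau|\alpha_r-\alpha^\star_r|\,\d r)^2]\to0$, so that, along a suitable sequence of partitions, $\sum_i\int_{\tau_i}^{\tau_{i+1}}\partial Y^{r,\alpha\otimes_{\tau_{i+1}}\alpha^\star}_r\,\d r\to\int_\sigma^\tau\partial Y^{r,\alpha^\star}_r\,\d r$ in $\Lc^2$ (and $\P$-a.s. up to a subsequence); passing to the limit by dominated convergence inside $\es_{\alpha\in\Ac}\E^\P[\,\cdot\mid\Fc_\sigma]$ yields ``$\ge$'', hence the identity, and $\alpha^\star$ attains it by the first step. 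I expect the main obstacles to be making the collapse of the nested essential suprema rigorous (upward-directedness and measurable selection under concatenation) and, above all, controlling the limit in move (iii) uniformly in $\alpha\in\Ac$; the genuinely Volterra-specific ingredient is \Cref{lemma:apendixdpp}, which is what lets the telescoped diagonal differences, corrected by $\partial Y$, reproduce the running Hamiltonian integral evaluated along $\alpha$ rather than along the concatenated control.
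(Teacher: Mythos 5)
Your proposal is correct and follows essentially the same route as the paper's proof: the one-step inequality from the equilibrium definition, the collapse of nested essential suprema via \cite[Lemma 3.5]{possamai2015stochastic}, the conversion of the telescoped diagonal differences into running Hamiltonian integrals through \Cref{lemma:apendixdpp}, the stability estimate of \cite[Proposition 6.4]{hernandez2020unified} to pass to the limit in the $\partial Y$ sum, and attainment by $\alpha^\star$ via \eqref{eq:propertybsdeapendix}. The only difference is presentational (you verify attainment first), so there is nothing substantive to add.
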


Let us recall that the Hamiltonian associated to $h$ is given by
\[
H_t(x,y,z,p)=\sup_{a\in A} h_t(t,x,y,z,p,a),\; (t,x,y,z,p)\in [0,T]\times \Omega\times \R\times \R^n\times\R.
\]
Our standing assumptions on $H$ are the following.
\begin{assumption}\label{assump:uniquemaxdpp}
\begin{enumerate}[label=$(\roman*)$, ref=.$(\roman*)$,wide,  labelindent=0pt]
\item \label{assump:uniquemaxdpp:i}For any $(t,x)\in[0,T]\times\Omega$, the map $\R\times \R^n \ni (y,z)\longmapsto H_t(x,y,z,p)$ is uniformly Lipschitz-continuous, \emph{i.e.} there is $C>0$ such that for any $ (t,x,p,{\rm y},{\rm \tilde y},{\rm z},{\rm \tilde z})\in [ 0,T]\times\Omega \times \R^3\times(\R^n)^2$
\[
\big|H_t(x,{\rm y},{\rm z},p)-H_t(x,{\rm \tilde y},{\rm \tilde z},p)\big|\leq C\big( |{\rm y}-{\rm \tilde y}|+ |\sigma_t(x)^\t({\rm z}- {\rm \tilde z})|\big);
\]
\item \label{assump:uniquemaxdpp:ii}there exists a unique Borel-measurable map $a^\star:[0,T]\times\Omega \times\R\times \R^n\longrightarrow A$ such that
\[
H_t(x,{\rm y},{\rm z},p)=h_t\big(t,x,{\rm y},{\rm z},p,a^\star(t,x,{\rm y},{\rm z},p)\big),\; \forall(t,x,{\rm y},{\rm z})\in[0,T]\times\Omega \times \R \times\R^n.
\]
\item \label{assump:uniquemaxdpp:iii}For any $(t,x)\in[0,T]\times\Omega$, the map $\R\times \R^n \ni ({\rm y},{\rm z})\longmapsto a^\star(t,x,{\rm y},{\rm z},p)$ is uniformly Lipschitz-continuous, \emph{i.e.} there is $C>0$ such that for any $ (t,x,p,{\rm y},{\rm \tilde y},{\rm z},{\rm \tilde z})\in [ 0,T]\times\Omega \times \R^3\times(\R^n)^2$.
\[
\big|a^\star(t,x,{\rm y},{\rm z},p)-a^\star(t,x,{\rm \tilde y},{\rm \tilde z},p)\big|\leq C\big( |{\rm y}-{\rm \tilde y}|+|\sigma_t(x)^\t({\rm z}-{\rm \tilde z})|\big);
\]
\item \label{assump:uniquemax:iv} $(\tilde H,\tilde a^\star)\in \mathbb{L}^{2}(\R)\times \mathbb{L}^{2}(\R)$, where $\big(\tilde H_\cdot,\tilde a^\star_\cdot \big):=\big(H_\cdot(\cdot,0,0,\pi_\cdot),a^\star_\cdot(\cdot,0,0,\pi_\cdot)\big)$.
\end{enumerate}
\end{assumption}

With this we introduce the system defined for any $s\in [0,T]$ by
\begin{align}\label{eq:systemapp}\tag{H}\begin{split}
Y_t&=\eta(T,\xi)+\int_t^T \big(H_r(X ,Y_r,Z_r,\pi_r)-\partial Y_r^{r}\big)\d r -\int_t^T Z_r\cdot \d X_r, \; t\in [0,T],\; \P\as\\
Y_t^{s} &= \eta(s, \xi) +\int_t^T  h_r^\star\big(s,X ,Y_r^{s}  , Z_r^{s}, Y_r,Z_r,\pi_r\big) \d r-\int_t^T  Z_r^{s} \cdot \d X_r, \; t\in [0,T],\; \P\as\\
\partial Y_t^{s}&=\partial_s \eta(s,\xi)+ \int_t^T \nabla h_r^\star(s,X ,\partial Y_r^{s},\partial Z_r^{s}, Y_r^{s} Z_r^{s}, Y_r,Z_r,\pi_r\big)  \d r - \int_t^T\partial Z_r^{s}\cdot \d X_r, \; t\in [0,T],\; \P\as
\end{split}
\end{align}
We will say $(\Yc,\Zc,Y,Z,\partial Y,\partial Z)\in \mathfrak{H}$ is a solution to the system whenever \eqref{eq:systemapp} is satisfied. In light of \Cref{thm:dpp}, given $\alpha^\star\in \Ec$ it is reasonable to associate the value along the equilibria with a BSDE whose generator is given, partially, by $H$. This is the purpose of the next result. 

\begin{theorem}[Necessity]\label{thm:necessity}
Let Assumptions {\rm \ref{assump:datadpp}} and {\rm \ref{assump:uniquemaxdpp}} hold and $\alpha^\star \in \Ec$. Then,  one can construct a solution to \eqref{eq:systemapp}.
\begin{proof}
Given $\alpha^\star\in \Ec$, \Cref{assump:datadpp} guarantees that the processes $(Y^{\alpha^\smallfont{\star}},Z^{\alpha^\smallfont{\star}})$ and $(\partial Y^{\alpha^\smallfont{\star}},\partial Z^{\alpha^\smallfont{\star}})$ solution to \eqref{eq:bsdevalapendix} and \eqref{eq:bsdeparapendix}, respectively, are well-defined. Moreover, the processes $\big((Y_t^{t,\alpha^\smallfont{\star}})_{t\in [0,T]},(Z_t^{t,\alpha^\smallfont{\star}})_{t\in [0,T]}\big)$ are well-defined as elements of $\S^2\times \H^2$, see \cite[Lemma 6.2]{hernandez2020unified}. Given $(\partial Y_t^{t,\alpha^\smallfont{\star}})_{t\in [0,T]}\in\S^2$, for any $\alpha\in \Ac$, we can define the processes $(\Yc^\alpha,\Zc^\alpha)\in \S^2\times \H^2$ solution to
\begin{align*}
\Yc_t^{\alpha}=\eta(T,\xi)+\int_t^T \Big( h_r(r,X , \Yc_r^{\alpha},\Zc_r^{\alpha},\pi_r,\alpha_r)-\partial Y_r^{r,\alpha^\star}\Big)\d r-\int_t^T \Zc_r^{\alpha}\cdot \d X_r, \; t\in [0,T],\; \P\as
\end{align*}
We now note that under \Cref{assump:datadpp} the classic comparison result for BSDEs holds, see for instance \cite[Theorem 4.4.1]{zhang2017backward}. Then, it follows from \Cref{thm:dpp} that the pair $\big((Y_t^{t,\alpha^\smallfont{\star}})_{t\in [0,T]},(Z_t^{t,\alpha^\smallfont{\star}})_{t\in [0,T]}\big)$ solves the {\rm BSDE}
\begin{align*}
Y_t=\eta(T,\xi)+\int_t^T \big(H_r(X ,Y_r,Z_r,\pi_r)-\partial Y_r^{r,\alpha^\smallfont{\star}}\big)\d r -\int_t^T Z_r\cdot \d X_r, \; t\in [0,T],\; \P\as
\end{align*}

Moreover, the second part of the statement of \Cref{thm:dpp} implies that $\alpha^\star=a^\star(\cdot,X,Y_\cdot,Z_\cdot,\pi_\cdot)$, $\d t\otimes \d \P\ae$ Consequently, $(Y^{\alpha^\smallfont{\star}},Z^{\alpha^\smallfont{\star}})$ and $(\partial Y^{\alpha^\smallfont{\star}},\partial Z^{\alpha^\smallfont{\star}})$ define a solution to the second and third equations in \eqref{eq:systemapp}, respectively.
\end{proof}
\end{theorem}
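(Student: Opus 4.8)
The plan is to construct the six-tuple solving \eqref{eq:systemapp} directly from the families attached to the given equilibrium $\alpha^\star\in\Ec$, letting the dynamic programming principle \Cref{thm:dpp} convert the game-theoretic characterisation of $\alpha^\star$ into a single BSDE driven by the Hamiltonian. First I would invoke \Cref{assump:datadpp} together with \cite[Lemma 6.1]{hernandez2020unified} to obtain the well-posed family $(Y^{\alpha^\smallfont{\star}},Z^{\alpha^\smallfont{\star}})$ solving \eqref{eq:bsdevalapendix} and the differentiated family $(\partial Y^{\alpha^\smallfont{\star}},\partial Z^{\alpha^\smallfont{\star}})$ solving \eqref{eq:bsdeparapendix}; these are the natural candidates for $(Y^s,Z^s)$ and $(\partial Y^s,\partial Z^s)$ in the second and third lines of \eqref{eq:systemapp}. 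By \cite[Lemma 6.2]{hernandez2020unified} the diagonal processes $\big((Y_t^{t,\alpha^\smallfont{\star}})_{t\in[0,T]},(Z_t^{t,\alpha^\smallfont{\star}})_{t\in[0,T]}\big)$ are well-defined in $\S^2\times\H^2$ and satisfy the diagonal identity \eqref{eq:propertybsdeapendix}; these serve as the candidate $(\Yc,\Zc)=(Y,Z)$ for the first line.

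The core step is to show that the diagonal pair solves the Hamiltonian BSDE, that is, the first line of \eqref{eq:systemapp}. Here I would freeze the finite-variation correction $(\partial Y_r^{r,\alpha^\smallfont{\star}})_{r\in[0,T]}\in\S^2$ and, for each $\alpha\in\Ac$, introduce the auxiliary standard BSDE $(\Yc^\alpha,\Zc^\alpha)\in\S^2\times\H^2$ with generator $h_r(r,X,\cdot,\cdot,\pi_r,\alpha_r)-\partial Y_r^{r,\alpha^\smallfont{\star}}$. Under \Cref{assump:datadpp} the classical comparison theorem for Lipschitz BSDEs applies, so the $\es$ over $\alpha$ of the $\Yc^\alpha$ is represented by the BSDE whose generator is the pointwise supremum $H_r(X,\cdot,\cdot,\pi_r)-\partial Y_r^{r,\alpha^\smallfont{\star}}$. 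The representation furnished by \Cref{thm:dpp}, namely that $\Vr_\sigma^\Ar$ equals the $\es$ over $\alpha$ of exactly such conditional functionals with $\Vr^\Ar$ identified with $Y^{\cdot,\cdot}$ along the diagonal, then forces $(Y_t^{t,\alpha^\smallfont{\star}},Z_t^{t,\alpha^\smallfont{\star}})$ to coincide with the solution of the first equation in \eqref{eq:systemapp}.

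Finally I would close the loop on the starred generators. The second assertion of \Cref{thm:dpp} states that $\alpha^\star$ attains the $\es$; combined with the uniqueness of the Hamiltonian maximiser in \Cref{assump:uniquemaxdpp}\ref{assump:uniquemaxdpp:ii}, this yields $\alpha^\star_t=a^\star(t,X,Y_t,Z_t,\pi_t)$, $\d t\otimes \d \P\ae$, where $(Y,Z)$ is the diagonal pair just identified. Substituting this identity into \eqref{eq:bsdevalapendix} and \eqref{eq:bsdeparapendix} replaces $h_r(r,\cdot,\alpha^\star_r)$ by $h_r^\star(r,\cdot,Y_r,Z_r,\pi_r)$ and $\nabla h_r(r,\cdot,\alpha^\star_r)$ by $\nabla h_r^\star(r,\cdot,Y_r,Z_r,\pi_r)$, so that $(Y^{\alpha^\smallfont{\star}},Z^{\alpha^\smallfont{\star}})$ and $(\partial Y^{\alpha^\smallfont{\star}},\partial Z^{\alpha^\smallfont{\star}})$ solve the second and third lines; collecting the memberships then places the tuple in $\mathfrak H$. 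The delicate point is the second paragraph: one must turn the intra-personal game DPP into a genuine control-problem BSDE, which rests on the comparison theorem being available (hence on the Lipschitz structure of \Cref{assump:datadpp} and \Cref{assump:uniquemaxdpp}\ref{assump:uniquemaxdpp:i}) and on the measurable selection ensuring that the supremum generator $H$ is actually attained by $a^\star$, so that no gap remains between the ess-sup value and the diagonal of the family.
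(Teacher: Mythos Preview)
Your proposal is correct and follows essentially the same approach as the paper's proof: build the candidate families from \eqref{eq:bsdevalapendix}--\eqref{eq:bsdeparapendix} for $\alpha^\star$, use \cite[Lemma 6.2]{hernandez2020unified} to make sense of the diagonals, introduce the auxiliary BSDEs $(\Yc^\alpha,\Zc^\alpha)$ with the frozen correction $\partial Y^{r,\alpha^\star}_r$, combine the comparison theorem with \Cref{thm:dpp} to identify the diagonal pair with the Hamiltonian BSDE, and finally use the attainment clause in \Cref{thm:dpp} together with \Cref{assump:uniquemaxdpp}\ref{assump:uniquemaxdpp:ii} to obtain $\alpha^\star=a^\star(\cdot,X,Y_\cdot,Z_\cdot,\pi_\cdot)$ and close the second and third equations. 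Your write-up is somewhat more explicit about why the comparison argument converts the $\es$ into the supremum-generator BSDE, but the logical skeleton is identical.
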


We close this section with a verification theorem for equilibria.
\begin{theorem}[Verification]\label{thm:verification}
Let Assumptions {\rm \ref{assump:datadpp}} and {\rm \ref{assump:uniquemaxdpp}} hold. Let $(\Yc,\Zc,Y,Z,\partial Y,\partial Z)\in \mathfrak{H}$ be a solution to \eqref{eq:systemapp} with $\alpha^\star:=a^\star(\cdot,X,\Yc_\cdot,\Zc_\cdot,\pi_\cdot)$. Then,  $\alpha^\star\in \Ec$ and
\[ \Vr^\Ar_t= \Yc_t,\; \P\as\]
\begin{proof}
We verify the definition of an equilibria. Let $\eps>0$, $(t,\ell)\in [0,T]\times(0,\ell_\eps)$ with $\ell_\eps$ to be chosen. Let $(\Yc^{\alpha\otimes_{\smallfont{t}\smallfont{+}\smallfont{\ell}} \alpha^\smallfont{\star}},\Zc^{\alpha\otimes_{\smallfont{t}\smallfont{+}\smallfont{\ell}} \alpha^\smallfont{\star}})\in \S^2\times \H^2$ be the solution, which exists in light of {\rm \Cref{assump:datadpp}}, to \eqref{eq:bsdevalapendix} with action $\alpha\otimes_{t+\ell} \alpha^\star$, that is to say, $\P$--a.s.
\begin{align*}
\Yc_t^{s,\alpha\otimes_{\smallfont{t}\smallfont{+}\smallfont{\ell}} \alpha^\smallfont{\star}}=\eta(s,\xi)+\int_t^T  h_r\big(s,X , \Yc_r^{s,\alpha\otimes_{\smallfont{t}\smallfont{+}\smallfont{\ell}} \alpha^\smallfont{\star}},\Zc_r^{s,\alpha\otimes_{\smallfont{t}\smallfont{+}\smallfont{\ell}} \alpha^\smallfont{\star}},\pi_r,(\alpha\otimes_{t+\ell} \alpha^\star)_r\big)\d r-\int_t^T \Zc_r^{s,\alpha\otimes_{\smallfont{t}\smallfont{+}\smallfont{\ell}} \alpha^\smallfont{\star}}\cdot \d X_r, \; t\in [0,T].
\end{align*}
It then follows that
\begin{align*}
\Yc_t^{t,\alpha\otimes_{\smallfont{t}\smallfont{+}\smallfont{\ell}} \alpha^\smallfont{\star}}&=\eta(T,\xi)+\int_t^T \Big( h_r(r,X , \Yc_r^{\alpha\otimes_{\smallfont{t}\smallfont{+}\smallfont{\ell}} \alpha^\smallfont{\star}},\Zc_r^{\alpha\otimes_{\smallfont{t}\smallfont{+}\smallfont{\ell}} \alpha^\smallfont{\star}},\pi_r,(\alpha\otimes_{t+\ell} \alpha^\star)_r\big)-\partial Y_r^{r,\alpha\otimes_{\smallfont{t}\smallfont{+}\smallfont{\ell}} \alpha^\smallfont{\star}}\Big)\d r\\
&\quad-\int_t^T \Zc_r^{r,\alpha\otimes_{\smallfont{t}\smallfont{+}\smallfont{\ell}} \alpha^\smallfont{\star}}\cdot \d X_r.
\end{align*}
Now
\begin{align*}
\Yc_t-\Yc_t^{t,\alpha\otimes_{\smallfont{t}\smallfont{+}\smallfont{\ell}} \alpha^\smallfont{\star}}&=\int_t^T \Big(H_r(X ,Y_r,Z_r,\pi_r)-h_r(r,X_{\cdot\wedge r}, \Yc_r^{\alpha\otimes_{\smallfont{t}\smallfont{+}\smallfont{\ell}} \alpha^\smallfont{\star}},\Zc_r^{\alpha\otimes_{\smallfont{t}\smallfont{+}\smallfont{\ell}} \alpha^\smallfont{\star}},\pi_r,(\alpha\otimes_{t+\ell} \alpha^\star)_r\big)\Big)\d r \\
&\quad-\int_t^T\partial Y_r^{r,\alpha^\smallfont{\star}}-\partial Y_r^{r,\alpha\otimes_{\smallfont{t}\smallfont{+}\smallfont{\ell}} \alpha^\smallfont{\star}}\d r -\int_t^T\big( Z_r-\Zc_r^{r,\alpha\otimes_{\smallfont{t}\smallfont{+}\smallfont{\ell}} \alpha^\smallfont{\star}}\big)\cdot \d X_r\\
&\geq \int_{t+\ell}^T \Big(H_r(X ,Y_r,Z_r,\pi_r)-\partial Y_r^{r,\alpha^\smallfont{\star}}-h_r(r,X_{\cdot\wedge r}, \Yc_r^{\alpha\otimes_{\smallfont{t}\smallfont{+}\smallfont{\ell}} \alpha^\smallfont{\star}},\Zc_r^{\alpha\otimes_{\smallfont{t}\smallfont{+}\smallfont{\ell}} \alpha^\smallfont{\star}},\pi_r,\alpha^\star_r\big)+\partial Y_r^{r,\alpha\otimes_{\smallfont{t}\smallfont{+}\smallfont{\ell}} \alpha^\smallfont{\star}}\Big)\d r \\
&\quad-\int_t^{t+\ell} \partial Y_r^{r,\alpha^\smallfont{\star}}-\partial Y_r^{r,\alpha\otimes_{\smallfont{t}\smallfont{+}\smallfont{\ell}} \alpha^\smallfont{\star}}\d r -\int_t^T\big( Z_r-\Zc_r^{r,\alpha\otimes_{\smallfont{t}\smallfont{+}\smallfont{\ell}} \alpha^\smallfont{\star}}\big)\cdot \d X_r\\
&= \int_t^{t+\ell} \partial Y_r^{r,\alpha\otimes_{\smallfont{t}\smallfont{+}\smallfont{\ell}} \alpha^\smallfont{\star}}-\partial Y_r^{r,\alpha^\smallfont{\star}}\d r -\int_t^T\big( Z_r-\Zc_r^{r,\alpha\otimes_{\smallfont{t}\smallfont{+}\smallfont{\ell}} \alpha^\smallfont{\star}}\big)\cdot \d X_r,
\end{align*}
where the inequality follows by definition of $H$ and $\alpha^\star$ and \Cref{assump:uniquemaxdpp}. The second equality follows from the fact that the first term cancels on $[t+\ell,T]$, see \Cref{lemma:apendixdpp}. Taking expectation we find
\begin{align*}
\Vr^\Ar_t-\Vr^\Ar_t(\alpha\otimes_{t+\ell}\alpha^\star)=\E\Big[\Yc_t-\Yc_t^{t,\alpha\otimes_{\smallfont{t}\smallfont{+}\smallfont{\ell}} \alpha^\smallfont{\star}}\Big|\Fc_t \Big]\geq \E\bigg[ \int_t^{t+\ell} \partial Y_r^{r,\alpha\otimes_{\smallfont{t}\smallfont{+}\smallfont{\ell}} \alpha^\smallfont{\star}}-\partial Y_r^{r,\alpha^\smallfont{\star}}\d r\bigg|\Fc_t\bigg].
\end{align*}
By \cite[Proposition 6.4]{hernandez2020unified} we find that
\begin{align*}
\bigg\| \int_{t}^{t+\ell}  \partial Y_r^{r,\alpha\otimes_{\smallfont{t}\smallfont{+}\smallfont{\ell}} \alpha^\smallfont{\star}}  -  \partial Y_r^{r,\alpha^\smallfont{\star}}   \d r\bigg\|_{\Lc^\smallfont{2}} 
\leq \ell \Big\|\partial Y^{\alpha\otimes_{{\smallfont{t}\smallfont{+}\smallfont{\ell}} \alpha^\smallfont{\star}}}  -  \partial Y^{\alpha^\smallfont{\star}} \Big\|_{\S^{\smallfont{2}\smallfont{,}\smallfont{2}}}
 \leq \ell C \E\bigg[\bigg( \int_{t}^{t+\ell}|\alpha_r-\alpha^\star_r|\d r\bigg)^2\bigg].
\end{align*}
By the boundedness of the action set, we may choose $\ell_\eps$ such that the last term above is smaller that $\ell\eps$. With this, we conclude $\alpha^\star\in \Ec$.  The second part of the statement follows from the fact that $\Yc_t=Y_t^{t,\alpha^\smallfont{\star}}$, $\P\as$.
\end{proof}
\end{theorem}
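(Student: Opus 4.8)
The plan is to show that the candidate $\alpha^\star:=a^\star(\cdot,X,\Yc_\cdot,\Zc_\cdot,\pi_\cdot)$ satisfies the definition of an equilibrium, \Cref{def:equilibrium}, and then to read off $\Vr^\Ar=\Yc$. The key preliminary observation is that, since $a^\star$ is by \Cref{assump:uniquemaxdpp}\ref{assump:uniquemaxdpp:ii} the maximiser of the Hamiltonian, evaluating the generators of \eqref{eq:bsdevalapendix}--\eqref{eq:bsdeparapendix} along the \emph{fixed} action $\alpha^\star$ and along the processes $(\Yc,\Zc)$ returns exactly the generators of \eqref{eq:systemapp} evaluated along $(\Yc,\Zc)$; hence the data $(\Yc,\Zc,Y,Z,\partial Y,\partial Z)$ supplied by the statement is also a solution of the BSDE system defining the agent's value under $\alpha^\star$. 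Uniqueness of solutions of Lipschitz BSDEs --- which holds under \Cref{assump:datadpp} --- then gives $Y^{s,\alpha^\star}=Y^s$ and $\partial Y^{s,\alpha^\star}=\partial Y^s$, and via \eqref{eq:propertybsdeapendix} one obtains $\Vr^\Ar_t(\Cc,\alpha^\star)=Y^{t,\alpha^\star}_t=\Yc_t$, $\P\as$ So it suffices to prove $\alpha^\star\in\Ec$.

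Fix $\eps>0$, $t\in[0,T]$, $\ell>0$, $\alpha\in\Ac$, and consider the spike perturbation $\alpha\otimes_{t+\ell}\alpha^\star$; let $(\Yc^{\alpha\otimes_{t+\ell}\alpha^\star},\Zc^{\alpha\otimes_{t+\ell}\alpha^\star})$ together with $\partial Y^{\cdot,\alpha\otimes_{t+\ell}\alpha^\star}$ solve the agent's system under this action, which exists by \Cref{assump:datadpp}, and recall that the diagonal process obeys the BSDE \eqref{eq:propertybsdeapendix}. Subtracting this BSDE from the first line of \eqref{eq:systemapp} and using that the two actions agree on $[t+\ell,T]$, one sees --- by the preliminary observation, or equivalently by \Cref{lemma:apendixdpp} --- that the generator contributions over $[t+\ell,T]$ cancel up to the $\partial Y$ terms, while on $[t,t+\ell)$ the bound $H_r\big(X,\Yc^{\alpha\otimes_{t+\ell}\alpha^\star}_r,\Zc^{\alpha\otimes_{t+\ell}\alpha^\star}_r,\pi_r\big)\geq h_r\big(r,X,\Yc^{\alpha\otimes_{t+\ell}\alpha^\star}_r,\Zc^{\alpha\otimes_{t+\ell}\alpha^\star}_r,\pi_r,\alpha_r\big)$ holds by definition of the Hamiltonian. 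The remaining stochastic integral is a true $\P$-martingale thanks to the integrability of the processes, so taking $\E[\,\cdot\,|\Fc_t]$ yields
\[
\Vr^\Ar_t(\Cc,\alpha^\star)-\Vr^\Ar_t(\Cc,\alpha\otimes_{t+\ell}\alpha^\star)\ \geq\ \E\bigg[\int_t^{t+\ell}\big(\partial Y^{r,\alpha\otimes_{t+\ell}\alpha^\star}_r-\partial Y^{r,\alpha^\star}_r\big)\,\d r\,\bigg|\,\Fc_t\bigg].
\]
The last step is to control this term uniformly in $\alpha$: by the stability estimate for the coupled system \eqref{eq:bsdevalapendix}--\eqref{eq:bsdeparapendix}, namely \cite[Proposition 6.4]{hernandez2020unified}, and the Lipschitz bounds of \Cref{assump:datadpp},
\[
\bigg\|\int_t^{t+\ell}\big(\partial Y^{r,\alpha\otimes_{t+\ell}\alpha^\star}_r-\partial Y^{r,\alpha^\star}_r\big)\,\d r\bigg\|_{\Lc^2}\leq\ell\,\big\|\partial Y^{\alpha\otimes_{t+\ell}\alpha^\star}-\partial Y^{\alpha^\star}\big\|_{\S^{2,2}}\leq\ell\,C\,\E\bigg[\bigg(\int_t^{t+\ell}|\alpha_r-\alpha^\star_r|\,\d r\bigg)^2\bigg],
\]
and since $A$ is compact the right-hand side is $O(\ell^3)$, hence $\leq\eps\ell$ once $\ell<\ell_\eps$ for some $\ell_\eps$ depending only on $\eps$. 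This gives $\Vr^\Ar_t(\Cc,\alpha^\star)(x)-\Vr^\Ar_t(\Cc,\alpha\otimes_{t+\ell}\alpha^\star)(x)\geq-\eps\ell$ for every $(t,x,\alpha)$ with $x$ in a $\P$-full set, which is precisely the $\liminf$ form of \Cref{def:equilibrium} recalled in \Cref{rmk:defequi}; in particular the threshold defining $\ell_\eps$ in \Cref{def:equilibrium} is positive, so $\alpha^\star\in\Ec$, and then the preliminary step gives $\Vr^\Ar_t=\Yc_t$, $\P\as$

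I expect the main obstacle to be the bookkeeping around the two descriptions of the backward system: the hypothesis furnishes a solution of \eqref{eq:systemapp}, whose driver involves the Hamiltonian $H$, whereas the agent's value under the frozen action $\alpha^\star$ solves \eqref{eq:bsdevalapendix}--\eqref{eq:bsdeparapendix}, whose driver involves $h(\cdot,\alpha^\star_\cdot)$; identifying the two at the diagonal through the defining relation of $a^\star$ and invoking uniqueness of Lipschitz BSDEs is what legitimises the cancellation on $[t+\ell,T]$. The second, quantitative, subtlety is that the constant in \cite[Proposition 6.4]{hernandez2020unified} must be uniform in $(t,\ell,\alpha)$ so that a single $\ell_\eps$ works for all competitors --- this is exactly where compactness of $A$ enters.
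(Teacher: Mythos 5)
Your proposal follows essentially the same route as the paper's proof: spike perturbation, the Hamiltonian inequality on $[t,t+\ell)$, cancellation on $[t+\ell,T]$ via \Cref{lemma:apendixdpp}, conditional expectation, and the stability estimate of \cite[Proposition 6.4]{hernandez2020unified} together with compactness of $A$ to get a single $\ell_\eps$. The only difference is cosmetic: you make explicit, via uniqueness of Lipschitz BSDEs, the identification $Y^{s,\alpha^\star}=Y^s$ and $\Yc_t=Y_t^{t,\alpha^\star}$ that the paper leaves implicit in its final sentence.
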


\section{On forward stochastic Volterra integral equations}\label{sec:appforwardVolterra}
We are given a jointly measurable mapping $h$, a processes $Z$, and a  family $(Y_0^s)_{s\in[0,T]}\in \Ic$ such that for any $(y,u)\in \R^2 \times(\R^{n})^2 $
\begin{align*} 
 &h:  [0,T]^2 \times \Omega\times \R  \times \R^{ n }  \longrightarrow\R ,\;   h_\cdot(\cdot ,y,u )\in \Pc_{{\rm prog}}(\R,\F).
\end{align*}
To ease the notation, we drop the dependence of $h$ on $((\pi_t)_{t\in [0,T]},(Z^s_t)_{(s,t)\in[0,T]},(Z_t^t)_{t\in[0,T]})$ since in the analysis in \Cref{sec:restrictedcontracts} these processes are given. Moreover, we work under the following set of assumptions.
\begin{assumption}\label{assump:fsvieappendix}

\begin{enumerate}[label=$(\roman*)$, ref=.$(\roman*)$,wide,  labelindent=0pt]
\item  \label{assump:fsvieappendix:i} $(s,y) \longmapsto  h_t(s,x,y,u)$ $($resp. $s\longmapsto Y^s_0(x))$ is continuously differentiable, uniformly in $(t,x,u)$ $($resp. in $x)$. Moreover, the mapping $\nabla h:[0,T]^2\times \Omega \times (\R \times \R^{n} )^2    \longrightarrow \R$ defined by 
\[ \nabla h_t (s,x,{\rm u}, y,u):=\partial_s h_t(s,x,y,u)+\partial_y h_t(s,x,y,u){\rm u},\; \nabla h _\cdot(s,\cdot, {\rm u} ,y,u)\in \Pc_{\rm prog}(\R,\F);\]
\item   \label{assump:fsvieappendix:ii} for $\varphi\in \{h,\partial_s h\}$, $(y,u)\longmapsto  \partial_s \varphi_t(s,x,y,u)$ is uniformly Lipschitz-continuous, \emph{i.e.} there exists some $C>0$ such that $\forall  (s,t,x,y,\tilde y,u,\tilde u)$,
\begin{align*}
 |\varphi_t(s,x,y,u)-\varphi_t(s,x,\tilde y,\tilde u)|\leq C\big(|y-\tilde y|+|u-\tilde u|\big).
\end{align*}
\item \label{assump:fsvieappendix:iii} $ (Y_0,\partial_s Y_0)\in\big( \Ic \big)^2$,  $Z\in \overline{\H}^{_{\raisebox{-1pt}{$ \scriptstyle 2,2$}}}$, $(\tilde h_\cdot(s), \nabla \tilde h_\cdot(s)):=(h_\cdot(s,\cdot,{\bf 0}),\partial_s h_\cdot(s,\cdot,{\bf 0})) \in  \big(\mathbb{L}^{2,2}\big)^2$, for ${\bf 0}:=(y,u)|_{(0,0)}$.
\end{enumerate}
\end{assumption}

We are interested in establishing the well-posedness of the FSVIE
\begin{align}\label{eq:fsvieapp}
Y_t^{s}=Y_0^s+\int_0^t h_r\big(s,X ,Y_r^{s},Y_r^{r} \big) \d r+\int_0^t  Z_r^s  \cdot \d X_r, \;t\in [0,T],\;\P\as,\; s\in [0,T].
\end{align}
To alleviate the notation we write $Y$ instead of $Y^Z$ in the previous equation. 

\begin{definition}\label{def:solfsvie}
We say $Y$ is a solution to the {\rm FSVIE} \eqref{eq:fsvieapp} if $Y$ satisfies equation \eqref{eq:fsvieapp} and $Y\in \S^{2,2}$.
\end{definition}

\begin{remark}
We remark that in light of the pathwise continuity of $Y^s$ for every $s\in [0,T]$ the process $(Y_t^t)_{t\in[0,T]}$ is well defined $\d t\otimes \d \P\ae$ on $[0,T]\times \Omega$.
\end{remark}
We begin presenting a priori estimates for solutions of \eqref{eq:fsvieapp}. These can be recover from the arguments in \cite{zhang2017backward}.
\begin{lemma}\label{lemma:apriorisfsvie}
Let $Y$ be a solution to \eqref{eq:fsvieapp}, there exists a constant $C>0$ such that
\[
\|Y\|_{\S^{\smallfont{2}\smallfont{,}\smallfont{2}}}^2\leq C\big( \|Y_0\|_{\Lc^{\smallfont{2}\smallfont{,}\smallfont{2}}}^2+\|\tilde h\|_{\mathbb{L}^{\smallfont{1}\smallfont{,}\smallfont{2}\smallfont{,}\smallfont{2}}}^2+\|Z\|_{\H^{\smallfont{2}\smallfont{,}\smallfont{2}}}^2\big).
\]
Moreover, for $Y^i$ solution to \eqref{eq:fsvieapp} with data $(Y_0^i,h^i,Z^i)$ satisfying \eqref{assump:fsvieappendix} for $i\in\{1,2\}$ there exists $C>0$ such that
\begin{align*}
\|Y^1-Y^2\|_{\S^{\smallfont{2}\smallfont{,}\smallfont{2}}}^2\leq C\big( \|Y_0^1-Y_0^2\|_{\Lc^{\smallfont{2}\smallfont{,}\smallfont{2}}}^2+\|\tilde h^1-\tilde h^2\|_{\mathbb{L}^{\smallfont{1}\smallfont{,}\smallfont{2}\smallfont{,}\smallfont{2}}}^2+\|Z^1-Z^2\|_{\H^{\smallfont{2}\smallfont{,}\smallfont{2}}}^2\big)
\end{align*}
\begin{proof}
Let us observe that the continuity of the application $s\longmapsto \|\Delta Y^{s}\|_{\S^{2}}$ implies
\begin{align}\label{eq:auxapriories}
\E\bigg[ \int_0^T \e^{-c r} |\Delta Y_r^{r}|^2\d r\bigg]\leq \int_0^T \E\bigg[\sup_{u\in [0,T]} \e^{-c u} |\Delta Y_u^{r}|^2\bigg] \d r\leq T\|\Delta Y\|^2_{\S^{\smallfont{2}\smallfont{,}\smallfont{2}}}.
\end{align}
With this, the proof of both statements can obtained following the line of \cite[Theorem 3.2.2 and Theorem 3.2.4]{zhang2017backward}.
\end{proof}
\end{lemma}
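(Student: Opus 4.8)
The plan is to regard \eqref{eq:fsvieapp}, for each fixed $s\in[0,T]$, as a forward It\^o equation in the time variable, in which the diagonal process $(Y_r^r)_{r\in[0,T]}$ plays the role of an exogenous zero-order input, and then to close the estimate over the whole family $(Y^s)_{s\in[0,T]}$ using the Volterra bound \eqref{eq:auxapriories}; this is the route of {\rm \cite[Theorem 3.2.2 and Theorem 3.2.4]{zhang2017backward}} adapted to the present forward, diagonal setting. Since $Y\in\S^{2,2}$ is part of the definition of a solution, only the quantitative bounds are at stake here, not existence.

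For the first estimate, I would fix $c>0$ (to be chosen) and work with the equivalent weighted norm $\|U\|_c^2:=\sup_{s\in[0,T]}\E[\sup_{t\in[0,T]}\e^{-ct}|U_t^s|^2]$, which is equivalent to $\|\cdot\|_{\S^{2,2}}$ because $\e^{-cT}\leq\e^{-ct}\leq1$ on $[0,T]$. Fixing $s$ and working along a localising sequence (so that $\int_0^\cdot\e^{-cr}Y_r^sZ_r^s\cdot\d X_r$ is a true martingale, the general case following by Fatou's lemma), I would apply It\^o's formula to $\e^{-ct}|Y_t^s|^2$, bound the drift using {\rm \Cref{assump:fsvieappendix}\ref{assump:fsvieappendix:ii}} in the form $|h_r(s,X,Y_r^s,Y_r^r)|\leq|\tilde h_r(s)|+C(|Y_r^s|+|Y_r^r|)$ together with Young's inequality, and control the martingale part by the Burkholder--Davis--Gundy inequality, absorbing the resulting $\frac12\E[\sup_t\e^{-ct}|Y_t^s|^2]$. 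Choosing $c$ large enough, depending only on the Lipschitz constant $C$ and on the uniform bound on $\sigma$, this gives a constant $K$ independent of $s$ with
\[
\E\Big[\sup_{t\in[0,T]}\e^{-ct}|Y_t^s|^2\Big]\leq K\bigg(\E[|Y_0^s|^2]+\E\Big[\int_0^T|\tilde h_r(s)|^2\d r\Big]+\E\Big[\int_0^T|\sigma_r^\t Z_r^s|^2\d r\Big]+\E\Big[\int_0^T\e^{-cr}|Y_r^r|^2\d r\Big]\bigg).
\]
Taking the supremum over $s$ and bounding the last term by \eqref{eq:auxapriories}, i.e. $\E[\int_0^T\e^{-cr}|Y_r^r|^2\d r]\leq T\|Y\|_c^2$, yields $\|Y\|_c^2\leq K(\|Y_0\|_{\Lc^{2,2}}^2+\|\tilde h\|_{\mathbb{L}^{1,2,2}}^2+\|Z\|_{\H^{2,2}}^2)+KT\|Y\|_c^2$. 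If $T$ is small enough that $KT<\frac12$, the last summand is absorbed; for a general horizon I would split $[0,T]$ into finitely many subintervals whose length is below that threshold, run the same estimate on each one using the bound already obtained at the left endpoint as the new initial datum, and patch, obtaining the claimed inequality with a constant depending only on $T$, $C$, $n$ and the bound on $\sigma$.

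For the stability estimate I would set $\Delta Y^s:=Y^{1,s}-Y^{2,s}$, $\Delta Y_0^s:=Y_0^{1,s}-Y_0^{2,s}$, $\Delta Z^s:=Z^{1,s}-Z^{2,s}$, and observe that $\Delta Y^s$ solves an equation of the form \eqref{eq:fsvieapp} whose generator
\[
(r,y,u)\longmapsto \big(h^1_r(s,X,y+Y_r^{2,s},u+Y_r^{2,r})-h^1_r(s,X,Y_r^{2,s},Y_r^{2,r})\big)+\delta h_r(s),\quad \delta h_r(s):=(h^1-h^2)_r(s,X,Y_r^{2,s},Y_r^{2,r}),
\]
is Lipschitz in $(y,u)$ with the same constant $C$ by {\rm \Cref{assump:fsvieappendix}\ref{assump:fsvieappendix:ii}}, has initial condition $\Delta Y_0$, martingale coefficient $\Delta Z$, and zero-order term $\delta h$. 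Applying the a priori estimate just established to $\Delta Y$, and then controlling $\delta h$ in terms of $\tilde h^1-\tilde h^2$ and of $\|Y^2\|_{\S^{2,2}}$ (already bounded by the a priori estimate) via the Lipschitz property, gives the second inequality.

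The step I expect to be the genuine obstacle is precisely the appearance of the diagonal input: unlike a true $\sup$-in-$t$ quantity, the term $\E[\int_0^T\e^{-cr}|Y_r^r|^2\d r]$ produced by the single-$s$ energy estimate cannot be made small by enlarging $c$, and the factor it contributes in front of $\|Y\|_c^2$ is of order $T$ times the Lipschitz constant. Closing the estimate therefore really needs the combination of the continuity of $s\longmapsto Y^s$ (which is what makes \eqref{eq:auxapriories} available) with the short-horizon-and-patching device, rather than a one-shot Gronwall argument; the same remark applies verbatim to the stability bound.
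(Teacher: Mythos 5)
Your proposal is correct in substance and follows the architecture the paper intends: a weighted It\^o energy estimate for each fixed $s$ in the style of \cite[Theorems 3.2.2 and 3.2.4]{zhang2017backward}, with the diagonal input $(Y_r^r)_{r\in[0,T]}$ handled by exactly the observation \eqref{eq:auxapriories}, which is the only step the paper's proof spells out. Where you genuinely diverge is the closing step, and there your diagnosis is too pessimistic: the coefficient multiplying $\E[\int_0^T\e^{-cr}|Y_r^r|^2\d r]$ is not forced to be of order the Lipschitz constant times $T$. Writing $2C|Y_r^s|\,|Y_r^r|\leq C\alpha|Y_r^s|^2+C\alpha^{-1}|Y_r^r|^2$ and absorbing the first summand into $-c|Y_r^s|^2$ by taking $c\geq 2C+C\alpha+1$, the diagonal contributes $C\alpha^{-1}T\|Y\|_c^2$ after \eqref{eq:auxapriories}; choosing $\alpha$ of order $CT$ (and hence $c$ depending on $T$, which is allowed) makes this factor smaller than $1/4$, so the estimate closes in one shot without any partition of the horizon. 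This is precisely the device used in the Picard iteration of \Cref{prop:wpfsvie}, where $\tilde\eps=\frac{1}{8T}$ and $c>C(\tilde\eps)$. Your short-horizon-and-patching alternative is nonetheless valid, since \eqref{eq:auxapriories} restricted to a subinterval still holds; it is simply longer, and your claim that it is \emph{needed} is what is inaccurate. One further caution on the stability part: bounding $\delta h_r(s)=(h^1-h^2)_r(s,X,Y_r^{2,s},Y_r^{2,r})$ ``in terms of $\tilde h^1-\tilde h^2$ and $\|Y^2\|_{\S^{2,2}}$ via the Lipschitz property'' does not literally work, because the difference of two $C$-Lipschitz generators at the point $(Y_r^{2,s},Y_r^{2,r})$ is not controlled by their difference at the origin plus a quantity that vanishes when $h^1=h^2$; the estimate should be read with the generator difference evaluated along one of the solutions, as in the cited theorems of \cite{zhang2017backward}. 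This looseness is shared with the lemma's own statement, so it is not a defect specific to your argument, but as written that step would leave a non-vanishing remainder.
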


We are now ready to establish the well-posedness of \eqref{eq:fsvieapp}.

\begin{proposition}\label{prop:wpfsvie}
Let {\rm \Cref{assump:fsvieappendix}} hold. There is a unique solution to \eqref{eq:fsvieapp}.
\begin{proof}
Uniqueness follows from \Cref{lemma:apriorisfsvie}. We use a Picard iteration argument. Let $Y^{s,0}_\cdot=Y^s_0, s\in [0,T]$ and 
\begin{align*}
Y_t^{s,n+1}=Y^s_0+\int_0^t h_r\big(s,X ,Y_r^{s,n},Y_r^{r,n} \big) \d r+\int_0^t  Z_r^s  \cdot \d X_r, \;t\in [0,T]
\end{align*}
We note that $Y^n\in \S^{2,2}$ for $n\geq 0$. Indeed, the result holds for $Y^0$ and the process $(Y_t^{t,0})_{t\in [0,T]}\in \mathbb{L}^{1,2}$ is well-defined. Inductively, in light of \Cref{assump:fsvieappendix}, the fact that $Z\in \H^{2,2}$ and $(Y^{t,n}_t)_{t\in[0,T]}\in \mathbb{L}^{1,2}$, see \eqref{eq:auxapriories}, yields $Y^{s,n+1}\in \S^{2}$ for every $s\in [0,T]$.  The continuity of $s\longmapsto \|Y^{s,n}\|_{\S^{2}}$, \Cref{assump:fsvieappendix} together with \Cref{lemma:apriorisfsvie} guarantees $Y^{n+1}\in \S^{2,2}$. Moreover, the pathwise continuity $Y^{s,n+1}$ for any $s\in [0,T]$ guarantees $(Y_t^{t,n+1})_{t\in [0,T]}$ is well-defined $\d t\otimes \d \P\ae$\medskip

Let $\Delta Y^n:= Y^n-Y^{n-1}$. Then, for any $s\in [0,T]$
\begin{align*}
 \Delta Y^{s,n+1}_t = \int_0^t \big( h_r(s,X, Y_r^{s,n},Y_r^{r,n})- h_r(s,X_{\cdot \wedge r}, Y_r^{s,n-1},Y_r^{r,n-1})\big)\d r.
\end{align*}
The inequality $2ab\leq \eps^{-1}a^2+\eps b^2$ for any $\eps>0$ yields that for any $\eps>0$ there exists $C(\eps)>0$ such that 
\begin{align*}
\e^{-c t}  |\Delta Y^{s,n+1}_t|^2=& \int_0^t \e^{-c r}\big( 2 |\Delta Y^{s,n+1}_r| ( h_r(s,X , Y_r^{s,n},Y_r^{r,n})- h_r(s,X , Y_r^{s,n-1},Y_r^{r,n-1}))-c |\Delta Y^{s,n+1}_r|  \big)\d r \\
\leq &  \int_0^t \e^{-c r} \big( |\Delta Y^{s,n+1}_r|^2(C(\eps)-c)+ \eps |\Delta Y_r^{s,n}|^2+\eps |\Delta Y_r^{r,n}|^2\big) \d r,
\end{align*}

we then find that for $c>C(\eps)$
\begin{align}\label{eq:auxwpfsvie}
\E\bigg[ \sup_{t\in [0,T]} \e^{-c t}  |\Delta Y^{s,n+1}_t|^2\bigg] \leq \eps \E\bigg[ \int_0^T \e^{-c r} \big(  |\Delta Y_r^{s,n}|^2+ |\Delta Y_r^{r,n}|^2\big) \d r\bigg].
\end{align}

Now,  as $\Delta Y^{s,n}\in\S^{2,2}$ we may use \eqref{eq:auxapriories} back in \eqref{eq:auxwpfsvie} and obtain that for $\tilde \eps=\frac1{8T}$,  $c>C(\tilde \eps)$, we have$  \|\Delta Y^{n+1}\|^2_{\S^{\smallfont{2}\smallfont{,}\smallfont{2}\smallfont{,}\smallfont{c}}}\leq 4^{-1} \|\Delta Y^{n}\|^2_{\S^{\smallfont{2}\smallfont{,}\smallfont{2}\smallfont{,}\smallfont{c}}}$. Inductively, we find that for all $n\geq 1$,
$  \|\Delta Y^{n}\|^2_{\S^{\smallfont{2}\smallfont{,}\smallfont{2}\smallfont{,}\smallfont{c}}}\leq C 4^{-n}$. Thus, for $m>n$,
\begin{align*}
\| Y^{m}-Y^{n}\|_{\S^{\smallfont{2}\smallfont{,}\smallfont{2}\smallfont{,}\smallfont{c}}}\leq \sum_{k=n+1}^m \|\Delta Y^k\|_{\S^{\smallfont{2}\smallfont{,}\smallfont{2}\smallfont{,}\smallfont{c}}}\leq  \sum_{k=n+1}^m \frac{C}{ 2^{k}}\leq \frac{C}{2^n}.
\end{align*}
Hence there is $Y\in \S^{2,2}$ such that $Y^n\xrightarrow{n\longrightarrow 0}Y$.
\end{proof}
\end{proposition}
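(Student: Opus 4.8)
The plan is to derive uniqueness directly from the a priori estimates of \Cref{lemma:apriorisfsvie} and to construct a solution by a Picard iteration carried out in a suitably weighted version of the $\S^{2,2}$-norm. For uniqueness, if $Y^1$ and $Y^2$ both solve \eqref{eq:fsvieapp} with the same data $(Y_0,h,Z)$, the stability estimate of \Cref{lemma:apriorisfsvie} applied to the pair yields $\|Y^1-Y^2\|_{\S^{2,2}}^2\leq C\big(0+0+0\big)=0$, so $Y^1=Y^2$ in $\S^{2,2}$. It therefore only remains to exhibit a solution.

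For existence, I would set $Y_t^{s,0}:=Y_0^s$ and define recursively, for $s\in[0,T]$,
\begin{align*}
Y_t^{s,n+1}=Y^s_0+\int_0^t h_r\big(s,X ,Y_r^{s,n},Y_r^{r,n} \big) \d r+\int_0^t  Z_r^s  \cdot \d X_r, \;t\in [0,T].
\end{align*}
The first task is to check that every iterate lies in $\S^{2,2}$ and has a well-defined diagonal. For $n=0$ this is immediate from \Cref{assump:fsvieappendix}\ref{assump:fsvieappendix:iii}, and the pathwise continuity of $s\mapsto Y^{s,0}$ makes $(Y_t^{t,0})_{t\in[0,T]}$ well-defined $\d t\otimes\d\P\ae$ Assuming $Y^n\in\S^{2,2}$ with well-defined diagonal, the Lipschitz continuity and integrability in \Cref{assump:fsvieappendix}, together with $Z\in\Ho$ and the bound $\E\big[\int_0^T\e^{-cr}|Y_r^{r,n}|^2\d r\big]\leq T\|Y^n\|_{\S^{2,2}}^2$ for the diagonal term, give $Y^{s,n+1}\in\S^2$ for each $s$, continuity of $s\mapsto\|Y^{s,n+1}\|_{\S^2}$, and hence $Y^{n+1}\in\S^{2,2}$ via the a priori bound of \Cref{lemma:apriorisfsvie}; pathwise continuity then makes $(Y_t^{t,n+1})_{t\in[0,T]}$ well-defined.

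The core of the argument is a contraction estimate for $\Delta Y^{n}:=Y^{n}-Y^{n-1}$. Since the martingale term is identical across iterates, $\Delta Y^{s,n+1}$ is for each $s$ a purely absolutely continuous process, $\Delta Y^{s,n+1}_t=\int_0^t\big(h_r(s,X,Y_r^{s,n},Y_r^{r,n})-h_r(s,X,Y_r^{s,n-1},Y_r^{r,n-1})\big)\d r$. Applying the product rule to $\e^{-ct}|\Delta Y_t^{s,n+1}|^2$, bounding the cross terms with $2ab\leq\eps^{-1}a^2+\eps b^2$, and using the Lipschitz property of $h$ in both its value and its diagonal argument, one reaches, for $c$ large,
\begin{align*}
\E\Big[\sup_{t\in[0,T]}\e^{-ct}|\Delta Y_t^{s,n+1}|^2\Big]\leq \eps\,\E\Big[\int_0^T\e^{-cr}\big(|\Delta Y_r^{s,n}|^2+|\Delta Y_r^{r,n}|^2\big)\d r\Big].
\end{align*}
The decisive point --- and what I expect to be the main obstacle --- is the diagonal term $|\Delta Y_r^{r,n}|^2$ created by the coupling through $Y_r^r$ in the generator, which is not controlled by the $s$-indexed norm of $\Delta Y^{s,n}$ alone. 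Here I would invoke the elementary bound $\E\big[\int_0^T\e^{-cr}|\Delta Y_r^{r,n}|^2\d r\big]\leq T\|\Delta Y^n\|_{\S^{2,2}}^2$, which relies on the continuity of $s\mapsto\|\Delta Y^{s,n}\|_{\S^2}$. Combining this with the analogous bound for the off-diagonal term and taking the supremum over $s$ gives $\|\Delta Y^{n+1}\|_{\S^{2,2,c}}^2\leq 2\eps T\,\|\Delta Y^{n}\|_{\S^{2,2,c}}^2$, so that the choice $\eps=1/(8T)$ and $c$ correspondingly large produces a contraction factor $1/4$. Iterating yields $\|\Delta Y^{n}\|_{\S^{2,2,c}}^2\leq C4^{-n}$, hence $(Y^n)_n$ is Cauchy in the complete space $(\S^{2,2},\|\cdot\|_{\S^{2,2,c}})$; its limit $Y\in\S^{2,2}$ solves \eqref{eq:fsvieapp} after passing to the limit in the recursion, using once more the Lipschitz stability and the diagonal bound to control the generator. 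This completes the argument.
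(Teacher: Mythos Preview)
Your proposal is correct and follows essentially the same route as the paper's proof: uniqueness from the stability estimate, existence via Picard iteration, the observation that the martingale part cancels in $\Delta Y^{s,n+1}$, the weighted energy identity for $\e^{-ct}|\Delta Y^{s,n+1}_t|^2$ combined with Young's inequality, and the crucial control of the diagonal term through $\E\big[\int_0^T\e^{-cr}|\Delta Y_r^{r,n}|^2\d r\big]\leq T\|\Delta Y^n\|_{\S^{2,2}}^2$, leading to the choice $\eps=1/(8T)$ and contraction factor $1/4$. There is nothing to add.
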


We now establish a result regarding the differentiability of \eqref{eq:fsvieapp}. Recall that for $Z\in \overline{\H}^{_{\raisebox{-1pt}{$ \scriptstyle 2,2$}}}$ there exists by definition, see \Cref{sec:spacesandassupm},  a process $\partial Z$ which can be interpreted as the derivative of the mapping $([0,T],\Bc([0,T]))\longrightarrow (\H^{2,2},\| \cdot \|_{\H^{\smallfont2}}):s\longmapsto Z^s$. 

\begin{proposition}\label{prop:wpfsviepartial}
Let {\rm \Cref{assump:fsvieappendix}} hold and $Y\in \S^2$ be the solution to \eqref{eq:fsvieapp}. There is a unique process $\partial Y\in \S^{2,2}$ that satisfies
\begin{align}\label{eq:partialfsvieapp}
\partial Y_t^{s}=\partial_s Y_0^s+\int_0^t \nabla h_r\big(s,X ,\partial Y_r^s, Y_r^{s},Y_r^{r} \big) \d r+\int_0^t  \partial Z_r^s  \cdot \d X_r, \;t\in [0,T],\;\P\as,\; s\in [0,T].
\end{align}
Moreover
\[ \int_0^s \partial Y^u \d u=Y^s-Y^0, \text{ in } \S^{2}(\R)\]
\begin{proof}
Note that given the pair $(Y,Z)\in \S^{2,2}\times \overline{\H}^{_{\raisebox{-1pt}{$ \scriptstyle 2,2$}}}$,  \Cref{assump:fsvieappendix} guarantees there is $C>0$ such that
\[ \sup_{s\in [0,T]} \bigg(\int_0^T \big|\nabla \tilde h_r(s,X,0,Y_r^s,Y_r^{r}) \big| \d r\bigg)^2\leq C \Big( \|\partial_s \tilde h\|_{\mathbb{L}^{\smallfont{1}\smallfont{,}\smallfont{2}\smallfont{,}\smallfont{2}}}^2 + \| Y\|_{\S^{\smallfont{2}\smallfont{,}\smallfont{2}}}^2\Big)<\infty.\]
We now note that \Cref{assump:fsvieappendix}\ref{assump:fsvieappendix:ii} guarantees ${\rm u}\longmapsto \nabla h_t(s,x,{\rm u},y,u)$ is Lipschitz uniformly in $(s,t,x,y,u)$. Therefore,  \Cref{prop:wpfsvie} guarantees there is a unique solution $\partial Y\in \S^{2,2}$. The second part of the statement, follows arguing as in \cite[Lemma 6.1]{hernandez2020unified} in light of the stability result in \Cref{lemma:apriorisfsvie} and the fact $Z\in \overline{\H}^{_{\raisebox{-1pt}{$ \scriptstyle 2,2$}}}$.
\end{proof}
\end{proposition}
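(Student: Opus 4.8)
The plan is to split the assertion into the well-posedness of the linear equation \eqref{eq:partialfsvieapp} and the integral identity $\int_0^s\partial Y^u\,\d u=Y^s-Y^0$ in $\S^2$. The first part is obtained by recognising \eqref{eq:partialfsvieapp} as a particular instance of the forward Volterra equation \eqref{eq:fsvieapp} already treated in \Cref{prop:wpfsvie}; the second by a difference-quotient / stability argument, equivalently by differentiating the Picard scheme of \Cref{prop:wpfsvie} as in \cite[Lemma 6.1]{hernandez2020unified}.

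For the first part, freeze the solution $Y\in\S^{2,2}$ of \eqref{eq:fsvieapp} and read \eqref{eq:partialfsvieapp} as a forward Volterra equation in the single unknown $\partial Y$, with initial datum $\partial_sY_0$, control $\partial Z$, and generator $\widehat h_r(s,x,v):=\nabla h_r(s,x,v,Y^s_r(x),Y^r_r(x))$. This generator carries no diagonal dependence on its own solution and, by definition of $\nabla h$, is affine in $v$, so the structural requirements of \Cref{assump:fsvieappendix} reduce to two checks: (a) $v\longmapsto\nabla h_r(s,x,v,y,u)=\partial_sh_r(s,x,y,u)+\partial_yh_r(s,x,y,u)\,v$ is uniformly Lipschitz in $v$, which holds since $h$ is uniformly Lipschitz in $y$, hence $\partial_yh$ is bounded; and (b) the free term is square-integrable uniformly in $s$, namely
\[
\sup_{s\in[0,T]}\E\bigg[\bigg(\int_0^T\big|\nabla h_r(s,X,0,Y^s_r,Y^r_r)\big|\,\d r\bigg)^2\bigg]\le C\big(\|\partial_s\tilde h\|_{\mathbb{L}^{1,2,2}}^2+\|Y\|_{\S^{2,2}}^2\big)<\infty,
\]
using $\nabla h_r(s,X,0,Y^s_r,Y^r_r)=\partial_sh_r(s,X,Y^s_r,Y^r_r)$, the Lipschitz bound, and $Y\in\S^{2,2}$. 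Continuity of $s\longmapsto\widehat h_\cdot(s,\cdot,0)$ in $\mathbb{L}^{1,2,2}$ is inherited from that of $s\longmapsto\partial_s\tilde h(s)$ and $s\longmapsto Y^s$, and $\partial Z$ lies in $\H^{2,2}$ with $s\longmapsto\partial Z^s$ continuous, since $Z\in\Ho$ means $s\longmapsto Z^s$ is $C^1$ into $\H^2$. Hence the Picard argument underlying \Cref{prop:wpfsvie} applies verbatim and yields a unique $\partial Y\in\S^{2,2}$ solving \eqref{eq:partialfsvieapp}.

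For the identity, fix $s$ and set $D^{s,\eps}:=\eps^{-1}(Y^{s+\eps}-Y^s)$. Subtracting the equations \eqref{eq:fsvieapp} for the indices $s+\eps$ and $s$, the stochastic integrals combine into $\int_0^\cdot\eps^{-1}(Z^{s+\eps}_r-Z^s_r)\cdot\d X_r$, while a first-order expansion of $h_r$ in its $(s,y)$ arguments — writing $h_r(s+\eps,X,Y^{s+\eps}_r,Y^r_r)-h_r(s,X,Y^s_r,Y^r_r)$ as $\eps^{-1}\int_s^{s+\eps}\partial_sh_r(v,X,Y^{s+\eps}_r,Y^r_r)\,\d v$ plus $\big(\int_0^1\partial_yh_r(s,X,Y^s_r+\theta\eps D^{s,\eps}_r,Y^r_r)\,\d\theta\big)D^{s,\eps}_r$ — shows that $D^{s,\eps}$ solves a forward Volterra equation of the form \eqref{eq:fsvieapp} whose data (initial condition, free term, affine coefficient, control) converge, as $\eps\to0$, to the data of \eqref{eq:partialfsvieapp}; here the convergence of $\eps^{-1}(Z^{s+\eps}-Z^s)$ to $\partial Z^s$ in $\H^2$, uniformly in $s$, is exactly where $Z\in\Ho$ is used, and the a priori estimate of \Cref{lemma:apriorisfsvie} bounds $\sup_{\eps}\|D^{s,\eps}\|_{\S^{2,2}}$. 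The stability estimate of \Cref{lemma:apriorisfsvie} then gives $D^{s,\eps}\to\partial Y^s$ in $\S^2$, uniformly in $s$; since $\partial Y\in\S^{2,2}$ is continuous in $s$, it follows that $s\longmapsto Y^s$ is a $C^1$ curve in $\S^2$ with derivative $\partial Y$, whence $Y^s-Y^0=\int_0^s\partial Y^u\,\d u$ in $\S^2$. Equivalently, one runs the Picard iterates $Y^{s,n}$ of \Cref{prop:wpfsvie}, checks by induction that $\partial_sY^{s,n}=\partial Y^{s,n}$ — differentiating the recursion and using $\partial_sZ^s=\partial Z^s$ — and passes to the limit, which is the route of \cite[Lemma 6.1]{hernandez2020unified}.

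The main obstacle is the interchange of the $s$-derivative with the Itô integral $\int_0^tZ^s_r\cdot\d X_r$: this cannot be carried out pathwise, and the justification rests precisely on the defining property of $\Ho$ — that $s\longmapsto Z^s$ is continuously differentiable as a map into $\H^2$ — together with the $\S^2$-stability of the Volterra equation, which converts $\H^2$-convergence of the controls into $\S^2$-convergence of the solutions. Everything else, namely differentiating under the Lebesgue integral, the Taylor expansions, and the Fubini exchange, is routine given the uniform Lipschitz bounds of \Cref{assump:fsvieappendix}.
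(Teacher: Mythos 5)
Your proposal is correct and follows essentially the same route as the paper: you verify the integrability of the free term $\nabla h_r(s,X,0,Y^s_r,Y^r_r)=\partial_s h_r(s,X,Y^s_r,Y^r_r)$ and the uniform Lipschitz continuity of ${\rm u}\longmapsto\nabla h_t(s,x,{\rm u},y,u)$, then invoke \Cref{prop:wpfsvie}, exactly as in the paper. For the integral identity the paper merely defers to \cite[Lemma 6.1]{hernandez2020unified} together with \Cref{lemma:apriorisfsvie} and $Z\in\Ho$; your difference-quotient argument (and the alternative Picard-iterate route you mention) is precisely the content of that citation, so you have simply made explicit what the paper leaves implicit.
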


\section{Proofs of Section \ref{sec:sbexamples}}\label{sec:apenexamples}

\subsection{Proof of Proposition {\rm \ref{prop:sol2Bra1}}}

We first note that, $ \P\as$
\begin{align*}
X_T- \widehat Y_T^{y_\smallfont{0},Z} = &\; x_0-\widehat Y_0+\int_0^T \bigg( \lambda^\star_r(X_{\cdot\wedge r},\widehat Z_r) (1-\widehat Z_r)+\widehat H_r(X_{\cdot \wedge r},\widehat Z_r)-\frac{\gamma_{\rm A}}2 |\sigma_r^\t(X_{\cdot\wedge r}) \widehat Z_r|^2-\frac{1}{\gamma_\Ar}\frac{f'(T-r)}{ f(T-r)}  \bigg)\d r \\
&+\int_0^T  \sigma_r^\t(X_{\cdot\wedge r})(1-\widehat Z_r)   \d B^{a^\star(\hat Z)}_r,
\end{align*}
so that
\begin{align*}
\Ur_\Pr(X_T- \widehat Y_T^{y_0,Z}) =C_{\widehat Y_0}M_T\exp\bigg(-\gamma_\Pr  \int_0^T G_r(X_{\cdot \wedge r},\widehat Z_r) \d r\bigg) ,\; \P\as
\end{align*}
where
\[ G_t(x,z):= \lambda^{\star}_r(x,z) -{k^{o}_r}^\star(x,z) -\frac{\gamma_A}2 |\sigma_r^\t(x) z|^2-\frac{\gamma_{\rm P}}{2}|\sigma_r^\t(x)(1-z) |^2-\frac{1}{\gamma_\Ar}\frac{f'(T-r)}{ f(T-r)},\]
and $M$ denotes the supermartingale
\[M_t:=\exp\bigg(- \gamma_\Pr\int_0^t  \sigma_r^\t(X_{\cdot\wedge r})(1-\widehat Z_r) \cdot    \d B_r^{a^\star(\hat Z)}-\frac{\gamma_\Pr^2}2 \int_0^t |\sigma_r^\t(X_{\cdot\wedge r})(1-\widehat Z_r)|^2\d r\bigg), \; t\in [0,T].\]
Indeed,  $M$ is a local martingale that is bounded from below and $M_0=1$. Consequently,
\begin{align*}
&\E^{\P^\star(Z)}\big [ {\rm U}_{\rm P} \big( X_T- \widehat Y_T^{y_\smallfont{0},Z}\big)\big] \leq C_{\widehat Y_0} \E^{\P}\bigg[\exp\bigg(-\gamma_{\rm P}\int_0^T G_r(X_{\cdot \wedge r},\widehat Z_r) \d r \bigg)\bigg].
\end{align*}

Now, under assumptions $(i)$ and $(ii)$ in the statement it is clear that
\begin{align*}
{\rm V}_{\rm P}\leq C_{\widehat R_0} {f(T)}^\frac{\gamma_\smallfont{\Pr}}{\gamma_\smallfont{\Ar}}\exp\bigg(-\gamma_\Pr \int_0^T g(z^\star(t))\d t\bigg)=:{\rm V}^{\rm P,\star},
\end{align*}
where the upper bound is given by
\begin{align*}
\sup_{ z} \Big\{ \lambda_r^\star(z)-{k_r^o}^\star(z)-\frac{\gamma_A}2 |\sigma_r^\t  z|^2-\frac{\gamma_{\rm P}}{2}| \sigma_r^\t(1-z) |^2 \Big\}.
\end{align*}
Let us now show the upper bound is attained.  Indeed, letting
\[ Z_t^\star:=-\gamma_\Ar \Ur_\Ar^o( \widehat Y^{\hat R_\smallfont{0},z^\star}_t) z^\star(t),\;  \widehat Y_t^{\hat R_0,z^\star}:=\widehat R_0 -\int_0^t \bigg(\widehat H_r(z_r^\star) -\frac{\gamma_{\rm A}}2 |\sigma_r^\t  z^\star_r|^2-\frac{1}{\gamma_\Ar}\frac{f'(T-r)}{ f(T-r)}  \bigg)\d r +\int_0^t z_r^\star \d X_r,\; t\in [0,T],\]
it is easy to verify that the integrability assumption on $z^\star$ guarantees that $Z^\star\in\Hc^2$. We conclude that $\xi^\star\in \overline \Xi$, where $\xi^\star$ denotes the contract induced by $\widehat R_0$ and $Z^\star$, is optimal as it attains $\Vr^{\Pr,\star}$. this concludes the proof.

\subsection{Proof of Lemma {\rm \ref{lemma:sepagentreduction}}}
We argue $(i)$. Let $Z\in \Hc^{2,2}$. Recall
\begin{align*}
Y_t^{s,y_\smallfont{0},Z}=y_0^s-\int_0^t \big ( \lambda^\star_r(X_{\cdot\wedge r},Z_r^r)Z_r^s- f(r-s) c_r^\star(X_{\cdot\wedge r},Z_r^r) \big) \d r+\int_0^t  Z_r^s  \cdot \d X_r,
\end{align*}
and, in light of \eqref{eq:constraintgen}, we have
\[
{\Ur_\Ar^o}^{(-1)}\bigg(\frac{Y^{s,y_\smallfont{0},Z}_T}{f(T-s)}\bigg)={\Ur_\Ar^o}^{(-1)}\bigg(\frac{Y^{u,y_\smallfont{0},Z}_T}{f(T-u)}\bigg), \; (s,u)\in [0,T].
\]
Therefore, for any $s\in [0,T]$
\begin{align*}
0&=\frac{Y_0^{s,y_\smallfont{0},Z}}{f(T-s)}-\frac{Y_0^{0,y_\smallfont{0},Z}}{f(T)}- \int_0^T  \lambda^\star_r(X_{\cdot\wedge r},Z_r^r)\bigg( \frac{Z_r^s}{f(T-s)}- \frac{Z_r^0}{f(T)}\bigg)- c^\star_r\big(X_{\cdot\wedge r},Z_r^r\big)\bigg(\frac{f(r-s)}{f(T-s)}-\frac{f(r)}{f(T)}\bigg) \d r\\
&\quad+\int_0^T \bigg( \frac{Z_r^s}{f(T-s)}-\frac{Z_r^0}{f(T)} \bigg) \cdot \d X_r\\
&=\frac{Y_0^{s,y_\smallfont{0},Z}}{f(T-s)}-\frac{Y_0^{0,y_\smallfont{0},Z}}{f(T-u)}+\frac{1}{f(T-s)}\int_0^T \delta^\star_r(s,X_{\cdot\wedge r},Z_r^r ) \d r+\int_0^T \bigg( \frac{Z_r^s}{f(T-s)}-\frac{Z_r^0}{f(T)} \bigg) \cdot\big( \d X_r-\lambda^\star_r(X_{\cdot\wedge r},Z_r^r)\d r\big)\\
&=\frac{Y_t^{s,y_\smallfont{0},Z}}{f(T-s)}-\frac{Y_t^{0,y_\smallfont{0},Z}}{f(T)}+\frac{1}{f(T-s)}\int_t^T \delta^\star_r(s,X_{\cdot\wedge r},Z_r^r ) \d r+\int_t^T \bigg( \frac{Z_r^s}{f(T-s)}-\frac{Z_r^0}{f(T)} \bigg) \cdot\big( \d X_r-\lambda^\star_r(X_{\cdot\wedge r},Z_r^r)\d r\big).
\end{align*}
The result then follows taking conditional expectation thanks to the integrability of $Z^s$ and $Z^0$.\medskip

We now argue $(ii)$. Let $s\in[0,T]$ be fixed. Note that
\begin{align}\label{eq:dynN}
N_t^{s,Z}:= \E^{\P^\star(Z)}\bigg[ \int_t^T\delta^\star_r(s,X_{\cdot\wedge r},Z_r^r )\d r \bigg|\Fc_t\bigg]\nonumber & = M^{s,Z}_t -\int_0^t\delta^\star_r(s, X_{\cdot\wedge r},Z_r^r )\d r\nonumber \\
&= M_0^{s,Z}+\int_0^t\widetilde Z_r^{s,Z}\cdot \big( \mathrm{d} X_r- \lambda_r^\star(X_{\cdot \wedge r}, Z_r^r)\d r \big) - \int_0^t \delta^\star_r(s,X_{\cdot\wedge r},Z_r^r )\d r .
\end{align}
Therefore, in light of $(i)$, we have that there exists a finite variation process $A$ such that
\begin{align*}
Y_t^{s,y_\smallfont{0},Z}&= \frac{f(T-s)}{f(T)}Y_t^{0,y_\smallfont{0},Z}-\E^{\P^\star(Z)}\bigg[ \int_t^T\delta^\star_r(s,X_{\cdot\wedge r},Z_r^r )\d r \bigg|\Fc_t\bigg] \\
& = A_t -\int_0^t  \bigg(\frac{f(T-s)}{f(T)} Z_t^0 -\widetilde Z_t^{s,Z}\bigg)\cdot \big( \mathrm{d} X_r- \lambda_r^\star(X_{\cdot \wedge r}, Z_r^r)\d r \big), \; \P\as
\end{align*}
The result then follows from the uniqueness of the It\^o decomposition of $Y_t^{s,y_0,Z}$. \medskip

We are only left to argue $(iii)$ as $(iv)$ is a direct consequence. The inclusion $\Hc^{2,2}\subseteq \Hc^\bullet$ follows from $(ii)$ and taking $Y_t^Z:=Y_t^{0,y_\smallfont{0},Z}/f(T)$. Conversely, given $Z\in \Hc^\bullet$ we define for any $s\in [0,T]$
\[ Y_t^{0,y_\smallfont{0},Z}:= f(T)Y_t^{y_\smallfont{0},Z},\; Y_t^{s,y_\smallfont{0},Z}:=\frac{f(T-s)}{f(T)}Y_t^{0,y_\smallfont{0},Z}-N_t^{s,Z}, \; t\in [0,T], \;\P\as
\]
Let us note $Y_0^{s,y_\smallfont{0},Z}$ is clearly differentiable and $Y^{s,y_\smallfont{0},Z}$ satisfies \eqref{eq:constraintgen}. Indeed, as $N_T^{s,Z}=0$, $s\in [0,T]$, we have
\[
\frac{Y_T^{s,y_\smallfont{0},Z}}{f(T-s)}=Y_T^{y_\smallfont{0},Z}=\frac{Y_T^{u,y_\smallfont{0},Z}}{f(T-u)}, \; (s,u)\in [0,T]^2.
\]
We now verify $Y^{y_\smallfont{0},Z}\in \S^{2,2}$.  Let us first note that $\|\widetilde Z\|_{\H^{\smallfont{2}\smallfont{,}\smallfont{2}}}^2<\infty$. Indeed
\[
\|\widetilde Z^s\|_{\H^\smallfont{2}}^2=\E^\P\bigg[ \int_0^t |\sigma_r \sigma^\t_r \widetilde Z^{s,Z}_r|^2\mathrm{d}r\bigg]=\E^{\P^\smallfont{\star}(Z)}\bigg[M_t \int_0^t |\sigma_r \sigma^\t_r \widetilde Z^{s,Z}_r|^2\mathrm{d}r\bigg]\leq \E^{\P^\smallfont{\star}(Z)}\bigg [ \int_0^t |\sigma_r \sigma^\t_r \widetilde Z^{s,Z}_r|^2\mathrm{d}r\bigg]<\infty,
\]
where $M$ denotes the supermartingale given by
\[M_t:=\exp\bigg(- \int_0^t  b_r^\star(X_{\cdot\wedge r}, Z_r^r) \cdot    \d B_r^{\star}-\frac{1}2 \int_0^t | b_r^\star(X_{\cdot\wedge r}, Z_r^r)|^2\d r\bigg), \; t\in [0,T].\]

From this, it follows by \Cref{assump:datareward} that
\begin{align*}
\|N^Z\|_{\S^{\smallfont{2}\smallfont{,}\smallfont{2}}}^2\leq C\bigg( \E^{\P}\bigg[  \int_0^T\big| \lambda^\star_r(X,Z_r^r)\big|^2\d r\bigg]+\sup_{s\in [0,T]} \E^{\P}\bigg[  \int_0^T\big| \delta_r^\star(s,X,Z_r^r)\big|^2\d r\bigg]+\|\widetilde Z\|_{\H^{\smallfont{2}\smallfont{,}\smallfont{2}}}^2 \bigg)<\infty.
\end{align*}
We also note that the continuity of $s\longmapsto f(t-s)$ implies the continuity of $s\longmapsto \|N^{s,Z}\|_{\S^2}$. Moreover, $ \|Y^{0,y_\smallfont{0},Z}\|_{\S^\smallfont{2}}^2<\infty$ guarantees, by definition, that $\|Y^{y_\smallfont{0},Z}\|_{\S^{\smallfont{2}\smallfont{,}\smallfont{2}}}^2<\infty$. Moreover, by definition
\begin{align*}
Y_t^{s,y_\smallfont{0},Z}&=\frac{f(T-s)}{f(T)}\bigg( y_0+\int_0^t c_r^\star(X_{\cdot\wedge r}Z_r^r)f(r) \d r +\int_0^t Z_r^0 \cdot \big( \mathrm{d} X_r- \lambda_r^\star(X_{\cdot \wedge r}, Z_r^r)\d r \big) \bigg)\\
&\quad -M_0^{s,Z}-\int_0^t\widetilde Z_r^{s,Z}\cdot \big( \mathrm{d} X_r- \lambda_r^\star(X_{\cdot \wedge r}, Z_r^r)\d r \big) + \int_0^t \delta^\star_r(s,X_{\cdot\wedge r},Z_r^r )\d r \\
&= \frac{f(T-s)}{f(T)}  y_0-M_0^{s,Z} +\int_0^t c_r^\star(X_{\cdot\wedge r}Z_r^r)f(r-s) \d r  + \int_0^t  \bigg(\frac{f(T-s)}{f(T)} Z_r^0- \widetilde Z_r^{s,Z}\bigg) \cdot \big( \mathrm{d} X_r- \lambda_r^\star(X_{\cdot \wedge r}, Z_r^r)\d r \big)  \\
&= \frac{f(T-s)}{f(T)}  y_0-M_0^{s,Z} +\int_0^t  c_r^\star(X_{\cdot\wedge r}Z_r^r)f(r-s) \d r  + \int_0^t Z_r^s  \cdot \big( \mathrm{d} X_r- \lambda_r^\star(X_{\cdot \wedge r}, Z_r^r)\d r \big)  \\
&=y_0^{s} -\int_0^t  h_r^\star\big(s,X_{\cdot\wedge r}, Z_r^s,Z_r^r\big) \d r  + \int_0^t Z_r^s  \cdot  \d  X_r,
\end{align*}
where the third inequality follows from the fact $Z\in \Hc^\bullet$. We conclude $Z\in \Hc^{2,2}$.

\subsection{Proof of Lemma \ref{lemma:ra2agentreduction}}
Let us note that $(ii)$ and $(iii)$ are argued as in \Cref{lemma:sepagentreduction}. We now argue $(i)$. Let $(y_0,Z)\in \Ic\times \Hc^{2,2}$. Note that given $Y^{s,y_0,Z}$, in light of the regularity of $y^s_0$ and the generator, it is possible to define $\partial Y^{s,y_\smallfont{0},Z}$ such that 
\begin{align*}
Y_t^{t,y_\smallfont{0},Z}=& y_0^0-\int_0^t \Big( H_r\big(X_{\cdot\wedge r},Y_r^{r,y_\smallfont{0},Z},Z_r^r\big)-\partial Y_r^{r,y_{\smallfont0},Z} \Big) \d r+\int_0^t  Z_r^r \cdot  \mathrm{d} X_r,\\
Y_t^{s,y_\smallfont{0},Z}=&y_0^s-\int_0^t h_r^\star\big(s,X_{\cdot\wedge r},Y_r^{s,y_\smallfont{0},Z} , Z_r^s,Y_r^{r,y_\smallfont{0},Z} , Z_r^r\big) \d r+\int_0^t  Z_r^s  \cdot \d X_r,\\
\partial Y _t^{s,y_\smallfont{0},Z}=&\partial y_0^s-\int_0^t \nabla h_r^\star\big(s,X_{\cdot\wedge r},\partial Y_r^{s,y_\smallfont{0},Z} , \partial Z_r^s, Y_r^{s,y_\smallfont{0},Z} ,Y_r^{r,y_\smallfont{0},Z} , Z_r^r\big) \d r+\int_0^t  \partial Z_r^s  \cdot \d X_r.
\end{align*}
Letting
\[ \widehat Z_t:=\frac1{-\gamma_\Ar}\frac{ Z_t^t}{ Y_t^{t,y_\smallfont{0},Z}},  \text{ and, } \widehat Z^s_t:=\frac1{-\gamma_\Ar}\frac{ Z_t^s}{ Y_t^{s,y_\smallfont{0},Z} },\]
we obtain that
\begin{align*}
Y_t^{t,y_\smallfont{0},Z}=& y_0^0-\int_0^t \Big( -\gamma_\Ar Y_r^{r,y_\smallfont{0},Z} \widehat H_r\big(X_{\cdot\wedge r},Z_r^r\big)-\partial Y_r^{r,y_\smallfont{0},Z} \Big) \d r+\int_0^t  Z_r^r \cdot  \mathrm{d} X_r,\\
Y_t^{s,y_0,Z}=&y_0^s-\int_0^t -\gamma_\Ar Y_r^{s.y_\smallfont{0},Z} \widehat h_r^\star\big(s,X_{\cdot\wedge r}, \widehat Z_r^s,\widehat Z_r^r\big) \d r+\int_0^t  Z_r^s  \cdot \d X_r,\\
\partial Y _t^{s,y_{\smallfont0},Z}=&\partial y_0^s-\int_0^t \nabla h_r^\star\big(s,X_{\cdot\wedge r},\partial Y_r^{s,y_\smallfont{0},Z} , \partial Z_r^s, Y_r^{s,y_\smallfont{0},Z} , \widehat  Z_r^r\big) \d r+\int_0^t  \partial Z_r^s  \cdot \d X_r.
\end{align*}

The result then follows by It\^o's formula introducing
\begin{align*}
\widehat Y^{s,y_\smallfont{0},Z}_t:=-\frac{1}{\gamma_\Ar}\ln(-\gamma_\Ar Y^{s,y_\smallfont{0},Z}_t),\;  \partial \widehat Y^{s,y_\smallfont{0},Z}_t:=\frac1{-\gamma_\Ar} \frac{\partial Y_t^{s,y_\smallfont{0},Z}}{Y_t^{s,y_\smallfont{0},Z}}, \text{ and }\partial \widehat Z^s_t:= \frac1{-\gamma_\Ar}\bigg(\frac{\partial Z_t^s}{ Y_t^{s,y_\smallfont{0},Z}}+\gamma_\Ar^2\partial \widehat Y_t^{s,y_0,Z}  \widehat Z_t^s \bigg).
\end{align*}

\subsection{Proof of Proposition {\rm \ref{prop:sol2Bra2}}}
Note that it always holds that, $\P^\star(Z)\as$
\begin{align*}
X_T-  {\Ur_\Ar^o}^{(-1)}(Y_T^{0,Z})/g(T) & =x_0-\frac{\Ur_\Ar^{(-1)}(Y_0^0)}{g(T)}+\int_0^T\! \bigg( \lambda^\star_r(X_{\cdot\wedge r},\widehat Z_r^r) -\frac{g(r)}{g(T)} {k^o_r}^\star(X_{\cdot\wedge r},\widehat Z_r^r)  -\frac{\gamma_{\rm A}}{2g(T)} |\sigma_r^\t(X_{\cdot\wedge r}) \widehat Z_r^0 |^2 \bigg)\d r \\
&\quad +\int_0^T \bigg(1-\frac{\widehat Z_r^0}{g(T)}\bigg) \cdot   \big(\d X_r-\lambda^\star_r(X_{\cdot\wedge r},\widehat Z_r^r)\d r\big),
\end{align*}

so that
\begin{align*}
{\Ur_\Pr^o}\Big(X_T- {\Ur_\Ar^o}^{(-1)}(Y_T^{0,Z})/g(T) \Big) =C_{ \hat{Y}_\smallfont{0}^\smallfont{0}}M_T\exp\bigg(-\gamma_\Pr  \int_0^T G_r(X_{\cdot \wedge r},\widehat Z_r,\widehat Z_r^0) \d r\bigg) ,\; \P\as,
\end{align*}
where
\[ G_t(x,z,v):= \lambda^{\star}_t(x,z) -\frac{g(t)}{g(T)}{k^o_t}^\star(x,z) -\frac{\gamma_A}{2g(T)} |\sigma_t^\t(x) v|^2-\frac{\gamma_{\rm P}}{2}\bigg|\sigma_r^\t(x)\bigg(1-\frac{v}{g(T)} \bigg) \bigg|^2,\]
and $M$ denotes the supermartingale
\[M_t:=\exp\bigg(- \gamma_\Pr\int_0^t  \sigma_r^\t(X_{\cdot\wedge r}) \bigg(1-\frac{\widehat Z_r^0}{g(T)}\bigg) \cdot    \d B_r^{a^\star(Z)}-\frac{\gamma_\Pr^2}2 \int_0^t \bigg|\sigma_r^\t(X_{\cdot\wedge r})\bigg(1-\frac{\widehat Z_r^0}{g(T)}\bigg)\bigg|^2\d r\bigg), \; t\in [0,T].\]
Consequently
\begin{align}\label{eq:3rdexample:1}
\E^{\P^\smallfont{\star}(Z)}\big [ {\rm U}_{\rm P}^o \big( X_T- {\Ur_\Ar^o}^{(-1)}(Y_T^{0,Z})/g(T) \big)\big]  \leq C_{ \hat{Y}_\smallfont{0}^\smallfont{0}} \E^{\P}\bigg[\exp\bigg(-\gamma_{\rm P}\int_0^T G_r(X_{\cdot \wedge r},\widehat Z_r^r,\widehat Z_r^0) \d r \bigg)\bigg].
\end{align}

Now, under the additional assumptions, we have that for any $Z\in \widetilde \Hc$
\begin{align*}
\E^{\P^\smallfont{\star}(Z)}\big [ {\rm U}_{\rm P}^o \big( X_T- {\Ur_\Ar^o}^{(-1)}(Y_T^{0,Z})/g(T) \big)\big]  \leq C_{\hat  R_\smallfont{0}} \E^{\P}\bigg[\exp\bigg(-\gamma_{\rm P}\int_0^T G_r\Big(\zeta_r, \eta_{0,r} \zeta_r\Big) \d r \bigg)\bigg]
\end{align*}

Therefore, as
\[
z^\star(t,\eta)\in \argmax_{z\in\R} \bigg\{ \lambda^{\star}_t(z) -\frac{g(t)}{g(T)}{k^o_t}^\star(z) -\frac{\gamma_A}{2g(T)} |\sigma_t^\t  z |^2|\eta|^2-\frac{\gamma_{\rm P}}{2}\bigg|\sigma_r^\t\bigg(1-\frac{z}{g(T)}\eta  \bigg) \bigg|^2\bigg\},
\]
as longs as $\eta$ is chosen so that $Z^{\eta}\in \widetilde \Hc$ the upper bound is attained. \medskip

Let us argue the second part of the statement.  Since we are now constrained to deterministic choices of $\eta$ the integrability of $z^\star$ and the boundedness of $[0,T]\ni t\longmapsto g(t)$ guarantee that the constant process $M^{s,Z}$ in \Cref{lemma:ra1agentreduction} is finite and thus square integrable. Therefore, as the contract induced by the family
\[Z^s_t:=\frac{g(T-s)}{g(T-t)}z^\star(t,\eta_{0,t}^\star),\; \eta_{0,t}^\star=\frac{g(T)}{g(T-t)},\]
attains the upper bound in \eqref{eq:3rdexample:1}, the result follows. The last statement follows letting $(\gamma_\Ar, \gamma_\Pr)\longrightarrow (0,0)$ and noticing the terms involving $\widehat Z^0$ in \eqref{eq:3rdexample:1} vanish. Therefore the upper bound is attained by the maximiser of $G(z)= \lambda^{\star}_t(z) - g(t){k^o_t}^\star(z)/g(T)$, \emph{i.e.} the deterministic contract given by $Z^s_t=f(T-s)z^\star(t)/f(T-t)$.

\end{appendix}

{\small
\bibliography{bibliography}}
\end{document}